\documentclass[11pt]{article}
\usepackage[utf8]{inputenc}
\usepackage{style}
\usepackage{framed}
\usepackage{nicefrac}
\usepackage[parfill]{parskip}
\usepackage{setspace}
\usepackage{hyperref}
\usepackage{mdwlist}
\usepackage{mdframed}
\usepackage[ruled,linesnumbered]{algorithm2e}
\usepackage[utf8]{inputenc}
\usepackage{style}
\usepackage{mdframed}
\usepackage{mdwlist}
\usepackage{nicefrac}
\usepackage[parfill]{parskip}
\usepackage{setspace}
\usepackage[colorlinks]{hyperref}
\usepackage[ruled,linesnumbered]{algorithm2e}
\usepackage{amsmath}
\usepackage{amssymb}
\usepackage{complexity}

\newcommand{\Ex}{\E}
\newcommand{\cH}{\mathcal{H}}
\renewcommand{\cP}{{\cal P}}

\allowdisplaybreaks

\usepackage{microtype}

\usepackage{thm-restate}

\usepackage{boxedminipage}

\newcommand{\Paren}[1]{\left(#1 \right )}

\newcommand{\Brac}[1]{\left[#1\right]}

\newcommand{\Abs}[1]{\left\lvert#1\right\rvert}

\newcommand{\Inf}[2]{{\sf Inf}_{#1}\left[#2\right]}

\newcommand{\wh}[1]{\widehat{#1}}

\providecommand{\eps}[0]{\ensuremath{\varepsilon}\xspace}

\newcommand{\Var}{\mathrm{Var}}

\newcommand{\dist}{\mathsf{dist}}

\newcommand{\MaxCut}{\mathsf{MaxCut}}
\providecommand{\Prb}{\mathbb P}
\providecommand{\1}{\mathbf 1}
\providecommand{\R}{\mathbb R}
\renewcommand{\P}{\Pr}
\providecommand{\E}{\mathbb E}

\providecommand{\Set}[1]{\left\{#1\right\}}
\providecommand{\Norm}[1]{\left\lVert#1\right\rVert}
\providecommand{\Iprod}[1]{\left\langle#1\right\rangle}
\providecommand{\sign}{\ensuremath{\mathrm{sign}}}

\providecommand{\cM}{\mathcal{M}}
\providecommand{\cR}{\mathcal{R}}

\renewcommand{\cL}{\mathcal{L}}
\renewcommand{\cS}{\mathcal{S}}

\newcommand{\one}{{\bf 1}}

\newcommand{\Cube}{\{-1,1\}^R}
\newcommand{\ZeroCube}{\{0,1\}^R}
\newcommand{\Cut}{\operatorname{Cut}}
\newcommand{\pos}[1]{\left[#1\right]^+}
\newcommand{\Stab}{\mathbb S}
\newcommand{\bbR}{\mathbb R}
\newcommand{\Influence}{\operatorname{Inf}}
\newcommand{\sgn}{\operatorname{sign}}
\newcommand{\wH}{\widehat{\mathcal{H}}}
\newcommand{\tSDP}{{\sf SDP}_{\mbox{\tiny$\triangle$}}}
\allowdisplaybreaks
\title{Optimal Algorithms and Hardness for Max-Cut \\ in Bounded Degree Graphs}
\author{Rares-Darius Buhai\thanks{EPFL. Supported by the Swiss State Secretariat for Education, Research and
Innovation (SERI) under contract number MB22.00054.} 
\and Suprovat Ghoshal\thanks{Indiana University Bloomington.} 
\and Neng Huang\thanks{TU-Dresden. Work was mostly done when NH was affiliated with the University of Michigan. NH has been funded by the European Union (Project POCOCOP, ERC Synergy Grant 101071674). Views and opinions expressed are however those of the author(s) only and do not necessarily reflect those of the European Union or the European Research Council Executive Agency. Neither the European Union nor the granting authority can be held responsible for them.} 
\and Pravesh Kothari\thanks{Princeton University.} 
\and Euiwoong Lee\thanks{University of Michigan. Supported in part by NSF award CCF-2236669.} 
\and Konstantin Makarychev\thanks{Northwestern University. Supported in part by NSF Award EECS-2216970.} 
\and Yury Makarychev\thanks{Toyota Technological Institute at Chicago. Supported by NSF award ECCS-2216899.}}
\begin{document}

\maketitle
\begin{abstract}

We give a polynomial-time algorithm that achieves an $\alpha_{GW} + \tilde{\Omega}(1/d)$-approximation for Max 2-LIN(2) in weighted instances where each vertex appears in at most $d$ constraints. This improves on the $\alpha_{GW}+\tilde{\Omega}(1/d^2)$-approximation guarantee of~\cite{hsieh2023approximating}. We also show a matching Unique Games hardness under randomized polynomial-time reductions, thereby proving that the gain on $\alpha_{GW}$ is optimal up to $\mathrm{polylog}(d)$ factors. Here, $\alpha_{GW} \approx 0.878$ is the approximation ratio of the Goemans-Williamson algorithm. Our results resolve an open question due to Raghavendra and refute his conjectured optimal approximation ratio of $\alpha_{GW} + \tilde{\Omega}(1/\sqrt{d})$.


Our local improvement scheme can be thought of as using an adaptive (based on the neighbors) shift applied to the Gaussian before using the GW sign rounding. This allows us to implement a \emph{soft} but \emph{simultaneous} flipping decision at each vertex as opposed to hard but sequential flipping decisions in prior works. Our hardness results are based on a new \emph{sparsification} lemma for graphs produced by the reduction from Unique Games~\cite{KKMO07} that preserves the optimal values up to a factor of $1 \pm \epsilon$ while subsampling to an average degree of only $\tilde{O}(1/\epsilon)$.
This is the first example of a CSP in which the optimal advantage on degree-$d$ instances is asymptotically smaller than $O(1/\sqrt{d})$~\cite{Trevisan01}.
A key idea in the proof of our sparsification theorem establishes new bounds on the number of {\em nearly optimal} cuts in the output of the KKMO reduction. This step, in turn, relies on a new, robust \emph{inverse} majority-is-stablest theorem.  



\end{abstract}
\newpage
\tableofcontents
\newpage

\section{Introduction} 
What is an optimal algorithm for Max-Cut?

In the 1990s, Goemans and Williamson (GW)~\cite{MR1412228-Goemans95} introduced their breakthrough approximation algorithm for Max-Cut. Their algorithm rounded the natural semidefinite programming (SDP) relaxation for the problem using the now ubiquitous \emph{hyperplane rounding} to obtain an approximation ratio of $\alpha_{GW} \approx 0.878$. Building on the influential work of Håstad~\cite{Hastad99}, Khot, Kindler, Mossel and O'Donnell~\cite{KKMO07} made the surprising discovery that assuming the Unique Games Conjecture (UGC), the GW algorithm is in fact optimal! Building on this, in a breakthrough work, Raghavendra~\cite{Rag08} proved that this entire story applies word for word to \emph{every} constraint satisfaction problem: a natural SDP relaxation followed by a natural analog of hyperplane rounding is in fact optimal assuming the UGC. These remarkable results made it possible to find an \emph{optimal} algorithm for Max-Cut -- a question that would have been difficult to even pose just a decade before.

\paragraph{Is the GW algorithm really optimal?} The above sequence of works gives a complete and strikingly simple characterization of optimal approximation algorithms for constraint satisfaction problems. But this theory is \emph{coarse}. Once we ask only a slightly more fine-grained question of the optimal approximation guarantee as a function of the \emph{degree} or density of the input graph, the plot thickens. 

Feige, Karpinski and Langberg~\cite{FKL02} showed that the Goemans-Williamson algorithm is \emph{not} optimal. Indeed, a \emph{local search} procedure, when applied to the rounded solution,  improves the approximation ratio~\footnote{The lower-order gain on $\alpha_{GW}$ is an easy consequence of the fact that in the settings where GW is tight, the variance of rounded GW solution is $\sim nd$.} from $\alpha_{GW} + \frac{1}{\sqrt{nd}}$ to $\alpha_{GW} + \Omega(1/d^4)$ where $d$ is the maximum degree of any vertex in the graph. The FKL algorithm relied on a ``vertex-level'' improvement procedure. They showed that because of the variance in the GW rounded solution, there are sufficiently many vertices such that a majority of their neighbors are on the same side of the cut. By switching such vertices (and discarding all of their neighbors from consideration), one can strictly improve the rounded value. Their analysis was improved to obtain an $(\alpha_{GW} + \Omega(1/d^3))$-approximation by Floren~\cite{Floren16}. These results showed the power of local search in improving the GW algorithm. But the actual guarantees remained far from the state-of-the-art hardness-of-approximation result of~\cite{Trevisan01} that only rules out an $(\alpha_{GW} + \omega(1/\sqrt{d}))$-approximation~\footnote{Strictly speaking, Trevisan~\cite{Trevisan01} gave a black-box reduction which can transfer an $\alpha$-factor hardness for a Max-CSP problem to an $(\alpha + O(1/\sqrt{d}))$-factor hardness for the Max-CSP in the bounded-degree setting. This combined with \cite{KKMO07}'s $\alpha_{GW}$-factor hardness implies the $(\alpha_{GW} + O(1/\sqrt{d}))$-factor hardness for Max-Cut in graphs with maximum degree $d$.}. 

\paragraph{The power of local algorithms and Raghavendra's conjecture} Raghavendra (see Problem 10 in ~\cite{boazbarakSoSnotes}) observed that understanding the optimal approximation ratio for Max-Cut as a function of degree $d$ is not only a basic question, but also a testbed for the power of \emph{local search} in approximation algorithm design. Later, in a sequence of lectures~\cite{prasad-talk} at the Simons Institute, he reiterated the goal of finding optimal approximation algorithms for Max-Cut on degree $d$ graphs as an outstanding research direction and conjectured that the optimal approximation ratio for Max-Cut should match the best known hardness result of $\alpha_{GW} + O(1/\sqrt{d})$. This conjecture was also proposed (as Problem 1.13) in~\cite{huang2025weak} along with a certain Markov chain run from the GW rounded solution as a potential algorithm.

Raghavendra's conjecture was motivated in part by the work of~\cite{BMORRSTVWW15}, which shows local algorithms that improve the guarantee of random assignment by an additive $\Omega(1/\sqrt{d})$. As an immediate corollary, we obtain that for every ``approximation-resistant'' CSP such as 3-SAT, one can obtain an $\sim 1/\sqrt{d}$ additive gain for degree-$d$ instances over the optimal inapproximability for general inputs. His path to achieving a similar result for non-approximation-resistant CSPs such as Max-Cut was based on remarkable developments in \emph{local algorithms} inspired by statistical physics~\cite{Montanari19, Montanari-Sellke21}. In particular, these results suggested belief propagation and its variants, which are local algorithms that combine information on larger (but still constant-hop) neighborhoods, as natural avenues of beating the ``vertex-level'' FKL rounding algorithm. 

In response to the conjecture, Hsieh and Kothari~\cite{hsieh2023approximating} found a new (but still ``vertex-level'') local search procedure along with a more ``global'' analysis to obtain an improved approximation ratio of $\alpha_{GW} + \tilde{\Omega}(1/d^2)$. Unlike the prior works, their ``global'' analysis extends even to weighted graphs. Their method naturally suggests an improvement to $\alpha_{GW} + \tilde{\Omega}(1/d^{1.5})$, but this appeared difficult to prove. They reiterated the outstanding question of how to close the gap between the hardness results and the approximation algorithms.     

In summary, we have a wide quantitative gap between the best approximation ratio and the best hardness results for Max-Cut as a function of the degree $d$. Even more importantly, it is not clear if SDP rounding combined with local search could close that gap. Even the right incarnation of local search capable of achieving this remains unclear. 

\paragraph{Our contribution} In this work, we give a surprising resolution of Raghavendra's conjecture. We show that there is a simple, ``vertex-level'' local improvement procedure that achieves an $\alpha_{GW} + \tilde{\Omega}(1/d)$ approximation for Max-Cut. We then show that this is in fact optimal by proving a matching (up to polylog factors) Unique Games hardness! This is in sharp contrast to the case of approximation-resistant CSPs that admit local algorithms that improve the unbounded-degree approximation ratio by an additive $\sim 1/\sqrt{d}$. As a result, we refute Raghavendra's conjecture. 

Our work, perhaps surprisingly, suggests that SDP rounding combined with a ``vertex-level'' local search procedure may be an optimal approximation algorithm for bounded-degree instances of all CSPs. If true, this would be a beautiful Raghavendra-like~\cite{Rag08} characterization of optimal approximation algorithms for bounded-degree CSPs.

\paragraph{Why bounded degree?} Parameterizing CSPs by their density/degree is natural and already extensively studied in related contexts such as random and semirandom constraint satisfaction where understanding the optimal trade-off between density, running time and the quality of approximation has been a long-running research thread~\cite{CojaOghlan09RandomCSP,GoerdtKrivelevich01RandomKSAT,Feige02AverageCaseApprox,AllenODonnellWitmer15RandomCSP,KothariMoriODonnellWitmer17SOSRandomCSP,DeshpandeMontanariODonnellSchrammSen18NAE3SAT,AbascalGuruswamiKothari20SemiRandomCSP,GuruswamiKothariManohar21SmoothedCSP,HsiehKothariMohanty22MooreBound}. 
In the worst-case setting, bounded-degree instances 
have been studied for various CSPs~\cite{Hastad2000BoundedOccurrenceCSP,
BermanKarpinski1999TighterInapproximability,
AlimontiKann2000CubicGraphs,  
BermanKarpinski2002BoundedDegreeMinCSP,
HalperinLivnatZwick2004MaxCutCubicGraphs, 
BazganTuza2008MaxCutSubcubic,
GuruswamiZhou2012BoundedOccurrenceOrderingCSPs, Makarychev2013LocalSearchOrderingCSP,
LeeManurangsi2024BoundedDegreeMax2CSP, minzer2024near}, 
which led to the current best inapproximability for other classical combinatorial optimization problems like Traveling Salesperson Problem~\cite{chlebik2022weighted, nagda2025reinforced} and $k$-Set Packing~\cite{lee2025asymptotically, minzer2026near}.
Furthermore, they have been natural candidates for claims of quantum speedup (including quantum adiabatic algorithms~\cite{FarhiGoldstoneGutmannSipser00Adiabatic,FarhiEtAl01AdiabaticRandom} and the recent surge of interest in Decoded Quantum Interferometry for CSP-like problems such as Max-XORSAT, Max-LINSAT, and Optimal Polynomial Intersection~\cite{JordanEtAl24DQI,AnschuetzGamarnikLu25DQIStructure,KramerSchubertEisert26MaxLINSAT,BuGuLi26WeightedDQI,KramerSchubertEisert26BoundedDegreeMaxLINSAT}) and in fact motivated some research in finding better classical algorithms for such and related instances~\cite{JordanEtAl24DQI,BuGuLi26WeightedDQI,AnschuetzGamarnikLu25DQIStructure,KramerSchubertEisert26BoundedDegreeMaxLINSAT,SchmidhuberEtAl25HDQI}.

We next discuss our results and the key new technical ingredients in more detail. 

\subsection{Our Results}
Our algorithmic results all apply to weighted Max 2-LIN(2) instances, while our hardness result applies to Max-Cut on unweighted bounded-degree simple graphs. Thus, both our results hold in their respective most general natural settings. 

Let us introduce some background to prepare for our results. 
\paragraph{Background:} A Max 2-LIN(2) instance is described by a graph $G=([n],E)$, non-negative weights $w:E \rightarrow \R_+$ and a collection of ``right hand sides'' $b: E \rightarrow \{-1,1\}$. We write $W = \sum_{e \in E} w_e$ for the total weight of the constraints. Given an instance $I = (G,w,b)$ of Max 2-LIN(2), the \emph{value} of an assignment $x$ equals

\begin{equation}
\val(x) = \frac{1}{W} \sum_{e = \{i,j\} \in E}  w_e (\frac{1}{2} + \frac{1}{2} b_e x_i x_j).
\end{equation}
We write $\opt(I) := \max_{x \in \{-1,1\}^n} \val(x)$ to denote the maximum value of any assignment. Observe that when $b_e = -1$ for every $e \in E$, the above is simply the (weighted) Max-Cut problem.

The classical Goemans-Williamson algorithm~\cite{MR1412228-Goemans95} yields an $\alpha_{GW} \approx 0.878$ approximation for Max 2-LIN(2), where $\alpha_{GW}$ is defined precisely below:
\begin{equation*}
    \alpha_{GW} \coloneqq \min_{\rho \in [-1, 1)}\frac{s(\rho)}{c(\rho)}, \qquad c(\rho) = \frac{1- \rho}{2}\,\, \text{and} \,\, s(\rho) = \frac{\arccos \rho}{\pi}.
\end{equation*}
Let $\rho^* \approx -0.689$ denote the unique value of $\rho$ which achieves the minimum in the above ratio, and define $c^* := c(\rho^\ast), s^* := s(\rho^*)$. 

\paragraph{Approximation Algorithm: } Let us first present our algorithmic result. 

\begin{theorem}[Main Result, Algorithm]
\label{thm:main-alg-intro}
There is a polynomial-time algorithm that takes as input a weighted Max 2-LIN(2) instance $I = (G,w,b)$ and outputs an assignment with a value at least $(\alpha_{GW} + \Omega(\frac{1}{d\log d})) \opt(I)$ where $d$ is the maximum degree of any vertex in $G$. 
\end{theorem}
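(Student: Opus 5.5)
We plan to prove the theorem by solving the standard GW SDP and then rounding with a \emph{shifted} Gaussian. Solve the SDP to get unit vectors $v_1,\dots,v_n$ with $\sum_e w_e(\tfrac12+\tfrac12 b_e\langle v_i,v_j\rangle)\ge W\cdot\opt(I)$, and sample a standard Gaussian $g$. Set $z_i=\langle g,v_i\rangle$. Instead of outputting $x_i=\sign(z_i)$, we output
\[
x_i=\sign\Big(z_i+\delta\sum_{j\sim i}\tfrac{w_{ij}}{w_i}\,b_{ij}\,\sign(z_j)\Big),
\]
where $w_i=\sum_{j\sim i}w_{ij}$ and $\delta\approx c/\sqrt{\log d}$ is a small shift parameter, possibly with an additional truncation. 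This is a soft, simultaneous flip toward satisfying the neighbors' constraints under the unshifted GW solution. First we reduce to the case where the SDP is near-tight for GW. We split edges by $\rho_e=b_e\langle v_i,v_j\rangle$. Edges with $\rho_e$ bounded away from $\rho^*$, or a constant fraction of weight there, already give a constant gain over $\alpha_{GW}$, and this gain must survive the perturbation. So we may assume most weight sits on edges with $\rho_e\approx\rho^*$, where $c(\rho_e)\ge c^*$ makes $\opt$ comparable to $W c^*$.

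Next we analyze the change in value edge by edge. For an edge $(i,j)$, the shift changes its satisfaction only when $|z_i|\lesssim\delta$ or $|z_j|\lesssim\delta$. We expand the gain to first order: vertex $i$ flips with probability about $\phi(0)\cdot\delta\cdot|\text{local field}|$, and it flips toward the weighted majority of constraints. The expected gain at vertex $i$ is then roughly $\delta\,\E[\,|\sum_j \tfrac{w_{ij}}{w_i} b_{ij}\sign(z_j)|\,;\,|z_i|\le\delta]$, minus the loss from the biased direction at $z_i$, minus second-order interaction terms from both endpoints being near zero simultaneously.

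The key quantitative step is to show that, conditioned on $z_i\approx0$, the local field $F_i=\sum_j \tfrac{w_{ij}}{w_i}b_{ij}\sign(z_j)$ has typical magnitude $\Omega(\sqrt{\sum_j (w_{ij}/w_i)^2})\ge\Omega(1/\sqrt d)$. This holds because, given $z_i=0$, the $z_j$ retain substantial conditional randomness, and since $\rho^*\neq -1$ each $\sign(z_j)$ has constant conditional variance. One needs an anticoncentration or variance lower bound for a weighted sum of correlated signs, proved through the second moment together with a hypercontractive fourth-moment bound (Paley–Zygmund). The benefit is then about $\delta\cdot\delta\cdot d^{-1/2}$ per unit weight, more precisely $\delta^2\,\E|F_i|$. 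The harm comes from pairwise interactions where both endpoints move, of order $\delta^2\cdot\Pr[\text{both near }0]$ weighted by the field correlations. Balancing these terms is where the $1/(d\log d)$ rate should come from, with $\delta$ chosen at the $\sqrt{\log d}$ scale so that the Gaussian tail cuts off large fields.

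The main obstacle is controlling the second-order cross terms. When neighbors of $i$ also shift simultaneously, the fields $F_i$ and $F_j$ are correlated, and the mean drift introduced by the shift could cancel the $\E|F_i|$ gain. The first thing to try is to compare against the value of the unshifted rounding using the exact Gaussian identity $\Pr[\sign(z_i+a)\ne\sign(z_i)]$. We would also condition on the signs $\sign(z_j)$ of neighbors, treating shifts as functions of neighbor signs only, so that the effect on edge $(i,j)$ decomposes into one term driven by $F_i$, one driven by $F_j$, and a joint term bounded by $O(\delta^2)\Pr[|z_i|,|z_j|\le\delta]$. If this joint term is too large, we fall back to independently activating each vertex's shift with probability $1/2$ to decorrelate.

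Finally, summing over edges gives an expected value of at least $\alpha_{GW}\opt+\Omega(1/(d\log d))\cdot W c^*$. Since $\opt\le1$ after normalization, this yields the claimed ratio, and derandomization or repetition gives a polynomial-time algorithm.
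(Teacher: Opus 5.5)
Your overall plan is the paper's plan: reduce to near-tight edges, then shift every vertex's GW threshold simultaneously by a small vote-dependent amount, so the gain is linear in the shift while the interactions are quadratic. But the two quantitative steps that make it work are missing, and as written they fail.

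First, you use only the basic GW SDP. Your claim that the field $F_i$ has magnitude $\Omega(1/\sqrt d)$ near $z_i=0$ does not follow from each vote having constant variance. Write $b_{ij}z_j=\rho_{ij}z_i+\sqrt{1-\rho_{ij}^2}\,h_{ij}$ with $h_{ij}$ independent of $z_i$. Then the second moment of $\sum_j p_j\sign(h_{ij})$ is $\frac{2}{\pi}\sum_{j,k}p_jp_k\arcsin(\E[h_{ij}h_{ik}])$, and negative off-diagonal correlations can cancel the diagonal completely. This actually happens at optima of the basic SDP. Optimality of $v_i$ given the other vectors forces $\sum_j w_{ij}\sqrt{1-\rho_{ij}^2}\,h_{ij}\equiv 0$, summing over all edges at $i$. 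So at a vertex with exactly two tight edges of equal weight, $h_{ik}=-h_{ij}$. Your field is then $\tfrac12\left(\sign(\rho z_i+\sigma h)+\sign(\rho z_i-\sigma h)\right)$ with $\rho\approx 0.69>0$, which never has sign opposite to $z_i$, so $i$ never moves.

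The paper adds the triangle inequalities exactly to get $\E[h_{ij}h_{ik}]\ge-\beta$ for a constant $\beta<1$. It then lower-bounds the vote energy by expanding $\arcsin$ in its nonnegative Taylor series. The Schur product theorem makes every term nonnegative, so it keeps only the degrees $2\ell+1$ with $\ell\in[L,2L]$, $L\approx\log d$. In that range the negative part, at most $\beta^{2\ell+1}\le 1/(2d)$, is dominated by the diagonal $\sum_j p_j^2\ge 1/d$. This gives $a_i=\E[f_i\psi_i]=\Omega(1/(d\sqrt{\log d}))$. It is one of the two $\sqrt{\log d}$ factors in the final $1/(d\log d)$; neither comes from a Gaussian tail cutoff.

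Second, your gain/loss accounting does not give the rate. A vertex moves only when $|z_i|<\delta|F_i|$, so the first-order gain is about $\delta\,\E[F_i^2]\approx\delta/d$ per unit weight. It is not $\delta^2\E|F_i|\approx\delta^2/\sqrt d$, which, if true, would contradict the paper's $\alpha_{GW}+O(\log^2 d/d)$ UG-hardness. Your bound $O(\delta^2)\Pr[|z_i|,|z_j|\le\delta]$ on the joint term is of order $\delta^4$, since already $\Pr[|z_i|,|z_j|\le\delta]=\Theta(\delta^2)$. That is far larger than $\delta/d$ at $\delta\approx 1/\sqrt{\log d}$, and randomly activating shifts changes only constants.

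What is needed is $\Pr[|z_i|\le\delta|F_i|,\,|z_j|\le\delta|F_j|]=O(\delta^2\sqrt{K}\,(a_i+a_j))$, i.e.\ a bound in terms of the same energy that drives the gain. The paper obtains it in three steps.
\begin{itemize}
\item It builds votes from $h_{ij}$ instead of $\sign(z_j)$, so the shift is independent of $z_i$. One can then integrate out $z_i$ exactly, and the one-endpoint gain is $\lambda\varphi(0)\sum_i W_ia_i$ up to $O(\lambda^2\sqrt K\,S)$ errors. Your field depends on $z_i$, so the exact Gaussian identity you invoke is not directly available.
\item It smooths the votes with the Ornstein--Uhlenbeck operator $U_\eta$, with $K=(1-\eta^2)^{-1}=\Theta(\log d)$. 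This gives $\E f_i^2\le a_i$ and, via the Mehler kernel, $\E_Z\sup_{|h|\le s}|f_i(h,Z)|^2\le O(\sqrt K)\,a_i$ for $s\lesssim 1/\sqrt K$. That controls $f_i$ on the thin window of $h_{ij}$ forced when both $z_i$ and $z_j$ are small.
\item It takes $\lambda=c/\sqrt K$, so the $O(\lambda^2\sqrt K\,S)$ errors are dominated by $\lambda\varphi(0)S$.
\end{itemize}
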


\paragraph{Key Technical Idea:} Let us explain the main idea in our algorithm briefly (we give a more detailed overview in Section~\ref{sec:techniques}). 

As in prior work~\cite{FKL02,Floren16,hsieh2023approximating}, our algorithm starts with Goemans-Williamson rounding of the SDP vectors $v_i$ by choosing a random Gaussian $g \sim \mathcal{N}(0,I)$ and computing   $x_i^0=\mathsf{sign}(z_i) = \operatorname{sign}(\langle g, v_i \rangle)$.
The local algorithm in~\cite{hsieh2023approximating} chooses a fixed narrow slab \(S=\{i:|z_i|<\varepsilon\}\) and flips some $x_i^0$ if it increases the Max-Cut \emph{after} making all flips to all the neighbors of $i$. Intuitively, the Gaussian $g$ must have a small projection in the direction $v_i$ for any $i \in S$, and thus the signs of $z_j$ for $j \sim i$ are approximately independent of $z_i$. We may thus expect a $(1/2 \pm 1/\sqrt{d})$-fraction of them to be on either side of the cut and thus hope to gain a $1/\sqrt{d}$ for each $i \in S$. 

A priori, this is a \emph{sequential} and highly dependent procedure. The gain on a vertex is a function of all \emph{previous} flips on its neighbors. To make the analysis tractable, Hsieh and Kothari's algorithm performs a pessimistic analysis: they assume that whenever a vertex is flipped, all vertices in $S$ that are its neighbors incur a \emph{loss}. Thus, they gain only from those vertices for which the total mass of S in the neighborhood is small relative to the gain at that vertex. This forces them to a small slab width. Indeed, such a strategy has a hard barrier at $\sim 1/d^{1.5}$, and the tightest known analysis~\cite{hsieh2023approximating} only yields a $\sim 1/d^2$ gain. 

Our algorithm beats even this optimistic heuristic by (1) dropping the requirement of a sequential analysis and (2) dropping the pessimistic accounting idea of prior works. Instead, we think of the local improvement as inducing a small ``soft'' bias (as opposed to the hard slab restrictions in prior works) in the rounding threshold for Goemans-Williamson rounding itself and rounding all the vertices \emph{simultaneously} in a single round. 
More precisely, we compute a ``local preference''
$f_i$ and round:
\[
    x_i^\lambda=\operatorname{sign}(z_i+\lambda f_i).
\]
At a high level, $f_i \in [-1,1]$ is a smoothed, weighted average of scores from the neighbors of $i$ (restricted to those on edges with the SDP contribution $\approx (1+\rho_*)/2$).
From the perspective of the algorithm in~\cite {hsieh2023approximating}, our scheme above can be thought of as ``adaptively'' determining the window for flipping $i$ based on the strength of the local preference.

Here is the intuition for why this crosses the $d^{-3/2}$ barrier. A typical local preference has magnitude $|f_i|\asymp d^{-1/2}$. The new rounding flips $x_i$ with probability proportional to $\lambda |f_i|$, and such a change has expected gain proportional to $|f_i|$. Thus, in total, we obtain an expected gain of order $\lambda f_i^2 \asymp \lambda/d$ at \emph{every} vertex! Importantly, no
restriction on an independent set is needed. Intuitively, this is because an edge ``costs'' us only if both endpoints need to
change and is therefore \emph{quadratic} in the small threshold shifts, whereas the useful bump is \emph{linear}. 

\paragraph{Hardness of Approximation: } We prove an $\alpha_{GW} + \tilde{O}(1/d)$-hardness for bounded-degree Max-Cut instances. This improves on the previous state-of-the-art hardness result of Trevisan~\cite{Trevisan01} (combined with~\cite{KKMO07}) that shows that obtaining an $\alpha_{GW} + \omega(1/\sqrt{d})$-approximation for Max-Cut is UG-hard.  

Our main result is stated in the theorem below.

\begin{theorem}    \label{thm:ugc-hard}
	Assuming UGC and $\NP \not\subseteq \BPP$, the following holds\footnote{Recently OpenAI have uploaded a proof of the Unique Games Conjecture~\cite{OpenAI-ugc}. If this proof is correct, then this hardness result holds by just assuming $\NP\not\subseteq \BPP$.} for every large enough positive integer $d$. Given a graph $H = (V, E)$ with maximum degree at most $d$, there is no polynomial-time algorithm that distinguishes between the following cases:
	\begin{itemize}
		\item {\bf YES Case}. $\MaxCut(H) \geq c^* -  O\left(\frac{1}{d}\right)$,
		\item {\bf NO Case}. $\MaxCut(H) \leq s^* + O\left(\frac{\log^2 d}{d}\right)$. 
	\end{itemize}
    Here $\MaxCut(H) = \frac{\opt(H)}{|E|}$ is the normalized Max-Cut value of $H$. 
\end{theorem}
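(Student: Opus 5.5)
The plan is to start from the KKMO reduction~\cite{KKMO07} and then randomly sparsify its output down to maximum degree $d$, arguing that the Max-Cut value survives up to an additive $\tilde{O}(1/d)$.

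\textbf{Step 1: the dense instance.} Take a Unique Games instance with completeness $1-\eta$ and soundness $\delta$. Apply the KKMO reduction with noise parameter $\rho^*$ and long codes over $\{-1,1\}^R$. This gives a weighted graph $G$ in which:
\begin{itemize}
\item in the YES case, dictator cuts have value at least $c^* - O(\eta)$;
\item in the NO case, by Majority-is-Stablest, every cut has value at most $s^* + \epsilon$, where $\epsilon$ can be made arbitrarily small by choosing $\eta,\delta$ small and $R$ large.
\end{itemize}
We will set $\epsilon = \Theta(1/d)$. The graph $G$ is dense and irregular, so it cannot be used directly.

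\textbf{Step 2: sparsification.} Build $H$ by sampling $m = \Theta(dN)$ edges i.i.d.\ from the edge distribution of $G$, where $N$ is the number of vertices. We then discard edges at vertices whose degree exceeds $d$. Concentration bounds keep the number of discarded edges an $O(1/d)$ fraction, provided the vertex weights in $G$ are nearly uniform. To ensure that, we first regularize $G$ by blowing up vertices, or equivalently by using a UG instance that is regular.

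\emph{Completeness.} Fix the good cut of $G$. Its value in $H$ concentrates, so the expected loss is only the $O(1/d)$ from discarded edges. By Markov or Chernoff, $H$ has cut value at least $c^* - O(1/d)$ with high probability.

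\emph{Soundness.} A naive union bound over all $2^N$ cuts would need $m \gg N/\epsilon^2$, that is, average degree $1/\epsilon^2 \approx d^2$. This is exactly Trevisan's $1/\sqrt{d}$ loss. To do better we use a refined chaining/union-bound argument. Group cuts by their value $s^* + t$ in $G$. For a fixed cut, Bernstein's inequality gives
\[
\Pr\bigl[\text{sampled value exceeds } s^* + \epsilon\bigr] \le \exp\bigl(-\Omega(m(\epsilon - t)^2/\sigma^2)\bigr),
\]
and in the relevant regime the variance $\sigma^2$ is $\Theta(1)$. The issue is therefore to count nearly optimal cuts: we need the number of cuts with $G$-value at least $s^* - \tau$ to be at most $\exp(O(N\tau\,\mathrm{polylog}))$, or some similar bound scaling linearly in $\tau$.

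\textbf{Step 3: counting near-optimal cuts (the main obstacle).} This counting bound is the robust inverse Majority-is-Stablest theorem advertised in the abstract. It says that any long-code function whose noise stability at $\rho^*$ is within $\tau$ of optimal must be $\tau$-close, in an appropriate sense, to a function of a low-dimensional projection (a halfspace / Gaussian-like majority structure). This closeness should let us encode each near-optimal cut by a short description:
\begin{enumerate}
\item a choice of structured function per block, coming from a small net;
\item a sparse list of exceptional coordinates, of density $\tilde{O}(\tau)$.
\end{enumerate}
Such an encoding gives at most $\exp(\tilde{O}(\tau N))$ cuts at level $\tau$.

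\textbf{Step 4: combining.} Compare the count with the deviation probability, $\exp(\tilde{O}(\tau N))$ against $\exp(-m\tau^2)$, using peeling over dyadic $\tau$ levels. With $m = \tilde{\Theta}(N/\epsilon)$ this yields soundness $s^* + \tilde{O}(\epsilon)$, and the $\log^2 d$ loss arises from the net and the peeling. Setting $\epsilon \approx 1/d$ gives maximum degree $d$ and NO-case value $s^* + O(\log^2 d/d)$. The reduction is randomized, which is why we assume $\NP \not\subseteq \BPP$.

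I expect Step 3 to be the hardest: proving a quantitatively robust inverse theorem whose error is linear in $\tau$, and turning it into an entropy bound on cuts of the whole (non-product) KKMO graph rather than on a single long code.
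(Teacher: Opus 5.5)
There is a genuine gap, and it is in the quantitative claim your Steps 3--4 rest on. The count you need is false: there are not $\exp(\tilde O(\tau N))$ cuts of value at least $s^*-\tau$, and no robust inverse Majority-is-Stablest theorem with error linear in $\tau$ can hold.

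Here is a counterexample. Put majority on every long code and, independently for each $v$, flip an arbitrary subset of the points whose normalized Hamming weight lies within $\delta$ of the threshold. A point at distance $s$ from the threshold gives majority only a $\Theta(s)$ edge advantage over a fair coin. Edges with both endpoints in the slab carry mass $O(\delta^2)$. Majority and the slab depend only on Hamming weight, so the permutations $\pi_{u,v}$ do not interfere. Hence every such cut loses only $O(\delta^2)$, yet there are $2^{\Omega(\delta)N}$ of them. With $\tau=\delta^2$ this gives $2^{\Omega(\sqrt\tau)N}$ cuts at deficit $\tau$, matching the paper's upper bound $2^{O(\sqrt\eta\log(1/\eta))n}$ up to the log. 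Remark~\ref{rem:robust-opt} records the same fact: the exponent $1/2$ in Theorem~\ref{thm:robust-MIST} is optimal.

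With the correct count, your per-cut Bernstein bound with $\sigma^2=\Theta(1)$ needs $d\tau^2\gg\sqrt\tau\log(1/\tau)$ at level $\tau$, i.e.\ $\tau\gtrsim\tilde\Omega(d^{-2/3})$. So dyadic peeling alone gives soundness only $s^*+\tilde O(d^{-2/3})$, not $\tilde O(1/d)$.

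The missing idea is a one-step chaining that cuts the variance, not just peeling.
\begin{itemize}
\item \textbf{Structure.} The paper first shows (Lemma~\ref{lem:near-opt}) that any cut of value at least $s^*-\eta$ is, on average over $v$, within $L_1$-distance $C\sqrt\eta$ of some $L$ all of whose tables $L_v$ are LTFs. This passes from the averaged tables $g_u$ to the $f_v$ via the shift-averaging bound $\dist(f,\cL_R)\le\dist(T_{1-\mu}f,\cL_R)$, convexity of distance to LTFs, and Jensen.
\item \textbf{Comparison to $L$.} It then applies Bernstein to $Y(f)-Y(L)$ rather than to $Y(f)$. The vertex marginals of the edge distribution are uniform, so only edges touching $\mathrm{supp}(f)\triangle\mathrm{supp}(L)$ contribute (Lemmas~\ref{lem:prob_in_S} and~\ref{lem:A_B_diff}). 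The variance is therefore $O(\sqrt\eta)$ and the tail is $\exp(-\Omega(nd\,\eta^{3/2}))$, which beats the $2^{O(\sqrt\eta\log(1/\eta))n}$ neighbourhood count once $\eta\gtrsim\log^2 d/d$.
\item \textbf{The structured cuts.} The cuts $L$ themselves get a plain Chernoff bound at deviation $\log^2 d/d$. This is affordable because Sauer--Shelah gives only $2^{O(R^2|V|)}=2^{o(n)}$ of them.
\end{itemize}
So the $\tilde O(1/d)$ comes from a $\sqrt\eta$-size count against an $\eta^{3/2}$-tail, not from a linear-in-$\tau$ count, which cannot exist.
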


As an immediate corollary of the above theorem, we obtain the following result, which establishes improved hardness for Max-Cut in bounded-degree instances.

\begin{corollary}   \label{cor:ugc-hard}
	Assuming UGC, there is no polynomial-time algorithm to approximate Max-Cut beyond $\left(\alpha_{GW} + \Omega\left(\frac{\log^2 d}{d}\right)\right)$-factor in graphs with maximum degree $d$, unless $\NP \subseteq \BPP$.
\end{corollary}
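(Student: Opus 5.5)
The corollary follows from Theorem~\ref{thm:ugc-hard} by the standard gap-to-ratio conversion. Suppose, for contradiction, that a polynomial-time algorithm $\mathcal{A}$ achieves approximation ratio $\alpha_{GW} + C\frac{\log^2 d}{d}$ on graphs of maximum degree $d$, where $C$ is a sufficiently large constant chosen relative to the constants hidden in Theorem~\ref{thm:ugc-hard}. We use $\mathcal{A}$ to distinguish the YES and NO cases of that theorem. Given $H$, run $\mathcal{A}$ to obtain a cut of normalized value $\mathcal{A}(H)$. Output YES if $\mathcal{A}(H) > s^* + C_2\frac{\log^2 d}{d}$, and NO otherwise. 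Here $C_2$ is the constant in the NO case.

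\textbf{NO case.} Every cut, including the one returned by $\mathcal{A}$, has value at most $\MaxCut(H) \le s^* + C_2\frac{\log^2 d}{d}$. So we correctly output NO.

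\textbf{YES case.} Let $C_1$ be the constant in the YES case, so $\MaxCut(H) \ge c^* - C_1/d$. The guarantee of $\mathcal{A}$ gives
\[
\mathcal{A}(H) \ge \Bigl(\alpha_{GW} + C\tfrac{\log^2 d}{d}\Bigr)\Bigl(c^* - \tfrac{C_1}{d}\Bigr) \ge s^* + C c^*\tfrac{\log^2 d}{d} - \tfrac{C_1}{d} - O\Bigl(\tfrac{\log^2 d}{d^2}\Bigr),
\]
using $\alpha_{GW} c^* = s^*$ (which holds since $\rho^*$ attains the minimum) and $\alpha_{GW} \le 1$. The constant $c^* = (1-\rho^*)/2 \approx 0.84$ is bounded away from $0$. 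Choose $C$ with $C c^* > C_2 + 1$. Then for $d$ large enough the right-hand side exceeds $s^* + C_2\frac{\log^2 d}{d}$, because the terms $C_1/d$ and $O(\log^2 d/d^2)$ are $o(\log^2 d/d)$. So we correctly output YES.

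Thus $\mathcal{A}$ decides the gap problem in polynomial time. By Theorem~\ref{thm:ugc-hard}, under UGC this implies $\NP \subseteq \BPP$. Hence, unless $\NP \subseteq \BPP$, no such algorithm exists.

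There is no real obstacle here. The only points needing care are:
\begin{itemize}
\item the Theorem applies only for large enough $d$, so the corollary is asymptotic in $d$;
\item $\Omega(\cdot)$ in the corollary should be read as ``ratio $\alpha_{GW} + C\log^2 d/d$ for a suitably large constant $C$'';
\item if $\mathcal{A}$ is randomized, or its guarantee holds only in expectation, we repeat it polynomially many times and keep the best cut; by Markov's inequality applied to $1 - \mathcal{A}(H)$, this gives a cut of value at least the ratio minus $o(\log^2 d/d)$ with high probability, and that slack is absorbed by choosing $C$ slightly larger. The resulting distinguisher is still a $\BPP$ algorithm, which is consistent with the randomized reduction in the Theorem.
\end{itemize}
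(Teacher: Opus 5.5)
Your proposal is correct and is the standard gap-to-ratio conversion the paper has in mind when it calls this an immediate corollary of Theorem~\ref{thm:ugc-hard}. The only computations needed are the identity $\alpha_{GW}c^* = s^*$ and absorbing the $O(1/d)$ completeness loss into the $\log^2 d/d$ margin; you handle both correctly, and your remark on randomized or in-expectation algorithms is also right.
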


We also give an integrality gap with similar parameters for the Goemans-Williamson SDP relaxation equipped with the $\ell^2_2$-triangle inequalities; this is stated in the theorem below.

\begin{theorem}				\label{thm:max-cut-gap-1}
	For every sufficiently large integer $d$, there exists a graph $G = (V,E)$ with maximum degree at most $d$ such that the following holds.
	\begin{itemize}
		\item {\bf Completeness}. The optimal value of SDP relaxation for $G$ is at least $c^* - O(1/d)$.
		\item {\bf Soundness}. The optimal Max-Cut value of $G$ is at most $s^* + O\left(\frac{\log^2(d)}{d}\right)$.
	\end{itemize}
    Consequently, the integrality gap of the SDP relaxation is at most $\alpha_{GW} + O\left(\frac{\log^2(d)}{d}\right)$.
\end{theorem}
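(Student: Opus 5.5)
We construct the gap instance by sparsifying a dense integrality-gap instance for Max-Cut with $\ell_2^2$-triangle inequalities, i.e.\ the integrality-gap counterpart of the hardness reduction behind Theorem~\ref{thm:ugc-hard}. The starting point is a Feige--Schechtman / KKMO-style gap instance $G_0$ on a weighted graph. For concreteness we take the noisy-hypercube-type construction obtained by composing the Khot--Vishnoi Unique Games integrality gap with the KKMO long-code test at noise $\rho^*$. Its vertex set is $\{-1,1\}^R$ blown up over the label-extended graph, and its edges are $\rho^*$-correlated pairs.

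For $G_0$ it is known that:
\begin{itemize}
\item the SDP value, including the $\ell_2^2$-triangle inequalities, is at least $c^*-\eta$;
\item the Max-Cut value is at most $s^*+\eta$, for any fixed small $\eta$, once $R$ is large.
\end{itemize}
The triangle inequalities come from the Khot--Vishnoi tensoring argument. We fix $\eta \ll 1/d$, which is possible because $\eta$ only affects the size of $G_0$.

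Next we form $G$ by random sparsification. Each vertex of $G_0$ independently samples about $d/2$ edges from its weighted neighborhood distribution, and we then remove vertices whose degree exceeds $d$; w.h.p.\ only a negligible fraction of edge weight is removed.

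\textbf{Completeness.} We keep the same SDP vectors. Edges of $G_0$ have SDP contribution close to $c^*$ on average, so a Chernoff/Hoeffding bound gives a sampled average of $c^*-O(\eta)-O(1/d)$ w.h.p. The $O(1/d)$ loss covers the degree truncation, and the triangle inequalities are unaffected since the vectors are unchanged.

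\textbf{Soundness.} This is the main obstacle. A naive union bound over all $2^{|V|}$ cuts would need average degree $\Omega(1/\epsilon^2)$, with $\epsilon=\log^2 d/d$, which is far too large. Instead we invoke the paper's sparsification lemma for KKMO-type graphs: sampling to average degree $\tilde O(1/\epsilon)$ preserves every cut value up to an additive $\epsilon$.

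The proof of that lemma, which we would follow, groups cuts by their structure.
\begin{itemize}
\item Cuts of value near $s^*$ in $G_0$ must, by the robust inverse majority-is-stablest theorem, be close to dictator-like or low-influence threshold functions on each block. The number of such nearly optimal cuts is only $\exp(\tilde O(|V|/d))$ rather than $2^{|V|}$.
\item Cuts far from optimal have slack that absorbs larger deviations.
\end{itemize}
Each cut's deviation concentrates with tail $\exp(-\Omega(\epsilon^2 d|V|))$, or Bernstein-type $\exp(-\Omega(\epsilon d |V|))$ when the variance is small. A union bound over the $\exp(\tilde O(|V|/d))$ nearly optimal cuts then closes at $\epsilon=O(\log^2 d/d)$, with the two logarithmic factors coming from the counting bound and from the Bernstein regime.

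Putting the pieces together, the SDP value is at least $c^*-O(1/d)$ and Max-Cut is at most $s^*+O(\log^2 d/d)$. The ratio is therefore at most $s^*/c^*+O(\log^2 d/d)=\alpha_{GW}+O(\log^2 d/d)$. Since the construction is randomized, a derandomization step is unnecessary: existence follows from a positive success probability.
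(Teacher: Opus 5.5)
You follow the paper's overall plan. You take the Khot--Vishnoi UG gap instance $\mathcal{G}_\eps$, apply the reduction of Figure~\ref{fig:reduction}, subsample Trevisan-style, and reuse the same SDP vectors for completeness. For soundness, the relevant analysis is the one behind Theorem~\ref{thm:h_tilde}. It uses $\mathrm{opt}(\mathcal{G}_\eps)\le\eps$ through Lemma~\ref{lem:near-opt}, not merely the bound $\MaxCut(H_\eps)\le s^*+\delta(\eps)$ you list. Citing that analysis verbatim would finish the proof, but your account of it is wrong in ways that matter.

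\textbf{The lemma as stated is false.} At average degree $\tilde O(1/\epsilon)$ you cannot keep \emph{every} cut within $\epsilon$. For example, a Shearer-type local improvement of a uniformly random cut gains $\Omega(1/\sqrt d)$ on the locally tree-like sample, while its value in $H$ stays near $1/2$. The paper proves only the one-sided bound that no sampled cut exceeds $s^*+O(\log^2 d/d)$, which is all soundness needs.

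\textbf{Your count and tails are off.} Lemma~\ref{lem:near-opt} places a cut with $W_H(f)\ge s^*-\eta$ within \emph{average} $L_1$-distance $O(\sqrt\eta)$ of an LTF-tuple, not within that distance on each block.
\begin{itemize}
\item The count is therefore $2^{O(\sqrt\eta\log(1/\eta))n}$, times the $2^{O(R^2|V|)}$ choices of tuple. At $\eta\approx\log^2 d/d$ this is $2^{\tilde O(n/\sqrt d)}$, not $2^{\tilde O(n/d)}$.
\item The count is tight up to the logarithm. Flipping an arbitrary subset of a strip of mass $\Theta(\sqrt\eta)$ around the majority threshold in every block costs only $O(\eta)$; this is a variant of the construction in Remark~\ref{rem:robust-opt}.
\item Against this count, the Hoeffding tail $\exp(-\epsilon^2dn)$ closes only at $\epsilon\approx d^{-2/3}$, up to logarithms.
\item The small-variance Bernstein tail $\exp(-\Omega(\epsilon dn))$ is unavailable for a single near-optimal cut, because $\one\{e\in E(f)\}$ has variance about $s^*(1-s^*)$.
\end{itemize}

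\textbf{The missing idea is one-step chaining.} Bucket cuts dyadically by deficit. For $f\in\mathcal{A}^{(j)}\setminus\mathcal{A}^{(j-1)}$ with nearby LTF-tuple $L$, write $Y(f)=Y(L)+(Y(f)-Y(L))$.
\begin{itemize}
\item By Lemma~\ref{lem:prob_in_S}, each sampled edge separates $f$ from $L$ with probability $O(\|f-L\|_1)=O(\sqrt{\eta^{(j)}})$.
\item Lemma~\ref{lem:A_B_diff} then bounds the probability that the difference exceeds the bucket's slack $\eta^{(j-1)}$ by $\exp(-\Omega(d(\eta^{(j-1)})^{3/2}n))$. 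This beats the Hamming-ball count once $d\eta\gg\log(1/\eta)$, which the choice $\eta^{(0)}=\log^2 d/d$ ensures.
\item The anchors $Y(L)$ need only a plain Chernoff bound, union-bounded over the LTF-tuples. This is affordable because $R^2|V|=(R^2 2^{-R})\,n$ is a vanishing fraction of $n=|V|2^R$ as $R$ grows.
\end{itemize}

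\textbf{A smaller slip in the sparsification.} With $d/2$ samples per vertex, the average degree is already $d$ and degrees fluctuate by $\Theta(\sqrt d)$. Deleting vertices of degree above $d$ therefore removes a constant fraction of the edges, not a negligible one. You need constant slack, as in the paper: $nd$ samples, average degree $2d$, cap at $10d$, then rescale $d$. You also need Lemma~\ref{lem:multi-edge_deletion} to handle parallel edges. Per-vertex sampling is otherwise harmless, since independence and the uniform endpoint marginals give the same variance bounds.
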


\paragraph{Key Technical Ideas.} Loosely speaking, in the context of the Max-Cut problem, Trevisan's~\cite{Trevisan01} approach can be thought of as a ``cut-sparsification'' procedure for the hard instances of Max-Cut from \cite{KKMO07}. Specifically, given a Max-Cut instance $G = (V,E)$, Trevisan showed that if one sub-samples an $O(d)$-degree subgraph of $G$, then with high probability, all the cuts in the sub-sampled graph are preserved up to an $O(1/\sqrt{d})$ additive error, which in turn results in the $\alpha_{GW} + O(1/\sqrt{d})$-hardness for Max-Cut in the bounded-degree setting.  Furthermore, it is well-known that $O(1/\sqrt{d})$-error represents a fundamental limit for cut-sparsification up to average-degree $d$~\cite{andoni2016sketching}. 

Then how can we go beyond the aforementioned $O(1/\sqrt{d})$-error limit for cut-sparsification? To that end, our first key observation is that, instead of trying to preserve the values of all the cuts in the sub-sampled graph, it suffices to just preserve the value of the Max-Cut. This allows us to implement a {\em multi-scale version} of the union bound, wherein we allow cuts that are far from being optimal to deviate significantly, and tightly control the deviations for cuts that are near-optimal.

However, in order to make the above approach work, we would also need to ensure that the number of near-optimal cuts is much smaller than the total number of cuts. And in general, arbitrary graphs may not satisfy such a property at all\footnote{For instance, consider the complete graph on $n$-vertices, in which $(1 - e^{-\Omega(n)})$-fraction of all the cuts are nearly optimal.}! Fortunately, it turns out that the main combinatorial object underlying the Max-Cut hardness, i.e., {\em the noisy hypercube}, does indeed satisfy such a property.
To show this, we develop an asymptotically optimal robust inverse of the {\em Majority-Is-Stablest} theorem~\cite{MOO10} where we show that any (non-influential) near-optimal cut is ``close'' to a linear threshold function (LTF). 

The above structural result allows us to derive an asymptotically optimal bound on the number of near-optimal cuts in the noisy cube, which in turn allows us to effectively implement the multi-scale union bound via a sharper Bernstein inequality-based tail bound. Our tail bound is not obtained by comparing a sparsified cut to its expectation; rather, we compare it to its closest LTF and only look at the edges in their symmetric difference. This, which can be thought of as a simple \emph{chaining} argument, gives us much finer control on the deviation. At a high level, at scale $\eta$, we are able to show that (a) the number of $\eta$-optimal cuts is at most $2^{\tilde{O}(\sqrt{\eta})n}$ and (b) any $\eta$-optimal cut exceeds the max-cut value by $t$ in the sub-sampled graph with probability at most $\exp(-(\eta + t)^{3/2}nd)$, and hence the union-bound term at scale $\eta$ ends up being of the order $e^{-\Omega_d(n)}$ for $t = \tilde{O}(1/d)$, thus allowing us to take a union bound over all $\eta$.

\paragraph{AI use disclosure.} The analysis of our local improvement algorithm was developed in a long interaction with ChatGPT 5.5 Pro. The proof of Theorem~\ref{thm:robust-MIST} (the robust version of the Majority-is-Stablest theorem) is
AI-assisted. The authors devised the overall proof strategy, including all key
steps outlined in Section~\ref{sec:MiST}, and used ChatGPT 5.5 Pro to
assist with writing out the technical details. The authors have fully checked
and revised all AI-generated text and assume responsibility for all content,
including the correctness of the proof.

\section{Technical Overview}
\label{sec:techniques}

\subsection{Our Approximation Algorithm}
As in the previous works~\cite{FKL02,Floren16,hsieh2023approximating}, our starting point is the
Goemans--Williamson SDP strengthened with the triangle inequalities.  Let
$v_1,\ldots,v_n$ be an optimal SDP solution, draw $g\sim\mathcal N(0,I_n)$,
and write
\[
        z_i=\langle g,v_i\rangle,
        \qquad
        x_i^0=\operatorname{sign}(z_i).
\]
Thus $x^0$ is the usual Goemans--Williamson rounding.  For an edge
$\{i,j\}$, let $\rho_{ij}=b_{ij}\langle v_i,v_j\rangle$ be its signed SDP
correlation.  The edge-by-edge analysis of Goemans--Williamson is tight only
when $\rho_{ij}$ lies in a small constant-size interval around
$-\rho^\ast\approx 0.689$.  We call these edges \emph{tight} and denote their
set by $E_T$.  If a noticeable fraction of the SDP value comes from edges
outside $E_T$, then ordinary Goemans--Williamson rounding already beats
$\alpha_{\mathrm{GW}}$ by the amount we need.  We may therefore focus on the
case in which the tight edges carry almost all of the SDP value.

\paragraph{The Key Idea and Bottleneck in Prior Works.}

The main observation in all the prior works above is that the rounding decision at $i$ is ``noisy'' when $z_i$ is close to zero. Intuitively, this is because $z_i$ being small in magnitude means that $g$ (the Gaussian chosen in GW rounding) is nearly orthogonal to $v_i$. In this case, the rounded value of neighbors of $i$ is nearly independent of $x_i^0$. Thus, we expect the fraction of neighbors on the same side as $i$ to have a high variance with mean around $1/2$. Thus, flipping $x_i$ becomes profitable with a constant probability. 

Hsieh and Kothari~\cite{hsieh2023approximating} implement this idea by
choosing a fixed narrow slab $\{i:|z_i|\leq\varepsilon\}$ and making a full flip only when its local gain remains positive. 

This process is sequential and dependent. The ``profitable'' flips need not remain profitable when some of their neighboring vertices are flipped previously. One way to avoid this interaction is to retain only an independent set of candidates, which immediately loses a factor \(1/d\) and creates the \(d^{-3/2}\) barrier. Hsieh and Kothari~\cite{hsieh2023approximating} obtain similar protection: they
select a fixed narrow slab \(S=\{i:|z_i|\leq\varepsilon\}\) and flip \(i\) only when its gain against vertices outside \(S\) exceeds the total weight of its edges into \(S\). In effect, every edge between two candidates is pessimistically charged as a potential loss. This makes the simultaneous flips safe, but requires the
candidate slab to be very sparse.




\paragraph{Our Key Idea:} Our main idea is to avoid making each local flip ``safe''. Instead, we compute a local
bias \(f_i\) and modify the GW rounding itself:
\[
        x_i^\lambda=\operatorname{sign}(z_i+\lambda f_i).
\]
We can think of this as a ``soft'' version of the flipping strategy in previous works, but crucially, in our version, all vertices are rounded \emph{simultaneously}. Why might we gain? Heuristically, the positive effect of flipping one
endpoint of an edge is linear in the small threshold shift, whereas 
a negative effect requires both endpoints to change and is therefore quadratic. We may ``win'' if the linear gain dominates the quadratic loss. Let us explain how the local bias \(f_i\) is constructed and how the gain accumulates to $\sim 1/d$ next.



\paragraph{Computing the Scores.}
Fix a tight edge $\{i,j\}$.  From the point of view of $i$, decompose the Gaussian computed in the GW rounding at $j$, $z_j = \langle g, v_j \rangle$ as:
\[
 b_{ij}z_j
   =\rho_{ij}z_i+\sqrt{1-\rho_{ij}^2}\,h_{ij},
 \qquad
 h_{ij}:=
 \left\langle g,
 \frac{b_{ij}v_j-\rho_{ij}v_i}{\sqrt{1-\rho_{ij}^2}}
 \right\rangle .
\]
Now, $h_{ij}$ is a standard Gaussian independent of $z_i$.  If
$z_i=0$, then
$\operatorname{sign}(h_{ij})=b_{ij}\operatorname{sign}(z_j)$, which is
exactly the label that $j$ would like $i$ to take in order to satisfy their
constraint.  Thus $\operatorname{sign}(h_{ij})$ is the natural vote of $j$
for the label of $i$.  Removing the part of $z_j$ predicted by $z_i$ is important: it makes the local ``recommendation'' independent of the quantity whose rounding threshold we will move.

Without loss of generality, assume that the total weight $W$ is $1$, and let $W_i=\sum_{j:\{i,j\}\in E_T}w_{ij}$.  The most natural score would be the weighted vote
\[
        \mathsf{vote}_i
        =\frac{1}{W_i}
          \sum_{j:\{i,j\}\in E_T}w_{ij}\operatorname{sign}(h_{ij}).
\]
We believe that this score may also work, but we do not currently know how to analyze it. Instead, for the sake of analysis, we will use the following score function: 
\[
 q_\eta(t)
   :=2\Phi\!\left(\frac{\eta t}{\sqrt{1-\eta^2}}\right)-1
   =\E_Z\!\left[
       \operatorname{sign}\!\left(\eta t+\sqrt{1-\eta^2}Z\right)
     \right],
 \qquad Z\sim\mathcal N(0,1),
\]
and set
\[
        f_i=\frac{1}{W_i}
          \sum_{j:\{i,j\}\in E_T}w_{ij}q_\eta(h_{ij}).
\]

We think of $q_\eta$ as adding a small amount of fresh Gaussian noise before taking the sign and then averaging out that noise.  It leaves ``decisive'' votes close to $\pm1$ while smoothly discounting ``ambiguous'' votes.  We take
$K=(1-\eta^2)^{-1}=\Theta(\log d)$, so the amount of smoothing is small.

\paragraph{Using Scores to Modify GW Rounding.}
For a small step size $\lambda$, our final rounding is
\[
                 x_i^\lambda
                 =\operatorname{sign}(z_i+\lambda f_i).
\]

This can be viewed as a local improvement of $x^0$ but it is helpful to think of it as a new ``one-shot'' rounding rule modifying GW rounding itself.  Notice that $i$ changes sides exactly when
\[
        x_i^0 f_i<0
        \qquad\text{and}\qquad
        |z_i|<\lambda|f_i|.
\]
Unlike the fixed slab in the earlier algorithms, the candidate window now adapts to the strength of the local preference: a ``strong'' vote is allowed to move a vertex farther from the hyperplane, while a ``weak'' vote acts only very close to it.  All vertices are updated at once, and there is no sequential local search process to analyze or an independent set to restrict to.

Let us give a heuristic analysis to explain why this may work. Suppose \(|f_i|\approx d^{-1/2}\) --- this turns out to be the typical scale. Thus, using the Gaussian PDF, $i$ is flipped with probability on the order of \(\lambda|f_i|\). Conditional on
flipping, its local gain is of order \(W_i|f_i|\). Its expected contribution is
therefore
\[
        \lambda W_i f_i^2
        \approx \frac{\lambda W_i}{d}.
\]
These contribution add up over \emph{all} vertices (without any further \(1/d\) loss as in prior works). Let us now understand why the gain turns out to be linear while the loss on an edge is quadratic, allowing the gain to dominate.  


\paragraph{The Analysis.}
Let
\[
        a_i:=\E[f_i\cdot\mathsf{vote}_i],
        \qquad
        S:=\sum_i W_i a_i.
\]
The quantity \(a_i\) measures the correlation between the smoothed score used by the algorithm and the hard local neighborhood vote that creates the gain. To see why, put $\delta_i=x_i^\lambda-x_i^0$ and expand the change on an edge:
\[
 x_i^\lambda x_j^\lambda-x_i^0x_j^0
   =\delta_i x_j^0+x_i^0\delta_j+\delta_i\delta_j.
\]
The first two terms correspond to changing one endpoint of edge $\{i,j\}$, and the last term is
the interaction between two changes.  Because $f_i$ is independent of $z_i$, we can condition on the local vote and integrate over the short interval of values of $z_i$ that cross the shifted threshold.  Summed over all directed edges, the leading one-endpoint contribution is exactly $\lambda\phi(0)S$, where $\phi$ is the density of the  Gaussian distribution. This formalizes the heuristic
that the useful gain is linear in the threshold shift.

\paragraph{The Same Quantity Controls the Interaction Errors.}
The technical heart of the analysis is that the same quantity \(S\) that controls the gain also controls the losses from simultaneous flips across the endpoints of edges. 

First, observe that $\E[f_i^2]\leq a_i$,
so a large threshold shift must be accompanied by a correspondingly large useful gain. More importantly, even though the updates at adjacent vertices are dependent, we prove
\[
 \Pr\big[|z_i|\leq\lambda|f_i|,
          |z_j|\leq\lambda|f_j|\big]
      \leq O\!\left(\lambda^2\sqrt K\,(a_i+a_j)\right).
\]
After summing over edges, the total error from changing both endpoints is at most \(O(\lambda^2\sqrt K\,S)\). The smaller errors in approximating the one-endpoint gain satisfy the same bound. Consequently,
\[
 \E[\operatorname{val}_{E_T}(x^\lambda)]
 -\E[\operatorname{val}_{E_T}(x^0)]
 \geq \lambda\phi(0)S-O(\lambda^2\sqrt K\,S).
\]
Taking \(\lambda=c/\sqrt K\) for a sufficiently small constant \(c>0\) makes the linear gain dominate the quadratic errors. This is how our global analysis avoids the loss inherent in the previous pessimistic accounting.


\paragraph{Lower Bounding the Gain.}
It remains to show that $S$ is large.  The triangle inequalities imply that two votes at the same vertex cannot be almost perfectly opposed: for some
universal constant \(\beta<1\),
\[
        \E[h_{ij}h_{ik}]\geq-\beta.
\]
Put \(p_j=w_{ij}/W_i\) and
\(\gamma_{jk}=\E[h_{ij}h_{ik}]\). By Sheppard's formula,
\(a_i = \E [ f_i \mathsf{vote}_i] \) is a weighted sum of \(\arcsin(\eta\gamma_{jk})\). Individual terms in this sum may be negative, so bounding them separately is not useful. We will show that the aggregate gain is actually positive. 

The matrix \((\gamma_{jk})_{j,k}\) is a covariance matrix and hence is positive
semidefinite. By the Schur product theorem, its entrywise powers are also
positive semidefinite. Therefore, for every positive integer \(m\), $\sum_{j,k}p_jp_k\gamma_{jk}^m\geq0$.

Now expand
\[
        \arcsin(t)=\sum_{\ell\geq0}c_\ell t^{2\ell+1},
        \qquad c_\ell>0.
\]

Importantly, all Taylor coefficients in the above expansion are non-negative so we can use any subset to obtain a useful lower bound.  Let \(L=\lfloor K/10\rfloor\), and let's keep only the terms with \(\ell\in[L,2L]\). For each such \(\ell\), write
\[
        Q_\ell:=\sum_{j,k}p_jp_k\gamma_{jk}^{2\ell+1}.
\]
The diagonal contribution to \(Q_\ell\) is $\sum_jp_j^2\geq\frac1d$.

Since every negative off-diagonal entry satisfies
\(\gamma_{jk}\geq-\beta\), the total negative off-diagonal contribution is at least $-\beta^{2\ell+1}\sum_{j,k}p_jp_k \geq-\beta^{2L+1}$.
Choosing the constant in \(K=C_0\log d\) sufficiently large ensures that \(\beta^{2L+1}\leq1/(2d)\), and hence \(Q_\ell\geq1/(2d)\) throughout this
block. Moreover,
\[
        \eta^{2\ell+1}\geq\eta^{4L+1}=\Omega(1),
        \qquad
        \sum_{\ell=L}^{2L}c_\ell=\Theta(K^{-1/2}).
\]
Keeping only these terms in the \(\arcsin\) expansion therefore gives $a_i=\Omega\!\left(\frac{1}{d\sqrt K}\right)$.

\subsection{Overview of Unique Games hardness}

Next, we give a brief overview of the existing challenges and our techniques for proving Theorem \ref{thm:ugc-hard}. In particular, we will first describe how the techniques of \cite{KKMO07} and \cite{Trevisan01} can be combined to prove the existing $(\alpha_{GW} + O(1/\sqrt{d}))$-hardness for Max-Cut. We will then discuss the existing bottlenecks towards improving the analysis and our techniques, and then conclude with an informal overview of the proof of Theorem \ref{thm:ugc-hard}.

\subsubsection{Review of $\alpha_{GW}$-hardness in the general setting.} 

To simplify our exposition, we will just focus on the underlying dictatorship test gadget, namely, the {\em $\rho^*$-correlated hypercube}, which is the key combinatorial object used for proving the optimal hardness for Max-Cut. Formally, the $R$-dimensional $\rho^*$-correlated hypercube, denoted by $\cH^R_{\rho^*}$, is the graph on $\{0,1\}^R$ whose edges are distributed as pairs of strings in $\{0,1\}^R$ which are coordinate-wise $\rho^*$-correlated\footnote {That is, it is a weighted graph with total weight $1$, and the weight of $(u, v)$ with $u \neq v \in \{0, 1\}^R$ is $2 (\frac{1-\rho^*}{4})^{d_H(u,v)} (\frac{1+\rho^*}{4})^{R - d_H(u,v)}$, where $d_H(\cdot,\cdot)$ denotes the Hamming distance.}. 
The above gadget satisfies a certain set of ''structured vs random'' properties which, using standard techniques, can be used to show the $\alpha_{GW} = s^*/c^*$-factor hardness of Max-Cut. 

$\triangleright$ {\bf Completeness}. Consider the dictator set $S := \{x \in \{0,1\}^R : x(1) = 1\}$. Since the edges of $\cH^R_{\rho^*}$ are $\rho^*$-correlated strings, it is straightforward to see that the weight of edges cut by $(S,S^c)$ is exactly equal to the probability that a pair of $\rho^*$-correlated bits are unequal, which is equal to $c^* = (1 - \rho^*)/2$.

$\triangleright$ {\bf Soundness}. For the soundness direction, one would like to bound the maximum weight of edges that are cut by any ``(quasi-)random set'', i.e., sets whose indicator functions have no influential coordinates. This question was answered strongly in the form of the {\em Majority-Is-Stablest} theorem~\cite{MOO10}, which is stated informally below.

\begin{theorem}[Informal version of Theorem 4.4 in~\cite{MOO10}]
	Any quasi-random set $S \subseteq \{0,1\}^R$ cuts edges of total weight at most $(s^* + o(1))$ in $\cH^R_{\rho^*}$. Moreover, this maximum value is achieved (up to $o(1)$-error) by the Majority function\footnote{The majority function ${\sf MAJ}:\{0,1\}^R \to \{0,1\}$ is simply the function that outputs the majority of the input bits.} ${\sf MAJ}$.
\end{theorem}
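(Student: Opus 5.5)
The statement is the informal Majority-Is-Stablest theorem of~\cite{MOO10}, so the plan follows their invariance-principle route. Encode a set $S\subseteq\{0,1\}^R$ by $f=1-2\cdot\mathbf 1_S:\{-1,1\}^R\to\{-1,1\}$. The weight of edges of $\cH^R_{\rho^*}$ cut by $(S,S^c)$ is then
\[
\frac{1-\Stab_{\rho^*}(f)}{2}, \qquad \text{where } \Stab_{\rho}(f)=\E[f(x)f(y)]
\]
for $\rho$-correlated $x,y$. Since $\rho^*<0$, bounding the cut value by $s^*+o(1)$ is the same as lower bounding $\Stab_{\rho^*}(f)\ge 1-2s^*-o(1)=\tfrac{2}{\pi}\arcsin\rho^*-o(1)$. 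Here $\tfrac{2}{\pi}\arcsin\rho^*$ is the correlation of two $\rho^*$-correlated Gaussian signs.

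First, reduce to the balanced and positive-$\rho$ world. Split $f$ into its odd and even parts: $f_{\mathrm{odd}}=\sum_{|T|\text{ odd}}\hat f(T)\chi_T$ and $f_{\mathrm{even}}$ the rest. Then
\[
\Stab_{\rho^*}(f)=\sum_{T}(\rho^*)^{|T|}\hat f(T)^2\ \ge\ -\Stab_{|\rho^*|}(f_{\mathrm{odd}}).
\]
The function $g(x)=(f(x)-f(-x))/2=f_{\mathrm{odd}}$ is odd, takes values in $[-1,1]$, has mean zero, and has low influences whenever $f$ does. It therefore suffices to show that every bounded, mean-zero, low-influence $g$ satisfies $\Stab_{|\rho^*|}(g)\le \tfrac{2}{\pi}\arcsin|\rho^*|+o(1)$. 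This is the Majority-Is-Stablest inequality for positive correlation.

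Next, prove the positive-$\rho$ inequality by invariance. Smooth the function to $T_{1-\gamma}g$ so that its high-degree tail contributes only $o(1)$ to the stability. Then invoke the invariance principle of~\cite{MOO10}: for a low-degree, low-influence multilinear polynomial, the distribution of $g(x)$ under uniform Boolean inputs is close to the distribution of the same polynomial evaluated on Gaussian inputs, in the sense of smooth test functions and hence of $\E[\zeta(\cdot)]$, where $\zeta$ is the squared distance to $[-1,1]$. The Gaussian version $\tilde g$ is thus nearly $[-1,1]$-valued and nearly mean zero. Truncating it to $[-1,1]$ changes the noise stability by only $o(1)$. In Gaussian space, Borell's isoperimetric theorem states that among $[-1,1]$-valued functions of a given mean, halfspaces maximize noise stability. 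A balanced halfspace has stability exactly $\tfrac{2}{\pi}\arcsin\rho$, and carrying this back through the invariance step gives the Boolean bound.

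For the ``moreover'' clause, I would apply the multidimensional central limit theorem to $(x,y)$. It shows that $\Stab_{\rho^*}({\sf MAJ}_R)\to\tfrac{2}{\pi}\arcsin\rho^*$, so Majority cuts a $\tfrac{\arccos\rho^*}{\pi}-o(1)=s^*-o(1)$ fraction of the edges. The main obstacle is the invariance principle itself, which needs a hypercontractive Lindeberg replacement argument with quantitative error $O(\tau^{\Omega(\gamma)})$ in terms of the maximum influence $\tau$. Since the earlier sections permit citing it, I would take it, together with Borell's theorem, as a black box from~\cite{MOO10}.
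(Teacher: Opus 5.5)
Your proposal is correct and is essentially the standard argument of \cite{MOO10}: the odd-part reduction from $\rho^*<0$ to positive correlation, then smoothing, invariance, Borell's theorem, and the CLT for Majority. The paper does not reprove this result but simply cites it, and it uses the same odd/even splitting $\Stab_\rho(g)=\Stab_\rho(g_{\mathrm{even}})+\Stab_\rho(g_{\mathrm{odd}})$ in its proof of Theorem~\ref{thm:robust-MIST}. For the ``moreover'' clause, also note that ${\sf MAJ}_R$ is itself quasi-random (each influence is $\Theta(1/\sqrt{R})$), so it genuinely lies in the class over which the maximum is taken.
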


The first part of the above theorem immediately implies that the maximum weight of edges cut by any quasi-random set is at most $(s^* + o(1))$. Furthermore, the above structured vs quasi-random properties of $\cH^R_{\rho^*}$ can be lifted to an $s^*/c^* = \alpha_{GW}$-factor hardness for Max-Cut~\cite{KKMO07,Rag08}. As a brief digression, we point out that while the additional fact that the {\sf MAJ} function (almost) achieves the optimal-cut value wasn't immediately necessary for establishing the soundness guarantee of the gadget, as we will see later, this property (and a certain robust inverse variant of it) will be crucially used to improve the hardness in the bounded-degree setting.

\subsubsection{Hardness in the Bounded-Degree Setting}

Now, in order to port the aforementioned hardness to the bounded-degree setting, one would like to construct a bounded-degree gadget with analogous completeness and soundness properties. To that end, note that $\cH^R_{\rho^*}$ by itself is not a bounded-degree graph; in fact, the unweighted  edge support of the graph is the entire complete graph with all the self-loops included. Consequently, the $\rho^*$-noisy cube gadget by itself doesn't yield any meaningful guarantees in this setting. To address this issue, Trevisan~\cite{Trevisan01} showed that it is possible to construct a ``bounded-degree approximation'' of the gadget by simply sub-sampling edges. 

Specifically, consider the graph $\wH^R_{\rho^*}$ which is constructed by independently sampling $nd$ edges (with $n = 2^R$) from $\cH^R_{\rho^*}$, where the edges are sampled according to the weight distribution of the edges in $\cH^R_{\rho^*}$. Clearly, this construction will preserve all the cut values in expectation, and with a little bit of work, one can also show that the cut values are approximately preserved with high probability. Specifically, for any fixed partition $(A,A^c)$, let $w_{A,A^c}$ and $\hat{w}_{A,A^c}$ denote the weight of edges cut by $A$ in $\cH^R_{\rho^*}$ and $\wH^R_{\rho^*}$ respectively. Then, using a standard application of the Chernoff bound, it is easy to see that for any $t > 0$:
\[
	\Pr[|\hat{w}_{A,A^c} - w_{A,A^c}| \geq t] \leq \exp\left(-\Omega(t^2 nd)\right),
\]
and hence by setting $t := \Theta(1/\sqrt{d})$ with a sufficiently large constant, we can union bound over all $2^n$ cuts. In other words, the sub-sampled graph $\wH^R_{\rho^*}$ preserves all the cuts (and consequently, the completeness and soundness guarantees) up to an additive error of $O(1/\sqrt{d})$.
This eventually translates to the $(\alpha_{GW} + O(1/\sqrt{d}))$-hardness for Max-Cut for graphs with maximum degree\footnote{To be precise, the sub-sampled graph as is only satisfies the slightly weaker guarantee that its average degree is $\approx 2d$. However, one can convert this into a graph with maximum degree $2d(1 + o(1))$ by deleting all the high degree vertices, at the cost of incurring some additional negligible errors.} at most $2d$.

\subsubsection{Key Observations for Improved Bounds}

How can we improve on the above bound? At a first glance, it seems that incurring an $O({d}^{-1/2})$-additive loss in the soundness guarantee of $\wH^R_{\rho^*}$ is perhaps unavoidable, since setting the deviation parameter to be equal to $t = O(d^{-1/2})$ is required to control the union bound over all $2^n$ cuts. However, we observe that in order to just preserve the soundness value, we only need to control the deviation of the maximum cut value from $s^*$, as opposed to the deviation of all the cuts from their expectation. And hence, trying to control the values of all cuts to the same order of deviation is quite inefficient.

For instance, consider a typical partition $(A,A^c)$ in $\cH^R_{\rho^*}$; such a partition cuts $\approx 1/2$-weight of edges in $\cH^R_{\rho^*}$, and hence, it will also cut $\approx 1/2$-fraction of edges in $\wH^R_{\rho^*}$ in expectation. As a result, even if the cut value of such a partition ends up deviating by a small constant, say $0.01$, from its expected value, it will not affect the overall soundness guarantee, since the target soundness value $s^* \approx 0.74$ is much larger than $1/2$. In fact, even for quasi-random partitions whose cut value is at most $(s^* - 0.02)$, a deviation of $0.01$ from their expected value will not affect the soundness guarantee. And for any fixed partition, a constant deviation in cut value actually holds with probability $1 - e^{-\Omega(nd)}$, which is enough to take a union bound over all cuts. These observations imply that, with high probability, every partition $(A,A^c)$ whose cut value is far from $s^*$ satisfies $\hat{w}_{A,A^c} \leq s^*$.

\begin{observation}				\label{obs:basic}
	Let $\cA_{\rm far}$ denote the set of quasi-random partitions $(A,A^c)$ whose cut value $w_{A,A^c}$ is at most $s^* - C / \sqrt{d}$, for some large enough constant $C > 0$. Then, with probability at least $1 - e^{-\Omega(n)}$, every partition $(A,A^c) \in \cA_{\rm far}$ satisfies $\hat{w}_{A,A^c} \leq s^*$.
\end{observation}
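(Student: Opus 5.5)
The plan is to apply a Chernoff bound to each fixed partition in $\cA_{\rm far}$ and then take a union bound over all at most $2^n$ partitions. The deviation we need to control is a gap of size $C/\sqrt{d}$, which is exactly the scale at which the Chernoff bound in the $nd$ sampled edges beats $2^n$.

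\textbf{Step 1 (single partition).} Fix $(A,A^c)\in\cA_{\rm far}$. The graph $\wH^R_{\rho^*}$ consists of $N=nd$ edges $e_1,\dots,e_N$ drawn i.i.d.\ from the weight distribution of $\cH^R_{\rho^*}$. Let $X_k=\1[e_k \text{ is cut by } (A,A^c)]$. Then $\hat w_{A,A^c}=\frac1N\sum_k X_k$, the $X_k$ are i.i.d.\ Bernoulli, and $\E X_k = w_{A,A^c}\le s^*-C/\sqrt d$. The event $\hat w_{A,A^c}> s^*$ forces a deviation of at least $t=C/\sqrt d$ above the mean. By Hoeffding's inequality,
\[
\Pr\left[\hat w_{A,A^c}>s^*\right]\le \exp(-2t^2N)=\exp(-2C^2 n).
\]
No variance refinement is needed here; the plain additive bound suffices.

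\textbf{Step 2 (union bound).} There are at most $2^n$ partitions, and in particular at most $2^n$ quasi-random partitions in $\cA_{\rm far}$. So the probability that some partition in $\cA_{\rm far}$ has $\hat w_{A,A^c}>s^*$ is at most
\[
2^n e^{-2C^2n}=e^{-(2C^2-\ln 2)n}.
\]
This is $e^{-\Omega(n)}$ once $C$ is a large enough constant, e.g.\ $C\ge 1$. Quasi-randomness plays no role in the argument; it only restricts the family, which makes the union bound smaller.

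\textbf{Main obstacle.} The only step needing care is bookkeeping. One must confirm that the normalization of $\hat w$ is the fraction of sampled edges cut, rather than a reweighted sum. If the sampled graph is later pruned of high-degree vertices to enforce maximum degree, one must also check that this pruning happens after the statement of the observation, or only changes cut fractions by a negligible amount. With the sampling model as described (i.i.d.\ edges from the weight distribution), Steps 1 and 2 complete the proof.
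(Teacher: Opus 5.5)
Your proposal is correct and takes essentially the same approach as the paper. The paper also justifies the observation by applying a Chernoff bound $\exp(-\Omega(t^2 nd))$ to each fixed cut with $t = C/\sqrt{d}$ and then taking a union bound over all $2^n$ cuts; as you note, quasi-randomness plays no role.
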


In other words, the same analysis as before guarantees that all quasi-random partitions $(A,A^c)$ whose cut value $w_{A,A^c}$ is $(C / \sqrt{d})$-far from the maximum value $s^*$ satisfy $\hat{w}_{A,A^c} \leq s^*$ with high probability. The above observation suggests that the actual bottleneck in controlling the error in the final soundness parameter comes from partitions $(A,A^c) \in \cA_{\rm near} := \cA_{\rm far}^c$ whose cut value is extremely close to the optimal soundness value i.e., $w_{A,A^c} \approx s^*$. And for such partitions, we indeed cannot allow their cut values to deviate from expectation by even small constant values. This, in turn, can be quite problematic if there are too many such partitions to union-bound over.  Therefore, in order to ensure that we can get a tighter control on the deviation from the soundness parameter $s^*$, we would need to answer the following question:
\begin{gather*}
	\text{\it For any small enough $\eta > 0$, how many quasi-random partitions of $\cH^R_{\rho^*}$} \\
	\text{\it have a cut value of at least $s^* - \eta$}? 
\end{gather*} 
Addressing the above question ends up being the technical core of our improved analysis of the sub-sampled gadget, and we elaborate on this in more detail below.

\subsubsection{Counting Near-Optimal Cuts via Robust Inverse of MIS }

To answer the above question, we will need a way to efficiently enumerate partitions whose cut value is nearly optimal, which in turn reduces the task to that of understanding the structure of near-optimal cuts. 
Recall that the Majority-Is-Stablest theorem implies that optimal soundness in $\cH^R_{\rho^*}$ is actually achieved by the majority function. Consequently, any quasi-random partition $(A,A^c)$ which is $\eta$-close to the majority function (or one of its symmetries) will have cut value at least $s^* - O(\eta)$. And now, one can ask if these are the only kind of partitions which satisfy $w_{A,A^c} \geq s^*-\eta$? This question was answered in the affirmative (in a weaker sense) by Mossel and Neeman~\cite{MosselNeeman15}, who established robust inverse theorems for the Majority-Is-Stablest theorem as well as Borell's Isoperimetric Inequality (the Gaussian space analog of the MIS theorem), albeit with non-optimal parameters. Subsequently, Eldan~\cite{Eldan15} gave an asymptotically optimal robust inverse theorem for Borell's inequality for positive correlations. In this work, we complete the picture by deriving an asymptotically optimal robust inverse for the Majority-Is-Stablest theorem, which we state informally below.

\begin{theorem}[Informal Version of Theorem \ref{thm:robust-MIST}] 		\label{thm:robust-mis-basic}
	The following holds for all small enough $\eta > 0$. Suppose $(A,A^c)$ is a quasi-random partition of $\cH^R_{\rho^*}$ that satisfies $w_{A,A^c} \geq s^* - \eta$. Then, there exists a linear threshold function $L:\{0,1\}^R \to \{0,1\}$ such that $\|\one_A - L\|_1 \leq O(\sqrt{\eta})$.
\end{theorem}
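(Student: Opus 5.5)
The plan is to transfer the question to Gaussian space via the invariance principle, prove a sharp robust inverse of Borell's inequality there for the \emph{negative} correlation $\rho^*$, and then pull the conclusion back to the cube. Write $f = 2\cdot\one_A - 1$, so that $w_{A,A^c} = \frac{1}{2} - \frac12 \Stab_{\rho^*}(f)$, where $\Stab_{\rho^*}(f)=\E[f(x)f(y)]$ for $\rho^*$-correlated $x,y$. The hypothesis becomes $\Stab_{\rho^*}(f) \le \frac{2}{\pi}\arcsin\rho^* + 2\eta$. Since $\rho^*<0$, we decompose $f = f_{\rm odd} + f_{\rm even}$. The even part only increases $\Stab_{\rho^*}$ and is penalized at rate $|\rho^*|^{2}$ times its weight on degree $\ge 2$ levels. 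The odd part, at degree $1$, contributes $\rho^*\|f^{=1}\|_2^2$, and at levels $k\ge 3$ contributes $\rho^{*k}$, which is strictly smaller in magnitude than $|\rho^*|$ per unit weight.

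\textbf{Step 1: reduce to the degree-$1$ weight.} Using the above decomposition and the fact that $\rho^*$ is the minimizer of $s(\rho)/c(\rho)$, we show that near-optimality forces the Fourier mass of $f$ to concentrate on levels $1$ and $3,5,\dots$ in the pattern of a threshold function. The key quantitative target is: if $W^{1}(f) = \sum_i \hat f(i)^2$ is at least $\frac{2}{\pi} - O(\eta)$ (with $f$ balanced), then $f$ is $O(\sqrt{\eta})$-close to $\sign(\sum_i \hat f(i) x_i - \theta)$. This part follows from the standard robust Level-1 inequality argument, comparing $\E[f\ell]$ with $\E[|\ell|]$ for $\ell=\sum \hat f(i)x_i$ normalized. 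When $f$ has small influences, $\ell$ is nearly Gaussian by Berry--Esseen, and the gap $\E|\ell| - \E[f\ell]$ controls $\Pr[f\ne \sign\ell]$ quadratically, giving the square root. The bias $\theta$ is handled by letting the LTF absorb the mean of $f$.

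\textbf{Step 2: show near-optimal $\Stab_{\rho^*}$ forces near-maximal level-$1$ weight.} This is the main obstacle. We pass to Gaussian space via the invariance principle of~\cite{MOO10}: the multilinear extension of a low-influence $f$, truncated to $[-1,1]$, gives a Gaussian function $F$ with $\Stab_{\rho^*}(F)\approx\Stab_{\rho^*}(f)$. The invariance error must be $o(\eta)$ for the square-root rate to survive, so we take $\tau$, the influence threshold in the quasi-randomness hypothesis, small enough in terms of $\eta$, which is permissible in the reduction. In Gaussian space we need a robust Borell inequality at negative correlation, with linear stability: $\Stab_{\rho^*}(F) - \Stab_{\rho^*}(\mathrm{halfspace}) \gtrsim \delta^2$, where $\delta$ is the distance to a halfspace. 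For $\rho<0$ we apply Eldan's positive-correlation result~\cite{Eldan15} to $F(x)$ and $F(-x)$ through the identity $\Stab_{\rho}(F)=\Stab_{|\rho|}(F,\tilde F)$ with $\tilde F(x)=F(-x)$. The two-function version of Borell says the minimizing pair is a pair of complementary halfspaces, and a robust two-function version with quadratic deficit is what we will prove, following Eldan's stochastic-calculus argument. If that turns out to be hard, we will try Mossel--Neeman's heat-flow monotonicity instead, tracking the deficit along the Ornstein--Uhlenbeck semigroup.

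\textbf{Step 3: transfer back to the cube.} Closeness of $F$ to a halfspace $\sign(\langle a,x\rangle-\theta)$ implies, by invariance applied to $F\cdot\sign(\langle a,x\rangle-\theta)$ (handled via smoothing and Lipschitz approximation of the sign), that $f$ correlates with the LTF $\sign(\sum a_i x_i-\theta)$ up to error $O(\sqrt\eta)$. Alternatively, we can use Step 1 directly. Setting $L$ to be the resulting LTF in $\{0,1\}$ notation gives $\|\one_A - L\|_1 = O(\sqrt{\eta})$.
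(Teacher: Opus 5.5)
Your proposal has a genuine gap, and it sits exactly where the exponent $\tfrac12$ is decided.

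\textbf{The level-$1$ route (Step~1 fed by Step~2).} This route cannot work, because its intermediate claim, that near-optimality forces $W^1(f)\ge 2/\pi-O(\eta)$, is false. Let $\ell=R^{-1/2}\sum_i\sigma_i$, and let $f=\sign(\ell)$ except with the sign reversed on the tails $\{|\ell|\ge y_0\}$, which have mass $p\approx 2\Phi(-y_0)$. This $f$ is odd and symmetric, has influences $O(R^{-1/2})$, and is $O(p)$-close to majority.
\begin{itemize}
\item Since $\rho^*<0$, a tail vertex has its $\rho^*$-neighbour on the opposite side of the majority hyperplane, and outside the tails, with probability tending to $1$ as $y_0\to\infty$. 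So the cut value drops only by $\eta\approx 2p$.
\item However, $\sqrt{W^1(f)}=\E[f\ell]\approx\sqrt{2/\pi}-4\varphi(y_0)$ with $\varphi(y_0)\asymp p\,y_0$. So the level-$1$ deficit is $\Theta(\eta\sqrt{\log(1/\eta)})$.
\end{itemize}
The band example of Remark~\ref{rem:robust-opt} shows that ``level-$1$ deficit $\epsilon$ implies distance $O(\sqrt\epsilon)$'' is itself tight. Hence any argument passing only through $W^1$ gives at best $\sqrt\eta\,\log^{1/4}(1/\eta)$. The underlying problem is that the linear weight $\ell$ over-penalizes disagreements far from the hyperplane. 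The paper instead uses Heilman's penalty, with the bounded weight $\Phi\bigl((\beta\langle\nu,x\rangle-\alpha)/\sqrt{1-\beta^2}\bigr)$. The stability deficit dominates this penalty linearly, and the penalty in turn dominates $\gamma(A\triangle H)^2$ (part (b) of Lemma~\ref{lem:gaussian-facts}).

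\textbf{The direct route (Step~2 plus the first option of Step~3).} This route leaves its essential ingredients unproved.
\begin{itemize}
\item \emph{The sharp Gaussian bound.} The bound with deficit $\gtrsim\delta^2$ is the heart of the theorem, and you defer it to something you ``will prove''. Eldan's stochastic-calculus estimate loses a logarithm: in the paper's argument it yields only $\sqrt{\eta\log(1/\eta)}$. Mossel--Neeman's heat-flow method gives only $\eta^c$. The paper imports Heilman's optimal-exponent theorem instead.
\item \emph{No two-function negative-correlation version is needed.} We have $\Stab_{-\rho}(f)=2\Stab_\rho(f_{\mathrm{even}})-\Stab_\rho(f)\ge-\Stab_\rho(f)$. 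Applying MIS to $f_{\mathrm{odd}}$ gives $(\E f)^2\le\Stab_\rho(f_{\mathrm{even}})\le 2\eta+o_{\tau\downarrow0}(1)$, which also supplies the balance your Step~1 assumes. The hypothesis thus becomes a one-function deficit at positive correlation, $\Gamma^\pm_{|\rho|}(\E f)-\Stab_{|\rho|}(f)\le C\eta+o_{\tau\downarrow0}(1)$.
\item \emph{$F$ is not an indicator.} Your $F$ is a truncated polynomial, so a set version of Borell does not apply directly. The paper passes through the epigraph $A_F\subseteq\mathbb{R}^{R+1}$ and separately bounds $\E\min\{F,1-F\}\lesssim\Delta^{1/2}$.
\item \emph{Regularity of the halfspace normal (the most serious issue).} Transferring ``$F$ is close to $\one(\langle a,x\rangle\ge\theta)$'' back to the cube requires $\|a\|_\infty=o_{\tau\downarrow0}(1)$. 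For an arbitrary near-optimal halfspace $H$, you only know that its first Hermite vector, a multiple of $a$, is within $\|F-H\|_2=O(\eta^{1/4})$ of $\widehat F^{(1)}$. One coordinate of $a$ can then be of order $\eta^{1/4}$, and the Berry--Esseen transfer error swamps $\sqrt\eta$. The paper obtains closeness to the halfspace whose normal is exactly the normalized first Hermite vector, whose coordinates are at most $\sqrt\tau+o_{\tau\downarrow0}(1)$; this again comes from Heilman's theorem.
\item \emph{Removing the smoothing.} The paper uses $\|u-L\|_1\le\|T_\lambda u-T_\lambda L\|_1+2\|L-T_\lambda L\|_1$ together with Peres's noise-sensitivity bound for $L$. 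Your ``smoothing and Lipschitz approximation'' does not supply this step.
\end{itemize}
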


In other words, the above robust inverse theorem says that if the cut value of a quasi-random cut is $\eta$-close to the optimal value $s^*$, then the indicator of the partition must be $O(\sqrt{\eta})$-close to a linear threshold function. We point out that Theorem \ref{thm:robust-mis-basic} has optimal asymptotic dependence on the error parameter $\eta$ (see Remark \ref{rem:robust-opt}), and this asymptotic optimality ends up being crucial in establishing our $\tilde{O}(1/d)$-bound. We refer the reader to Section \ref{sec:robust-mist} for a high-level outline of the proof of the above theorem. Using Theorem~\ref{thm:robust-mis-basic}, we are able to give the following bound on the number of near-optimal quasi-random partitions in the $\rho^*$-correlated hypercube gadget.

\begin{lemma}			\label{lem:count-basic}
	For every small enough $\eta > 0$, the number of quasi-random cuts $(A,A^c)$ in $\cH^R_{\rho^*}$ satisfying $w_{A,A^c} \geq s^* - \eta$ is at most $2^{O(\sqrt{\eta}\log(1/\eta))n}$, where $n = 2^R$. 
\end{lemma}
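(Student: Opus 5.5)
The plan is to count by encoding. Every near-optimal quasi-random cut will be described as a nearby linear threshold function together with a short list of exceptions.

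\textbf{Step 1 (structure).} Let $(A,A^c)$ be quasi-random with $w_{A,A^c}\ge s^*-\eta$. By Theorem~\ref{thm:robust-mis-basic} there is an LTF $L$ with $\|\one_A-L\|_1\le C\sqrt{\eta}$. Hence $A$ differs from $L^{-1}(1)$ on a set $D$ of at most $C\sqrt{\eta}\,n$ points of $\{0,1\}^R$.

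\textbf{Step 2 (counting the LTFs).} Boolean LTFs on $\{0,1\}^R$ number at most $2^{O(R^2)}$. This is the classical bound via Cover/Zuev counting of sign patterns of linear functionals on $2^R$ points in $\R^{R+1}$: at most $2\sum_{i\le R}\binom{2^R-1}{i}\le 2^{R^2+O(R)}$. Since $n=2^R$, this is $2^{O(\log^2 n)}$, which is $2^{o(n)}$. In the regime where the lemma is applied ($\eta$ fixed, $R\to\infty$, or more precisely $\sqrt{\eta}\log(1/\eta)\gg R^2/2^R$), this factor is absorbed into $2^{O(\sqrt{\eta}\log(1/\eta))n}$. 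If one needs uniformity for very small $\eta$, note that when $\eta<1/n^2$ or so, the structure forces $A$ to be essentially an LTF and the count is just $2^{O(R^2)}$. I would state the lemma with this implicit $n$ large relative to $1/\eta$.

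\textbf{Step 3 (counting the exceptional sets).} Given $L$, the set $A$ is determined by $D=A\,\triangle\, L^{-1}(1)$, where $|D|\le k:=C\sqrt{\eta}\,n$. The number of such $D$ is
\[
\sum_{i\le k}\binom{n}{i}\le 2^{nH(C\sqrt{\eta})}=2^{O(\sqrt{\eta}\log(1/\eta))n},
\]
using $H(p)\le p\log_2(e/p)$. Multiplying by the bound from Step 2 gives the claim.

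\textbf{Main obstacle.} The real work sits entirely inside Theorem~\ref{thm:robust-mis-basic}, specifically getting the optimal $\sqrt{\eta}$ dependence. Any weaker exponent such as $\eta^{1/4}$ (as in Mossel--Neeman) would propagate directly into the exponent here. Within this lemma itself, the only delicate points are two bookkeeping issues. First, $\|\one_A-L\|_1$ must be measured under the uniform measure on $\{0,1\}^R$, so that it converts to a cardinality $\sqrt{\eta}\,n$; the measure is uniform because the marginals of $\cH^R_{\rho^*}$ are uniform. Second, the $2^{O(R^2)}$ count of LTFs must be negligible. If the robust theorem instead gives closeness to an LTF over a Gaussian-type or non-Boolean domain, I would first round $L$ to a Boolean LTF on the cube, which at most doubles the distance, before counting.
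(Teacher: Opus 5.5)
Your proof is correct and follows the paper's own argument: Theorem~\ref{thm:robust-mis-basic} gives an LTF within $O(\sqrt{\eta})$, there are $2^{O(R^2)}$ LTFs on $\{0,1\}^R$ (the paper gets this from VC dimension $R+1$ and Sauer--Shelah, which gives the same bound as your Cover-type count), and a Hamming ball of radius $O(\sqrt{\eta})\,n$ has at most $2^{O(\sqrt{\eta}\log(1/\eta))n}$ points. You are also right that the $2^{O(R^2)}$ factor is absorbed only when $R$ is large in terms of $1/\eta$, and this regime is also needed to absorb the $h_\rho(\tau)$ term of the formal Theorem~\ref{thm:robust-MIST}; your aside about $\eta<1/n^2$ does not by itself yield the stated bound there (since $2^{O(R^2)}$ still exceeds $2^{O(\sqrt{\eta}\log(1/\eta))n}$), but the lemma does not need that case.
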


The above lemma is proven by combining Theorem \ref{thm:robust-mis-basic} with a VC-dimension-based counting argument to bound the number of LTFs over $\{0,1\}^R$ and standard bounds for sizes of Hamming balls of radius $\delta$. The above lemma forms the basis of our improved bounds for the sub-sampled gadget, and using it, we give a simplified version of the improved analysis of $\wH^R_{\rho^*}$ below.

\subsubsection{A Proof Sketch of the Improved Soundness Analysis}

Here we give a simplified version of the proof which improves the sampling-error parameter from $O(d^{-1/2})$ to $O(d^{-5/8})$ in the soundness analysis, and then we will briefly discuss the additional changes required to obtain the $\tilde{O}(1/d)$-bound claimed in our main theorem. Recall that we used $\cA_{\rm far}$ and $\cA_{\rm near}$ to denote the set of quasi-random partitions $(A,A^c)$ whose cut values $w_{A,A^c}$ satisfy $w_{A,A^c} \leq s^* - C / \sqrt{d}$ and $w_{A,A^c} \in [s^* - C / \sqrt{d},s^* + o(1)]$ respectively. Let $t_{\rm new} = O(d^{-5/8} \log d)$. Then, for the soundness analysis, using a union bound, we break the analysis into near and far cases:
\begin{align*}
	&\Pr[\exists~\text{\it quasi-random} (A,A^c) : \hat{w}_{A,A^c} \geq s^* + t_{\rm new}]		\\
	&\leq \Pr[\exists (A,A^c) \in \cA_{\rm far} : \hat{w}_{A,A^c} \geq s^* + t_{\rm new}] 
	+ \Pr[\exists (A,A^c) \in \cA_{\rm near} : \hat{w}_{A,A^c} \geq s^* + t_{\rm new}]
\end{align*}
Next, for bounding the first term, using Observation \ref{obs:basic}, we can directly show that:
\begin{equation}				\label{eqn:o-1}
	\Pr[\exists (A,A^c) \in \cA_{\rm far} : \hat{w}_{A,A^c} \geq s^* + t_{\rm new}]
	\leq e^{-\Omega(n)}.
\end{equation}
For the second term, combining the bound from Lemma \ref{lem:count-basic} with Hoeffding's inequality, we observe that
\begin{align}
	\Pr[\exists (A,A^c) \in \cA_{\rm near} : \hat{w}_{A,A^c} \geq s^* + t_{\rm new}]
	&\leq |\cA_{\rm near}| \cdot \exp\left(-\Omega(t^2_{\rm new} nd)\right) 		\nonumber\\
	&\leq 2^{O(d^{-1/4}\log(d)) n} \cdot \exp\left(-\Omega(t^2_{\rm new} nd)\right) 	\nonumber\\
	&\leq e^{-\Omega(n)},				\label{eqn:o-2}
\end{align} 
where the last step follows from our choice of $t_{\rm new} = \tilde{O}(d^{-5/8})$. Combining the bounds from \eqref{eqn:o-1} and \eqref{eqn:o-2} shows that the soundness guarantee holds with $s^* + \tilde{O}(d^{-5/8})$. 

{\bf Improving the analysis to an $\tilde{O}(1/d)$-bound}. We conclude by briefly discussing how the above analysis can be improved further to obtain the $\tilde{O}(1/d)$-bound claimed in Theorem \ref{thm:ugc-hard}. 
\begin{itemize}
    \item The first key difference is that, instead of using a hard-thresholding scheme to classify the cuts into near and far sets, we use a smoother {\em bucketing-based} system which adjusts the deviations in the bound more gradually. More specifically, we consider a sequence of thresholds $(\eta^{(i)})_{i \geq 0}$ where $\eta^{(i)} = 2^i\cdot\eta^{(0)}$, and let $\cA^{(i)}$ be the set of cuts achieving at least $s^* - \eta^{(i)}$. Then, for any cut in $\cA^{(i)} \setminus \cA^{(i-1)}$, it is sufficient to bound its deviation by $\eta^{(i-1)}$, while having finer control of how far it is from the class of LTFs (namely, on the order of $\sqrt{\eta^{(i)}}$).
    \item Another key change is that, instead of using a naive union bound over Chernoff tail bounds, we look at the \emph{relative deviation} of a near-optimal cut from its closest LTF. This relative deviation is expected to be tiny by Theorem~\ref{thm:robust-mis-basic} (here the finer control from the previous point also helps), so if the LTF does not deviate much from its expectation, we should expect the same for cuts that are close to it. We can then take a union bound over the set of LTFs, which has a subexponential size (see Theorem~\ref{thm:ltf-count}). This can be thought of a one-step \emph{chaining} argument.
\end{itemize}
We remark that the analysis sketched here only concerns the noisy hypercube, but our goal is to prove Unique Games hardness for bounded-degree Max-Cut. Hence, additional care is required to lift this argument to the set-up of Unique Games hard instances of Max-Cut so that we do not lose additional factors.

\part{Algorithm}
\section{Algorithm with advantage $\tilde{\Omega}(1/d)$}

Recall that, for a graph $G = (V, E)$ on $n$ vertices with maximum degree upper bounded by $d$, a weighted Max 2-LIN(2) instance consists of set of weights $\{w_{ij}\}_{\{i,j\} \in E}$ with $w_{ij} \in \R_{\geq 0}$ and signs $\{b_{ij}\}_{\{i,j\}\in E}$ with $b_{ij} \in \{\pm 1\}$. By scaling, we assume that the total weight $W = \sum_{\{i, j \} \in E} w_{ij} = 1$. 
An assignment to a Max 2-LIN(2) instance consists of a labeling of the vertices of the graph with values in $\{\pm 1\}$, and the value of an assignment $x \in \{\pm 1\}^n$ is 
\[\mathrm{val}_E(x) = \sum_{\{i,j\}\in E} w_{ij} \frac{1 + b_{ij} x_i x_j}{2}\,.\]
Given a Max 2-LIN(2) instance, we denote by $\mathrm{OPT}$ the maximum value achieved by an assignment, that is
\[\mathrm{OPT} = \max_{x \in \{\pm 1\}^n} \mathrm{val}_E(x)\,.\]

Our main result of this section is the following theorem.
\begin{theorem}
\label{thm:main}
There exists a randomized algorithm that takes as input a Max 2-LIN(2) instance supported on a graph with $n$ vertices and maximum degree upper bounded by $d$, runs in time polynomial in $n$, and returns an assignment $x \in \Set{\pm 1}^n$ with
\[\E \mathrm{val}_E(x) \geq \Paren{\alpha_{GW} + \Omega\Paren{\frac{1}{d \log d}}} \cdot \mathrm{OPT}\,.\]
\end{theorem}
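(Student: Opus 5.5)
We follow the overview in Section~\ref{sec:techniques}. First, solve the Goemans--Williamson SDP with $\ell_2^2$-triangle inequalities and obtain unit vectors $v_i$ with SDP value $\mathrm{SDP}\ge \mathrm{OPT}$. Set $\rho_{ij}=b_{ij}\langle v_i,v_j\rangle$, and call an edge tight if $\rho_{ij}$ lies in a fixed small interval $[\rho^*-c_0,\rho^*+c_0]$. Whether one phrases this through $\rho_{ij}$ or $-\rho_{ij}$ is only a matter of sign convention.

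\textbf{Case split and output.} Write $\mathrm{SDP}_T$ for the SDP value carried by tight edges and $\mathrm{SDP}_N$ for the rest.
\begin{itemize}
\item If $\mathrm{SDP}_N\ge \mathrm{SDP}/d$, plain GW rounding already suffices. On non-tight edges the ratio $s(\rho)/c(\rho)$ exceeds $\alpha_{GW}$ by a constant $\kappa(c_0)$, so GW gains $\Omega(1/d)\cdot \mathrm{SDP}$.
\item Otherwise, tight edges carry a $(1-1/d)$ fraction of the value, and we run the shifted rounding $x^\lambda_i=\sign(z_i+\lambda f_i)$. Here $f_i$ is the weighted average of $q_\eta(h_{ij})$ over tight edges at $i$, with $K=(1-\eta^2)^{-1}=C_0\log d$ and $\lambda=c/\sqrt K$.
\end{itemize}
The algorithm outputs the better of the two roundings.

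\textbf{Gain on tight edges.} We show $\E[\mathrm{val}_{E_T}(x^\lambda)]-\E[\mathrm{val}_{E_T}(x^0)]\ge \lambda\phi(0)S-O(\lambda^2\sqrt K S)$ with $S=\sum_iW_ia_i$.
\begin{itemize}
\item \emph{Expansion.} Write $x_i^\lambda x_j^\lambda-x_i^0x_j^0=\delta_ix_j^0+x_i^0\delta_j+\delta_i\delta_j$ with $\delta_i=x_i^\lambda-x_i^0$.
\item \emph{Linear term.} The key fact is that $f_i$ is a function of the $h_{ij}$, which are independent of $z_i$. Conditioning on $f_i$ and the $h_{ij}$, $\delta_i$ is nonzero only when $|z_i|<\lambda|f_i|$ with the appropriate sign. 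On that window, $b_{ij}x^0_j=\sign(\rho_{ij}z_i+\sqrt{1-\rho_{ij}^2}h_{ij})$ equals $\sign(h_{ij})$ except when $|h_{ij}|\lesssim \lambda|f_i|$. Integrating $z_i$ over the window of width $2\lambda|f_i|$ gives the main term $\lambda\phi(0)\,\E[|f_i|\cdot|\mathrm{vote}_i|\ldots]$. More precisely, summing over $j$ the term is $\lambda\phi(0)W_i\E[f_i\,\mathrm{vote}_i]=\lambda\phi(0)W_ia_i$. The errors come from $\phi(z_i)\ne\phi(0)$ and from sign flips of $h_{ij}$ inside the window, both $O(\lambda^2\E[f_i^2])$-type. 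We bound them using $\E[f_i^2]\le a_i$, which holds by Sheppard's formula because the smoothed vote $q_\eta$ is a contraction of the hard vote.
\item \emph{Quadratic term.} We prove $\Pr[|z_i|\le\lambda|f_i|,|z_j|\le\lambda|f_j|]\le O(\lambda^2\sqrt K(a_i+a_j))$. This is the main obstacle, since $f_j$ depends on $z_i$ through $h_{ji}$ and $f_i$ on $z_j$. First, conditioning on $z_i,z_j$ small forces the correlated Gaussians into a box of area $\lambda^2|f_i||f_j|$, which is controlled by the bivariate density since $|\rho_{ij}|$ is bounded away from $1$. The remaining dependence of $f_i,f_j$ is only through a single bounded-weight term. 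We expect the $\sqrt K$ to come from the Lipschitz constant of $q_\eta$, namely $\eta/\sqrt{1-\eta^2}\approx\sqrt K$, when replacing $h_{ji}$ by its value at $z_i=0$. We then finish with $|f_i||f_j|\le (f_i^2+f_j^2)/2$ and $\E f_i^2\le a_i$.
\end{itemize}

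\textbf{Lower bound on $a_i$ and conclusion.} We show $a_i=\Omega(1/(d\sqrt K))$ as sketched.
\begin{itemize}
\item The triangle inequalities, together with tightness near $\rho^*$, give $\gamma_{jk}=\E[h_{ij}h_{ik}]\ge-\beta$ for a constant $\beta<1$.
\item Sheppard's formula writes $a_i=\frac{2}{\pi}\sum_{j,k}p_jp_k\arcsin(\eta\gamma_{jk})$, up to the normalization of $q_\eta$.
\item Expand $\arcsin$ with positive coefficients; by the Schur product theorem every term is nonnegative, so we may keep only $\ell\in[L,2L]$.
\item On this block the diagonal gives at least $1/d$ (since there are at most $d$ neighbours), while the negative off-diagonal mass is at most $\beta^{2L+1}\le 1/(2d)$. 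Also $\eta^{4L+1}=\Omega(1)$ and $\sum_{\ell=L}^{2L}c_\ell=\Theta(L^{-1/2})$.
\end{itemize}
With $\lambda=c/\sqrt K$ the net gain is $\Omega(S/\sqrt K)=\Omega(\sum_iW_i/(dK))=\Omega(\mathrm{SDP}_T/(d\log d))$. Combining with $\E\mathrm{val}(x^0)\ge\alpha_{GW}\mathrm{SDP}$ on the remaining edges and $\mathrm{SDP}\ge\mathrm{OPT}$ gives the theorem.
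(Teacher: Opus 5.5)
Your architecture matches the paper: the case split, the shifted rounding $x_i^\lambda=\sign(z_i+\lambda f_i)$, the split into one-endpoint and two-endpoint terms, $\cM=\lambda\varphi(0)S$, $\E f_i^2\le a_i$, and the Schur/$\arcsin$ lower bound on $a_i$. However, two steps do not go through as written.

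\textbf{The slab estimate is missing.} On the event $\{|z_i|\le\lambda|f_i|,\ |z_j|\le\lambda|f_j|\}$, both $h_{ij}$ and $h_{ji}$ are forced to be $O(\lambda)$. Moreover, $f_i$ is a function of $h_{ij}$ and of the projection $Z$ of $g$ onto $\mathrm{span}\{v_i,v_j\}^{\perp}$.

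Your claim that the dependence runs through a single term is not right. The coordinate $h_{ij}$ enters every summand $q_\eta(h_{ik})$, since each $h_{ik}$ has a component along it. This part is repairable: $f_i$ is still $O(\sqrt K)$-Lipschitz in $h_{ij}$. So with $\lambda=c/\sqrt K$ you can replace $f_i,f_j$ by their values on the slices $h_{ij}=0$ and $h_{ji}=0$, at the cost of a factor $1+O(c)$.

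The real problem comes after any such decoupling (the paper instead bounds $|f_i|$ by $\sup_{|h|\le R\lambda}|f_i(h,Z)|$). What you must then control is the second moment of $f_i$ on a slab of width $O(\lambda)$ around $\{h_{ij}=0\}$, and $\E f_i^2\le a_i$ says nothing about that.

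The missing ingredient is the slice bound $\E_Z[\sup_{|h|\le s}f_i(h,Z)^2]\le C\sqrt K\,\langle U_\eta\psi_i,\psi_i\rangle=C\sqrt K\,a_i$ for $s\le 1/\sqrt K$ (Lemma~\ref{lem:slice-max}). Its proof writes $U_\eta=U_{\sqrt\eta}U_{\sqrt\eta}$ and applies Cauchy--Schwarz against the Mehler kernel of $U_{\sqrt\eta}$ in the $h$-coordinate. That kernel has squared $L_2$-norm $\sqrt K\,e^{\eta x^2/(1+\eta)}$ at $x$.

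This bound, not the Lipschitz constant of $q_\eta$, is the source of the $\sqrt K$. It is also the step where the smoothing $f_i=U_\eta\psi_i$ is genuinely needed. By contrast, the natural Lipschitz interpolation starting from $\E f_i^2\le a_i$ gives only about $K^{1/3}a_i^{2/3}\approx d^{-2/3}$ instead of $\sqrt K\,a_i\approx d^{-1}$, losing a polynomial factor in $d$.

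The same gap affects the sign-flip error in your linear term. There the indicator is $\{|h_{ij}|\le C\lambda|f_i|\}$ and $f_i$ depends on $h_{ij}$. So that error is not $O(\lambda^2\E f_i^2)$ but $O(\lambda^2\sqrt K\,a_i)$, again via the slice bound.

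\textbf{The case-split threshold is too coarse.} In your second case $\mathrm{SDP}_N$ can be as large as $\mathrm{SDP}/d$, while your gain analysis covers only $E_T$. The output there is $x^\lambda$, which differs from $x^0$ at endpoints of non-tight edges. So $\E\mathrm{val}(x^0)\ge\alpha_{GW}\mathrm{SDP}$ on $E\setminus E_T$ cannot be transferred to $x^\lambda$.

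If you instead lower-bound $x^\lambda$ by $0$ on $E\setminus E_T$, you get only $(\alpha_{GW}+\Gamma)(1-1/d)\,\mathrm{SDP}$ with $\Gamma=\Theta(1/(d\log d))$. This is below $\alpha_{GW}\,\mathrm{SDP}$, since $\alpha_{GW}/d\gg\Gamma$.

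Split instead at $\epsilon=\Gamma/2$, as the paper does. The first case still gains $\kappa\epsilon=\Omega(1/(d\log d))$, and the second gives $(\alpha_{GW}+\Gamma)(1-\epsilon)=\alpha_{GW}+\Omega(\Gamma)$. Because you output the better of the two roundings, only the analysis needs to change.
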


The Goemans-Williamson SDP with triangle inequalities for this problem is
\begin{align*}
\mathrm{SDP}
= \max_{v_1,\ldots,v_n \in \R^n}\quad
&\sum_{\{i,j\}\in E} w_{ij} \frac{1 + b_{ij}\Iprod{v_i,v_j}}{2}\\
\text{subject to}\quad
&\Norm{v_i}_2^2 = 1 \qquad \forall i \in [n]\,,\\
&\sigma_i\sigma_j \Iprod{v_i,v_j}
  + \sigma_i\sigma_k \Iprod{v_i,v_k}
  + \sigma_j\sigma_k \Iprod{v_j,v_k}
  \geq -1\\
&\hspace{5.5cm}
\forall i,j,k \in [n] \text{ distinct},\ 
\sigma_i,\sigma_j,\sigma_k \in \Set{\pm 1}\,.
\end{align*}
Given a solution to the SDP, draw $g \sim N(0, I_n)$ and let $x_i^0 = \sign(\langle g, v_i\rangle)$ for all $i \in [n]$.
Then the analysis of Goemans and Williamson~\cite{MR1412228-Goemans95} shows that $\E \mathrm{val}_E(x^0) \geq \alpha_{GW} \cdot \mathrm{SDP} \geq \alpha_{GW} \cdot \mathrm{OPT}$, where we recall that 
\[\rho^* = \argmin_{-1 \leq \rho \leq 1} \frac{\arccos(\rho)/\pi}{(1 - \rho)/2},\quad \alpha_{GW} = \frac{\arccos(\rho^*)/\pi}{(1 - \rho^*)/2}\,,\]
with $\rho^* \approx -0.689$ and $\alpha_{GW} \approx 0.878$.

Our algorithm is the following:

\begin{algorithm}[H]
\caption{Main Algorithm}
\label{alg:main}
\begin{algorithmic}[1]
    \State Set $K = C_0 \log d$ for some $C_0 > 0$ large enough. Set $\eta = \sqrt{1-K^{-1}}$. Set $\lambda = c_0 / \sqrt{K}$ for some $c_0 > 0$ small enough.
    \State Solve the Goemans-Williamson SDP with triangle inequalities, which yields unit vectors $v_1, \ldots, v_n \in \R^{n}$.
    \State Let $\rho_{ij} = b_{ij} \langle v_i, v_j\rangle$. Let $I=[-\rho^* - 0.01, -\rho^* + 0.01]$, and let the set $E_T \subseteq E$ consist of all edges with $\rho_{ij} \in I$.
    Let $W_i = \sum_{\Set{i,j} \in E_T} w_{ij}$ and $W_T = \sum_{\Set{i,j} \in E_T} w_{ij}$.
    \State Draw $g \sim N(0, I_n)$.
    \State Let $x^0_{i} = \sign(\langle g, v_i\rangle)$ for all $i \in [n]$.
    \State
    Let $h_{ij} = \Iprod{g, \frac{b_{ij}v_j - \rho_{ij}v_i}{\sqrt{1-\rho^2_{ij}}}}$ for all $\{i,j\} \in E_T$, and let
    \[\psi_i = \sum_{\Set{i,j} \in E_T} \frac{w_{ij}}{W_i} \sign(h_{ij})\,, \quad f_i = \sum_{\Set{i,j} \in E_T} \frac{w_{ij}}{W_i} \Paren{2\Phi\Paren{\frac{\eta h_{ij}}{\sqrt{1-\eta^2}}} - 1}\,,\]
    where $\Phi$ is the standard Gaussian CDF function.
    If $W_i = 0$, let $\psi_i = f_i = 0$.
    \State Let $x^\lambda_{i} = \sign(\langle g, v_i\rangle + \lambda f_i)$ for all $i \in [n]$. 
    \State Let
    \[\mathrm{SDP}_T = \sum_{\Set{i,j} \in E_T} w_{ij} \frac{1 + b_{ij}\langle v_i, v_j\rangle}{2}\,, \quad\quad \mathrm{SDP}_N = \sum_{\Set{i,j} \in E \setminus E_T} w_{ij} \frac{1 + b_{ij}\langle v_i, v_j\rangle}{2}\,,\]
    and $\mathrm{SDP} = \mathrm{SDP}_T + \mathrm{SDP}_N$.
    \State Let $\epsilon = \Omega(\frac{1}{d \log d})$ small enough. If $\mathrm{SDP}_N \geq \epsilon \cdot \mathrm{SDP}$, return $x^0$. Else, return $x^\lambda$.
\end{algorithmic}
\end{algorithm}

The main component of the analysis is the following guarantee, which lower bounds improvement in approximation ratio on the edges in $E_T$ that comes from using $x^{\lambda}$ instead of $x^0$:

\begin{theorem}
\label{thm:main-tight}
Suppose that, for some $0 < \rho_0 < 1$ and $0 \leq \beta < 1$, it holds for each $\{i, j\} \in E_T$ that $|\rho_{ij}| \leq \rho_0$.
Also suppose that, for each $i \in V$ and $j \neq k \in V$ such that $\Set{i,j}, \Set{i,k} \in E_T$, it holds that $\E[h_{ij}h_{ik}] \geq -\beta$.
Then
\[\E \mathrm{val}_{E_T}(x^\lambda) \geq \E \mathrm{val}_{E_T}(x^0) + \Omega\Paren{\frac{W_T}{d \log d}}\,.\]
\end{theorem}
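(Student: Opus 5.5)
We follow the outline of Section~\ref{sec:techniques}. Write $z_i=\Iprod{g,v_i}$ and $\delta_i=x_i^\lambda-x_i^0$. For each tight edge,
\[
x_i^\lambda x_j^\lambda-x_i^0x_j^0=\delta_i x_j^0+x_i^0\delta_j+\delta_i\delta_j .
\]
Hence $\E\mathrm{val}_{E_T}(x^\lambda)-\E\mathrm{val}_{E_T}(x^0)$ equals
\[
\tfrac12\sum_{\{i,j\}\in E_T}w_{ij}b_{ij}\,\E\bigl[\delta_ix_j^0+x_i^0\delta_j+\delta_i\delta_j\bigr].
\]
Set $a_i=\E[f_i\psi_i]$ and $S=\sum_iW_ia_i$. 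The plan has three steps.

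\textbf{Step 1: the main term is $\lambda\phi(0)S$ up to an $O(\lambda^2\sqrt K S)$ error.}
For a directed tight edge $(i,j)$ write
\[
b_{ij}z_j=\rho_{ij}z_i+\sqrt{1-\rho_{ij}^2}\,h_{ij}.
\]
The vector $(h_{ik})_k$, and hence $f_i$, is a linear functional of $g$ orthogonal to $v_i$, so it is independent of $z_i$.

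We have $\delta_i\neq0$ exactly when $z_i$ lies between $0$ and $-\lambda f_i$, and then $\delta_i=2\sign(f_i)$. Replace $b_{ij}x_j^0$ by its value at $z_i=0$, namely $\sign(h_{ij})$. Conditioning on $(h_{ik})_k$ and integrating $z_i$ over the window then gives
\[
\E\bigl[\delta_i\,\sign(h_{ij})\bigr]\approx 2\phi(0)\lambda\,\E\bigl[f_i\sign(h_{ij})\bigr].
\]
Summing with weights $w_{ij}$ over $j$ yields $2\lambda\phi(0)W_ia_i$.

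There are two error sources, both due to $z_i\neq0$ inside the window.
\begin{itemize}
\item The density $\phi(z_i)$ differs from $\phi(0)$, costing $O(\lambda^3\E|f_i|^3)\le O(\lambda^3\E f_i^2)$.
\item $\sign(b_{ij}z_j)\neq\sign(h_{ij})$, which requires $|h_{ij}|\lesssim\lambda|f_i|\rho_0/\sqrt{1-\rho_0^2}$.
\end{itemize}
We bound the second event jointly with the window event, as in Step 2.

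We also need $\E f_i^2\le a_i$. Let $Y_k$ be $h_{ik}$ with fresh independent noise added, so that $\E[\sign(Y_k)\mid h]=q_\eta(h_{ik})$. Then
\[
f_i\psi_i-f_i^2=\E\Bigl[\,\sum_k p_k\sign(Y_k)\Bigl(\sum_l p_l\bigl(\sign(h_{il})-q_\eta(h_{il})\bigr)\Bigr)\Bigm| h\Bigr],
\]
and the expectation of this expression is a PSD-type quadratic form that is nonnegative by Sheppard/Schur. If that fails, I would fall back to proving $\E f_i^2\lesssim\sqrt K a_i$ directly via the $\arcsin$ expansion.

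\textbf{Step 2: interaction terms.}
We must show
\[
\Pr\bigl[|z_i|\le\lambda|f_i|,\ |z_j|\le\lambda|f_j|\bigr]\le O\bigl(\lambda^2\sqrt K(a_i+a_j)\bigr).
\]
Two facts drive this.
\begin{itemize}
\item The pair $(z_i,z_j)$ has correlation bounded by $\rho_0<1$.
\item $f_i$ and $f_j$ are $\eta$-smoothed functions, because $q_\eta$ has Lipschitz constant $O(\sqrt K)$.
\end{itemize}
The plan is to condition on the component of $g$ orthogonal to $\mathrm{span}(v_i,v_j)$. The only dependence of $f_i$ on $z_j$ is through the single term $q_\eta(h_{ij})$, of weight $p_j=w_{ij}/W_i$, and symmetrically for $f_j$.

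Bounding $|f_i|\le|f_i^{(-j)}|+p_j$, where $f_i^{(-j)}$ omits that term, the two-dimensional density is $O(1)$. This gives
\[
\lambda^2\Bigl(\E\bigl|f_i^{(-j)}\bigr|\,\bigl|f_j^{(-i)}\bigr|+\text{cross terms}\Bigr)\lesssim\lambda^2\bigl(\E f_i^2+\E f_j^2+p_j^2+p_i^2\bigr).
\]
The $p_j^2$ terms sum to at most $\sum_jp_j^2$ per vertex, which must be charged against $a_i\gtrsim\sum_jp_j^2/\sqrt K$ from Step 3. This charge is where the $\sqrt K$ loss appears. The same argument controls the second error of Step 1.

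Summing over edges gives a total error of $O(\lambda^2\sqrt K S)$, so the total gain is at least $\lambda\phi(0)S-C\lambda^2\sqrt K S$. With $\lambda=c_0/\sqrt K$ and $c_0$ small, the gain is $\Omega(S/\sqrt K)$.

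\textbf{Step 3: lower bound $a_i=\Omega(1/(d\sqrt K))$.}
Let $\gamma_{jk}=\E[h_{ij}h_{ik}]$. By Sheppard's formula,
\[
a_i=\tfrac{2}{\pi}\sum_{j,k}p_jp_k\arcsin(\eta\gamma_{jk}).
\]
Expand $\arcsin$ as a series with nonnegative coefficients. By the Schur product theorem $\sum_{j,k}p_jp_k\gamma_{jk}^{m}\ge0$ for each $m$, so we may drop all but the block $\ell\in[L,2L]$ with $L=\lfloor K/10\rfloor$.

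In that block:
\begin{itemize}
\item the diagonal contributes $\sum_j p_j^2\ge1/d$;
\item negative off-diagonal terms total at least $-\beta^{2L+1}\ge-1/(2d)$, since $K=C_0\log d$ with $C_0$ large;
\item $\eta^{4L+1}=\Omega(1)$;
\item $\sum_{\ell=L}^{2L}c_\ell=\Theta(K^{-1/2})$.
\end{itemize}
Hence $S\ge W_T\cdot\Omega(1/(d\sqrt K))$ (using $\sum_iW_i=2W_T$), and the gain is $\Omega(W_T/(dK))=\Omega(W_T/(d\log d))$.

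\textbf{Main obstacle.} I expect the hardest step to be Step 2: making the joint-window bound rigorous under the dependence between $f_i$ and $z_j$, and charging the $p_j^2$ terms to $a_i$ with only a $\sqrt K$ loss.
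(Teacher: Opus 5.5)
Your decomposition, the main term $\lambda\varphi(0)S$, the density-error term and the lower bound on $a_i$ all match the paper. Your Sheppard--Schur route to $\E f_i^2\le a_i$ also works: it gives $\E f_i^2=\frac{2}{\pi}\sum_{k,l}p_kp_l\arcsin(\eta^2\gamma_{kl})$, so $a_i-\E f_i^2$ is a nonnegative combination of the Schur sums, whereas the paper compares Hermite coefficients.

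The gap is in Step~2, and it also affects the second error term of Step~1 that you defer to it. The assertion that $f_i$ depends on $z_j$ only through the single term $q_\eta(h_{ij})$ is false. Write $u_k=(b_{ik}v_k-\rho_{ik}v_i)/\sqrt{1-\rho_{ik}^2}$. Then $u_k=\gamma_{jk}u_j+u_k^\perp$ with $u_k^\perp\perp\mathrm{span}(v_i,v_j)$. After conditioning on the projection $Z$ of $g$ onto $\mathrm{span}(v_i,v_j)^\perp$, every vote therefore has the form $h_{ik}=\gamma_{jk}h_{ij}+\Iprod{g,u_k^\perp}$, and $h_{ij}=(b_{ij}z_j-\rho_{ij}z_i)/\sqrt{1-\rho_{ij}^2}$ moves across the window. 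Unless all $\gamma_{jk}$ vanish, this has three consequences:
\begin{itemize}
\item $f_i^{(-j)}$ is not $Z$-measurable;
\item the window is not a rectangle with fixed side lengths;
\item the bound $|f_i|\le|f_i^{(-j)}|+p_j$ does not reduce the estimate to $\E f_i^2+p_j^2$.
\end{itemize}
The Lipschitz property of $q_\eta$ does not repair this. Over $|h_{ij}|\lesssim\lambda=c_0/\sqrt K$ it only gives an additive change of $O(c_0\sum_kp_k|\gamma_{jk}|)$ in $f_i$, which can be of constant order. That yields a bound on $\mathcal{B}$ of order $\lambda^2c_0^2W_T$, far larger than the gain $\lambda S\gtrsim\lambda W_T/(d\sqrt K)$.

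The missing ingredient is a bound, in terms of $a_i$, on the supremum of the smoothed score over a whole slice of width $\Theta(1/\sqrt K)$ in the direction $h_{ij}$. Writing $f_i=F(H,Z)$ with $F=U_\eta\psi_i$ and $H=h_{ij}$, you need $\E_Z[\sup_{|h|\le 1/\sqrt K}|F(h,Z)|^2]\le C\sqrt K\,a_i$.

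The paper proves this in Lemma~\ref{lem:slice-max}. It splits $U_\eta=U_{\sqrt\eta}U_{\sqrt\eta}$ and applies Cauchy--Schwarz against the Mehler kernel, $|U_ru(x)|^2\le M_{r^2}(x,x)\,\E u^2$. Here $M_\eta(x,x)\le e^{1/4}\sqrt K$ for $|x|\le1/\sqrt K$, and $\E(U_{\sqrt\eta}\psi_i)^2=\Iprod{U_\eta\psi_i,\psi_i}=a_i$.

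Let $M_F(Z)$ and $M_G(Z)$ be the suprema of $|f_i|$ and $|f_j|$ over the slices $|H|,|Q|\le R\lambda$, which contain the window. The bounded density of $(z_i,z_j)$ and Cauchy--Schwarz then give $\Pr[|z_i|\le\lambda|f_i|,\ |z_j|\le\lambda|f_j|]\le C\lambda^2\E[M_FM_G]\le C\lambda^2\sqrt K(a_i+a_j)$. The same slice bound gives the one-strip estimate for $\mathcal{R}$. In particular, the $\sqrt K$ loss comes from this sup-versus-$L^2$ comparison at the smoothing scale, not from charging $p_j^2$ against $a_i$.
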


Given Theorem~\ref{thm:main-tight}, it is not hard to prove the main guarantee.
We split the analysis into two cases, matching the last line of Algorithm~\ref{alg:main}: $\mathrm{SDP}_N \geq \epsilon \cdot \mathrm{SDP}$ or not.

\begin{lemma}
\label{lem:sdp-n}
If $\mathrm{SDP}_N \geq \epsilon \cdot \mathrm{SDP}$, then $\E \mathrm{val}_E(x^0) \geq (\alpha_{GW} + \Omega(\frac{1}{d \log d})) \cdot \mathrm{OPT}$.
\end{lemma}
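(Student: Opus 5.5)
The plan is to compare the rounded value edge by edge against the SDP contribution, and to show that edges outside $E_T$ are rounded strictly better than the worst-case GW ratio. By the Goemans--Williamson analysis, an edge with signed correlation $\rho = \rho_{ij}$ is satisfied by $x^0$ with probability $s(-\rho)\cdot$(appropriately signed), i.e.\ $\Pr[b_{ij}x^0_ix^0_j = 1] = 1 - \arccos(\rho_{ij})/\pi$. Its SDP contribution is $(1+\rho_{ij})/2$. Writing $\rho' = -\rho_{ij}$, the ratio is exactly $s(\rho')/c(\rho')$, which is at least $\alpha_{GW}$ for every $\rho' \in [-1,1)$. For $\rho_{ij} = -1$ (SDP contribution $0$) the contribution to both sides is zero, so these edges can be ignored.

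The key step is a quantitative gap away from $\rho^*$. The function $r(\rho') = s(\rho')/c(\rho')$ is continuous on $[-1,1)$. It tends to $+\infty$ as $\rho' \to 1$ and equals $1$ at $\rho' = -1$. Its unique minimizer over $[-1,1)$ is $\rho^*$. Hence there is an absolute constant $\delta_0 > 0$ such that $r(\rho') \ge \alpha_{GW} + \delta_0$ whenever $|\rho' - \rho^*| > 0.01$. This is a compactness argument: on the compact set $\{\rho' \in [-1, 1-\epsilon'] : |\rho'-\rho^*| \ge 0.01\}$ the minimum exceeds $\alpha_{GW}$ by uniqueness, and near $1$ the ratio is large anyway. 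Since an edge $e \notin E_T$ means exactly $-\rho_{ij} \notin [\rho^*-0.01, \rho^*+0.01]$, every such edge satisfies
\[
\Pr[e \text{ satisfied by } x^0] \ge (\alpha_{GW}+\delta_0)\, w_e^{-1}\cdot w_e\frac{1+b_{ij}\langle v_i,v_j\rangle}{2}.
\]
Uniqueness of the minimizer is the only mildly delicate point. It is standard (the GW analysis), and I would simply cite it or note that it can be checked numerically/analytically.

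Summing, $\E\,\mathrm{val}_E(x^0) \ge \alpha_{GW}\mathrm{SDP}_T + (\alpha_{GW}+\delta_0)\mathrm{SDP}_N = \alpha_{GW}\mathrm{SDP} + \delta_0 \mathrm{SDP}_N$. Under the hypothesis $\mathrm{SDP}_N \ge \epsilon\,\mathrm{SDP}$, this gives $\E\,\mathrm{val}_E(x^0) \ge (\alpha_{GW} + \delta_0\epsilon)\mathrm{SDP}$. Then $\mathrm{SDP} \ge \mathrm{OPT}$ yields the claim with $\delta_0\epsilon = \Omega(1/(d\log d))$, since $\epsilon = \Omega(1/(d\log d))$ and $\delta_0$ is an absolute constant. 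No step here should be a real obstacle; the lemma is the easy half of the case split, and the difficulty lies in Theorem~\ref{thm:main-tight}.
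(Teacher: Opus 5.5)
Your proof is correct and takes essentially the same approach as the paper: an edge-by-edge Goemans--Williamson comparison with a uniform constant gap (the paper's $\kappa$, your $\delta_0$) for edges outside $E_T$, summed to give $\alpha_{GW}\mathrm{SDP}_T + (\alpha_{GW}+\kappa)\mathrm{SDP}_N \ge (\alpha_{GW}+\kappa\epsilon)\mathrm{SDP} \ge (\alpha_{GW}+\kappa\epsilon)\mathrm{OPT}$. You also supply two details the paper only asserts: the compactness argument that this constant exists, and the separate handling of edges with $\rho_{ij}=-1$.
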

\begin{proof}
If $\rho_{ij} \not\in I$, there exists some constant $\kappa > 0$ such that 
\[\frac{(\pi+2\arcsin(\rho_{ij}))/(2\pi)}{(1+\rho_{ij})/2} \geq \alpha_{GW} + \kappa\,,\]
and therefore, by the standard Goemans-Williamson analysis,
\begin{align*}
\E \mathrm{val}_{E}(x^0)
&\geq \alpha_{GW} \cdot \mathrm{SDP}_T + (\alpha_{GW} + \kappa) \cdot \mathrm{SDP}_N\\
&\geq (\alpha_{GW} + \kappa \epsilon) \mathrm{SDP}\\
&\geq (\alpha_{GW} + \kappa \epsilon) \mathrm{OPT}\,.
\end{align*}
The claim follows because $\kappa \epsilon = \Omega(\frac{1}{d \log d})$.
\end{proof}

\begin{lemma}
\label{lem:sdp-t}
If $\mathrm{SDP}_T \geq (1-\epsilon) \cdot \mathrm{SDP}$, then $\E \mathrm{val}_E(x^\lambda) \geq (\alpha_{GW} + \Omega(\frac{1}{d \log d})) \cdot \mathrm{OPT}$.
\end{lemma}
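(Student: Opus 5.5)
We obtain Lemma~\ref{lem:sdp-t} by applying Theorem~\ref{thm:main-tight} and adding up the contributions of the tight and non-tight edges. The first task is to check the two hypotheses of Theorem~\ref{thm:main-tight} with absolute constants $\rho_0,\beta<1$.

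\emph{The bound $|\rho_{ij}|\le\rho_0$.} For $\{i,j\}\in E_T$ we have $\rho_{ij}\in I=[-\rho^*-0.01,-\rho^*+0.01]\subseteq[0.67,0.71]$. So $\rho_0=0.71$ works.

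\emph{The bound on $\E[h_{ij}h_{ik}]$.} Fix $\{i,j\},\{i,k\}\in E_T$ with $j\neq k$. Write $u_j=b_{ij}v_j$, $u_k=b_{ik}v_k$ and $\gamma=\langle u_j,u_k\rangle$. A direct computation gives
\[
\E[h_{ij}h_{ik}]=\frac{\gamma-\rho_{ij}\rho_{ik}}{\sqrt{(1-\rho_{ij}^2)(1-\rho_{ik}^2)}}.
\]
We now apply the triangle inequality to $i,j,k$ with signs $\sigma_i=1$, $\sigma_j=b_{ij}$, $\sigma_k=-b_{ik}$. This gives $\rho_{ij}-\rho_{ik}-\gamma\ge-1$, so $\gamma\le 1+\rho_{ij}-\rho_{ik}$. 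That is an upper bound on $\gamma$, but we need a lower bound. Taking instead $\sigma=(1,-b_{ij},-b_{ik})$ gives $-\rho_{ij}-\rho_{ik}+\gamma\ge-1$, hence $\gamma\ge\rho_{ij}+\rho_{ik}-1$. Substituting,
\[
\E[h_{ij}h_{ik}]\ge\frac{\rho_{ij}+\rho_{ik}-1-\rho_{ij}\rho_{ik}}{\sqrt{(1-\rho_{ij}^2)(1-\rho_{ik}^2)}}
=-\frac{(1-\rho_{ij})(1-\rho_{ik})}{\sqrt{(1-\rho_{ij}^2)(1-\rho_{ik}^2)}}
=-\sqrt{\frac{(1-\rho_{ij})(1-\rho_{ik})}{(1+\rho_{ij})(1+\rho_{ik})}}.
\]
On $I$ the right-hand side is at least $-\frac{1-0.67}{1+0.67}\ge-0.2$, so $\beta=0.2$ works. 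This computation is the only genuinely non-routine step.

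\emph{Combining.} By Theorem~\ref{thm:main-tight}, $\E\,\mathrm{val}_{E_T}(x^\lambda)\ge\E\,\mathrm{val}_{E_T}(x^0)+\Omega(W_T/(d\log d))$. The GW analysis gives $\E\,\mathrm{val}_{E_T}(x^0)\ge\alpha_{GW}\,\mathrm{SDP}_T$. Since each edge contributes at most $w_{ij}$ to the SDP objective, $W_T\ge\mathrm{SDP}_T\ge(1-\epsilon)\mathrm{SDP}$. On $E\setminus E_T$ we only use $\mathrm{val}\ge0$, losing at most $\alpha_{GW}\,\mathrm{SDP}_N\le\alpha_{GW}\,\epsilon\,\mathrm{SDP}$. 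Therefore
\[
\E\,\mathrm{val}_E(x^\lambda)\ge\Big(\alpha_{GW}(1-\epsilon)+c\,\tfrac{1-\epsilon}{d\log d}\Big)\mathrm{SDP}.
\]
Choosing the constant in $\epsilon=\Omega(1/(d\log d))$ small compared to $c$ makes this at least $(\alpha_{GW}+\Omega(1/(d\log d)))\,\mathrm{SDP}$. Since $\mathrm{SDP}\ge\mathrm{OPT}$, this is at least $(\alpha_{GW}+\Omega(1/(d\log d)))\,\mathrm{OPT}$.

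The only care needed is consistency of constants: $\epsilon$ must be small relative to the constant in Theorem~\ref{thm:main-tight} (this lemma), yet still $\Omega(1/(d\log d))$ so that Lemma~\ref{lem:sdp-n} applies. Both hold because that constant depends only on $\rho_0$ and $\beta$, which are absolute.
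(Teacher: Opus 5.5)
Your proof is correct and follows the paper's argument essentially step for step. You verify the hypotheses of Theorem~\ref{thm:main-tight} through the same triangle-inequality bound $\langle b_{ij}v_j,b_{ik}v_k\rangle\ge\rho_{ij}+\rho_{ik}-1$ and the same lower bound $-\sqrt{(1-\rho_{ij})(1-\rho_{ik})/((1+\rho_{ij})(1+\rho_{ik}))}$, then combine $W_T\ge\mathrm{SDP}_T\ge(1-\epsilon)\mathrm{SDP}$ with $\epsilon$ proportional to the gain constant, exactly as the paper does with $\epsilon=\Gamma/2$; the false start with signs $(1,b_{ij},-b_{ik})$ can simply be deleted.
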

\begin{proof}
The proof will apply Theorem~\ref{thm:main-tight} on $E_T$.
Let us first argue that we satisfy the assumptions of this theorem.
First, we have $\Set{i,j} \in E_T$ only if $\rho_{ij} \in [-\rho^*-0.01,-\rho^*+0.01]$.
Using that $\rho^* \approx -0.689$, this implies that $|\rho_{ij}| \leq \rho_0$ for some $\rho_0 < 1$, e.g., $\rho_0 = 0.7$.
Second, consider $\Set{i,j}, \Set{i,k} \in E_T$ with $j \neq k$.
We are interested in the value
\begin{align*}
\E[h_{ij}h_{ik}]
&= \E \Iprod{g, \frac{b_{ij}v_j - \rho_{ij}v_i}{\sqrt{1-\rho_{ij}^2}}} \Iprod{g, \frac{b_{ik}v_k - \rho_{ik}v_i}{\sqrt{1-\rho_{ik}^2}}}
= \Iprod{ \frac{b_{ij}v_j - \rho_{ij}v_i}{\sqrt{1-\rho_{ij}^2}}, \frac{b_{ik}v_k - \rho_{ik}v_i}{\sqrt{1-\rho_{ik}^2}}}\,.
\end{align*}
The triangle inequality constraints imply that 
\begin{align*}
\langle b_{ij} v_j, b_{ik} v_k\rangle
\geq b_{ij} \langle  v_i, v_j\rangle + b_{ik} \langle  v_i, v_k\rangle - 1
= \rho_{ij} + \rho_{ik} - 1\,.
\end{align*}
Then, continuing the calculation for $\E[h_{ij}h_{ik}]$, we have
\begin{align*}
\E[h_{ij}h_{ik}]
&\geq \frac{\rho_{ij} + \rho_{ik} - 1 - \rho_{ij}\rho_{ik}}{\sqrt{1-\rho_{ij}^2}\sqrt{1-\rho_{ik}^2}}
= -\frac{(1-\rho_{ij})(1-\rho_{ik})}{\sqrt{1-\rho_{ij}^2}\sqrt{1-\rho_{ik}^2}}
= - \sqrt{\frac{(1-\rho_{ij})(1-\rho_{ik})}{(1+\rho_{ij})(1+\rho_{ik})}}\,,
\end{align*}
which, because $0.6 \leq \rho_{ij}, \rho_{ik} \leq 0.7$, is lower bounded by $-\beta_0$ for some $0 \leq \beta_0 < 1$ as required.

Then we apply Theorem~\ref{thm:main-tight}.
Note that $W_T \geq \mathrm{SDP}_T$.
Denote the coefficient of $W_T$ in the gain in Theorem~\ref{thm:main-tight} by $\Gamma$; we have $\Gamma = \Omega\Paren{\frac{1}{d \log d}}$.
Then
\begin{align*}
\E \mathrm{val}_{E_T}(x^\lambda)
&\geq \E \mathrm{val}_{E_T}(x^0) + \Gamma \cdot W_T\\
&\geq \alpha_{GW} \cdot \mathrm{SDP}_T +  \Gamma \cdot \mathrm{SDP}_T\\
&\geq \Paren{\alpha_{GW} + \Gamma} (1-\epsilon) \cdot \mathrm{SDP}\,.
\end{align*}
Selecting $\epsilon = \Gamma / 2$, we get
\[ \Paren{\alpha_{GW} + \Gamma} (1-\epsilon) = \alpha_{GW} + \Gamma - \Gamma \alpha_{GW} / 2 - \Gamma^2 / 2 = \alpha_{GW} + \Omega(\Gamma)\,,\]
giving $\E \mathrm{val}_{E_T}(x^\lambda) \geq (\alpha_{GW} + \Omega(\frac{1}{d\log d})) \cdot \mathrm{SDP} \geq (\alpha_{GW} + \Omega(\frac{1}{d\log d})) \cdot \mathrm{OPT}$.
Finally, as all edge contributions are non-negative, the same lower bound holds for $\E \mathrm{val}_{E}(x^\lambda)$.
\end{proof}

\begin{proof}[Proof of Theorem~\ref{thm:main}]
If $\mathrm{SDP}_N \geq \epsilon \cdot \mathrm{SDP}$, the result follows by Lemma~\ref{lem:sdp-n}.
Else, we have $\mathrm{SDP}_T \geq (1-\epsilon) \cdot \mathrm{SDP}$ and the result follows by Lemma~\ref{lem:sdp-t}.
\end{proof}

In the remainder of this section, we prove Theorem~\ref{thm:main-tight}.
For the remainder of the proof, fix $\rho_0, \beta$ as in its assumptions.
All implicit constants may depend on $\rho_0$ and $\beta$.

\subsection{Preliminary Facts}

For $0 \leq \eta < 1$, let $U_\eta$ be the Ornstein--Uhlenbeck operator defined on functions $f : \R^n \to \R$ satisfying
$\E_{g \sim N(0,I_n)}[f(g)^2] < \infty$ by
\[(U_\eta f)(t) = \E_{g' \sim N(0,I_n)} f(\eta t + \sqrt{1-\eta^2}\,g')
\]
for all $t \in \R^n$.
We note that, in Algorithm~\ref{alg:main}, $f_i$ is obtained by applying $U_\eta$ to $\psi_i$, which follows immediately from the following fact:

\begin{fact}
\label{fact:smooth-sign}
In dimension one, we have $(U_\eta \sign)(t) = 2\Phi(\eta t / \sqrt{1-\eta^2}) - 1$ for all $t \in \R$.
\end{fact}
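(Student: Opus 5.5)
We prove this by direct computation from the definition of $U_\eta$ in one dimension. For $t\in\R$ and $g'\sim N(0,1)$,
\[
(U_\eta \sign)(t) = \E_{g'}\bigl[\sign(\eta t + \sqrt{1-\eta^2}\,g')\bigr] = \Pr\bigl[\eta t + \sqrt{1-\eta^2}\,g' > 0\bigr] - \Pr\bigl[\eta t + \sqrt{1-\eta^2}\,g' < 0\bigr],
\]
since the event $\eta t + \sqrt{1-\eta^2}\,g' = 0$ has probability zero for $\eta<1$, so the value of $\sign(0)$ is irrelevant. We assume $0\le\eta<1$, so that $\sqrt{1-\eta^2}>0$.

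Dividing by $\sqrt{1-\eta^2}$, the first event is $g' > -\eta t/\sqrt{1-\eta^2}$. By symmetry of the standard Gaussian, its probability is $\Phi(\eta t/\sqrt{1-\eta^2})$. The second probability is the complement, $1-\Phi(\eta t/\sqrt{1-\eta^2})$. Subtracting gives
\[
(U_\eta \sign)(t) = 2\Phi\!\left(\frac{\eta t}{\sqrt{1-\eta^2}}\right)-1 .
\]

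To connect this to Algorithm~\ref{alg:main}: $U_\eta$ is linear. Apply it coordinatewise in the Gaussian variables $h_{ij}$, viewing $\psi_i$ as a weighted sum of $\sign(h_{ij})$. Each $\sign(h_{ij})$ depends only on the single coordinate $h_{ij}$, so the one-dimensional identity applies term by term. The resulting weighted sum is exactly $f_i$. No step is a real obstacle; the only points of care are the measure-zero event at the threshold and the assumption $\eta<1$.
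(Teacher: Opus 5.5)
Your computation is correct and is essentially the paper's proof: both write $\E\,\sign(\cdot)$ as a difference of two half-line probabilities and evaluate them with $\Phi$. The only difference is cosmetic: the paper uses a $\ge 0$ / $<0$ split, while you note that the threshold event has probability zero.

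One small imprecision, in your closing remark about the algorithm rather than in the proof of the fact: the $h_{ij}$ are correlated, not separate coordinates. The term-by-term application is still valid. Write $h_{ij}=\langle g,u_{ij}\rangle$ with $\|u_{ij}\|=1$; since $\langle g',u_{ij}\rangle\sim N(0,1)$ (rotation invariance of $U_\eta$), you get $U_\eta[\sign(\langle\cdot,u_{ij}\rangle)]=(U_\eta\sign)(\langle\cdot,u_{ij}\rangle)$.
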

\begin{proof}
We have
\begin{align*}
(U_\eta \sign)(t)
&= \E_{g'} \sign\Paren{\eta t + \sqrt{1-\eta^2}g'}\\
&= \P\Paren{\eta t + \sqrt{1-\eta^2} g' \geq 0} - \P\Paren{\eta t + \sqrt{1-\eta^2} g' < 0}\\
&= \Paren{1 - \Phi\Paren{-\eta t/\sqrt{1-\eta^2}}} - \Phi\Paren{-\eta t/\sqrt{1-\eta^2}}\\
&= 2\Phi\Paren{\eta t/\sqrt{1-\eta^2}} - 1\,.
\end{align*}
\end{proof}

Let us now deduce some properties about the variables involved in the algorithm, and in particular $h_{i,j}$.

\begin{lemma}
\label{lem:n01}
For each $\Set{i,j} \in E_T$, we have $h_{ij} \sim N(0, 1)$.
\end{lemma}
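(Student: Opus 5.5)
The plan is to show that $h_{ij}$ is a linear functional of the standard Gaussian vector $g$, and then compute its variance. By definition, $h_{ij} = \langle g, u_{ij}\rangle$ with
\[
u_{ij} := \frac{b_{ij}v_j - \rho_{ij}v_i}{\sqrt{1-\rho_{ij}^2}}.
\]
For any fixed vector $u$ and $g \sim N(0,I_n)$, the inner product $\langle g,u\rangle$ is distributed as $N(0,\Norm{u}_2^2)$. It therefore suffices to check two things: the denominator is well defined, and $\Norm{u_{ij}}_2 = 1$.

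For well-definedness, note that $\{i,j\}\in E_T$ forces $\rho_{ij} \in I = [-\rho^*-0.01,-\rho^*+0.01]$. Since $\rho^*\approx -0.689$, this gives $|\rho_{ij}| \le 0.7 < 1$, so $1-\rho_{ij}^2>0$.

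For the norm, expand the square using $\Norm{v_i}_2=\Norm{v_j}_2=1$, $b_{ij}^2=1$, and $b_{ij}\langle v_i,v_j\rangle=\rho_{ij}$:
\[
\Norm{b_{ij}v_j-\rho_{ij}v_i}_2^2 = 1 - 2\rho_{ij}\, b_{ij}\langle v_i,v_j\rangle + \rho_{ij}^2 = 1-2\rho_{ij}^2+\rho_{ij}^2 = 1-\rho_{ij}^2 .
\]
Dividing by $1-\rho_{ij}^2$ gives $\Norm{u_{ij}}_2^2=1$, so $h_{ij}\sim N(0,1)$.

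It is also worth recording the companion identity
\[
\langle u_{ij}, v_i\rangle = \frac{\rho_{ij}-\rho_{ij}}{\sqrt{1-\rho_{ij}^2}} = 0 .
\]
Because $(z_i,h_{ij})$ is jointly Gaussian, this shows $h_{ij}$ is independent of $z_i=\langle g,v_i\rangle$, which is the property used in the later analysis.

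There is no real obstacle here. The only point needing care is the use of the tight-edge interval to ensure $|\rho_{ij}|<1$.
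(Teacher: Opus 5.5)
Your proof is correct and follows the paper's argument: you write $h_{ij}=\langle g,u_{ij}\rangle$ for a fixed vector $u_{ij}$ and verify $\|u_{ij}\|_2=1$ by expanding the square. Your explicit check that $|\rho_{ij}|<1$ on tight edges, which keeps the denominator well defined, is a small detail the paper leaves implicit.
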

\begin{proof}
We have that $h_{i,j}$ is distributed as a Gaussian random variable with mean $0$ and variance
\begin{align*}
\Norm{b_{ij} \frac{v_{j} - \langle v_i, v_j\rangle v_i}{\sqrt{1-\langle v_i, v_j\rangle^2}}}^2
&= \frac{1}{1-\langle v_i, v_j\rangle^2} \Norm{v_{j} - \langle v_i, v_j\rangle v_i}^2
&= \frac{1}{1-\langle v_i, v_j\rangle^2} \Paren{1 - \langle v_i, v_j\rangle^2}
&= 1\,.
\end{align*}
\end{proof}

\begin{lemma}
\label{lem:indep}
For each $\Set{i,j} \in E_T$, we have that $\langle g, v_i\rangle$ is independent of $h_{ij}$.
\end{lemma}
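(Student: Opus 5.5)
The plan is to use the fact that $\langle g, v_i\rangle$ and $h_{ij}$ are jointly Gaussian, so independence reduces to vanishing covariance. Both are linear functionals of the same standard Gaussian vector $g \sim N(0,I_n)$. Writing $u_{ij} := \frac{b_{ij}v_j - \rho_{ij}v_i}{\sqrt{1-\rho_{ij}^2}}$, the pair $(\langle g, v_i\rangle, h_{ij}) = (\langle g, v_i\rangle, \langle g, u_{ij}\rangle)$ is a centered bivariate Gaussian vector. For jointly Gaussian variables, zero covariance is equivalent to independence.

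First, the covariance is the inner product of the defining vectors:
\[
\E\bigl[\langle g, v_i\rangle\langle g, u_{ij}\rangle\bigr] = \langle v_i, u_{ij}\rangle,
\]
since $\E[gg^\top] = I_n$. Second, we expand this inner product:
\[
\langle v_i, u_{ij}\rangle = \frac{b_{ij}\langle v_i, v_j\rangle - \rho_{ij}\|v_i\|^2}{\sqrt{1-\rho_{ij}^2}} = \frac{\rho_{ij} - \rho_{ij}}{\sqrt{1-\rho_{ij}^2}} = 0 .
\]
This uses $\|v_i\|=1$ and the definition $\rho_{ij} = b_{ij}\langle v_i, v_j\rangle$. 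The denominator is nonzero because edges in $E_T$ have $\rho_{ij} \in I$, so $|\rho_{ij}| \le 0.7 < 1$.

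Putting these together, the vector $(\langle g, v_i\rangle, h_{ij})$ is Gaussian with diagonal covariance, so its two coordinates are independent. By Lemma~\ref{lem:n01}, the diagonal entries are $1$ and $1$.

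No step is a real obstacle. The only points to state explicitly are joint Gaussianity, which holds because both quantities are linear images of the same $g$, and the nondegeneracy $|\rho_{ij}|<1$, which makes $h_{ij}$ well defined.
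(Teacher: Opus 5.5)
Your proof is correct and matches the paper's argument: both reduce independence of the jointly Gaussian pair to zero covariance and verify $\langle v_i, b_{ij}v_j - \rho_{ij}v_i\rangle = 0$ using $\|v_i\|=1$ and $\rho_{ij} = b_{ij}\langle v_i,v_j\rangle$. Your explicit remark that $|\rho_{ij}|<1$ on $E_T$, so that $h_{ij}$ is well defined, is a small point the paper leaves implicit.
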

\begin{proof}
We have that $\langle g, v_i\rangle$ and $h_{ij}$ are jointly Gaussian, so it suffices to prove that they have correlation zero, that is
\[\Iprod{v_i, b_{ij} \frac{v_{j} - \langle v_i, v_j\rangle v_i}{\sqrt{1-\langle v_i, v_j\rangle^2}}} = 0\,,\]
which follows because $\Iprod{v_i, v_{j} - \langle v_i, v_j\rangle v_i} = 0$.
\end{proof}

\begin{lemma}
\label{lem:decomp}
For each $\Set{i,j} \in E_T$, we have that 
\[b_{ij}\langle g, v_j\rangle = \rho_{ij} \langle g, v_i\rangle + \sqrt{1-\rho_{ij}^2} h_{ij}\,.\]
\end{lemma}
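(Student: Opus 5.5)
This identity is a direct unwinding of the definition of $h_{ij}$ given in Algorithm~\ref{alg:main}, so the plan is a short algebraic computation rather than a probabilistic argument. Recall $\rho_{ij} = b_{ij}\langle v_i, v_j\rangle$ and
\[
h_{ij} = \Iprod{g, \frac{b_{ij}v_j - \rho_{ij}v_i}{\sqrt{1-\rho_{ij}^2}}}\,.
\]
First I will note that the denominator is well defined. For $\{i,j\}\in E_T$ we have $\rho_{ij}\in I$, so $|\rho_{ij}|\le 0.7<1$, and hence $\sqrt{1-\rho_{ij}^2}>0$.

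Next I will multiply both sides of the definition by $\sqrt{1-\rho_{ij}^2}$ and use linearity of the inner product in its second argument. This gives
\[
\sqrt{1-\rho_{ij}^2}\,h_{ij} = b_{ij}\langle g, v_j\rangle - \rho_{ij}\langle g, v_i\rangle\,.
\]
Rearranging yields $b_{ij}\langle g, v_j\rangle = \rho_{ij}\langle g, v_i\rangle + \sqrt{1-\rho_{ij}^2}\,h_{ij}$, which is exactly the claim. The identity holds pointwise for every realization of $g$, not only in distribution.

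There is no real obstacle here. The only point needing care is the nondegeneracy of $1-\rho_{ij}^2$, which the restriction to tight edges guarantees. Combined with Lemma~\ref{lem:n01} and Lemma~\ref{lem:indep}, the identity expresses $b_{ij}z_j$ as a $\rho_{ij}$-correlated copy of $z_i$ plus independent standard Gaussian noise. This is the form used in the later analysis.
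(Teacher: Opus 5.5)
Your proof is correct and is the same computation as the paper's: substitute the definition of $h_{ij}$, use linearity of $\langle g,\cdot\rangle$, and cancel the $\rho_{ij}\langle g, v_i\rangle$ terms. The paper expands the right-hand side where you multiply through, which is only a cosmetic difference. Your observation that $|\rho_{ij}|<1$ on tight edges, so the normalization in $h_{ij}$ is well defined, is a valid detail the paper leaves implicit.
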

\begin{proof}
Expanding the definition of $h_{ij}$ in $\rho_{i,j} \langle g, v_i\rangle + \sqrt{1-\rho_{i,j}^2} h_{ij}$, we have
\begin{align*}
\rho_{ij} \langle g, v_i\rangle + \sqrt{1-\rho_{ij}^2} h_{ij}
&= \rho_{ij} \langle g, v_i\rangle + \sqrt{1-\rho_{ij}^2} \Iprod{g, \frac{b_{ij}v_j - \rho_{ij} v_i}{\sqrt{1-\rho_{ij}^2}}}\\
&= \rho_{ij} \langle g, v_i\rangle + \Iprod{g, b_{ij} v_j - \rho_{ij} v_i}\,,
\end{align*}
from which the claimed equality follows by cancellation.
\end{proof}

\subsection{Smoothed Correlation Bounds}
Let $a_i = \E[f_i \psi_i]$, which will naturally appear later in our bounds.
The main result in this section is a lower bound on $a_i$.

\begin{fact}
\label{fact:smooth-cor}
For two random variables $x, y \sim N(0, 1)$ jointly Gaussian with $\E[xy]=\rho$, we have 
\[\E[((U_\eta \sign)(x))\sign(y)] = \frac{2}{\pi} \arcsin(\eta \rho)\,.\]
\end{fact}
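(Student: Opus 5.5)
The plan is to reduce the claim to Sheppard's formula for the sign–sign correlation of two jointly Gaussian variables. By Fact~\ref{fact:smooth-sign} and the definition of $U_\eta$,
\[
(U_\eta \sign)(x) = \E_{g'}\Brac{\sign\Paren{\eta x + \sqrt{1-\eta^2}\,g'}},
\]
where $g' \sim N(0,1)$ is independent of $(x,y)$. Since $\sign(y)$ does not depend on $g'$, the tower property gives
\[
\E\Brac{(U_\eta\sign)(x)\,\sign(y)} = \E\Brac{\sign(x')\,\sign(y)}, \qquad x' := \eta x + \sqrt{1-\eta^2}\,g'.
\]

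Next I compute the joint law of $(x',y)$. They are jointly Gaussian, being linear combinations of the Gaussian vector $(x,y,g')$. Their moments are
\[
\Var(x') = \eta^2 + (1-\eta^2) = 1, \qquad \E[x'y] = \eta\,\E[xy] + \sqrt{1-\eta^2}\,\E[g'y] = \eta\rho.
\]
So $(x',y)$ is a pair of standard Gaussians with correlation $\eta\rho$.

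Finally I apply Sheppard's formula. For standard jointly Gaussian variables with correlation $r$,
\[
\P[\sign(x') \neq \sign(y)] = \frac{\arccos r}{\pi},
\]
which follows from the usual hyperplane-rounding argument: write $x' = \langle g, u\rangle$ and $y = \langle g, w\rangle$ with unit vectors satisfying $\langle u, w\rangle = r$, and use the rotational symmetry of $g$. Since $\sign(x')\sign(y) = \pm 1$,
\[
\E[\sign(x')\sign(y)] = 1 - \frac{2\arccos r}{\pi} = \frac{2}{\pi}\Paren{\frac{\pi}{2} - \arccos r} = \frac{2}{\pi}\arcsin r.
\]
Substituting $r = \eta\rho$ gives the claim.

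There is no real obstacle. The only point needing care is that $g'$ is independent of $(x,y)$, so that both the tower-property step and the covariance computation are valid. Ties where $x'=0$ or $y=0$ occur with probability zero, so the convention chosen for $\sign(0)$ is irrelevant.
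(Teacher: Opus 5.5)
Your proof is correct and follows the same route as the paper. Like the paper, you set $x' = \eta x + \sqrt{1-\eta^2}\,g'$, use the tower property to rewrite the expectation as $\E[\sign(x')\sign(y)]$, check that $\E[x'y]=\eta\rho$, and apply Sheppard's formula; the only difference is that you also sketch the hyperplane-rounding derivation of Sheppard's formula, which the paper simply cites.
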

\begin{proof}
We have
\begin{align*}
\E[((U_\eta \sign)(x))\sign(y)]
&= \E\Brac{\E_{g' \sim N(0,1)}\Brac{\sign(\eta x + \sqrt{1-\eta^2}g')} \cdot \sign(y)}\,.
\end{align*}
Let $x' = \eta x + \sqrt{1-\eta^2} g'$, which is also distributed as $x' \sim N(0, 1)$.
Then 
\[\E[((U_\eta \sign)(x))\sign(y)] = \E[\sign(x')\sign(y)]\,,\]
and by Sheppard's formula, $\E[\sign(x')\sign(y)] = \frac{2}{\pi} \arcsin(\E[x' y])$.
The result follows since $\E[x'y] = \eta \rho$.
\end{proof}

\begin{lemma}
\label{lem:energy-lb}
For every $i \in [n]$ with $W_i > 0$, it holds that $a_i \geq \Omega\Paren{\frac{1}{d \sqrt{K}}}$.
\end{lemma}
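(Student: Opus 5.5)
We expand $a_i=\E[f_i\psi_i]$ bilinearly. Write $p_j=w_{ij}/W_i$ for $\{i,j\}\in E_T$, so that $\sum_j p_j=1$, and set $\gamma_{jk}=\E[h_{ij}h_{ik}]$. Then
\[
a_i=\sum_{j,k}p_jp_k\,\E\big[(U_\eta\sign)(h_{ij})\,\sign(h_{ik})\big]=\frac{2}{\pi}\sum_{j,k}p_jp_k\arcsin(\eta\gamma_{jk}).
\]
The expectation is evaluated by Fact~\ref{fact:smooth-sign} and Fact~\ref{fact:smooth-cor}, using Lemma~\ref{lem:n01} to see that each $h_{ij}$ is standard Gaussian. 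Diagonal terms have $\gamma_{jj}=1$. For off-diagonal pairs the hypothesis of Theorem~\ref{thm:main-tight} gives $\gamma_{jk}\ge-\beta$. Next, expand $\arcsin(t)=\sum_{\ell\ge0}c_\ell t^{2\ell+1}$ with $c_\ell=\binom{2\ell}{\ell}4^{-\ell}/(2\ell+1)>0$. Since $|\eta\gamma_{jk}|\le\eta<1$, the series converges absolutely, and we may exchange the sums to get $a_i=\frac{2}{\pi}\sum_\ell c_\ell\eta^{2\ell+1}Q_\ell$, where $Q_\ell=\sum_{j,k}p_jp_k\gamma_{jk}^{2\ell+1}$.

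Every $Q_\ell$ is nonnegative. The matrix $\Gamma=(\gamma_{jk})$ is a Gram (covariance) matrix, so by the Schur product theorem its entrywise power $\Gamma^{\circ(2\ell+1)}$ is PSD. Hence $Q_\ell=p^\top\Gamma^{\circ(2\ell+1)}p\ge0$, and we may discard all $\ell$ outside a chosen window. Take $L=\lfloor K/10\rfloor$ and keep $\ell\in[L,2L]$. For these $\ell$ we lower bound $Q_\ell$ directly, splitting it into three parts:
\begin{itemize}
\item The diagonal contributes $\sum_jp_j^2\ge 1/d$, by Cauchy--Schwarz and the fact that $i$ has at most $d$ neighbors.
\item Off-diagonal terms with $\gamma_{jk}\ge0$ contribute nonnegatively.
\item Off-diagonal terms with $\gamma_{jk}<0$ satisfy $\gamma_{jk}^{2\ell+1}\ge-\beta^{2\ell+1}$, so in total they contribute at least $-\beta^{2\ell+1}\ge-\beta^{2L+1}$.
\end{itemize}
With $K=C_0\log d$ and $C_0$ large in terms of $\beta$, we get $\beta^{2L+1}\le e^{-(2L+1)\log(1/\beta)}\le 1/(2d)$. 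So $Q_\ell\ge 1/(2d)$ for every $\ell\in[L,2L]$. If $d$ is small the window is tiny, but the claim is then only a constant-order statement, and the $\ell=0$ term alone, or any fixed $\ell$, handles it after adjusting constants. One can also just assume $d$ is large enough.

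It remains to bound the weights $c_\ell\eta^{2\ell+1}$ on the window.
\begin{itemize}
\item Since $\eta^2=1-1/K$ and $2\ell+1\le 4L+1\le K/2+1$, we get $\eta^{2\ell+1}=(1-1/K)^{(2\ell+1)/2}\ge(1-1/K)^{K/4+1}=\Omega(1)$.
\item By Stirling, $c_\ell=\Theta(\ell^{-3/2})$, so $\sum_{\ell=L}^{2L}c_\ell=\Theta(L\cdot L^{-3/2})=\Theta(K^{-1/2})$.
\end{itemize}
Combining,
\[
a_i\ge\frac{2}{\pi}\sum_{\ell=L}^{2L}c_\ell\eta^{2\ell+1}\cdot\frac{1}{2d}=\Omega\!\left(\frac{1}{d\sqrt K}\right).
\]

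The main obstacle is the sign issue: individual $\arcsin(\eta\gamma_{jk})$ terms can be negative, and so can low-order $Q_\ell$ contributions from negative entries. Two things resolve it. First, PSD-ness through the Schur product theorem makes every $Q_\ell$ nonnegative, which is what justifies truncating to the window. Second, the window is chosen at order $\ell\asymp K$, high enough that $\beta^{2\ell+1}$ kills the negative mass. That same window still keeps $\eta^{2\ell+1}$ bounded below, which is why $K$ and $\eta$ are coupled as $\eta^2=1-1/K$.
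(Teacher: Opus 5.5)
Your proposal is correct and follows the paper's proof essentially step for step: Sheppard's formula writes $a_i$ as a weighted sum of $\arcsin(\eta\gamma_{jk})$; the Schur product theorem makes every odd-power term $Q_\ell$ nonnegative; on the window $\ell\in[L,2L]$ with $L=\lfloor K/10\rfloor$ the diagonal mass $\sum_j p_j^2\ge 1/d$ beats the negative mass $\beta^{2L+1}\le 1/(2d)$; and finally $\eta^{2\ell+1}=\Omega(1)$ with $\sum_{\ell=L}^{2L}c_\ell=\Theta(K^{-1/2})$. The only quibble is your small-$d$ aside: the $\ell=0$ term alone need not help, since for example $m\ge 1+1/\beta$ neighbor vectors forming a regular simplex with uniform weights give $Q_0=0$, so you need either some $\ell$ with $\beta^{2\ell+1}\le 1/(2d)$ or simply $d$ large, as the paper implicitly assumes.
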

\begin{proof}
We have by Fact~\ref{fact:smooth-cor} that
\begin{align*}
a_i &= \sum_{\substack{j, k \in [n]\\\{i,j\},\{i,k\} \in E_T}} \frac{w_{ij}}{W_i} \frac{w_{ik}}{W_i} \E[((U_\eta \sign)(h_{ij}))\sign(h_{ik})]\\
&=\frac{2}{\pi} \sum_{\substack{j, k \in [n]\\\{i,j\},\{i,k\} \in E_T}} \frac{w_{ij}}{W_i} \frac{w_{ik}}{W_i} \arcsin(\eta \E[h_{ij}h_{ik}])\,.
\end{align*}
Recall the Taylor expansion of the $\arcsin$ function:
\[\arcsin(x) = \sum_{\ell=0}^\infty c_\ell x^{2\ell + 1}\,,\quad c_\ell = \frac{(2\ell)!}{4^{\ell} (\ell!)^2 (2\ell+1)} = \Theta(\ell^{-3/2})\,.\]
Define 
\[q_\ell = \sum_{\substack{j, k \in [n]\\\{i,j\},\{i,k\} \in E_T}} \frac{w_{ij}}{W_i} \frac{w_{ik}}{W_i} \E[h_{ij}h_{ik}]^\ell\,,\]
where by the Schur product theorem we have $q_\ell \geq 0$ for all $\ell$.
Recall that, for $i\neq j\neq k$, we have $\E[h_{ij}h_{ik}] \geq -\beta$.
Let $L = \lfloor K / 10 \rfloor$ for $K$ selected large enough in terms of $\beta$ such that $\beta^{2L+1} \leq \frac{1}{2d}$.
For $\ell \geq L$, the contribution of the off-diagonal entries to $q_{2\ell+1}$ is at least $-\frac{1}{2d}$.
On the other hand, the contribution of the diagonal entries is at least
\[\sum_{\substack{j \in [n]\\\{i,j\} \in E_T}} \Paren{\frac{w_{ij}}{W_i}}^2 \geq 1/d\,,\]
where we used that $\frac{w_{ij}}{W_i}$ sum up to $1$ and there are at most $d$ such terms; then the bound follows because $\Norm{x}_2 \geq \Norm{x}_1 / \sqrt{d}$ for a $d$-dimensional vector $x$.
Therefore, for $\ell \geq L$, we have $q_{2\ell+1} \geq \frac{1}{2d}$.
Then
\begin{align*}
a_i &= \frac{2}{\pi} \sum_{\ell = 0}^{\infty} c_{\ell}\eta^{2\ell+1} q_{2\ell+1}
\geq \frac{2}{\pi} \sum_{\ell = L}^{2L} c_{\ell}\eta^{2\ell+1}q_{2\ell+1}
\geq \frac{1}{\pi d} \sum_{\ell = L}^{2L} \Theta(\ell^{-3/2} \eta^{2\ell+1})\,.
\end{align*}
Note that $\eta^{4L+1} = \sqrt{1-K^{-1}}^{4\lfloor K/10\rfloor + 1} \geq \Omega(1)$. Then
\[a_i \geq \Omega\Paren{\frac{1}{d} L^{-1/2}} = \Omega\Paren{\frac{1}{d\sqrt{K}}}\,.\]
\end{proof}

We will also want a bound on $\E f_i^2$ in terms of $a_i$.

\begin{fact}[See Proposition 11.37 in~\cite{ODonnell14}]
\label{fact:hermite-smooting}
Consider the multivariate Hermite expansion of a function $f = \sum_{\alpha} c_\alpha H_\alpha$.
Then
\[U_\eta f = \sum_\alpha \eta^{|\alpha|} c_\alpha H_\alpha\,.\]
\end{fact}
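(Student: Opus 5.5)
The plan is to verify the identity on a single Hermite polynomial and then extend by linearity and continuity in $L^2$ of the Gaussian measure. Throughout, the $H_\alpha$ are the (probabilists') multivariate Hermite polynomials $H_\alpha(t)=\prod_{m=1}^n H_{\alpha_m}(t_m)$, normalized so that the generating function identity $\exp(st-s^2/2)=\sum_{k\geq 0} H_k(t)\, s^k/\sqrt{k!}$ holds in one dimension. The normalization does not matter for the claim: any rescaling of the $H_k$ is absorbed into the coefficients $c_\alpha$ and cancels in the identity.

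\textbf{Step 1: the one-dimensional eigenvalue relation.} For fixed $s\in\R$, I would compute $U_\eta$ applied to $e_s(t):=\exp(st-s^2/2)$. Using $\E[\exp(\sigma g')]=\exp(\sigma^2/2)$ with $\sigma=s\sqrt{1-\eta^2}$,
\[
(U_\eta e_s)(t)=\exp\Paren{s\eta t-\tfrac{s^2}{2}+\tfrac{s^2(1-\eta^2)}{2}}=\exp\Paren{(\eta s)t-\tfrac{(\eta s)^2}{2}}=e_{\eta s}(t).
\]
Next I expand both sides as power series in $s$. On the left, I move $U_\eta$ inside the sum. This is justified because, for fixed $t$, the series $\sum_k |H_k(\eta t+\sqrt{1-\eta^2}g')|\,|s|^k/\sqrt{k!}$ is dominated by an integrable function of $g'$; a convenient choice is $\exp(C|s|(|t|+|g'|)+Cs^2)$, which follows from the standard bound on $\sum_k |H_k(u)|\,|s|^k/\sqrt{k!}$. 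Comparing the coefficients of $s^k$ then gives $U_\eta H_k=\eta^k H_k$.

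\textbf{Step 2: tensorization.} For $t\in\R^n$, the noise vector $g'\sim N(0,I_n)$ has independent coordinates. Hence for a product function $\prod_m f_m(t_m)$ the expectation factorizes, and $U_\eta$ acts coordinatewise as the one-dimensional operator. Applying Step 1 in each coordinate gives
\[
U_\eta H_\alpha=\prod_m \eta^{\alpha_m}H_{\alpha_m}(t_m)=\eta^{|\alpha|}H_\alpha .
\]

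\textbf{Step 3: extension to general $f$.} I would first show that $U_\eta$ is a contraction on $L^2(N(0,I_n))$. If $t\sim N(0,I_n)$ is independent of $g'$, then $\eta t+\sqrt{1-\eta^2}g'\sim N(0,I_n)$. Jensen's inequality then gives
\[
\E_t[(U_\eta f)(t)^2]\leq \E_{t,g'}\Brac{f(\eta t+\sqrt{1-\eta^2}g')^2}=\E[f^2].
\]
Let $f_N$ denote the partial sums of the Hermite expansion, which converge to $f$ in $L^2$. By Step 2, $U_\eta f_N=\sum_{|\alpha|\leq N}\eta^{|\alpha|}c_\alpha H_\alpha$. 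The contraction property gives $U_\eta f_N\to U_\eta f$ in $L^2$. The right-hand side converges in $L^2$ to $\sum_\alpha \eta^{|\alpha|}c_\alpha H_\alpha$, since $\sum_\alpha \eta^{2|\alpha|}c_\alpha^2\leq\sum_\alpha c_\alpha^2<\infty$. Since $L^2$ limits are unique, the identity follows (as an equality in $L^2$, which is all that is needed when it is later used to bound $\E f_i^2$).

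\textbf{Main obstacle.} The only delicate point is the interchange of $U_\eta$ with the generating-function series in Step 1. If the domination argument there becomes messy, the fallback is to avoid generating functions altogether: differentiate the identity $U_\eta e_s=e_{\eta s}$ $k$ times in $s$ at $s=0$, passing the derivative through the Gaussian expectation by dominated convergence using exponential moments of $g'$. Reading off $k$-th derivatives then yields $U_\eta H_k=\eta^k H_k$ directly.
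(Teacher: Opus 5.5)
Your argument is correct, and it is the standard proof of this fact; the paper gives no proof of its own and defers to Proposition 11.37 of O'Donnell, whose argument has the same outline. That outline is: the one-dimensional relation $U_\eta H_k=\eta^k H_k$ from the generating function $e^{st-s^2/2}$, tensorization over independent coordinates, and extension to all of $L^2$ using that $U_\eta$ is a contraction. Your interchange of $U_\eta$ with the series is properly justified, since with your normalization $\sum_k |H_k(u)|\,|s|^k/\sqrt{k!}\le e^{|s||u|+s^2/2}$.
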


\begin{lemma}
\label{lem:energy-var}
We have $\E f_i^2 \leq a_i$.
\end{lemma}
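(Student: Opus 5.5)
We view $f_i$ and $\psi_i$ as functions of the Gaussian vector $g$ and compare their Hermite expansions. Since $f_i = \sum_j \frac{w_{ij}}{W_i}(U_\eta \sign)(h_{ij})$ and each $h_{ij}=\langle g,u_{ij}\rangle$ for a unit vector $u_{ij}$, we have $(U_\eta \sign)(h_{ij}) = (U_\eta F_{ij})(g)$ with $F_{ij}(g)=\sign(\langle g,u_{ij}\rangle)$. The $n$-dimensional Ornstein--Uhlenbeck operator applied to a function of $\langle g,u\rangle$ agrees with the one-dimensional operator applied to the profile, because $\eta\langle g,u\rangle+\sqrt{1-\eta^2}\langle g',u\rangle$ has the same law as $\eta t+\sqrt{1-\eta^2}Z$. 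By linearity, $f_i = U_\eta \psi_i$ as functions on $\R^n$, where $\psi_i(g)=\sum_j \frac{w_{ij}}{W_i}\sign(\langle g,u_{ij}\rangle)$ is square-integrable (it is bounded).

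Write the multivariate Hermite expansion $\psi_i = \sum_\alpha c_\alpha H_\alpha$. By Fact~\ref{fact:hermite-smooting}, $f_i = \sum_\alpha \eta^{|\alpha|}c_\alpha H_\alpha$. Orthonormality of the $H_\alpha$ under $N(0,I_n)$ (Parseval) then gives
\[
a_i=\E[f_i\psi_i]=\sum_\alpha \eta^{|\alpha|}c_\alpha^2,\qquad \E f_i^2=\sum_\alpha \eta^{2|\alpha|}c_\alpha^2 .
\]
Since $0\le \eta<1$, we have $\eta^{2|\alpha|}\le \eta^{|\alpha|}$ termwise, and hence $\E f_i^2\le a_i$. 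If $W_i=0$, both sides are $0$.

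The only point requiring care is justifying that $U_\eta$ acts on $\psi_i$ as a genuine $n$-variate function and that the expansions converge in $L^2$; this is immediate because $\psi_i$ is bounded. Alternatively, one can avoid Hermite polynomials entirely by using the semigroup property $U_\eta=U_{\sqrt\eta}U_{\sqrt\eta}$ and self-adjointness. This gives $a_i=\|U_{\sqrt\eta}\psi_i\|^2$ and $\E f_i^2=\|U_{\sqrt\eta}(U_{\sqrt\eta}\psi_i)\|^2\le \|U_{\sqrt\eta}\psi_i\|^2$, where the last step uses that $U_{\sqrt\eta}$ is an $L^2$ contraction.
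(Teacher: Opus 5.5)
Your proof is correct and matches the paper's argument: expand $\psi_i$ in the multivariate Hermite basis, use $f_i = U_\eta\psi_i$ to get $a_i=\sum_\alpha \eta^{|\alpha|}c_\alpha^2$ and $\E f_i^2=\sum_\alpha\eta^{2|\alpha|}c_\alpha^2$, and compare termwise using $\eta<1$. Your check that smoothing each $\sign(h_{ij})$ in one dimension equals applying the $n$-variate $U_\eta$ to $\psi_i$ is valid and fills in a step the paper leaves implicit, and your semigroup/contraction alternative is also valid.
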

\begin{proof}
We need to prove that $\E (U_\eta \psi_i)^2 \leq \E (U_\eta \psi_i) \psi_i$.
Write $\psi_i$ in its multivariate Hermite expansion as $\psi_i = \sum_{\alpha} c_\alpha H_\alpha$.
Then, by Fact~\ref{fact:hermite-smooting}, $U_\eta \psi_i = \sum_{\alpha} \eta^{|\alpha|} c_\alpha H_\alpha$.
Then
\[\E (U_\eta \psi_i)^2 = \sum_{\alpha} \eta^{2|\alpha|} c_\alpha^2\]
and
\[\E (U_\eta \psi_i) \psi_i = \sum_{\alpha} \eta^{|\alpha|} c_\alpha^2\,,\]
and the desired bound holds term-wise because $\eta < 1$.
\end{proof}

\subsection{Gaussian Estimates}

In this section we prove a few estimates that will be crucial in the analysis.

\begin{lemma}\label{lem:slice-max}
Let $(X,Y)$ be a standard Gaussian vector, where $X$ is one coordinate and $Y$ consists of the remaining coordinates, and let $F=U_\eta f(X,Y)$ for $0\leq \eta < 1$ and $|f| \leq 1$.
Put $K=(1-\eta^2)^{-1}$ and $A=\langle U_\eta f,f\rangle$.
There are universal constants $c_0,C_0>0$ such that, for every $0\le s\le c_0/\sqrt K$,
\[\E_Y\Brac{\sup_{|x|\le s}|F(x,Y)|^2}\le C_0\sqrt K\,A\,.\]
\end{lemma}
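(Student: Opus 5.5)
The idea is to isolate the smoothing in the $X$-coordinate, use Mehler's formula to trade the supremum over $|x|\le s$ for a Cauchy--Schwarz bound against a narrow Gaussian of variance $\asymp 1/K$, and then pay a factor $\sqrt K$ to compare that narrow Gaussian to the standard one. The step that needs care is to leave enough smoothing in the $X$-direction that the final $L^2$ norm is dominated by $A$.

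\textbf{Step 1 (splitting the operator).} Write the Hermite expansion $f=\sum_\alpha c_\alpha H_\alpha$ with $\alpha=(\alpha_X,\alpha_Y)$. By Fact~\ref{fact:hermite-smooting}, $A=\langle U_\eta f,f\rangle=\sum_\alpha \eta^{|\alpha|}c_\alpha^2$. Since $U_\eta$ factors coordinatewise, we have
\[
U_\eta=U^{(X)}_{\eta}U^{(Y)}_{\eta}=U^{(X)}_{\eta_1}\,U^{(X)}_{\eta_1}\,U^{(Y)}_\eta,\qquad \eta_1=\sqrt\eta .
\]
Set $H:=U^{(X)}_{\eta_1}U^{(Y)}_\eta f$, so that $F=U^{(X)}_{\eta_1}H$. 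Its Hermite coefficients are $\eta^{|\alpha_X|/2}\eta^{|\alpha_Y|}c_\alpha$, so
\[
\E H^2=\sum_\alpha \eta^{|\alpha_X|}\eta^{2|\alpha_Y|}c_\alpha^2\le \sum_\alpha\eta^{|\alpha|}c_\alpha^2=A .
\]
This is exactly why we do not put all the $X$-smoothing into the Mehler step. If we did, $H$ would carry no damping on the $\alpha_X$ part, and $\E H^2$ could exceed $A$.

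\textbf{Step 2 (Mehler in the $X$-coordinate and removing the supremum).} Let $\sigma^2:=1-\eta_1^2=1-\sqrt\eta$. Since $1-\eta=(1-\sqrt\eta)(1+\sqrt\eta)$, we have $\sigma^2\ge (1-\eta)/2\ge 1/(2K)$. Also $\eta\le 1-\frac1K$ gives $\sqrt\eta\le 1-\frac1{2K}$, so $\sigma^2\ge 1/(2K)$ directly, and trivially $\sigma^2\le 1/K$. Mehler's formula gives
\[
F(x,y)=\int H(u,y)\,\varphi_{\eta_1x,\sigma^2}(u)\,du=\E_{u\sim N(0,\sigma^2)}\Big[H(u,y)\,r_x(u)\Big],
\]
where $r_x=\varphi_{\eta_1 x,\sigma^2}/\varphi_{0,\sigma^2}=\exp\big((2\eta_1xu-\eta_1^2x^2)/(2\sigma^2)\big)$. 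By Cauchy--Schwarz,
\[
|F(x,y)|^2\le \E_{u\sim N(0,\sigma^2)}[H(u,y)^2]\cdot \E[r_x^2]=\E_{u\sim N(0,\sigma^2)}[H(u,y)^2]\,e^{\eta_1^2x^2/\sigma^2}.
\]
For $|x|\le s\le c_0/\sqrt K$ the last factor is at most $e^{2c_0^2}=O(1)$. The right-hand side no longer depends on $x$, so the supremum over $|x|\le s$ costs only this constant.

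\textbf{Step 3 (changing to the standard Gaussian).} The density ratio satisfies
\[
\frac{\varphi_{0,\sigma^2}(u)}{\varphi(u)}=\sigma^{-1}e^{-(\sigma^{-2}-1)u^2/2}\le \sigma^{-1}\le\sqrt{2K}.
\]
Hence
\[
\E_Y\sup_{|x|\le s}|F(x,Y)|^2\le e^{2c_0^2}\sqrt{2K}\;\E_{(U,Y)\sim N(0,I)}H(U,Y)^2\le C_0\sqrt K\,A,
\]
where the last inequality uses Step~1.

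\textbf{Remarks on the hypotheses.} The assumption $|f|\le1$ is not needed beyond ensuring $f\in L^2$. The main point to check carefully is the comparison $\sigma^2\asymp 1/K$, which is needed both to keep $e^{\eta_1^2 s^2/\sigma^2}$ bounded and to bound the density ratio by $O(\sqrt K)$.
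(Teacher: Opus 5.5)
This is essentially the paper's proof: your $H=U^{(X)}_{\sqrt\eta}U^{(Y)}_{\eta}f$ coincides with the paper's $g=U^{Y}_{\sqrt\eta}U_{\sqrt\eta}f$. The only difference is the Cauchy--Schwarz step. The paper applies Cauchy--Schwarz to the Mehler kernel directly against the standard Gaussian, using $\E_z[M_r(x,z)^2]=M_{r^2}(x,x)$ with $r=\sqrt\eta$, and gets the factor $\sqrt K\,e^{\eta x^2/(1+\eta)}$ in one step; that factor stays bounded even for all $|x|\le 1$. You instead pass through $N(0,\sigma^2)$ and a density ratio, which is why your route genuinely needs $|x|\lesssim 1/\sqrt K$.

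One slip: your bound $\sigma^2\ge 1/(2K)$ is false. For example, $\eta=0.99$ gives $\sigma^2\approx 0.0050$ but $1/(2K)\approx 0.0100$. Both of your justifications fail, since $1/K=(1-\eta)(1+\eta)\ge 1-\eta$ and $\eta=\sqrt{1-1/K}\ge 1-1/K$. However, $\sigma^2=1-\sqrt\eta=(1-\eta^2)/\bigl((1+\sqrt\eta)(1+\eta)\bigr)\ge 1/(4K)$, so your argument goes through with constants $e^{4c_0^2}$ and $2\sqrt K$ in place of $e^{2c_0^2}$ and $\sqrt{2K}$.
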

\begin{proof}
Let $U_\eta^X$ and $U_\eta^Y$ be the Ornstein–Uhlenbeck operators acting in the $X$ and $Y$ variables, respectively.
For $0 \leq r < 1$, define the Mehler kernel of $U_r$ as
\[M_r(x,z) = \frac{1}{\sqrt{1-r^2}} \exp\Paren{\frac{2rxz -r^2(x^2+z^2)}{2(1-r^2)}}\,,\]
such that $(U_r u)(x) = \E_{z \sim N(0,1)}[u(z) M_r(x,z)]$.
By Cauchy-Schwarz,
\[|(U_r u)(x)|^2 \leq \E_{z \sim N(0,1)}[u(z)^2] \E_{z \sim N(0,1)}[M_r(x,z)^2]\,.\]
By the semigroup identity of the kernel, $\E_{z \sim N(0,1)}[M_r(x,z)^2] = M_{r^2}(x,x)$.
We note that
\[M_{r^2}(x,x) = \frac{1}{\sqrt{1-r^4}} \exp\Paren{\frac{2r^2x^2 - 2r^4x^2}{2(1-r^4)}} = \frac{1}{\sqrt{1-r^4}}\exp\Paren{\frac{r^2x^2}{1+r^2}}\,.\]
Therefore 
\begin{equation}
\label{eq:bound-ou}
|(U_r u)(x)|^2 \leq \frac{1}{\sqrt{1-r^4}}\exp\Paren{\frac{r^2x^2}{1+r^2}} \E_{z \sim N(0,1)}[u(z)^2]\,.
\end{equation}
Let now $r = \sqrt{\eta}$ and $h = U_r f$.
We have
\[A = \langle U_\eta f, f\rangle = \langle U_r f, U_r f\rangle = \E_{(z,z') \sim N(0, I)}[h(z,z')^2]\]
and $F = U_\eta f = U_r h = U_r^X U_r^Y h$.
Let $g = U_r^Y h$, such that $F(\cdot, y)= (U_r^X g)(\cdot, y)$.
Then, applying Equation~\ref{eq:bound-ou} and using that $r = \sqrt{\eta}$,
\[|F(x, y)|^2 \leq \sqrt{K} \exp\Paren{\frac{\eta x^2}{1+\eta}} \E_{z \sim N(0, 1)}[g(z, y)^2]\,.\]
Suppose $|x| \leq s \leq 1/\sqrt{K} = \sqrt{1-\eta^2}$.
Then $\eta x^2/(1+\eta) \leq \eta (1-\eta^2)/(1+\eta) = \eta(1-\eta) \leq 1/4$.
Then, taking the supremum gives
\[\sup_{|x| \leq s}|F(x, y)|^2 \leq e^{1/4}\sqrt{K} \E_{z \sim N(0, 1)}[g(z, y)^2]\,.\]
Finally, integrating in $y$,
\begin{align*}
\E_Y \Brac{\sup_{|x| \leq s}|F(x, y)|^2}
&\leq e^{1/4}\sqrt{K} \E_{(z,z') \sim N(0,I)} [g(z,z')^2]\\
&\leq e^{1/4}\sqrt{K} \E_{(z,z') \sim N(0,I)} [h(z,z')^2]\\
&= e^{1/4}\sqrt{K}A\,,
\end{align*}
where we used that $T^Y_r$ is a contraction.
Taking $c_0 = 1$ and $C_0 = e^{1/4}$ proves the result.
\end{proof}

\begin{lemma}\label{lem:one-strip}
Let $(X,Y)$ be a standard Gaussian vector, where $X$ is one coordinate and $Y$ consists of the remaining coordinates, and let $F=U_\eta f(X,Y)$ with $0\leq \eta < 1$ and $|f|\le1$.
Put $K=(1-\eta^2)^{-1}$ and $A=\langle U_\eta f,f\rangle$.
There are universal constants $c,C>0$ such that, for $0\le t\le c/\sqrt K$,
\begin{equation}\label{eq:one-strip}
  \E\bigl[|F(X,Y)|\1\{|X|\le t|F(X,Y)|\}\bigr]
  \le Ct\sqrt K\,A.
\end{equation}
\end{lemma}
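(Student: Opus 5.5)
The plan is to reduce the estimate to Lemma~\ref{lem:slice-max} by conditioning on $Y$ and bounding the $X$-integral by a one-dimensional Gaussian slab probability. Since $|f|\le 1$ and $U_\eta$ is an averaging operator, we have $|F|\le 1$ pointwise. Hence the event $\{|X|\le t|F(X,Y)|\}$ forces $|X|\le t$. Take $c\le c_0$, where $c_0$ is the constant of Lemma~\ref{lem:slice-max}, so that $t\le c_0/\sqrt K$ and that lemma applies with $s=t$.

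Define the slice maximum $M(Y):=\sup_{|x|\le t}|F(x,Y)|$. Since $F$ is continuous in $x$ (it is given by a Mehler-kernel integral), $M$ is measurable. On the event $\{|X|\le t|F(X,Y)|\}$ we know $|X|\le t$, so $|F(X,Y)|\le M(Y)$. Consequently the event is contained in $\{|X|\le tM(Y)\}$, and the integrand obeys
\[
|F(X,Y)|\,\1\{|X|\le t|F(X,Y)|\}\le M(Y)\,\1\{|X|\le tM(Y)\}.
\]
The right-hand side depends on $X$ only through the indicator, and $X$ is independent of $Y$. Conditioning on $Y$ and using that the standard Gaussian density is at most $\phi(0)=1/\sqrt{2\pi}$, we get $\Pr_X[|X|\le u]\le 2\phi(0)u$. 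Therefore
\[
\E\bigl[|F|\,\1\{|X|\le t|F|\}\bigr]\le \E_Y\bigl[M(Y)\cdot 2\phi(0)\,t\,M(Y)\bigr]=2\phi(0)\,t\,\E_Y\Bigl[\sup_{|x|\le t}|F(x,Y)|^2\Bigr].
\]

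Finally, Lemma~\ref{lem:slice-max} with $s=t$ bounds the last expectation by $C_0\sqrt K\,A$. This gives \eqref{eq:one-strip} with $C=2\phi(0)C_0$.

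The main point, rather than a real obstacle, is the replacement of the self-referential threshold $t|F(X,Y)|$ by the $X$-free quantity $tM(Y)$. This is what decouples $X$ from the threshold and produces the quadratic quantity $\E_Y[M(Y)^2]$, which is exactly what Lemma~\ref{lem:slice-max} controls. The replacement is legitimate only because $|F|\le 1$ confines $X$ to the window $|x|\le t$ over which the supremum is taken.
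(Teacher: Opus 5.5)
Your proposal is correct and matches the paper's proof step for step. Both define the slice maximum $M(Y)=\sup_{|x|\le t}|F(x,Y)|$, use $|F|\le 1$ to replace the threshold by $tM(Y)$, bound the conditional slab probability by $O(tM(Y))$, and finish with Lemma~\ref{lem:slice-max} applied with $s=t$.
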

\begin{proof}
For fixed $Y = y$, let $M(y) = \sup_{|x| \leq t} |F(x,y)|$.
Because $|f| \leq 1$, it also follows that $|F| \leq 1$, so $0 \leq M(y) \leq 1$.
Note that if $|x| \leq t |F(x,y)|$, then $|x| \leq t$ and therefore we also have $|x| \leq t M(y)$ and $|F(x,y)| \leq M(y)$.
Therefore, conditioning on $Y=y$,
\[\E_X \Brac{ |F(X,y)| \1\Set{|X| \leq t|F(X,y)|} } \leq M(y) \P(|X|\leq t M(y)) \leq O(t M(y)^2)\,,\]
where we used that the density of $X$ is upper bounded by a constant.
Averaging over $Y$ and applying Lemma~\ref{lem:slice-max} with $t$ as $s$, we get
\[\E\bigl[|F(X,Y)|\1\{|X|\le t|F(X,Y)|\}\bigr] \leq O(t \sqrt{K}A)\,.\]
\end{proof}

\begin{lemma}\label{lem:transverse-two-strip}
Let $X,Y$ be standard Gaussians with $\E[XY]=\rho$ and $|\rho|\le\rho_0<1$.
Put $\sigma=\sqrt{1-\rho^2}$ and write
\begin{equation}\label{eq:H-Q-def}
Y=\rho X+\sigma H, \qquad Q=\sigma X-\rho H=\frac{X-\rho Y}{\sigma}\,.
\end{equation}
Thus $H$ is independent of $X$, and $Q$ is independent of $Y$.
Let $Z$ be an additional standard Gaussian vector, independent of $(X,H)$.
Suppose for $0\leq \eta < 1$
\begin{equation}\label{eq:transverse-functions}
F=U_\eta f(H,Z), \qquad G=U_\eta g(Q,Z),\qquad |f|,|g|\le1.
\end{equation}
Let
\begin{equation}\label{eq:AF-AG-def}
A_F=\langle U_\eta f,f\rangle, \qquad A_G=\langle U_\eta g,g\rangle, \qquad K=(1-\eta^2)^{-1}.
\end{equation}
There are constants $c_{\rho_0},C_{\rho_0}>0$ such that, for $0\le t\le c_{\rho_0}/\sqrt K$,
\begin{equation}\label{eq:transverse-two-strip}
\Prb\bigl(|X|\le t|F|,\ |Y|\le t|G|\bigr) \le C_{\rho_0}t^2\sqrt K\,(A_F+A_G).
\end{equation}
\end{lemma}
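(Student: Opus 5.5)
The plan is to bound the joint event by one of the two strip probabilities, each weighted by the other function's size, and then reduce to Lemma~\ref{lem:slice-max}. Split according to which of $|F|,|G|$ is larger: on the event, if $|F|\ge|G|$ then $|Y|\le t|G|\le t|F|$, and otherwise $|X|\le t|F|\le t|G|$. Hence
\[
\Prb\bigl(|X|\le t|F|,\ |Y|\le t|G|\bigr)\le \Prb\bigl(|X|\le t|F|,\ |Y|\le t|F|\bigr)+\Prb\bigl(|X|\le t|G|,\ |Y|\le t|G|\bigr).
\]
By the symmetry of the setup ($G$ depends on $(Q,Z)$ with $Q\perp Y$, exactly as $F$ depends on $(H,Z)$ with $H\perp X$), it suffices to bound the first term by $C_{\rho_0}t^2\sqrt K A_F$.

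For the first term, condition on $(H,Z)$, which fixes $F$. The variable $X$ is independent of $(H,Z)$, and $Y=\rho X+\sigma H$. Given $(H,Z)$, the event $\{|X|\le t|F|\}$ has probability at most $O(t|F|)$ because the Gaussian density is bounded. The additional constraint $|Y|\le t|F|$ forces $|\sigma H + \rho X|\le t|F|$, so $|H|\le 2t|F|/\sigma$. This is the transversality: two strips in directions $X$ and $Y$, with correlation bounded away from $1$, meet in a parallelogram of area $O(t^2|F|^2/\sigma)$. So
\[
\Prb(\cdot)\le \E_{H,Z}\Bigl[O(t|F(H,Z)|)\,\1\{|H|\le 2t|F(H,Z)|/\sigma\}\Bigr].
\]
This is exactly the quantity in Lemma~\ref{lem:one-strip} with $H$ in the role of $X$ and strip width $t'=2t/\sigma$. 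Here $\sigma\ge\sqrt{1-\rho_0^2}$, so choosing $c_{\rho_0}$ small makes $t'\le c/\sqrt K$. Lemma~\ref{lem:one-strip} then gives the bound $O(t)\cdot Ct'\sqrt K A_F=O_{\rho_0}(t^2\sqrt K A_F)$.

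The symmetric term is handled the same way. Write $X=\rho Y+\sigma Q$, condition on $(Q,Z)$, use that $Y$ is independent of $(Q,Z)$, and apply Lemma~\ref{lem:one-strip} to $G$ in the variable $Q$.

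The main point to check is that these independence claims hold jointly with $Z$. Since $Z$ is independent of $(X,H)$, it is independent of $(X,Y,H,Q)$. Also $Y\perp Q$ as jointly Gaussian with zero covariance, so $Y\perp(Q,Z)$. The same reasoning gives $X\perp(H,Z)$. Lemma~\ref{lem:one-strip} also requires $(H,Z)$, respectively $(Q,Z)$, to be a standard Gaussian vector, and both are.

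The one genuinely delicate step is the reduction from two strips to one strip times $|F|$. This requires that on the chosen branch both strip widths are controlled by the same function. That is why I split by $\max(|F|,|G|)$ rather than bounding the joint probability directly.
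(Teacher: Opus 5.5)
Your proof is correct, and it takes a different route from the paper's.

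The paper conditions only on $Z=z$. Since $|F|,|G|\le 1$, the event forces $|X|,|Y|\le t$, hence both $|H|$ and $|Q|$ lie in a window of width $O_{\rho_0}(t)$. The paper then dominates $|F(H,z)|$ and $|G(Q,z)|$ simultaneously by their suprema $M_F(z),M_G(z)$ over these windows. Using the bounded \emph{joint} density of $(X,Y)$, it gets the conditional bound $C_{\rho_0}t^2M_F(z)M_G(z)$. Cauchy--Schwarz and Lemma~\ref{lem:slice-max} applied to each factor then give $O(t^2\sqrt K\sqrt{A_FA_G})$, which is relaxed to $A_F+A_G$ at the end.

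You instead split according to whether $|F|\ge|G|$, which turns the two-function event into two single-function events. Conditioning on $(H,Z)$ (resp.\ $(Q,Z)$), you only need the one-dimensional density of $X$ (resp.\ $Y$). Transversality enters only through the constraint $|H|\le 2t|F|/\sigma$ (resp.\ $|Q|\le 2t|G|/\sigma$). With that, Lemma~\ref{lem:one-strip} applies verbatim with strip width $2t/\sigma$, and your independence checks ($X\perp(H,Z)$, $Y\perp(Q,Z)$) are exactly what is required.

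What each route buys:
\begin{itemize}
\item Your argument is more modular. It reuses Lemma~\ref{lem:one-strip} as a black box and avoids both the two-dimensional density bound and Cauchy--Schwarz.
\item The paper's argument gives the slightly stronger geometric-mean bound $\sqrt{A_FA_G}$. Your splitting does not give this, since each branch sees only one of the two functions.
\item Nothing is lost for the application in Lemma~\ref{lem:bound-b}, where only the sum $a_i+a_j$ is used.
\end{itemize}
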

\begin{proof}
Let $\sigma_0 = \sqrt{1-\rho_0^2}$ and $R = \frac{1+\rho_0}{\sigma_0}$.
Note that $\sigma \geq \sigma_0$.
For fixed $Z=z$, define $M_F(z) = \sup_{|h|\leq R t} |F(h,z)|$ and $M_G(z)=\sup_{|q|\leq Rt} |G(q,z)|$.
Because $|f|, |g| \leq 1$, we also have $|F|,|G| \leq 1$.
Then, note that if $|X| \leq t|F|$ and $|Y| \leq t|G|$, we have $|X| \leq t$ and $|Y| \leq t$. In this event, also
\[|H| \leq \frac{|Y|+|\rho| |X|}{\sigma} \leq Rt\,,\quad |Q| \leq \frac{|X|+|\rho||Y|}{\sigma} \leq Rt\,.\]
Hence, in this event we have $|F(H,z)| \leq M_F(z)$ and $|G(Q,z)| \leq M_G(z)$.
Because the joint density of $(X, Y)$ is upper bounded by $\frac{1}{2\pi \sigma} \leq \frac{1}{2\pi \sigma_0}$, and because $|F(H,z)| \leq M_F(z)$ and $|G(Q,z)| \leq M_G(z)$ if $|X| \leq t|F|$ and $|Y| \leq t|G|$, we can upper bound
\[\P(|X| \leq t|F|, |Y| \leq t|G| \mid Z=z) \leq C_{\rho_0} t^2 M_F(z) M_G(z)\,.\]
Averaging over $Z$ and applying Cauchy-Schwarz,
\begin{align}
\P(|X| \leq t|F|, |Y| \leq t|G|)
&\leq C_{\rho_0} t^2 \E[M_F(Z) M_G(Z)]
\leq C_{\rho_0} t^2 \sqrt{\E[M_F(Z)^2]} \sqrt{\E[M_G(Z)^2]}\,.
\end{align}
Choose $c_{\rho_0}$ small enough such that $R t$ is small enough to apply Lemma~\ref{lem:slice-max} with $Rt$ as $s$.
Then, Lemma~\ref{lem:slice-max} gives $\E[M_F(Z)^2] \leq O(\sqrt{K} A_F)$ and $\E[M_G(Z)^2] \leq O( \sqrt{K} A_G)$.
Therefore
\[\P(|X| \leq t|F|, |Y| \leq t|G|) \leq O\Paren{C_{\rho_0} t^2 \sqrt{K} \sqrt{A_F A_G}} \leq O\Paren{C_{\rho_0} t^2 \sqrt{K} (A_F + A_G)}\,.\]
\end{proof}

\subsection{Lower Bounding the Gain}
We are interested in lower bounding
\[\Delta_{\mathrm{val}} = \E \mathrm{val}_{E_T}(x^\lambda) - \E \mathrm{val}_{E_T}(x^0) = \E \sum_{\Set{i,j} \in E_T} w_{ij} \frac{b_{ij}(x_i^\lambda x_j^\lambda - x_i^0 x_j^0)}{2}\,.\]
Define $\delta_i = x_i^\lambda - x_i^0$.
Then, for each $\Set{i,j} \in E_T$, we can write
\[x_i^\lambda x_j^\lambda - x_i^0 x_j^0 = \delta_i x_j^0 + x_i^0 \delta_j + \delta_i \delta_j\,.\]
Define $L_{i \to j} = \frac{1}{2} w_{ij} b_{ij} \E[\delta_i  x_j^0]$ and $B_{ij} = \frac{1}{2} w_{ij} b_{ij} \E[\delta_i\delta_j]$, such that
\[\Delta_{\mathrm{val}} = \sum_{\substack{i, j \in [n]\\ \Set{i,j} \in E_T}} L_{i \to j} + \sum_{\Set{i,j} \in E_T} B_{ij}\,.\]
Note that the indices in the first sum are ordered, so that we sum up both $L_{i \to j}$ and $L_{j \to i}$. Next, let us rewrite $L_{i \to j}$.
Define 
\[M_{i \to j} = w_{ij} \lambda \varphi(0) \E[f_i \cdot \sign(h_{ij})]\,,\]
\[D_{i \to j} = w_{ij} \E \int_{0}^{\lambda f_i} (\varphi(t) - \varphi(0)) \Paren{\sign\Paren{\sqrt{1-\rho_{ij}^2} h_{ij} - \rho_{ij}t}} dt\,,\]
\[R_{i \to j} = w_{ij} \E \int_{0}^{\lambda f_i} \varphi(0) \Paren{\sign\Paren{\sqrt{1-\rho_{ij}^2} h_{ij} - \rho_{ij}t} - \sign(h_{ij})} dt\,,\]
where $\varphi : \R \to \R$ is the pdf of the standard Gaussian, and where the integrals are oriented (it is possible that $\lambda f_i < 0$).
\begin{lemma}
\label{lem:rw-l}
For $i, j \in [n]$ such that $\Set{i,j} \in E_T$, we have
\[L_{i \to j} = M_{i \to j} + D_{i \to j} + R_{i \to j}\,.\]
\end{lemma}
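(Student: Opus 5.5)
The identity is exact, so the plan is a direct computation: condition on everything except $z_i:=\langle g,v_i\rangle$, integrate $z_i$ over the short interval where $x_i$ flips, change variables, and then split the integrand into the three pieces $M$, $D$, $R$.

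\textbf{Step 1: independence.} The key observation is that all the $h_{ik}$, for $\{i,k\}\in E_T$, are linear functionals of $g$ along vectors orthogonal to $v_i$. This is the computation in Lemma~\ref{lem:indep}, which applies to each $k$. The family $(z_i,(h_{ik})_k)$ is jointly Gaussian and each $h_{ik}$ is uncorrelated with $z_i$. Hence the whole vector $(h_{ik})_k$ is independent of $z_i$. In particular $f_i$ and $h_{ij}$ are jointly independent of $z_i$.

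\textbf{Step 2: rewrite both factors.} By Lemma~\ref{lem:decomp}, with $\sigma=\sqrt{1-\rho_{ij}^2}>0$,
\[
b_{ij}x_j^0=\sign(b_{ij}z_j)=\sign(\rho_{ij}z_i+\sigma h_{ij}).
\]
The increment $\delta_i=\sign(z_i+\lambda f_i)-\sign(z_i)$ vanishes except on the interval between $-\lambda f_i$ and $0$, up to a null set. Concretely:
\begin{itemize}
\item if $f_i>0$, then $\delta_i=2$ for $-\lambda f_i<z_i<0$;
\item if $f_i<0$, then $\delta_i=-2$ for $0\le z_i<-\lambda f_i$.
\end{itemize}

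\textbf{Step 3: integrate out $z_i$.} Conditioning on $(h_{ik})_k$, Step~1 lets us integrate $z_i$ against $\varphi$. Take the case $f_i>0$. We get
\[
\E[\delta_i b_{ij}x_j^0\mid h]=2\int_{-\lambda f_i}^0\varphi(z)\,\sign(\rho_{ij}z+\sigma h_{ij})\,dz .
\]
Substituting $z=-t$ and using that $\varphi$ is even, this becomes
\[
2\int_0^{\lambda f_i}\varphi(t)\,\sign(\sigma h_{ij}-\rho_{ij}t)\,dt .
\]
In the case $f_i<0$, the same substitution together with the sign $-2$ gives the same oriented integral. Multiplying by $\tfrac12 w_{ij}$ and taking expectation over $h$ yields
\[
L_{i\to j}=w_{ij}\,\E\int_0^{\lambda f_i}\varphi(t)\,\sign\bigl(\sigma h_{ij}-\rho_{ij}t\bigr)\,dt .
\]

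\textbf{Step 4: split the integrand.} Write
\[
\varphi(t)\,s_t=(\varphi(t)-\varphi(0))\,s_t+\varphi(0)\bigl(s_t-\sign(h_{ij})\bigr)+\varphi(0)\,\sign(h_{ij}),
\qquad s_t:=\sign(\sigma h_{ij}-\rho_{ij}t).
\]
The first term integrates to $D_{i\to j}$ and the second to $R_{i\to j}$. For the last term, note that $\sign(\sigma h_{ij})=\sign(h_{ij})$ and that this factor does not depend on $t$. So it integrates to $\lambda\varphi(0)f_i\sign(h_{ij})$, which after multiplying by $w_{ij}$ and taking expectation is exactly $M_{i\to j}$.

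\textbf{Main obstacle.} The only delicate points are the bookkeeping in Step~3. One must track the orientation of the integral when $f_i<0$, so that both sign cases merge into the single oriented integral. One must also justify conditioning on all of $f_i$, not just on $h_{ij}$; this is exactly the joint-independence claim of Step~1. Integrability is trivial throughout, since every integrand is bounded and the interval has length at most $\lambda$.
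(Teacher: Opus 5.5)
Your proposal is correct and follows the paper's proof essentially step for step. Both condition on the part of $g$ orthogonal to $v_i$ (you condition on the vector $(h_{ik})_k$, which suffices here), integrate $z_i$ over the flip interval, and substitute $t\mapsto -t$ to obtain $w_{ij}\E\int_0^{\lambda f_i}\varphi(t)\sign(\sqrt{1-\rho_{ij}^2}\,h_{ij}-\rho_{ij}t)\,dt$, then split the integrand into the $M$, $D$, $R$ pieces; your separate treatment of the two signs of $f_i$ and of the joint independence of $z_i$ from all of $(h_{ik})_k$ just makes explicit what the paper leaves as ``easily seen.''
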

\begin{proof}
We have
\begin{align*}
L_{i \to j}
&= \frac{1}{2} w_{ij} \E\Brac{\Paren{\sign\Paren{\langle g, v_i\rangle + \lambda f_i}- \sign\Paren{\langle g, v_i\rangle}}\Paren{\sign(b_{ij} \langle g, v_j\rangle)}}\,.
\end{align*}
Let $g^{(i)}_\bot = g - \langle g, v_i\rangle v_i$, which is independent of $\langle g, v_i\rangle$.
By Lemma~\ref{lem:indep}, all $\{h_{ik}\}_{\{i,k\} \in E_T}$ are independent of $\langle g, v_i\rangle$, and hence also $\psi_i, f_i$ are independent of $\langle g, v_i\rangle$, so they are fixed when conditioning on $g^{(i)}_\bot$.
Also, by Lemma~\ref{lem:decomp}, $b_{ij}\langle g, v_j\rangle = \rho_{i,j} \langle g, v_i\rangle + \sqrt{1-\rho_{i,j}^2} h_{i,j}$.
Then 
\begin{align*}
L_{i \to j}
&= \frac{1}{2} w_{ij}\E_{g_i^{\bot}}\Brac{ \E_{g' \sim N(0, 1)} (\sign(g' + \lambda f_i) - \sign(g'))(\sign(\rho_{ij} g' + \sqrt{1-\rho_{ij}^2} h_{ij})) \Bigg| g^{(i)}_\bot}\\
&= \frac{1}{2} w_{ij} \E \int_{-\infty}^\infty \varphi(t) \Paren{\sign(t + \lambda f_i) - \sign(t)}\Paren{\sign\Paren{\rho_{ij}t + \sqrt{1-\rho_{ij}^2} h_{ij}}} dt\,.
\end{align*}
Next, note that $\sign(t + \lambda f_i) - \sign(t) = 2 \cdot 1[t \in [-\lambda f_i, 0]]$ if $\lambda f_i \geq 0$ and $\sign(t + \lambda f_i) - \sign(t) = -2 \cdot 1[t \in [0,-\lambda f_i]]$ if $\lambda f_i < 0$.
We can write this concisely as an oriented integral from $-\lambda f_i$ to $0$, 
\begin{align*}
L_{i \to j}
&= w_{ij} \E \int_{-\lambda f_i}^0 \varphi(t) \Paren{\sign\Paren{\rho_{ij}t + \sqrt{1-\rho_{ij}^2} h_{ij}}} dt\\
&= w_{ij} \E \int_{0}^{\lambda f_i} \varphi(t) \Paren{\sign\Paren{\sqrt{1-\rho_{ij}^2} h_{ij} - \rho_{ij}t}} dt\,.
\end{align*}
Then it is easily seen that
\[L_{i\to j} = M_{i \to j} + D_{i \to j} + R_{i \to j}\,,\]
where $M_{i \to j}$, $D_{i \to j}$, and $R_{i \to j}$ are as in the lemma statement.
\end{proof}

Next, define the global terms
\[\cM = \sum_{(i,j) : \Set{i,j}\in E_T} M_{i \to j}\,, \quad \cD = \sum_{(i,j) : \Set{i,j}\in E_T} D_{i \to j}\,, \quad \cR = \sum_{(i,j) : \Set{i,j}\in E_T} R_{i \to j}\,, \quad \cB = \sum_{\Set{i,j}\in E_T} B_{i,j}\,.\]
Note then that 
\[\Delta_{\mathrm{val}} = \cM + \cD + \cR + \cB\,.\]
We now bound each of these terms.
Let $S = \sum_{i \in [n]} W_i a_i$.

\begin{lemma}
\label{lem:bound-m}
We have
\[\cM = \lambda \varphi(0) S\,.\]
\end{lemma}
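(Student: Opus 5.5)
The identity is purely a matter of regrouping the sum $\cM=\sum_{(i,j)} M_{i\to j}$ by its first index and recognizing the definition of $\psi_i$. No analytic input is needed beyond linearity of expectation.

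First, pull the deterministic factor $\lambda\varphi(0)$ out of every term and sum over ordered pairs $(i,j)$ with $\{i,j\}\in E_T$, grouping by the tail vertex $i$. This gives
\[
\cM=\lambda\varphi(0)\sum_{i\in[n]}\ \E\Big[f_i\sum_{j:\{i,j\}\in E_T} w_{ij}\,\sign(h_{ij})\Big].
\]
Here we use that for fixed $i$, $f_i$ does not depend on $j$, so linearity lets us move the sum over $j$ inside the expectation.

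Next, for each $i$ with $W_i>0$, we recognize the inner sum as $W_i\psi_i$, since by definition $\psi_i=\sum_{j}\frac{w_{ij}}{W_i}\sign(h_{ij})$. The $i$-th term therefore equals $W_i\E[f_i\psi_i]=W_i a_i$.

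For $i$ with $W_i=0$, there are no edges of $E_T$ at $i$, or they all have zero weight. In either case the inner sum vanishes, and so does $W_i a_i$, because $f_i=\psi_i=0$ by convention. Summing over $i$ then yields $\cM=\lambda\varphi(0)\sum_i W_i a_i=\lambda\varphi(0)S$.

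The only point requiring care, and it is minor, is the bookkeeping of orientation. Each undirected edge contributes to both $M_{i\to j}$ and $M_{j\to i}$, and it is exactly this ordered-pair summation that makes $W_i$, the weighted degree into $E_T$, appear for every vertex $i$. We also check that $h_{ij}$, defined from $i$'s perspective, is the quantity entering $\psi_i$, so that the vote of $j$ toward $i$ matches the one in $M_{i\to j}$. By the definitions in Algorithm~\ref{alg:main} it does.
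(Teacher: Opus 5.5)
Your proposal is correct and follows the paper's argument essentially verbatim: pull out $\lambda\varphi(0)$, group the ordered-pair sum by the tail vertex $i$, and identify $\sum_{j:\{i,j\}\in E_T} w_{ij}\,\sign(h_{ij}) = W_i\psi_i$, so that the $i$-th term is $W_i a_i$. Your extra remark on the $W_i=0$ case is a harmless bit of bookkeeping that the paper leaves implicit.
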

\begin{proof}
We have
\begin{align*}
\cM
&= \lambda \varphi(0)  \sum_{\substack{i, j \in [n]\\ \Set{i,j} \in E_T}} w_{ij} \E[f_i \cdot \sign(h_{ij})]
= \lambda \varphi(0) \sum_{i \in [n]} \E \Brac{f_i \sum_{\substack{j \in [n]\\\Set{i,j} \in E_T}} w_{ij} \sign(h_{ij})}\\
&= \lambda \varphi(0) \sum_{i \in [n]} W_i \E[f_i \psi_i]
= \lambda \varphi(0) S\,.
\end{align*}
\end{proof}

\begin{lemma}
\label{lem:bound-d}
We have
\[|\cD| \leq O(\lambda^2 S)\,.\]
\end{lemma}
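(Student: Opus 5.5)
We bound each $D_{i\to j}$ separately and then sum. The integrand involves $\sign(\cdot)$, which is at most $1$ in absolute value. We also have $|\varphi(t)-\varphi(0)| \le \varphi(0)(1-e^{-t^2/2}) \le \varphi(0)\,t^2/2$ for all $t$. Therefore
\[
|D_{i\to j}| \le w_{ij}\,\E\left|\int_0^{\lambda f_i} \frac{\varphi(0) t^2}{2}\, dt\right| = \frac{\varphi(0)}{6}\, w_{ij}\,\lambda^3\, \E|f_i|^3 .
\]
Since $|f_i|\le 1$, this gives $\E|f_i|^3 \le \E f_i^2$. By Lemma~\ref{lem:energy-var}, $\E f_i^2\le a_i$, so $|D_{i\to j}|\le O(\lambda^3 w_{ij} a_i)$.

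Summing over ordered pairs $(i,j)$ with $\{i,j\}\in E_T$, and using $\sum_j w_{ij} = W_i$ for each fixed $i$, we get $|\cD| \le O(\lambda^3 \sum_i W_i a_i) = O(\lambda^3 S)$. Because $\lambda = c_0/\sqrt K \le 1$, this is at most $O(\lambda^2 S)$, which is the claim.

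No step here is a genuine obstacle. The one point to check is that the orientation of the integral does not matter: when $\lambda f_i<0$ we bound the absolute value of an oriented integral by the integral of the absolute value over $[\lambda f_i,0]$, and this yields the same $|\lambda f_i|^3$ bound. We also need the fact that $f_i$ is a fixed quantity when integrating in $t$; this holds since the integral is pointwise inside the expectation. The cruder bound $|\varphi(t)-\varphi(0)|\le \varphi(0)$ would give only $O(\lambda \E|f_i|)$, which is not controlled by $a_i$. So the quadratic Taylor estimate of $\varphi$ near $0$, combined with Lemma~\ref{lem:energy-var}, is the key ingredient.
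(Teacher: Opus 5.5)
Your proof is correct and is essentially the paper's argument: bound the integrand pointwise, integrate over the window of length $\lambda|f_i|$, apply $\E f_i^2\le a_i$ from Lemma~\ref{lem:energy-var}, and sum over directed edges using $\sum_j w_{ij}=W_i$. The only difference is that the paper uses the Lipschitz bound $|\varphi(t)-\varphi(0)|=O(|t|)$, which already gives $O(\lambda^2\E f_i^2)$ without needing $|f_i|\le 1$; your quadratic estimate yields the stronger $O(\lambda^3 S)$, but it is an optional sharpening rather than the ``key ingredient'' your last paragraph calls it.
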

\begin{proof}
We will use that $|\varphi(x) - \varphi(0)| \leq O(|x|)$ and $|\sign(x)| \leq 1$ for all $x \in \R$.
Then, using from Lemma~\ref{lem:energy-var} that $\E f_i^2 \leq a_i$,
\begin{align*}
|D_{i \to j}| &= \Abs{w_{ij} \E \int_{0}^{\lambda f_i} (\varphi(t) - \varphi(0)) \Paren{\sign\Paren{\sqrt{1-\rho_{ij}^2} h_{ij} - \rho_{ij}t}} dt}\\
&\leq O\Paren{w_{ij} \E \int_{0}^{\lambda |f_i|} t\, dt}
\leq O\Paren{w_{ij} \lambda^2 \E f_i^2}
\leq O\Paren{w_{ij} \lambda^2 a_i}\,.
\end{align*}
Then
\begin{align*}
|\cD|
&\leq \sum_{\substack{i, j \in [n]\\ \Set{i,j} \in E_T}} |D_{i \to j}|
\leq O\Paren{\lambda^2 \sum_{i \in [n]}  a_i \sum_{\substack{j \in [n]\\ \Set{i,j} \in E_T}} w_{ij}}
\leq O(\lambda^2 S)\,.
\end{align*}
\end{proof}

\begin{lemma}
\label{lem:bound-r}
We have
\[|\cR| \leq O(\lambda^2 \sqrt{K} S)\,.\]
\end{lemma}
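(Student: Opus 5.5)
Bounding $\cR$ is a per-directed-edge estimate followed by summation. For a fixed directed edge $i\to j$ set $\sigma=\sqrt{1-\rho_{ij}^2}\ge\sqrt{1-\rho_0^2}$. The integrand $\sign(\sigma h_{ij}-\rho_{ij}t)-\sign(h_{ij})$ is nonzero only when $h_{ij}$ lies between $0$ and $\rho_{ij}t/\sigma$. It has absolute value at most $2$. On the integration range $|t|\le\lambda|f_i|$, so a nonzero integrand forces $|h_{ij}|\le C_{\rho_0}\lambda|f_i|$. This gives the pointwise bound
\[
|R_{i\to j}|\le 2\varphi(0)\,w_{ij}\,\E\bigl[\lambda|f_i|\,\1\{|h_{ij}|\le C_{\rho_0}\lambda|f_i|\}\bigr].
\]
The target is therefore to show $\E\bigl[|f_i|\1\{|h_{ij}|\le t|f_i|\}\bigr]\le O(t\sqrt K\,a_i)$ with $t=C_{\rho_0}\lambda$. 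This is exactly the form of Lemma~\ref{lem:one-strip}.

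To apply Lemma~\ref{lem:one-strip}, take $X=h_{ij}$ as the distinguished coordinate. Complete $X$ to a standard Gaussian vector $(X,Y)$ that generates all the $h_{ik}$ with $\{i,k\}\in E_T$. This is possible because these variables are jointly Gaussian functions of $g$: choose an orthonormal basis of their span whose first element is the direction of $h_{ij}$. Then $\psi_i=f(X,Y)$ with $|f|\le 1$. Since the OU operator $U_\eta$ acts coordinatewise and is rotation invariant, $f_i=U_\eta\psi_i$ as a function of $(X,Y)$; this uses Fact~\ref{fact:smooth-sign}, which says $f_i$ is the linear combination of the one-dimensional smoothings $U_\eta\sign(h_{ik})$, and each of these coincides with the multivariate $U_\eta$ applied to $\sign(\langle\cdot,u_k\rangle)$. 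We also have $A=\langle U_\eta\psi_i,\psi_i\rangle=\E[f_i\psi_i]=a_i$.

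Lemma~\ref{lem:one-strip} then applies as long as $t=C_{\rho_0}\lambda\le c/\sqrt K$, which holds once $c_0$ in $\lambda=c_0/\sqrt K$ is small enough relative to $\rho_0$. It yields $|R_{i\to j}|\le O(w_{ij}\lambda^2\sqrt K\,a_i)$. Summing over ordered pairs with $\{i,j\}\in E_T$ and using $\sum_j w_{ij}=W_i$ gives $|\cR|\le O(\lambda^2\sqrt K\sum_i W_ia_i)=O(\lambda^2\sqrt K S)$.

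The main obstacle is the identification step. We must check carefully that $f_i$ equals the multivariate OU smoothing of $\psi_i$ in coordinates where $h_{ij}$ is a single coordinate, and that degenerate cases cause no trouble. These are $h_{ik}$ colinear with $h_{ij}$, and the span having fewer dimensions than the number of neighbors. The approach is to work in the orthonormal basis of the span and use rotation invariance of the Gaussian measure and of $U_\eta$. The remaining steps are direct.
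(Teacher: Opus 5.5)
Your proposal is correct and follows the paper's proof essentially step by step. Like the paper, you use $|\rho_{ij}|\le\rho_0$ to reduce to $\E[|f_i|\1\{|h_{ij}|\le C\lambda|f_i|\}]$, apply Lemma~\ref{lem:one-strip} with $X=h_{ij}$, $f=\psi_i$, $F=f_i$, $t=C\lambda$ (valid because $\lambda\ll 1/\sqrt K$), and sum using $\sum_j w_{ij}=W_i$. The identification $f_i=U_\eta\psi_i$, which you flag as the main obstacle, is immediate in any coordinates (the paper takes it from Fact~\ref{fact:smooth-sign}): $U_\eta\sign(\langle\cdot,u\rangle)=(U_\eta\sign)(\langle\cdot,u\rangle)$ for every unit vector $u$, so colinear votes or a low-dimensional span need no special treatment.
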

\begin{proof}
We have
\begin{align*}
R_{i \to j} &= w_{ij} \E \int_{0}^{\lambda f_i} \varphi(0) \Paren{\sign\Paren{\sqrt{1-\rho_{ij}^2} h_{ij} - \rho_{ij}t} - \sign(h_{ij})} dt\,.
\end{align*}
Because $|\rho_{ij}| \leq \rho_0 < 1$, we have that $\sign\Paren{\sqrt{1-\rho_{ij}^2} h_{ij} - \rho_{ij}t}$ and $\sign(h_{ij})$ differ only if $|h_{ij}| \leq C |t|$, for a constant $C$ depending on $\rho_0$.
Then
\[|R_{i \to j}| \leq O\Paren{ w_{ij} \lambda \E\Brac{|f_i| \cdot \1\Brac{|h_{ij}| \leq C \lambda |f_i|}}}\,.\]
We will apply Lemma~\ref{lem:one-strip} with  $h_{ij}$ as $X$, the rest of the Gaussian coordinates as $Y$, $\psi_i$ as $f$, $f_i$ as $F$, and $C \lambda$ as $t$.
Also note that $|\psi_i| \leq 1$.
We note that $\lambda$ can be chosen appropriately small compared to $K$ such that $\lambda \ll 1/\sqrt{K}$ and the assumption of the lemma is satisfied.
Then Lemma~\ref{lem:one-strip} implies that 
\[\E\Brac{|f_i| \cdot \1[|h_{ij}| \leq C \lambda |f_i|]} \leq O(\lambda \sqrt{K} \cdot a_i)\,.\]
Then
\begin{align*}
|\cR|
&\leq \sum_{\substack{i, j \in [n]\\ \Set{i,j} \in E_T}} |R_{i \to j}|
\leq O\Paren{\lambda^2 \sqrt{K} \sum_{i \in [n]} a_i \sum_{\substack{j \in [n]\\ \Set{i,j} \in E_T}} w_{ij}}
\leq O(\lambda^2 \sqrt{K} S)\,.
\end{align*}
\end{proof}

\begin{lemma}
\label{lem:bound-b}
We have
\[|\cB| \leq O(\lambda^2 \sqrt{K} S)\,.\]
\end{lemma}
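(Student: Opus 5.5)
We bound each $B_{ij}$ separately and then sum over edges, in the same way as for $\cD$ and $\cR$. The starting observation is that $\delta_i\in\{-2,0,2\}$, and $\delta_i\neq 0$ forces $|z_i|\le \lambda|f_i|$, where $z_i=\langle g,v_i\rangle$. Hence
\[
|B_{ij}|\le \tfrac12 w_{ij}\,\E|\delta_i\delta_j|\le 2w_{ij}\,\Pr\bigl(|z_i|\le\lambda|f_i|,\ |z_j|\le\lambda|f_j|\bigr),
\]
and the whole task reduces to showing
\[
\Pr\bigl(|z_i|\le\lambda|f_i|,\ |z_j|\le\lambda|f_j|\bigr)\le O\bigl(\lambda^2\sqrt K\,(a_i+a_j)\bigr).
\]
This is exactly the form of Lemma~\ref{lem:transverse-two-strip}, which I would apply with $X=z_i$, $Y=b_{ij}z_j$ (a sign flip does not change $|Y|$), $\rho=\rho_{ij}$ with $|\rho|\le\rho_0$, and $t=\lambda$.

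The step to verify carefully is the structural hypothesis of that lemma. Lemma~\ref{lem:decomp} gives $Y=\rho X+\sigma H$ with $H=h_{ij}$. Symmetrically, $Q=(X-\rho Y)/\sigma$ is the analogous quantity $h_{ji}$ up to the sign $b_{ij}$. The function $f_i=U_\eta\psi_i$ depends on $g$ only through the variables $h_{ik}$, all of which are independent of $z_i$ by Lemma~\ref{lem:indep}. I need $F=f_i$ to be expressed as $U_\eta f(H,Z)$ and $G=f_j$ as $U_\eta g(Q,Z)$, with a common Gaussian vector $Z$ independent of $(X,H)$ and $|f|,|g|\le 1$.

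The natural way to get this is to work in an orthonormal coordinate system for $g$ whose first two coordinates span $\{v_i,v_j\}$, and to let $Z$ be the projection of $g$ onto the orthogonal complement. Then $\psi_i$ is a function of $(z_i,\text{rest})$, but since it depends only on the $h_{ik}$, which live in $v_i^\perp$, it is a function of $(H,Z)$. Likewise $\psi_j$ is a function of $(Q,Z)$, because $Q$ and $Z$ span $v_j^\perp$ (they are rotated coordinates of $v_j^\perp$).

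The subtle point, and the one I expect to be the main obstacle, is that the Ornstein--Uhlenbeck operator $U_\eta$ in the algorithm is applied in the one-dimensional variables $h_{ik}$, not in the ambient coordinates. I would check that for a function of a linear image of $g$, smoothing each coordinate of the multivariate standard Gaussian in $v_i^\perp$ is equivalent to $U_\eta$ on that subspace. This holds because $U_\eta$ on a standard Gaussian space commutes with orthogonal linear maps and with marginalization: $\eta t+\sqrt{1-\eta^2}g'$ applied to $g$ induces the same transformation on every linear functional. Consequently $f_i=U_\eta\psi_i$, where $U_\eta$ acts on the variables $(H,Z)$, and the identity $a_i=\langle U_\eta\psi_i,\psi_i\rangle$ holds in this representation; the same applies to $f_j$ in the variables $(Q,Z)$. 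Note that $z_i$ plays no role in $\psi_i$, so smoothing over the full $v_i^\perp$ is legitimate. (The notation $g$ for the second function in Lemma~\ref{lem:transverse-two-strip} clashes with the Gaussian; I would rename it in the write-up.)

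With this in place, $A_F=a_i$ and $A_G=a_j$, and Lemma~\ref{lem:transverse-two-strip} applies for $\lambda=c_0/\sqrt K\le c_{\rho_0}/\sqrt K$ once $c_0$ is small enough. This gives $|B_{ij}|\le O(w_{ij}\lambda^2\sqrt K(a_i+a_j))$. Summing over $\{i,j\}\in E_T$, each $a_i$ collects total weight $\sum_{j}w_{ij}=W_i$, so
\[
|\cB|\le O\Bigl(\lambda^2\sqrt K\sum_i W_i a_i\Bigr)=O(\lambda^2\sqrt K\,S).
\]
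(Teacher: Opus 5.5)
Your proposal is correct and follows the paper's proof essentially verbatim. Like the paper, you bound $|B_{ij}|\le 2w_{ij}\Pr\bigl(|\langle g,v_i\rangle|\le\lambda|f_i|,\ |\langle g,v_j\rangle|\le\lambda|f_j|\bigr)$, apply Lemma~\ref{lem:transverse-two-strip} with $X=\langle g,v_i\rangle$, $Y=b_{ij}\langle g,v_j\rangle$, $H=h_{ij}$, $Q=b_{ij}h_{ji}$ and $Z$ the projection of $g$ onto $\mathrm{span}\{v_i,v_j\}^{\perp}$, and sum over edges to obtain $S$. Your extra check that $f_i=U_\eta\psi_i$ with $U_\eta$ acting on all of $(H,Z)$, so that $A_F=a_i$, is a detail the paper leaves implicit, and your justification of it is right.
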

\begin{proof}
Note that $|\delta_i| \leq 2 \cdot 1[|\langle g, v_i\rangle| \leq \lambda |f_i|]$.
Then we have
\begin{align*}
|B_{ij}| &= \frac{1}{2} w_{ij}|\E[\delta_i\delta_j]|
\leq 2w_{ij} \P(|\langle g, v_i\rangle| \leq \lambda |f_i|, |\langle g, v_j\rangle| \leq \lambda |f_j|)\,.
\end{align*}
Then we will apply Lemma~\ref{lem:transverse-two-strip}.
Let us argue that we satisfy its assumptions.
We use $\langle g, v_i\rangle$ as $X$, $b_{ij}\langle g, v_j\rangle$ as $Y$, $\psi_i$ as $f$, $\psi_j$ as $g$, $f_i$ as $F$, and $f_j$ as $G$.
Note that $\E[b_{ij}\langle g, v_i\rangle\langle g, v_j\rangle] = \rho_{ij}$, which we use as $\rho$.
Then by Lemma~\ref{lem:decomp} we can decompose
\[b_{ij}\langle g, v_j\rangle = \rho_{ij} \langle g, v_i\rangle + \sqrt{1-\rho_{ij}^2} h_{ij}\,,\]
so we use $h_{ij}$ as $H$.
Also 
\[\frac{\langle g, v_i\rangle - \rho_{ij} b_{ij}\langle g, v_j\rangle}{\sqrt{1-\rho_{ij}^2}} = b_{ij} h_{ji}\,,\]
so we use $b_{ij} h_{ji}$ as $Q$.
We use as $Z$ the coordinates, in an orthonormal basis, of the projection of $g$ onto $\mathrm{span}\{v_i,v_j\}^{\perp}$.
We argue now that $\psi_i = \psi_i(H, Z)$ and $\psi_j = \psi_j(Q, Z)$.
We have that $\psi_i$ only depends on $g$ through terms $\{h_{ik}\}_{\{i,k\} \in E_T}$, all of which are projections of $g$ onto directions orthogonal to $v_i$; it follows that $\psi_i$ is only a function of $H$ and $Z$.
The argument is analogous for $\psi_j$.
Also note that $|\psi_i|, |\psi_j| \leq 1$.

We also use $\lambda$ as $t$, and we note that $\lambda$ can be chosen appropriately small compared to $K$ such that $\lambda \ll 1/\sqrt{K}$.
Then Lemma~\ref{lem:transverse-two-strip} implies that 
\[\P(|\langle g, v_i\rangle| \leq \lambda |f_i|, |\langle g, v_j\rangle| \leq \lambda |f_j|) \leq O(\lambda^2 \sqrt{K} (a_i + a_j))\,.\]
Then
\begin{align*}
|\cB|
&\leq \sum_{\Set{i,j}\in E_T} |B_{ij}|
\leq O\Paren{\lambda^2 \sqrt{K} \sum_{\Set{i,j}\in E_T} w_{ij} (a_i+a_j)}
\leq O(\lambda^2 \sqrt{K} S)\,.
\end{align*}
\end{proof}

\begin{proof}[Proof of Theorem~\ref{thm:main-tight}]
We have
\[\E \mathrm{val}_{E_T}(x^\lambda) - \E \mathrm{val}_{E_T}(x^0) = \cM + \cD + \cR + \cB\,.\]
Plugging in the bounds from Lemmas~\ref{lem:bound-m},~\ref{lem:bound-d},~\ref{lem:bound-r}, and~\ref{lem:bound-b}, we get that
\[\E \mathrm{val}_{E_T}(x^\lambda) - \E \mathrm{val}_{E_T}(x^0) \geq \lambda \varphi(0) S - O(\lambda^2 S) - O(\lambda^2 \sqrt{K} S) = \lambda S \Paren{\varphi(0) - O(\lambda \sqrt{K})}\,.\]
We note that $\lambda$ can be chosen appropriately small compared to $K$ such that the term $O(\lambda \sqrt{K})$ is a small constant.
Then we get
\[\E \mathrm{val}_{E_T}(x^\lambda) - \E \mathrm{val}_{E_T}(x^0) \geq \Omega(\lambda S)\,.\]
Finally, we lower bound $\lambda S$ using that by Lemma~\ref{lem:energy-lb} $a_i \geq \Omega(\frac{1}{d\sqrt{K}})$:
\begin{align*}
\lambda S 
&= \lambda \sum_{i \in [n]} W_i a_i 
\geq \Omega\Paren{\frac{\lambda}{d \sqrt{K}} \sum_{i \in [n]} W_i}
= \Omega\Paren{\frac{2\lambda}{d\sqrt{K}} W_T}
= \Omega\Paren{\frac{W_T}{d \log d}}\,.
\end{align*}
This proves the theorem.
\end{proof}

\part{Hardness}
\section{Preliminaries for Unique Games Hardness}

Throughout, we will use $\cL_R$ to denote the set of LTFs over $\{0,1\}^R$. For any function $f:\{0,1\}^R \to \mathbb{R}$ and $p \geq 1$, we use $\|f\|_p = \E_{x \sim\{0,1\}^R}[|f(x)|^p]^{1/p}$ to denote the $L_p$-expectation norm, and we use ${\sf dist}(f,\cL_R)$ to denote the $L_1$-distance of $f$ to the set $\cL_R$, i.e.,
\[
	{\sf dist}(f,\cL_R) := \min_{L \in \cL_R} \|f - L\|_1.
\]
For any string $x \in \{0,1\}^R$ and permutation $\pi:[R] \to [R]$, we use $\pi(x)$ to denote the permuted string $(x(\pi(1)),x(\pi(2)),\ldots,x(\pi(R)))$. Finally, for any function $f:\{0,1\}^R \to \mathbb{R}$ and permutation $\pi:[R] \to [R]$, we use $\pi \circ f$ to denote the function $f$ after permuting its input according to $\pi$, i.e., $\pi \circ f(x) = f(\pi(x))$. 

\subsection{Fourier Analysis}

We recall some basic concepts and tools from Fourier analysis on the Boolean cube from \cite{ODonnell14}. For any integer $R \in \mathbb{N}$, let $L_2(\{0,1\}^R)$ denote the vector space of functions $f:\{0,1\}^R \to \mathbb{R}$ endowed with the expectation inner product
\[
	\langle f,g \rangle = \Ex_{x \sim \{0,1\}^R}\big[f(x)g(x)\big].
\] 

{\bf Fourier Basis}. Let $\sigma:\{0,1\} \to \{\pm 1\}$ be defined as $\sigma(x) := 1 - 2x$. For any $S \subseteq [R]$, let $\chi_S:\{0,1\}^R \to \{\pm 1\}$ be the function defined as 
\[
	\chi_{S}(x) = \prod_{i \in S} \sigma(x(i)).
\]
It is well known that the set of functions $\{\chi_S\}_{S \subseteq [R]}$ forms the {\em Fourier basis}, which is an orthonormal basis for $L_2(\{0,1\}^R)$. In particular, any function $f \in L_2(\{0,1\}^R)$ can be written in the Fourier basis as
\[
	f = \sum_{S \subseteq [R]} \wh{f}(S)\chi_S,
\]
and $\{\wh{f}(S)\}_{S \subseteq [R]}$ are known as the Fourier coefficients of $f$.

{\bf Influence}. For any $f \in L_2(\{0,1\}^R)$, and any $i \in [R]$, we define the $i^{th}$ influence of $f$ as 
\[
	\Inf{i}{f} = \sum_{S \ni i} \wh{f}(S)^2,
\]
i.e., it is the sum of squares of the Fourier coefficients which contain $i$.

{\bf Correlated Sampling}. For any $\rho \in [-1,1]$, and $x \in \{0,1\}^R$, we can sample a $\rho$-correlated copy $y$ of $x$ using the following process. For every $i \in [R]$, sample $y(i)$ independently using the following distribution:
\[
y(i) = 
\begin{cases}
	x(i) & \text{ with probability } \frac{1 + \rho}{2},		\\
	1 - x(i) & \text{ with probability } \frac{1 - \rho}{2}.
\end{cases}
\]
We use the notation $y \sim_\rho x$ to denote that $y$ is being sampled as a $\rho$-correlated copy of $x$. 

{\bf Noise Operator}. For any $\rho \in [-1,1]$, we use $T_\rho:L_2(\{0,1\}^R) \to L_2(\{0,1\}^R)$ to denote the {\em $\rho$-correlated noise operator}, which is defined in the following way. For any $f \in L_2(\{0,1\}^R)$, we define $T_\rho f$ as
\[
	T_\rho f(x) = \Ex_{y \sim_\rho x}\big[f(y)\big].
\]
In other words, $T_\rho f$ averages $f$ around $\rho$-correlated copies of the input string. The following formula for the Fourier expansion of $T_\rho f$ is well known.

\begin{fact}[Proposition 2.47~\cite{ODonnell14}]
	For any $\rho \in [-1,1]$ and function $f:\{0,1\}^R \to \mathbb{R}$ we have
	\[
		T_\rho f = \sum_{S \subseteq [R]} \rho^{|S|}\wh{f}(S)\chi_S.
	\] 
\end{fact}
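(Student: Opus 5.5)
Since $\{\chi_S\}_{S\subseteq[R]}$ is an orthonormal basis and $T_\rho$ is linear, I will reduce to computing $T_\rho\chi_S$. The goal is to show $T_\rho\chi_S=\rho^{|S|}\chi_S$ for every $S\subseteq[R]$. Once this is established, expanding $f=\sum_S\wh f(S)\chi_S$ and applying linearity of expectation inside $T_\rho f(x)=\Ex_{y\sim_\rho x}[f(y)]$ gives $T_\rho f=\sum_S\wh f(S)T_\rho\chi_S=\sum_S\rho^{|S|}\wh f(S)\chi_S$. This is a finite sum, so no convergence issues arise.

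To compute $T_\rho\chi_S(x)=\Ex_{y\sim_\rho x}\big[\prod_{i\in S}\sigma(y(i))\big]$, I will use that the coordinates $y(i)$ are sampled independently given $x$. The expectation of the product therefore factors as $\prod_{i\in S}\Ex[\sigma(y(i))]$. For a single coordinate, $\sigma(y(i))=\sigma(x(i))$ with probability $\frac{1+\rho}{2}$. With probability $\frac{1-\rho}{2}$, instead $\sigma(y(i))=\sigma(1-x(i))=-\sigma(x(i))$, since $\sigma(1-b)=1-2(1-b)=-(1-2b)$. Hence
\[
\Ex[\sigma(y(i))]=\Big(\tfrac{1+\rho}{2}-\tfrac{1-\rho}{2}\Big)\sigma(x(i))=\rho\,\sigma(x(i)).
\]
Taking the product over $i\in S$ gives $\rho^{|S|}\chi_S(x)$. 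For $S=\emptyset$ the empty product equals $1$, which is consistent with $\rho^0=1$.

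There is no real obstacle here. The only points needing care are the sign identity $\sigma(1-b)=-\sigma(b)$ and the use of conditional independence across coordinates to factor the expectation. Both follow directly from the definitions of $\sigma$ and of $\rho$-correlated sampling given above.
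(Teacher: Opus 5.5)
Your proof is correct: showing $T_\rho\chi_S=\rho^{|S|}\chi_S$ via the identity $\sigma(1-b)=-\sigma(b)$ and coordinatewise independence, then extending by linearity, is the complete argument. The paper gives no proof of its own and simply cites Proposition 2.47 of O'Donnell, whose proof is essentially the same eigenfunction-plus-linearity argument you gave.
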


\subsection{Majority-Is-Stablest}

We will now state the Majority-Is-Stablest (MIS) theorem which is the key technical tool underlying the hardness guarantee of Theorem \ref{thm:ugc-hard}. Before stating it, we will need to introduce some additional notation. For any $\rho \in [-1,1]$, and a function $f \in L_2(\{0,1\}^R)$, let us define the {\bf $\rho$-stability} of a function $f$ as 
\[
		\Stab_{\rho}(f) := \Ex_{x \sim_\rho y}\big[f(x)f(y)\big].
\]
Let $g_1 \sim_\rho g_2$ denote a pair of $\rho$-correlated Gaussians, i.e.,
\[
(g_1,g_2) \sim \cal{N}\left(\begin{bmatrix} 0 \\ 0 \end{bmatrix}, \begin{bmatrix} 1 & \rho \\ \rho & 1 \end{bmatrix}\right).
\] 
The Gaussian quadrant function $\Gamma_\rho:[0,1] \to [0,1]$ is defined as 
\[
		\Gamma_\rho(\delta) := \Pr_{g_1 \sim_\rho g_2}\Big[g_1 \leq \Phi^{-1}(\delta), g_2 \leq \Phi^{-1}(\delta)\Big],
\]
where $\Phi^{-1}$ is the inverse of the standard Gaussian CDF function. 

\begin{theorem}[~\cite{MOO10},~\cite{ODonnell14}]		\label{thm:maj-stablest}
	There exists a constant $\tau_0 \in (0,1)$ such that the following holds. Fix any $\rho  \in (-1,0)$, let $\tau \in [0,\tau_0]$, and let $R \in \mathbb{N}$ be large enough as a function of $\tau$. Let $g:\{0,1\}^R \to [0,1]$ be a function satisfying
	\begin{equation}				\label{eqn:low-inf}
		\max_{i \in [R]} \Inf{i}{g} \leq \tau.
	\end{equation}
	Then,
	\[
	\Stab_{\rho}(g) \geq \Gamma_\rho(\Ex[g]) - O\left(\frac{\log\log(1/\tau)}{\log(1/\tau)}\right) \cdot \frac{1}{1 - \rho}.
	\]
\end{theorem}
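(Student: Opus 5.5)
This is the Majority-Is-Stablest theorem for negative correlation $\rho\in(-1,0)$, in the quantitative form of~\cite{MOO10} and~\cite{ODonnell14}. I would derive it from the positive-correlation version, i.e., Borell's isoperimetry combined with the invariance principle, using the odd/even symmetrization trick of~\cite{KKMO07}.

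\textbf{Step 1: reduce to positive correlation.} Write $\rho=-r$ with $r\in(0,1)$. In the Fourier expansion, $\Stab_\rho(g)=\sum_S \rho^{|S|}\wh g(S)^2$. The even-degree part is nonnegative, and the odd-degree part carries the sign $-1$. Define $g^\dagger(x)=1-g(\bar x)$, where $\bar x$ is the complement of $x$. Then $g^\dagger$ takes values in $[0,1]$, satisfies $\E[g^\dagger]=1-\E[g]$, and has the same influences as $g$. Its Fourier coefficients agree with those of $g$ up to sign: odd-degree coefficients are unchanged, and even-degree coefficients (other than the empty set) flip sign. Hence
\[
\Stab_\rho(g)=\E[g]-\E[g]^2+\E[g]^2+\Stab_r(g)-2\sum_{|S|\text{ odd}} r^{|S|}\wh g(S)^2 ,
\]
and combining this with $\E_{x\sim_\rho y}[g(x)g(y)]=\E[g]-\E_{x\sim_r y}[g(x)g^\dagger(y)]$ converts the statement into an upper bound on the cross-stability $\E_{x\sim_r y}[g(x)\,g^\dagger(y)]$ for two low-influence functions at positive correlation $r$.

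\textbf{Step 2: invariance.} Apply the two-function version of the MOO invariance principle at positive correlation $r$. First smooth by $T_{1-\gamma}$ so that the functions have low degree, at a cost of $O(\gamma/(1-r))$ in the stability. Then compare with the multilinear extensions evaluated on $r$-correlated Gaussians. The Gaussian versions may exit $[0,1]$, so truncate them to $[0,1]$; the $L_2$ distance to $[0,1]$ is controlled by the influence bound $\tau$.

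\textbf{Step 3: Borell.} Borell's isoperimetric inequality bounds Gaussian cross-stability by that of parallel halfspaces: $\E[G_1(z)G_2(z')]\le \Pr[z\le\Phi^{-1}(\mu_1),\,z'\le\Phi^{-1}(\mu_2)]$. Take $\mu_1=\E[g]$ and $\mu_2=1-\E[g]$, and use the symmetry of the Gaussian to rewrite this as $\E[g]-\Gamma_\rho(\E[g])$. Together with Step 1 this gives $\Stab_\rho(g)\ge\Gamma_\rho(\E[g])-\text{error}$.

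\textbf{Main obstacle: error bookkeeping.} The invariance error is $\tau^{\Omega(\gamma)}$-type. The smoothing error is $O\big(\gamma/(1-r)\big)$, which after the sign change corresponds to the factor $1/(1-\rho)$ in the statement. Balancing the two by choosing $\gamma\approx \log\log(1/\tau)/\log(1/\tau)$ yields the stated $O\big(\log\log(1/\tau)/\log(1/\tau)\big)\cdot\frac1{1-\rho}$. This is exactly Theorem~4.4 of~\cite{MOO10}, stated as Theorem 11.72 in~\cite{ODonnell14}, so in the paper I would cite it and note the parameter dependence rather than reprove it.
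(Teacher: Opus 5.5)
Your route is the standard one and matches the paper, which does not prove this statement but cites \cite{MOO10} and \cite{ODonnell14}. Your reduction (pass to $g^\dagger(x)=1-g(\bar x)$, use $\Stab_\rho(g)=\E[g]-\E_{x\sim_r y}[g(x)g^\dagger(y)]$, then smoothing, invariance and truncation, and functional Borell at correlation $r=-\rho>0$) is the argument behind those references. Two points in your bookkeeping need fixing.

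First, the displayed identity in Step 1 is off by an additive $\E[g]$. The correct expansion is $\Stab_\rho(g)=\Stab_r(g)-2\sum_{|S|\text{ odd}}r^{|S|}\wh g(S)^2$. This is harmless, since only the second identity $\Stab_\rho(g)=\E[g]-\E_{x\sim_r y}[g(x)g^\dagger(y)]$ is actually used.

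Second, and more substantively, since $\rho=-r$ you have $1-r=1+\rho=1-|\rho|$. So the smoothing loss $O(\gamma/(1-r))$, i.e.\ $\max_k r^k\bigl(1-(1-\gamma)^{2k}\bigr)$, produces the factor $\frac{1}{1-|\rho|}$, not $\frac{1}{1-\rho}$. For $\rho\in(-1,0)$ the quantity $\frac{1}{1-\rho}$ lies in $(\tfrac12,1)$. So if the $O(\cdot)$ constant is meant to be independent of $\rho$, which is the point of writing an explicit $\rho$-factor, the statement asserts an error bound that stays uniform as $\rho\to-1$. Your argument does not give that: its loss blows up like $\frac{1}{1+\rho}$.

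This is immaterial for the paper. It only invokes the theorem at a fixed $\rho$: $\rho^*$ in the soundness analysis, and a fixed $\rho$ in the proof of Theorem~\ref{thm:robust-MIST}. There, $\rho$-dependent constants are absorbed into the $O(\cdot)$. Still, you should say that your sketch yields the $\frac{1}{1-|\rho|}$ form rather than claim that it corresponds to the displayed factor.
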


\paragraph{Cut Version of MIS.} For the soundness analysis of the Max-Cut reduction, it would be more convenient to work with the ``cut'' version of the above theorem. Let ${\sf Cut}(x,y) := x + y - 2xy$ be the quadratic form of the cut-function, i.e., on Boolean inputs $x,y$, we have ${\sf Cut}(x,y) = \one(x \neq y)$. The following is an immediate corollary of the above theorem.

\begin{corollary}[~\cite{MOO10},~\cite{ODonnell14}]			\label{cor:maj-stablest}
	There exists a constant $\tau_0 \in (0,1)$ such that the following holds for any $\rho  \in (-1,0)$. Let $\tau \in [0,\tau_0]$, and let $R \in \mathbb{N}$ be large enough as a function of $\tau$. Suppose $g:\{0,1\}^R \to [0,1]$ is a function which satisfies
	$\max_i \Inf{i}{g} \leq \tau$.
	Then,
	\[
	\Ex_{x_1 \sim_\rho x_2}\Big[{\sf Cut}(g(x_1),g(x_2))\Big] \leq \frac{\arccos \rho}{\pi} + 		 O\left(\frac{\log\log(1/\tau)}{\log(1/\tau)}\right) \cdot \frac{1}{1 - \rho}.
	\]
    Additionally, the above bound is attained by the majority function {\sf MAJ} up to $o_{R}(1)$-error.
\end{corollary}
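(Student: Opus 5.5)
We derive Corollary~\ref{cor:maj-stablest} from Theorem~\ref{thm:maj-stablest} by expressing the expected cut value through the mean and the $\rho$-stability of $g$. Since $x_1 \sim_\rho x_2$ makes each $x_i$ marginally uniform, linearity of expectation gives
\[
\Ex_{x_1\sim_\rho x_2}\big[{\sf Cut}(g(x_1),g(x_2))\big] = \Ex[g(x_1)] + \Ex[g(x_2)] - 2\,\Ex[g(x_1)g(x_2)] = 2\mu - 2\,\Stab_\rho(g),
\]
where $\mu = \Ex[g]$. Applying Theorem~\ref{thm:maj-stablest}, which is valid under the same low-influence hypothesis, we get $\Stab_\rho(g) \ge \Gamma_\rho(\mu) - \epsilon_\tau/(1-\rho)$ with $\epsilon_\tau = O(\log\log(1/\tau)/\log(1/\tau))$. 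Hence the cut value is at most $2\mu - 2\Gamma_\rho(\mu) + 2\epsilon_\tau/(1-\rho)$, and the factor $2$ is absorbed into the $O(\cdot)$.

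It remains to show $2\mu - 2\Gamma_\rho(\mu) \le \arccos(\rho)/\pi$ for all $\mu\in[0,1]$ when $\rho<0$. Write $a = \Phi^{-1}(\mu)$. Then $\mu - \Gamma_\rho(\mu) = \Pr[g_1\le a, g_2 > a]$, so
\[
2\mu - 2\Gamma_\rho(\mu) = \Pr\big[\one(g_1\le a)\neq \one(g_2\le a)\big],
\]
which is the cut value of a Gaussian halfspace of threshold $a$ under $\rho$-correlated Gaussians. This is the main step. One approach is a direct one-dimensional calculus argument: differentiating in $a$ gives $\frac{d}{da}\Pr[g_1\le a<g_2] \propto \varphi(a)\big(\Pr[g_2>a\mid g_1=a] - \Pr[g_2\le a\mid g_1=a]\big)$. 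For $\rho<0$, conditioned on $g_1=a$ we have $g_2 \sim N(\rho a, 1-\rho^2)$. When $a>0$, $\rho a < a$, so the bracket is negative; when $a<0$ it is positive. Therefore the function is maximized at $a=0$, where Sheppard's formula gives $\Pr[\sign g_1\ne\sign g_2] = \arccos(\rho)/\pi$. Alternatively, we could cite Borell's isoperimetry, but the calculus argument is self-contained.

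For the attainment claim, we take $g={\sf MAJ}$ (with $R$ odd). Its influences are $O(1/\sqrt R)$, so it satisfies the hypothesis for large $R$. By the multidimensional CLT (Sheppard's theorem for majority; see~\cite{ODonnell14}), $\Pr[{\sf MAJ}(x_1)\ne{\sf MAJ}(x_2)] \to \arccos(\rho)/\pi$ as $R\to\infty$. This gives the $o_R(1)$ error.
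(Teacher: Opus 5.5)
Your proposal is correct and is exactly the derivation the paper leaves implicit when it calls this an immediate corollary of Theorem~\ref{thm:maj-stablest}. You write the cut value as $2\E[g]-2\Stab_\rho(g)$, apply the stability lower bound, and use your one-variable derivative argument to show that the Gaussian halfspace cut value $2\mu-2\Gamma_\rho(\mu)$ peaks at $\mu=1/2$, where Sheppard's formula gives $\arccos(\rho)/\pi$; this is all that is needed. The only cosmetic point is that for even $R$ you must fix a tie-breaking rule for ${\sf MAJ}$, but ties occur with probability $O(1/\sqrt{R})$, so the $o_R(1)$ attainment claim holds for all large $R$, not just odd ones.
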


\subsection{A Useful $L_1$-Averaging Lemma}

We state a useful lemma that will be used several times in the proof to bound the average $L_1$-distance from Boolean-valued functions.

\begin{lemma}				\label{lem:avg}		
	Suppose $\cD$ is a discrete distribution over functions $f:\{0,1\}^R \to [0,1]$, and let $g:\{0,1\}^R \to \{0,1\}$ be a Boolean-valued function. Then,
	\[
	\left\|\Ex_{f \sim \cD} f - g \right\|_1 = \Ex_{f \sim \cD}\Big[\|f - g\|_1\Big].
	\] 
\end{lemma}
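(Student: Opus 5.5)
The identity holds pointwise in $x$ because $g$ is Boolean. The plan is to prove $\left|\Ex_{f \sim \cD} f(x) - g(x)\right| = \Ex_{f\sim\cD}\big[|f(x)-g(x)|\big]$ for every fixed $x \in \{0,1\}^R$, and then average over $x$. Since $\cD$ is discrete, exchanging the expectation over $x$ with the expectation over $f$ is just swapping two finite (or absolutely convergent) sums; all quantities lie in $[0,1]$, so this is harmless.

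For the pointwise claim, fix $x$ and split into the two cases $g(x)=0$ and $g(x)=1$.

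If $g(x)=0$, then $f(x)-g(x)=f(x)\ge 0$ for every $f$ in the support of $\cD$, since $f$ takes values in $[0,1]$. Hence
\[
\Big|\Ex_f f(x)\Big| = \Ex_f f(x) = \Ex_f |f(x)|.
\]

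If $g(x)=1$, then $f(x)-g(x)=f(x)-1\le 0$ for every $f$. Hence
\[
\Big|\Ex_f f(x)-1\Big| = 1-\Ex_f f(x) = \Ex_f |f(x)-1|.
\]

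In both cases the quantities $f(x)-g(x)$ all share the same sign as $f$ ranges over the support of $\cD$. Therefore the absolute value commutes with the expectation over $f$, and the triangle inequality holds with equality.

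Finally, take $\Ex_{x\sim\{0,1\}^R}$ of the pointwise identity and interchange the order of expectation on the right-hand side. This gives exactly $\|\Ex_{f\sim\cD} f - g\|_1 = \Ex_{f\sim\cD}\|f-g\|_1$. There is no real obstacle; the only point to note is that the Boolean-valuedness of $g$ is what gives the sign consistency, which fails for general $g$ with values in $[0,1]$.
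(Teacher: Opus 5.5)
Your proof is correct and is essentially the paper's argument. The paper also splits $\{0,1\}^R$ into the sets where $g=0$ and $g=1$, uses that $f-g$ has a constant sign on each set, and then exchanges the expectation over $x$ with the expectation over $f\sim\cD$.
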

\begin{proof}
	Let us define $A_s \subseteq \{0,1\}^R$ as
	\[
	A_s := \left\{x \in \{0,1\}^R~\Big|~ g(x) = s\right\}.
	\]
	Since every function in the support of $\cD$ is $[0,1]$-valued, we can write:
	\begin{align*}
		\left\|\Ex_{f \sim \cD} f - g \right\|_1
		&= \frac{1}{2^R}\sum_{x \in A_0}\left(\Ex_{f \sim \cD}f(x) - g(x)\right) + \frac{1}{2^R}\sum_{x \in A_1}\left(g(x) - \Ex_{f \sim \cD}f(x)\right) \\
		&= \frac{1}{2^R}\sum_{x \in A_0}\Ex_{f \sim \cD} \Big[f(x) - g(x)\Big] + \frac{1}{2^R}\sum_{x \in A_1}\Ex_{f \sim \cD}\Big[g(x) - f(x)\Big] \\
		&= \frac{1}{2^R}\sum_{x \in A_0}\Ex_{f \sim \cD} \Big[|f(x) - g(x)|\Big] + \frac{1}{2^R}\sum_{x \in A_1}\Ex_{f \sim \cD} \Big[|g(x) - f(x)|\Big] \\
		&= \Ex_{f \sim \cD}\left[\frac{1}{2^R}\sum_{x \in \{0,1\}^R} \left|f(x) - g(x)\right|\right] \\
		&= \Ex_{f \sim \cD}\Big[\|f - g\|_1\Big].
	\end{align*}		
\end{proof}

We now record a useful consequence of the above lemma.

\begin{lemma}			\label{lem:ltf}
	Let $\mu \in [0,1]$. Suppose $f:\{0,1\}^R \to \{0,1\}$ is a function such that there exists an LTF $L(x) = \one\left\{\sum^R_{i = 1}w_ix_i + \theta \geq 0\right\}$ satisfying $\|T_{1 - \mu}f - L\|_1 = \epsilon$. Then there exists an LTF $L':\{0,1\}^R \to \{0,1\}$ such that $\|f - L'\|_1 \leq \epsilon$.
\end{lemma}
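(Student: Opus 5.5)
The plan is to write $T_{1-\mu}f$ as an average of shifted copies of $f$ and then invoke Lemma~\ref{lem:avg} with $g = L$. By the definition of the noise operator, $T_{1-\mu}f(x) = \E_{y \sim_{1-\mu} x}[f(y)]$. Sampling a $(1-\mu)$-correlated copy of $x$ amounts to flipping each coordinate independently with probability $\mu/2$. So we can write $y = x \oplus z$, where $z \in \{0,1\}^R$ has i.i.d.\ Bernoulli$(\mu/2)$ coordinates, independent of $x$. Hence
\[
T_{1-\mu}f = \E_{z}\big[f_z\big], \qquad f_z(x) := f(x \oplus z).
\]
This is a discrete distribution $\cD$ over Boolean, hence $[0,1]$-valued, functions $f_z$.

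Since $L$ is Boolean-valued, Lemma~\ref{lem:avg} gives
\[
\epsilon = \|T_{1-\mu}f - L\|_1 = \E_z\big[\|f_z - L\|_1\big].
\]
By averaging, some fixed $z^\ast$ satisfies $\|f_{z^\ast} - L\|_1 \le \epsilon$.

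Next we undo the shift. Substituting $x \mapsto x \oplus z^\ast$, which is a bijection of $\{0,1\}^R$ preserving the uniform measure,
\[
\|f_{z^\ast} - L\|_1 = \E_x\big|f(x) - L(x \oplus z^\ast)\big|.
\]
So it suffices to set $L'(x) := L(x \oplus z^\ast)$ and check that $L'$ is again an LTF. Coordinatewise, $x_i \oplus z^\ast_i$ equals $x_i$ if $z^\ast_i = 0$ and $1 - x_i$ if $z^\ast_i = 1$. Therefore
\[
\sum_i w_i (x_i \oplus z^\ast_i) + \theta = \sum_i w_i' x_i + \theta',
\]
where $w_i' = w_i$ if $z^\ast_i = 0$ and $w_i' = -w_i$ otherwise, and $\theta' = \theta + \sum_{i: z^\ast_i = 1} w_i$. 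Thus $L'(x) = \one\{\sum_i w_i' x_i + \theta' \ge 0\}$ is an LTF, and $\|f - L'\|_1 \le \epsilon$.

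No step here is a real obstacle; the argument is short. The only points needing care are representing the noise operator as a convex combination of translates $f_z$, so that Lemma~\ref{lem:avg} applies, and verifying that translating by $z^\ast$ preserves the class $\cL_R$ and the uniform $L_1$ norm.
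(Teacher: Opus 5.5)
Your proposal is correct and follows the paper's proof essentially step for step. Like the paper, you write $T_{1-\mu}f$ as an average of the shifts $f_z$ over a $(\mu/2)$-biased $z$, apply Lemma~\ref{lem:avg} to pick a good $z^\ast$, and observe that $L(x\oplus z^\ast)$ is again an LTF; your explicit $w'$, $\theta'$ match the paper's $L_{z^\ast}$.
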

\begin{proof}
	For any $z \in \{0,1\}^R$, let $f_z:\{0,1\}^R \to \{0,1\}$ be the function $f$ whose inputs are shifted by $z$, i.e., $f_z(x) = f(x \oplus z)$ where $\oplus$ is the bit-wise XOR. Then, 
	\[
	T_{1 - \mu}f = \Ex_{z \sim \{0,1\}^R_{\mu/2}}~f_z,
	\]
	where $\{0,1\}^R_{\mu/2}$ is the $(\mu/2)$-biased measure on $\{0,1\}^R$. Hence, using Lemma \ref{lem:avg} we get that
	\[
	\min_{z \in \{0,1\}^R} \|f_z - L\|_1 
	\leq \Ex_{z \sim \{0,1\}^R_{\mu/2}} \Big[\|f_z - L\|_1\Big] 
	= \left\|\Ex_{z \sim \{0,1\}^R_{\mu/2}}~f_z - L \right\|_1 
	= \|T_{1 - \mu}f - L\|_1
	= \epsilon,
	\]	
	which in turn implies that there exists a $z^* \in \{0,1\}^R$ such that $\|f_{z^*} - L\|_1 \leq \epsilon$. Furthermore, note that 
	\[
	\|f_{z^*} - L\|_1 = \|f - L_{z^*}\|_1,
	\]
	and by definition,
	\[
	L_{z^*}(x) = \one\left\{\sum_{i:z_i = 0} w_ix_i + \sum_{i:z_i = 1} w_i(1 - x_i) + \theta \geq 0\right\}
	\]
	is also an LTF. The above observations together establish the lemma.
\end{proof}

The above lemma immediately yields the following corollary.

\begin{corollary}					\label{cor:noise}
	Let $\cL_R$ denote the set of LTFs over $\{0,1\}^R$. Then for any Boolean function $f:\{0,1\}^R \to \{0,1\}$ and $\mu \in [0, 1]$ we have 
	\[
	{\sf dist}(f,\cL_R) \leq {\sf dist}(T_{1 - \mu}f,\cL_R).
	\]	
\end{corollary}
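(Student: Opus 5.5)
The plan is to reduce Corollary~\ref{cor:noise} directly to Lemma~\ref{lem:ltf} by instantiating it with an LTF that realizes the distance on the right-hand side. Fix a Boolean $f:\{0,1\}^R \to \{0,1\}$ and $\mu \in [0,1]$. The set $\cL_R$ of LTFs over $\{0,1\}^R$ is finite, since there are at most $2^{2^R}$ Boolean functions on the cube. Hence the minimum in ${\sf dist}(T_{1-\mu}f,\cL_R)$ is attained by some $L \in \cL_R$, which we may write as $L(x) = \one\{\sum_i w_i x_i + \theta \geq 0\}$. Set $\epsilon := \|T_{1-\mu}f - L\|_1 = {\sf dist}(T_{1-\mu}f,\cL_R)$.

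Applying Lemma~\ref{lem:ltf} to this $f$ and $L$ gives an LTF $L' \in \cL_R$ with $\|f - L'\|_1 \leq \epsilon$. Therefore
\[
{\sf dist}(f,\cL_R) \leq \|f - L'\|_1 \leq \epsilon = {\sf dist}(T_{1-\mu}f,\cL_R),
\]
which is the claim.

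The only point that needs checking is that the LTF form assumed in Lemma~\ref{lem:ltf} (threshold $\sum_i w_i x_i + \theta \geq 0$ with $x_i\in\{0,1\}$) covers every element of $\cL_R$. This is exactly how LTFs over $\{0,1\}^R$ are parametrized, so it holds. It is also worth confirming that Lemma~\ref{lem:ltf} holds at the endpoints of $\mu$. At $\mu = 0$ the statement is trivial, since $T_1 f = f$. At $\mu = 1$ the lemma's proof uses the $(1/2)$-biased, i.e.\ uniform, measure, so it still goes through. No real obstacle is expected; this is a one-line consequence of the lemma.
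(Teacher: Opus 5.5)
Your proposal is correct and is exactly the deduction the paper intends when it says Lemma~\ref{lem:ltf} ``immediately yields'' the corollary. You take an LTF attaining ${\sf dist}(T_{1-\mu}f,\cL_R)$, which exists because $\cL_R$ is finite, and apply the lemma to obtain $L'$ with $\|f-L'\|_1\le{\sf dist}(T_{1-\mu}f,\cL_R)$.
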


\paragraph{Related Claims.} We now state a few more $L_1$-averaging-based observations that will be used in the proof.

\begin{claim}				\label{cl:invar}
	For any function $f:\{0,1\}^R \to \{0,1\}$ and permutation $\pi:[R] \to [R]$, 
	\[
	{\sf dist}(\pi \circ f,\cL_R) = {\sf dist}(f,\cL_R).
	\]
\end{claim}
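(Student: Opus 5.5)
The claim reduces to showing that permuting coordinates maps LTFs bijectively onto LTFs and preserves $L_1$ distances. The plan is to prove the inequality ${\sf dist}(\pi\circ f,\cL_R) \leq {\sf dist}(f,\cL_R)$ for every permutation $\pi$, and then get the reverse inequality by applying the same bound to $\pi\circ f$ with $\pi^{-1}$.

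\textbf{Step 1: permuting preserves the $L_1$ norm.} For any $g:\{0,1\}^R\to\mathbb{R}$ we have $\|\pi\circ g\|_1 = \|g\|_1$. The reason is that $x \mapsto \pi(x)$ is a bijection of $\{0,1\}^R$, so it preserves the uniform measure. Since $(\pi\circ f) - (\pi\circ L) = \pi\circ(f-L)$, it follows that
\[
\|\pi\circ f - \pi\circ L\|_1 = \|f - L\|_1 .
\]

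\textbf{Step 2: permuting an LTF gives an LTF.} Let $L(x) = \one\{\sum_i w_i x_i + \theta \geq 0\}$. Then
\[
\pi\circ L(x) = \one\Big\{\sum_i w_i x(\pi(i)) + \theta \geq 0\Big\} = \one\Big\{\sum_j w_{\pi^{-1}(j)} x_j + \theta \geq 0\Big\},
\]
which is again an LTF.

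\textbf{Step 3: combine.} Let $L^*$ attain ${\sf dist}(f,\cL_R)$; a minimizer exists because $\cL_R$ induces only finitely many Boolean functions. By Steps 1 and 2, $\pi\circ L^* \in \cL_R$ and
\[
{\sf dist}(\pi\circ f,\cL_R) \leq \|\pi\circ f - \pi\circ L^*\|_1 = {\sf dist}(f,\cL_R).
\]
Now apply this inequality to $\pi\circ f$ with the permutation $\pi^{-1}$. Because $\pi^{-1}\circ(\pi\circ f) = f$, we obtain the reverse inequality, and hence equality.

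The only point needing care is the composition convention: we must check that $\pi^{-1}\circ(\pi\circ f) = f$ under the paper's definition $\pi(x) = (x(\pi(1)),\ldots,x(\pi(R)))$. This is a direct index check. Alternatively, we can avoid it entirely by using that $\pi\mapsto \pi\circ(\cdot)$ ranges over a finite group of bijections, so some power of $\pi$ acts as the inverse. There is no real obstacle beyond this.
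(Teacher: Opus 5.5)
Your proposal is correct and follows the paper's proof: take an optimal $L^*$ for $f$, note that $\pi\circ L^*$ is again an LTF and that $\|\pi\circ f-\pi\circ L^*\|_1=\|f-L^*\|_1$, and obtain the reverse inequality by symmetry via $\pi^{-1}$. Your extra remarks are correct and make explicit details the paper leaves implicit: the minimizer exists because there are finitely many LTFs on $\{0,1\}^R$, and $\pi^{-1}\circ(\pi\circ f)=f$ holds under the paper's indexing convention.
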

\begin{proof}
	We will show ${\sf dist}(\pi \circ f, \cL_R) \leq {\sf dist}(f,\cL_R)$ and the other direction follows similarly. Let $L^* \in \cL_R$ be such that $\|f - L^*\|_1 = {\sf dist}(f,\cL_R)$. Now, note that the function $\pi\circ L^*$ is also an LTF, and hence,
	\[
	{\sf dist}(\pi \circ f, \cL_R) \leq \|\pi \circ  f - \pi \circ L^*\|_1 = \|f - L^*\|_1 = {\sf dist}(f,\cL_R).
	\]
\end{proof}

\begin{lemma}				\label{lem:avg-1}
	Let $\cD$ be a discrete distribution over functions $f:\{0,1\}^R \to [0,1]$. Then,
	\[
		\Ex_{f \sim \cD}\Big[{\sf dist}(f,\cL_R)\Big] \leq {\sf dist}\left(\Ex_{f \sim \cD}f, \cL_R\right).
	\]
\end{lemma}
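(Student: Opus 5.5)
The plan is to reduce the statement to Lemma~\ref{lem:avg} by picking a single LTF that is optimal for the averaged function. Write $\bar f := \Ex_{f \sim \cD} f$. This is again a function from $\{0,1\}^R$ to $[0,1]$, since a convex combination of $[0,1]$-valued functions stays in $[0,1]$.

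Let $L^* \in \cL_R$ attain ${\sf dist}(\bar f,\cL_R)$. The minimum exists because $\cL_R$ is a finite set of Boolean functions on $\{0,1\}^R$, so ${\sf dist}(\bar f,\cL_R) = \|\bar f - L^*\|_1$.

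The key step is that $L^*$ is Boolean-valued. Lemma~\ref{lem:avg} therefore applies with $g = L^*$ and gives
\[
\|\bar f - L^*\|_1 = \Ex_{f \sim \cD}\Big[\|f - L^*\|_1\Big].
\]
For each $f$ in the support of $\cD$ we have $\|f - L^*\|_1 \geq {\sf dist}(f,\cL_R)$, by the definition of ${\sf dist}$ as a minimum over $\cL_R$. Taking expectations over $f \sim \cD$ and chaining with the display gives
\[
\Ex_{f \sim \cD}\big[{\sf dist}(f,\cL_R)\big] \leq \Ex_{f \sim \cD}\big[\|f - L^*\|_1\big] = \|\bar f - L^*\|_1 = {\sf dist}(\bar f,\cL_R),
\]
which is the claim.

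There is essentially no obstacle here. The only points to check are the hypotheses of Lemma~\ref{lem:avg}: the comparison function must be Boolean, which holds because it is an LTF, and each $f$ must be $[0,1]$-valued, which is assumed. Note that the argument does not need $\bar f$ to be Boolean. Only the fixed LTF $L^*$ has to be Boolean, so that $|f(x) - L^*(x)|$ is linear in $f(x)$ pointwise and can be exchanged with the expectation.
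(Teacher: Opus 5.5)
Your proof is correct and matches the paper's argument: fix an LTF $L^*$ attaining ${\sf dist}(\Ex_{f\sim\cD} f,\cL_R)$, apply Lemma~\ref{lem:avg} with $g = L^*$, and bound each $\|f - L^*\|_1$ below by ${\sf dist}(f,\cL_R)$. You also justify that the minimum is attained because $\cL_R$ is finite, which the paper leaves implicit.
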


\begin{proof}
	Let $L^* \in \cL_R$ be such that $\| \Ex_{f \sim \cD} f - L^*\|_1 = {\sf dist}(\Ex_{f \sim \cD}f, \cL_R)$. Then using Lemma \ref{lem:avg} we have 
	\[
	\Ex_{f \sim \cD}\left[{\sf dist}(f,\cL_R)\right]
	\leq \Ex_{f \sim \cD}\left[\|f - L^*\|_1\right]
	= \left\|\Ex_{f \sim \cD} f - L^* \right\|_1
	= {\sf dist}\left(\Ex_{f \sim \cD}f, \cL_R\right),
	\]
	where the last step follows using the definition of $L^*$.
\end{proof}

\subsection{The Unique Games Conjecture}

Finally, our proof of Theorem \ref{thm:ugc-hard} is a reduction from the Unique Games problem, which we define below.

\begin{definition}[Unique Games]
    An instance $\cG = (U,V,E,[R],\{\pi_{u,v}\}_{(u,v) \in E})$ of the Unique Games problem is characterized by a bipartite graph with left and right vertex sets $U$ and $V$, and edge set $E \subseteq U \times V$. Each edge $(u,v) \in E$ is identified with a bijection constraint $\pi_{u,v}:[R] \to [R]$. A labeling of $\cG$ is a mapping $\sigma: U \cup V \to [R]$ that assigns a value to each vertex in $U \cup V$ from the label set $[R]$. A labeling $\sigma$ is said to satisfy an edge $(u,v) \in E$ if $\sigma(u) = \pi_{u,v}(\sigma(v))$. 

    The {\bf optimal value} of the unique game $\cG$ is the maximum fraction of constraints that satisfied by any labeling of $\cG$, and we denote it by $\opt(\cG)$.
\end{definition}

We will use the following hardness of the Unique Games problem as the start point for the proof of Theorem \ref{thm:ugc-hard}.

\begin{conjecture}[\cite{Khot02},\cite{KR08}]           \label{conj:ugc}
    There exists a constant $\eps_0 \in (0,1/2]$ such that for every $\eps \in (0,\eps_0]$, there exists some integer $R(\eps)$ such that the following holds for all $R \geq R(\eps)$. Given a {\bf bi-regular} Unique Games instance $\cG = (U,V,E,[R],\{\pi_{u,v}\}_{(u,v) \in E})$, it is $\NP$-hard to distinguish between the following cases:
    \[
    {\bf YES\,\,Case}: \opt(\cG) \geq 1 - \eps
    \qquad\qquad\text{and}\qquad\qquad
    {\bf NO\,\,Case}: \opt(\cG) \leq \eps.
    \]
\end{conjecture}


\section{Max-Cut Reduction}			\label{sec:analysis}

We begin by recalling the standard reduction from Unique Games to Max-Cut in Figure \ref{fig:reduction}. 

\begin{figure}[ht!]
	\begin{mdframed}
	Let $\cG = (U,V,E,[R],\{\pi_e\}_{e \in E})$ be a bi-regular Unique Games instance from Conjecture \ref{conj:ugc} with $\eps$ chosen as below, and $R = R(\eps)$ defined as in Conjecture \ref{conj:ugc}. Then we construct a Max-Cut instance $H = (V_H,E_H,w_H)$ as follows: \\
	
	{\bf Vertex Set}. The vertex set of $H$ is $V_H = V \times \{0,1\}^R$ (where $R$ is chosen large enough). \\
	
	{\bf Edges}. The distribution over the edges is given by the following process: \\
	
	\begin{itemize}
		\item Sample $u \sim U$, and $v_1,v_2 \sim N_\cG(u)$. \\[-6pt]
		\item Sample $x_1 \sim \{0,1\}^R$ and a $\rho^*$-correlated copy $x_2 \underset{\rho^*}{\sim} x_1$ from $\{0,1\}^R$.
		\item Sample $x'_1 \underset{1 - \mu}{\sim} x_1$ and $x'_2 \underset{1 - \mu}{\sim} x_2$ independently. 
		\item Output the edge $\big\{(v_1,\pi_{u,v_1}(x'_1)), (v_2,\pi_{u,v_2}(x'_2))\big\}$.
	\end{itemize}	
	\end{mdframed}
	\caption{Unique Games to Max-Cut Reduction}
	\label{fig:reduction}
\end{figure}

{\bf Discussion of parameters}. For a fixed choice of $\eps$, we will define the following parameters as a function of $\eps$. 
\begin{itemize}
	\item $\tau(\eps) := 2\eps^{1/8}$.
	\item $\mu(\eps) := 2\eps^{1/16}$.
	\item $\nu(\eps) := \eps^{1/2}$.
	\item $c(\eps) := O\left(\frac{\log\log(1/\tau(\eps))}{\log(1/\tau(\eps))}\right)$, where the constant in $O(\cdot)$ is from Theorem \ref{thm:maj-stablest}.
	\item $c_0(\eps) := O(c(\eps) + \nu(\eps))$. 
\end{itemize}

Note that all the functions defined above are continuous functions that (a) are increasing in $\eps$ and (b) evaluate to $0$ at $\eps = 0$. At various points of the proof, we will introduce additional error terms which will be functions of $\eps$; these error terms will also have the property that they are increasing continuous functions of $\eps$ that decrease to $0$ as $\eps \to 0$.

Finally, we will choose $\eps$ to be small enough as a function of $d$, the explicit dependence on $d$ will be based on the proof of Theorem \ref{thm:ugc-hard} in Section \ref{sec:main-proof}. 

{\bf Analysis of the reduction}. The following lemmas from \cite{KKMO07} give the completeness and soundness guarantees of the above reduction. 

\begin{lemma}[Completeness~\cite{KKMO07}]		\label{lem:comp}
	If $\opt(\cG) \geq 1 - \epsilon$, then $H$ has a cut of size at least $(c^* - 2(\epsilon + \mu))$.
\end{lemma}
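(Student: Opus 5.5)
We will prove the completeness bound by exhibiting the standard dictator cut induced by a good labeling.

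\emph{Step 1 (the labeling and the cut).} Fix a labeling $\sigma$ of $\cG$ satisfying at least a $(1-\epsilon)$-fraction of the edges. Define the cut $A \subseteq V_H$ by placing $(v,y)$ in $A$ exactly when $y(\sigma(v)) = 1$; that is, every block $\{v\}\times\{0,1\}^R$ is cut by the dictator of coordinate $\sigma(v)$. It remains to lower bound the probability that an edge sampled as in Figure~\ref{fig:reduction} crosses $(A,A^c)$.

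\emph{Step 2 (good edges).} Call the sampled pair $u, v_1, v_2$ good if both $(u,v_1)$ and $(u,v_2)$ are satisfied by $\sigma$. Because $\cG$ is bi-regular, choosing $u \sim U$ and then $v \sim N_\cG(u)$ makes $(u,v)$ a uniformly random edge of $\cG$. Each of the two edges is therefore violated with probability at most $\epsilon$, and a union bound shows the pair is good with probability at least $1-2\epsilon$. On a good pair we have $\sigma(u)=\pi_{u,v_1}(\sigma(v_1))=\pi_{u,v_2}(\sigma(v_2))$. Unwinding the convention $\pi(x)(k)=x(\pi(k))$ (taking care over which of $\pi$ or $\pi^{-1}$ appears, which the bijection convention fixes), coordinate $\sigma(v_b)$ of $\pi_{u,v_b}(x'_b)$ equals $x'_b(\ell)$ for one common coordinate $\ell$ attached to $\sigma(u)$, for both $b=1,2$. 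Hence the edge is cut iff $x'_1(\ell)\neq x'_2(\ell)$.

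\emph{Step 3 (noise estimate).} The bits $x_1(\ell), x_2(\ell)$ differ with probability exactly $(1-\rho^*)/2=c^*$. Each application of the $(1-\mu)$-correlation flips a bit with probability $\mu/2$. So, by a union bound, $x'_b(\ell)\neq x_b(\ell)$ for some $b$ with probability at most $\mu$, and otherwise the cut status agrees with that of $(x_1(\ell),x_2(\ell))$. Thus, conditioned on a good pair, the edge is cut with probability at least $c^*-\mu$. Combining with Step 2, the cut value is at least $(1-2\epsilon)(c^*-\mu)\ge c^*-2\epsilon-\mu \ge c^*-2(\epsilon+\mu)$.

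There is no real obstacle here. The only delicate point is the permutation convention in Step 2, namely confirming that both endpoints read the same coordinate. Bi-regularity is what makes the uniform-edge claim valid.
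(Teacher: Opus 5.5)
Your proof is correct and follows essentially the same route as the paper. Both use the dictator cut $f(v,x)=x(\sigma(v))$, a union bound showing that $(u,v_1)$ and $(u,v_2)$ are both satisfied with probability at least $1-2\epsilon$, and the observation that with the convention $\pi(x)(k)=x(\pi(k))$ both endpoints then read the same coordinate $\ell=\sigma(u)$, so no $\pi^{-1}$ ambiguity arises. Your noise accounting (flip probability $\mu/2$ per copy, giving $c^*-\mu$ on good pairs) is slightly sharper than the paper's $c^*-2\mu$, but the difference is cosmetic.
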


\begin{lemma}[Soundness~\cite{KKMO07}]		\label{lem:sound}
	If $\opt(\cG) \leq \epsilon$, then the max-cut of $H$ is at most $(s^* + c_0(\epsilon))$, where $c_0(\epsilon) \downarrow 0$ as $\epsilon \downarrow 0$.
\end{lemma}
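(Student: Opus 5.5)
The statement to prove is Lemma~\ref{lem:sound}, the KKMO soundness: if $\opt(\cG)\le\epsilon$, every cut of $H$ has value at most $s^*+c_0(\epsilon)$. I argue by contrapositive. Fix a cut, i.e.\ functions $f_v:\{0,1\}^R\to\{0,1\}$ for $v\in V$, and suppose its value exceeds $s^*+c_0(\epsilon)$.

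\textbf{Step 1: average over the right vertices.} For $u\in U$ define
\[
g_u(x):=\Ex_{v\sim N_\cG(u)}\Ex_{x'\underset{1-\mu}{\sim}x}\, f_v(\pi_{u,v}(x')),
\qquad g_u:\{0,1\}^R\to[0,1].
\]
Given $u$, the pairs $(v_1,x_1')$ and $(v_2,x_2')$ are conditionally independent given $(x_1,x_2)$. Since ${\sf Cut}(a,b)=a+b-2ab$ is bilinear, the cut value equals
\[
\Ex_u\,\Ex_{x_1\underset{\rho^*}{\sim}x_2}\big[{\sf Cut}(g_u(x_1),g_u(x_2))\big].
\]
Hence for at least a $\nu$-fraction of $u\in U$ the inner expectation exceeds $s^*+c_0-\nu$. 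Call these $u$ \emph{good}.

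\textbf{Step 2: good vertices have an influential coordinate.} For a good $u$, Corollary~\ref{cor:maj-stablest} with $\rho=\rho^*$ and $c_0=O(c(\epsilon)+\nu)$ chosen large enough forces $\max_i\Inf{i}{g_u}>\tau$.

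\textbf{Step 3: decode a labeling.} Write $h_v:=T_{1-\mu}f_v$. Then $g_u=\Ex_v\, h_v\circ\pi_{u,v}$, and the permutation only relabels coordinates. By convexity of influences (each is a sum of squared Fourier coefficients, and $\hat g_u(S)$ is an average),
\[
\Inf{i}{g_u}\le \Ex_v \Inf{\pi_{u,v}(i)}{h_v}.
\]
So when $\Inf{i}{g_u}>\tau$, at least a $\tau/2$-fraction of neighbors $v$ satisfy $\Inf{\pi_{u,v}(i)}{h_v}\ge\tau/2$.

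Because of the noise, each $h_v$ has few influential coordinates. The standard bound gives $\sum_j \Inf{j}{T_{1-\mu}f}\le \max_k k(1-\mu)^{2k}=O(1/\mu)$, so the candidate set
\[
C_v:=\{j:\Inf{j}{h_v}\ge\tau/2\}
\]
has size $O(1/(\mu\tau))$. Similarly define $C_u$ as the set of $i$ with $\Inf{i}{g_u}\ge\tau$ for good $u$; its size is $O(1/(\mu\tau))$ by the same noise bound, since $g_u$ inherits the noise smoothing.

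Now assign each vertex a uniformly random label from its candidate set. For a good $u$, a uniformly random edge $(u,v)$ is satisfied with probability at least
\[
\frac{\tau}{2}\cdot\Omega\!\left((\mu\tau)^2\right).
\]
Averaging over the $\nu$-fraction of good $u$, and using bi-regularity, the unique game has value at least $\nu\tau^3\mu^2\cdot\Omega(1)$.

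\textbf{Step 4: plug in parameters.} With $\tau=2\epsilon^{1/8}$, $\mu=2\epsilon^{1/16}$ and $\nu=\epsilon^{1/2}$, this lower bound is
\[
\Omega\!\left(\epsilon^{1/2+3/8+1/8}\right)=\Omega(\epsilon)
\]
with an explicit constant strictly greater than $1$. This contradicts $\opt(\cG)\le\epsilon$.

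\textbf{Main obstacle.} The delicate step is this final parameter bookkeeping. The constants must be arranged (the factors of $2$ in $\tau$ and $\mu$ are there for this) so that the decoded value strictly exceeds $\epsilon$. If the exponents turn out to be tight, I would instead shrink $\nu$ slightly, at the cost of a larger $c_0$, which is permitted since $c_0(\epsilon)$ is only required to tend to $0$ as $\epsilon\to0$.
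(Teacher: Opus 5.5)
Your proof is correct and is essentially the paper's argument. It uses the same averaged tables $g_u$ obtained via bilinearity of ${\sf Cut}$, Majority-Is-Stablest for the low-influence $u$, and list decoding from noisy influences with the same parameters. The only difference is packaging: you run it as one contrapositive, whereas the paper first proves $|U_{\rm good}|\ge(1-\nu)|U|$ by contradiction and then applies Majority-Is-Stablest to $U_{\rm good}$.

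The bookkeeping you worry about is fine: already $\nu\tau^3\mu^2=32\epsilon$, and your bound $\sum_j\Inf{j}{T_{1-\mu}f}\le 1/(2e\mu)$ makes the constant far exceed $1$. So the fallback is unnecessary, and as stated it points the wrong way, since enlarging $\nu$, not shrinking it, raises the decoded value. Also, with the paper's convention $\pi(x)=(x(\pi(1)),\ldots,x(\pi(R)))$, the coordinate to decode at $v$ is $\pi_{u,v}^{-1}(i)$ rather than $\pi_{u,v}(i)$; this is a purely notational fix.
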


For the sake of making the presentation self-contained, we will analyze the above reduction and prove the above lemmas in full detail in Appendix~\ref{sec:KKMO}. In addition to the above, we will need the following additional structural lemma which characterizes the near-optimal cuts of $H$ in the NO case. In order to state the lemma, we need to introduce an additional notation. For any $f: V_H \to \{0, 1\}$ and vertex $v \in V$, let $f_v:\{0,1\}^R \to \{0,1\}$ be the local table defined as $f_v(x) = f(v,x)$. 

\begin{lemma}[Robust Inverse of Soundness]		\label{lem:near-opt}
	There exist some constant $\eta_0 \in (0, 1)$ and $C > 0$ such that the following holds. Let $\eta \in [0,\eta_0]$ and $\eps \in [0, 1]$. Suppose $\opt(\cG) \leq \epsilon$ and let $H = (V_H,E_H,w_H)$ be the corresponding Max-Cut instance output by the reduction. Let $f:V_H \to \{0,1\}$ be such that $S_f := {\rm supp}(f)$ cuts at least $(s^*-\eta)$-fraction of edges in $H$. Then we have 
	\[
	\Ex_{v \sim V}\Big[{\sf dist}(f_v,\cL_R)\Big] \leq C \cdot \sqrt{\eta} + c(\eps),
	\]
	where $c(\eps)$ is a continuous function satisfying $c(\eps)\downarrow 0$ as $\eps \downarrow 0$.
\end{lemma}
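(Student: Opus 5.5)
We follow the standard KKMO soundness analysis, carried out ``robustly'' so that it yields a per-vertex deficit bound rather than a single inequality, and then invoke the robust inverse Majority-Is-Stablest theorem (Theorem~\ref{thm:robust-MIST}) vertex by vertex.

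\textbf{Step 1: rewrite the cut value.} For each $u \in U$, define the averaged function $g_u := \Ex_{v \sim N_\cG(u)}[\pi_{u,v}\circ T_{1-\mu} f_v]$ (with the appropriate permutation convention). This is a $[0,1]$-valued function on $\{0,1\}^R$. Because $v_1,v_2$ are sampled independently given $u$ and ${\sf Cut}$ is bilinear, the fraction of edges cut by $S_f$ equals
\[
\Ex_{u}\Big[\Ex_{x_1\sim_{\rho^*} x_2}{\sf Cut}(g_u(x_1),g_u(x_2))\Big].
\]

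\textbf{Step 2: split $U$ into influential and non-influential vertices.} Call $u$ \emph{good} if $\max_i \Inf{i}{T_{1-\mu'} g_u} \le \tau$, or whatever low-degree influence notion Theorem~\ref{thm:robust-MIST} requires. The KKMO decoding argument applies: from each bad $u$ we list the influential coordinates of $g_u$, and by convexity of influence, many neighbours $v$ have a correspondingly influential coordinate in $T_{1-\mu} f_v$. The number of such coordinates is bounded in terms of $\mu$ and $\tau$, so randomly decoding labels from these lists satisfies an $\Omega_{\mu,\tau}(\Pr[u \text{ bad}])$ fraction of constraints. Since $\opt(\cG) \le \eps$, this gives $\Pr_u[u\text{ bad}] \le c_1(\eps)$ with $c_1(\eps)\downarrow 0$, given the parameter choices $\tau(\eps)$ and $\mu(\eps)$.

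\textbf{Step 3: apply robust MIS at each good vertex.} For a good $u$, let $\eta_u := s^* - \Ex{\sf Cut}(g_u(x_1),g_u(x_2))$. By Corollary~\ref{cor:maj-stablest}, $\eta_u \ge -c(\eps)$. Averaging over $u$, and using that bad $u$ contribute at most $1$, we get $\Ex_u[\eta_u^+\,\one_{\rm good}] \le \eta + O(c(\eps)+c_1(\eps))$. Theorem~\ref{thm:robust-MIST} applied to $g_u$ then gives ${\sf dist}(g_u,\cL_R) \le C\sqrt{\eta_u^+ + c'(\eps)}$. Jensen's inequality yields
\[
\Ex_u\big[{\sf dist}(g_u,\cL_R)\,\one_{\rm good}\big] \le C\sqrt{\eta} + c''(\eps),
\]
and for bad $u$ we simply use ${\sf dist}\le 1$.

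\textbf{Step 4: transfer from $g_u$ to the $f_v$.} By Lemma~\ref{lem:avg-1}, $\Ex_{v\sim N(u)}{\sf dist}(\pi_{u,v}\circ T_{1-\mu}f_v,\cL_R) \le {\sf dist}(g_u,\cL_R)$. Claim~\ref{cl:invar} removes the permutation, and Corollary~\ref{cor:noise} gives ${\sf dist}(f_v,\cL_R) \le {\sf dist}(T_{1-\mu}f_v,\cL_R)$. Finally, bi-regularity makes ``sample $u$, then $v\sim N(u)$'' induce the uniform distribution on $V$, which yields the claimed bound on $\Ex_{v\sim V}[{\sf dist}(f_v,\cL_R)]$.

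\textbf{Main obstacle.} The delicate point is Step 3: the additive $o(1)$ errors from low influence must enter \emph{inside} the square root only as $c(\eps)$-type terms, and must not multiply $\sqrt\eta$. This depends on the precise hypotheses of Theorem~\ref{thm:robust-MIST}, including whether it needs small low-degree influences of $g_u$ itself or of a noised version. If it requires bounded noise stability, we first pass to $T_{1-\mu'}g_u$, losing $O(\mu')$ in cut value, and rely on Corollary~\ref{cor:noise}-type monotonicity to return to $g_u$.
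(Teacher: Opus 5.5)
Your proposal is correct and follows the paper's proof essentially step for step. You use the same averaged tables $g_u$, the KKMO decoding bound on bad $u$, and robust MIS plus Jensen on good $u$. For the transfer ${\sf dist}(f_v,\cL_R)\le {\sf dist}(g_u,\cL_R)$ you use the same tools: Lemma~\ref{lem:avg-1}, Claim~\ref{cl:invar}, Corollary~\ref{cor:noise} and bi-regularity. Theorem~\ref{thm:robust-MIST} only needs plain influences of $g_u$, so the contingency in your last paragraph is unnecessary.

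The one minor difference is in Step~3. The paper further restricts to $U'_{\rm good}=\{u\in U_{\rm good}:{\sf Cut}_f(u)\ge s^*-\eta_0\}$ because its Theorem~\ref{thm:struct} is stated only for $\eta<\eta_0$. You instead apply Theorem~\ref{thm:robust-MIST}, which is valid for all $\eta\ge 0$, directly with $\eta_u^+$. This legitimately lets you skip Claims~\ref{cl:c-2}--\ref{cl:c-3}.
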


In other words, the above lemma states the following: the soundness analysis (Lemma \ref{lem:sound}) of the reduction shows that in the NO case, the max-cut value of the graph $H$ can be at most $(s^* + o_\eps(1))$, and such a cut is achieved (up to negligible $o_\eps(1)$ error) by (all symmetries of) the assignment $f_v = {\sf MAJ}$ for all $v \in V$. The above lemma gives a robust inverse of this statement: if a cut $f$ cuts at least $(s^* - \eta)$ edges in $H$, then $f_v$ must be close to an LTF for most choices of $v \in V$. The above lemma is one of the key technical tools that we use to improve the union bound analysis in the proof of Theorem \ref{thm:ugc-hard}. The rest of this section is devoted to proving the above lemma. 

\subsection{Useful Lemmas from KKMO Soundness Analysis}

We now record some useful properties that are extracted from the soundness analysis of the above reduction in \cite{KKMO07}. Throughout this section, we will assume that $\cG$ is a Unique Games instance with $\opt(\cG) \leq \epsilon$. Let $H = (V_H,E_H,w_H)$ be the Max-Cut instance constructed by instantiating the above reduction with $\cG$. Let us fix any cut $f:V_H \to \{0,1\}$ in $H$. Recall that
\[
{\sf Cut}(x,y) = x + y - 2xy.
\]
denotes the cut function. Then, using the definition of graph $H$ from Figure \ref{fig:reduction}, we can express the weight of edges cut by $f$ as:
\begin{equation}			\label{eqn:e0}
	\Ex_{u \sim U}\Ex_{v_1,v_2 \sim N_{\cG}(u)}\Ex_{x_1 \underset{\rho^*}{\sim} x_2}\Ex_{x'_i \underset{1 - \mu}{\sim} x_i}\Big[{\sf Cut}\Big(f_{v_1}(\pi_{u,v_1}(x'_1)),f_{v_2}(\pi_{u,v_2}(x'_2))\Big)\Big].
\end{equation}
Using the bi-linearity of the ${\sf Cut}(\cdot,\cdot)$ function, we rewrite the above as 
\begin{equation}			\label{eqn:e1}
	\Ex_{u \sim U}\Ex_{v_1,v_2 \sim N_{\cG}(u)}\Ex_{x_1 \underset{\rho^*}{\sim} x_2}\Big[{\sf Cut}\Big(T_{1 - \mu}f_{v_1}(\pi_{u,v_1}(x_1)),T_{1 - \mu}f_{v_2}(\pi_{u,v_2}(x_2))\Big)\Big].
\end{equation}

{\bf Averaged Long Codes}. As is standard, for every left vertex $u \in U$, we define the averaged long-code table $g_u: \{0,1\}^R \to [0,1]$ as 
\[
g_{u}(x) :=  \Ex_{v \sim N_{\cG}(u)}\left[T_{1 - \mu}f_v(\pi_{u,v}(x))\right].
\]
Using the bi-linearity of ${\sf Cut}(\cdot,\cdot)$ once more, we rewrite \eqref{eqn:e1} as:
\[
\eqref{eqn:e1} = \Ex_{u \sim U}\Ex_{x_1 \underset{\rho^*}{\sim} x_2}\left[{\sf Cut}(g_u(x_1),g_u(x_2))\right].
\]
It will be convenient to define the fraction of edges cut conditioned on the choice of $u \in U$ as
\[
{\sf Cut}_f(u) := \Ex_{x_1 \underset{\rho^*}{\sim} x_2}\left[{\sf Cut}(g_u(x_1),g_u(x_2))\right],
\]
which gives $\eqref{eqn:e1} = \Ex_{u \sim U}\left[{\sf Cut}_f(u)\right]$.

{\bf Soundness Analysis Lemmas}. Let $\tau(\epsilon)$ be as defined below Figure \ref{fig:reduction}. Let us define
\[
U_{{\rm good}} := \left\{u \in U~\Big|~ \max_{i \in [R]}\Inf{i}{g_u} \leq \tau(\epsilon)\right\}. 
\]
We now state two lemmas from the \cite{KKMO07} soundness analysis.

\begin{lemma}			\label{lem:u-good}
	Suppose $\opt(\cG) \leq \epsilon$. Then, $|U_{\rm good}| \geq (1 - \nu(\epsilon))|U|$, where $\nu(\epsilon) = \sqrt{\eps}$.
\end{lemma}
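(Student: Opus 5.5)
The plan is to run the standard KKMO argument by contradiction: if many $u$ have an influential coordinate in $g_u$, we extract a labeling of $\cG$ that satisfies more than $\eps$ of the constraints. Set $B := U \setminus U_{\rm good}$ and suppose $|B| > \nu(\eps)|U| = \sqrt{\eps}\,|U|$.

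\textbf{Step 1 (lifting influence from $g_u$ to its neighbors).} Fix $u \in B$ and a coordinate $i$ with $\Inf{i}{g_u} > \tau$. Write $g_u = \Ex_{v} h_v$, where $h_v := (T_{1-\mu}f_v)\circ \pi_{u,v}$. Influence is a convex function of $f$, being a sum of squares of linear functionals of Fourier coefficients, so by Jensen
\[
\Ex_{v \sim N_\cG(u)} \Inf{i}{h_v} \geq \tau,
\]
and permuting coordinates carries this to $\Inf{\pi_{u,v}(i)}{T_{1-\mu}f_v}$, up to the paper's convention for $\pi(x)$. Since every $\Inf{}{\cdot}$ is at most $1$, a $\tau/2$ fraction of the neighbors $v$ have $\Inf{j}{T_{1-\mu}f_v} \geq \tau/2$ at the corresponding $j$.

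\textbf{Step 2 (bounding candidate lists).} For each $v \in V$, let the candidate set be
\[
\mathrm{Cand}(v) := \{ j : \Inf{j}{T_{1-\mu}f_v} \geq \tau/2 \}.
\]
Because $\sum_j \Inf{j}{T_{1-\mu}f} = \sum_S |S|(1-\mu)^{2|S|}\wh f(S)^2 \leq O(1/\mu)$, each list has size $|\mathrm{Cand}(v)| \leq O(1/(\mu\tau))$. For $u \in B$, define $\mathrm{Cand}(u) := \{i : \Inf{i}{g_u} \geq \tau\}$, which has size at most $O(1/(\mu\tau))$ by the same bound.

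\textbf{Step 3 (random decoding).} Assign each vertex a uniformly random label from its candidate list, or an arbitrary label if the list is empty. For $u \in B$, Step 1 gives that with probability at least $\tau/2$ over $v$, the chosen $i \in \mathrm{Cand}(u)$ has its image $\pi_{u,v}(i)$ in $\mathrm{Cand}(v)$. The pair is then consistent with probability $\Omega((\mu\tau)^2)$. Bi-regularity ensures that picking $u$ uniformly and then $v \in N(u)$ gives the uniform distribution over edges. The expected fraction of satisfied constraints is therefore at least
\[
\sqrt{\eps}\cdot \frac{\tau}{2}\cdot \Omega(\mu^2\tau^2) = \Omega(\eps^{1/2+3/8+1/8}) = \Omega(\eps)
\]
with the given parameters.

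\textbf{Main obstacle.} This bound is only $\Omega(\eps)$, so the contradiction with $\opt(\cG) \leq \eps$ requires tracking constants. Here the factors of $2$ built into $\tau = 2\eps^{1/8}$ and $\mu = 2\eps^{1/16}$ matter, together with an explicit constant in the total-influence bound $\sum_S |S|(1-\mu)^{2|S|} \le 1/(2e\mu)$. The main work is to check that the product of constants exceeds $1$. If it does not, I would threshold the lists at $\tau/4$ and re-balance so the exponents give a strict $\eps^{1-\delta}$ gain. Since $\eps \le \eps_0$ is small, a bound of the form $\Omega(\eps^{1-\delta})$ would beat $\eps$ outright.
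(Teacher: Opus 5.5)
Your proposal is correct and follows essentially the same route as the paper: convexity of influence pushes $\Inf{i}{g_u}>\tau$ down to the neighbors, averaging gives a $\tau/2$ fraction of $v$ with a $\tau/2$-influential coordinate, and you decode randomly from lists of size $O(1/(\mu\tau))$. The constant check you flag as the main obstacle closes immediately: even with the crude bound $\sum_j \Inf{j}{T_{1-\mu}f}\le 1/\mu$, your lists give a satisfied fraction of at least $\sqrt{\epsilon}\cdot\frac{\tau}{2}\cdot\frac{\mu^2\tau^2}{2}=8\epsilon$ (the paper obtains $\ge 2\epsilon$), so the fallback is unnecessary, and as described it would only change constants, not exponents.
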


\begin{lemma}			\label{lem:good-cut}
	Let $R$ be chosen large enough as a function of $\epsilon$. Then for every $u \in U_{\rm good}$ we have 
	\[
	{\sf Cut}_f(u) \leq s^* + c(\epsilon),
	\]
	where $c(\epsilon) = O\left(\frac{\log\log(1/\eps)}{\log(1/\eps)}\right)$.
\end{lemma}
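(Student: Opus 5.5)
The statement to prove is Lemma~\ref{lem:good-cut}: for $u \in U_{\rm good}$, ${\sf Cut}_f(u) \le s^* + c(\eps)$. The plan is to apply the cut version of Majority-Is-Stablest, Corollary~\ref{cor:maj-stablest}, directly to the averaged long-code table $g_u$ with $\rho = \rho^*$.

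\textbf{Checking the hypotheses.} First, $g_u : \{0,1\}^R \to [0,1]$. This holds because each $T_{1-\mu} f_v$ is an average of $\{0,1\}$-valued functions, hence $[0,1]$-valued. Composing with the coordinate permutation $\pi_{u,v}$ and averaging over $v \sim N_\cG(u)$ preserves this range. Second, $\rho^* \approx -0.689$ lies in $(-1,0)$, as the corollary requires. Third, by the definition of $U_{\rm good}$, every $u \in U_{\rm good}$ satisfies $\max_i \Inf{i}{g_u} \le \tau(\eps) = 2\eps^{1/8}$. For $\eps$ small enough, $\tau(\eps) \le \tau_0$. Finally, we take $R$ large enough as a function of $\tau(\eps)$, which the statement permits since $R$ may depend on $\eps$.

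\textbf{Applying the corollary.} The corollary gives
\[
{\sf Cut}_f(u) = \Ex_{x_1 \underset{\rho^*}{\sim} x_2}\big[{\sf Cut}(g_u(x_1),g_u(x_2))\big] \le \frac{\arccos\rho^*}{\pi} + O\!\left(\frac{\log\log(1/\tau)}{\log(1/\tau)}\right)\cdot\frac{1}{1-\rho^*}.
\]
The main term $\arccos(\rho^*)/\pi$ is exactly $s^*$. The factor $1/(1-\rho^*)$ is an absolute constant, so the error is $O\big(\log\log(1/\tau)/\log(1/\tau)\big)$, which matches the definition of $c(\eps)$ below Figure~\ref{fig:reduction}.

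\textbf{Converting to a bound in $\eps$.} The lemma states $c(\eps)$ in the form $O\big(\log\log(1/\eps)/\log(1/\eps)\big)$. Since $\log(1/\tau) = \tfrac18\log(1/\eps) - \log 2 = \Theta(\log(1/\eps))$ for small $\eps$, the two forms agree up to constants. For larger $\eps$ the bound holds trivially after adjusting the constant.

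\textbf{Expected obstacle.} There is no real obstacle. The only subtle point is checking that $g_u$ is a legitimate input to MIS: it must be bounded, and its influences are those defined through the Fourier expansion. Nothing else, such as how $g_u$ is built from the $f_v$, is needed here. Note that the $T_{1-\mu}$ smoothing does not enter this lemma; it matters only for Lemma~\ref{lem:u-good}.
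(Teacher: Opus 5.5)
Your proposal is correct and matches the paper's proof. Both observe that for $u \in U_{\rm good}$ the table $g_u$ is $[0,1]$-valued with all influences at most $\tau(\eps)$, and then apply Corollary~\ref{cor:maj-stablest} with $\rho=\rho^*$. Your extra bookkeeping ($\tau(\eps)\le\tau_0$, the constant factor $1/(1-\rho^*)$, and $\log(1/\tau)=\Theta(\log(1/\eps))$) makes explicit what the paper leaves implicit.
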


We include the proofs of the above lemma as part of the soundness analysis of the reduction in Appendix \ref{sec:KKMO}.

\subsection{Properties of Near-Optimal Cuts}

Now let $f:V_H \to \{0,1\}$ be the indicator of a set that cuts at least $(s^* - \eta)$-weight of edges in $H$, i.e., it satisfies $\Ex_{u \sim U}\big[{\sf Cut}_f(u)\big] \geq s^* - \eta$. Based on this assumption, we will now derive some useful claims that allow us to identify a large subset of ``ideal'' vertices within $U_{\rm good}$ on which we will carry out the rest of the analysis.

\begin{claim}		\label{cl:c-1}
	\[
		\Ex_{u \sim U_{\rm good}}\left[{\sf Cut}_f(u)\right] \geq s^* - \eta - \nu(\epsilon).
	\]
\end{claim}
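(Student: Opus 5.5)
We obtain Claim~\ref{cl:c-1} by averaging. We split the expectation $\Ex_{u \sim U}[{\sf Cut}_f(u)]$ according to whether $u \in U_{\rm good}$, and use the trivial bound ${\sf Cut}_f(u) \in [0,1]$ on the bad part. The only inputs needed are the hypothesis $\Ex_{u\sim U}[{\sf Cut}_f(u)] \geq s^* - \eta$ and Lemma~\ref{lem:u-good}.

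\textbf{Step 1: the per-vertex cut value lies in $[0,1]$.} For every $u$, $g_u$ takes values in $[0,1]$, being an average of $[0,1]$-valued functions. On $[0,1]^2$ we have ${\sf Cut}(a,b) = a(1-b) + b(1-a) \in [0,1]$. Hence $0 \le {\sf Cut}_f(u) \le 1$.

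\textbf{Step 2: decompose the average.} Let $p := |U_{\rm good}|/|U|$. Since $u \sim U$ is uniform,
\[
s^* - \eta \le \Ex_{u\sim U}[{\sf Cut}_f(u)] = p\,\Ex_{u\sim U_{\rm good}}[{\sf Cut}_f(u)] + (1-p)\,\Ex_{u \sim U\setminus U_{\rm good}}[{\sf Cut}_f(u)].
\]
The second term is at most $1-p$, and Lemma~\ref{lem:u-good} gives $1 - p \le \nu(\eps)$. Therefore
\[
p\,\Ex_{u\sim U_{\rm good}}[{\sf Cut}_f(u)] \ge s^* - \eta - \nu(\eps).
\]

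\textbf{Step 3: conclude.} Since $p \le 1$, we get $\Ex_{u\sim U_{\rm good}}[{\sf Cut}_f(u)] \ge p\,\Ex_{u\sim U_{\rm good}}[{\sf Cut}_f(u)]$. The conditional expectation is nonnegative, so dividing the bound in Step~2 by $p \le 1$ cannot make it smaller. This gives exactly the claim. The edge case $p = 0$ cannot occur, because $p \ge 1-\nu(\eps) > 0$ for $\eps < 1$.

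No step is a real obstacle. The only point to check is that ${\sf Cut}_f(u) \le 1$ for fractional $g_u$, which Step~1 verifies.
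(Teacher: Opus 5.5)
Your proof is correct and matches the paper's argument, which simply combines ${\sf Cut}_f(u)\in[0,1]$ with $|U_{\rm good}|\ge(1-\nu(\eps))|U|$ from Lemma~\ref{lem:u-good}. You have written out the averaging step in full, including the harmless division by $p\le 1$.
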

\begin{proof}
	This follows immediately by observing that ${\sf Cut}_f(u)$ is $[0,1]$-valued, and $|U_{\rm good}| \geq (1 - \nu(\epsilon))|U|$ (Lemma \ref{lem:u-good}). 
\end{proof}

Next, let $\eta_0$ be the constant from Lemme~\ref{lem:near-opt}. Let $U'_{\rm good} \subseteq U_{\rm good}$ be defined as 
\[
	U'_{\rm good} := \left\{u \in U_{\rm good}~|~ {\sf Cut}_f(u) \geq s^* - \eta_0\right\}.
\] 

\begin{claim}		\label{cl:c-2}
	\[
		|U'_{\rm good}| \geq  \left(1 - \frac{2(c'(\eps) + \eta)}{\eta_0}\right)\cdot|U_{\rm good}|,
	\]
	where $c'(\eps) = c(\eps) + \nu(\eps)$ satisfies $c'(\eps) \downarrow 0$ as $\eps \downarrow 0$.
\end{claim}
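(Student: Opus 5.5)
This is a reverse Markov argument on the deficit $s^* + c(\eps) - {\sf Cut}_f(u)$, which is non-negative on $U_{\rm good}$. Lemma~\ref{lem:good-cut} gives ${\sf Cut}_f(u) \leq s^* + c(\eps)$ for every $u \in U_{\rm good}$, so define
\[
D(u) := s^* + c(\eps) - {\sf Cut}_f(u) \geq 0 \qquad \text{for } u \in U_{\rm good}.
\]
By Claim~\ref{cl:c-1}, the average of $D$ over $U_{\rm good}$ satisfies
\[
\Ex_{u \sim U_{\rm good}}[D(u)] \leq s^* + c(\eps) - (s^* - \eta - \nu(\eps)) = c'(\eps) + \eta,
\]
where $c'(\eps) = c(\eps) + \nu(\eps)$.

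Next, any $u \in U_{\rm good} \setminus U'_{\rm good}$ has ${\sf Cut}_f(u) < s^* - \eta_0$, so $D(u) > \eta_0 + c(\eps) \geq \eta_0$. Markov's inequality applied to the non-negative variable $D$ on the uniform distribution over $U_{\rm good}$ then gives
\[
\frac{|U_{\rm good}\setminus U'_{\rm good}|}{|U_{\rm good}|} \leq \Pr_{u \sim U_{\rm good}}[D(u) \geq \eta_0] \leq \frac{c'(\eps)+\eta}{\eta_0}.
\]
This bound is already stronger than the claimed one by a factor of $2$, so the claim follows with room to spare. The slack would also absorb a variant in which one only assumes $\eta_0 + c(\eps) \geq \eta_0/2$.

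The only point needing care is that $D$ is genuinely non-negative, i.e.\ that the upper bound $s^* + c(\eps)$ from Lemma~\ref{lem:good-cut} holds pointwise for every $u \in U_{\rm good}$ and not merely on average. Lemma~\ref{lem:good-cut} states exactly this, with $R$ large as a function of $\eps$, so there is no real obstacle. I would also note that $c'(\eps) \downarrow 0$ because both $c(\eps)$ and $\nu(\eps) = \sqrt{\eps}$ do.
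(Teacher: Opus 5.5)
Your proof is correct and is essentially the paper's argument: the paper splits $\Ex_{u\sim U_{\rm good}}[{\sf Cut}_f(u)]$ over $U'_{\rm good}$ and its complement, bounds the two parts by $s^*+c(\eps)$ and $s^*-\eta_0$, and rearranges, which is exactly Markov's inequality applied to your deficit $D(u)$ and gives $1-p \le (c'(\eps)+\eta)/(\eta_0+c(\eps))$. As you observed, the factor $2$ in the statement is just slack.
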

\begin{proof}
	Let $p := \Pr_{u \sim U_{\rm good}}[{\sf Cut}_f(u) \geq s^* - \eta_0]$. Then,
	\begin{align*}
	s^* - \eta - \nu(\eps) \leq \Ex_{u \sim U_{\rm good}}[{\sf Cut}_f(u)] 
	&= (1 - p) \cdot \Ex_{u \notin U'_{\rm good}}[{\sf Cut}_f(u)] + p \cdot \Ex_{u \in U'_{\rm good}}[{\sf Cut}_f(u)] \\
	&\leq (1 - p)(s^* - \eta_0) + p(s^* + c(\epsilon)), 
	\end{align*}
	where in the last step, the bound on the first expectation term is by the definition of $U'_{\rm good}$ and the bound on the second term is due to Lemma \ref{lem:good-cut}. Rearranging the above gives the bound 
	\[
	p \geq \frac{\eta_0 - \eta - \nu(\eps)}{\eta_0 + c(\epsilon)} \geq 1 - \frac{2(\eta + c(\epsilon) +  \nu(\eps))}{\eta_0},
	\]
	where the last inequality holds because
    \[
    \frac{\eta_0 - \eta - \nu(\eps)}{\eta_0 + c(\epsilon)} - \left( 1 - \frac{2(\eta + c(\epsilon) +  \nu(\eps))}{\eta_0}\right) = \frac{(\eta + c(\epsilon) + \nu(\epsilon))(\eta_0 + 2c(\epsilon))}{\eta_0(\eta_0 + c(\epsilon))} > 0.
    \]
    The proof is concluded by setting $c'(\eps) := c(\eps) + \nu(\eps)$. 
\end{proof}

\begin{claim}		\label{cl:c-3}
	\[
		\Ex_{u \sim U'_{\rm good}}[{\sf Cut}_f(u)] \geq s^* - \frac{3}{\eta_0}(\eta + c''(\eps)),
	\]
	where $c''(\eps)$ is a continuous function satisfying $c''(\eps) \downarrow 0$ as $\eps \downarrow 0$. 
\end{claim}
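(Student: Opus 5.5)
The plan is a Markov-type averaging argument that combines Claim~\ref{cl:c-1}, Lemma~\ref{lem:good-cut} and Claim~\ref{cl:c-2}. Write $p := |U'_{\rm good}|/|U_{\rm good}|$ and split the average over $U_{\rm good}$ as
\[
\Ex_{u \sim U_{\rm good}}[{\sf Cut}_f(u)] = p\cdot \Ex_{u \sim U'_{\rm good}}[{\sf Cut}_f(u)] + (1-p)\cdot \Ex_{u \in U_{\rm good}\setminus U'_{\rm good}}[{\sf Cut}_f(u)].
\]
The left side is at least $s^* - \eta - \nu(\eps)$ by Claim~\ref{cl:c-1}. On $U_{\rm good}\setminus U'_{\rm good}$ each term is below $s^* - \eta_0$ by definition, but it is simpler (and enough) to bound it by $s^* + c(\eps)$ using Lemma~\ref{lem:good-cut}, since every such $u$ lies in $U_{\rm good}$.

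Let $A := \Ex_{u\sim U'_{\rm good}}[{\sf Cut}_f(u)]$. Rearranging gives
\[
pA \geq s^* - \eta - \nu(\eps) - (1-p)(s^* + c(\eps)),
\]
that is,
\[
A \geq s^* - \frac{\eta + \nu(\eps) + (1-p)c(\eps)}{p} \geq s^* - \frac{\eta + c'(\eps)}{p}.
\]
Claim~\ref{cl:c-2} supplies $p \geq 1 - 2(\eta + c'(\eps))/\eta_0$, so it remains to bound $1/p$.

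The bound on $1/p$ needs a case split.
\begin{itemize}
\item If $\eta + c'(\eps) \leq \eta_0/6$, then $p \geq 2/3$, so $A \geq s^* - \tfrac{3}{2}(\eta + c'(\eps))$. Since $\eta_0 < 1$, this is at least $s^* - \tfrac{3}{\eta_0}(\eta + c'(\eps))$.
\item Otherwise $\tfrac{3}{\eta_0}(\eta + c'(\eps)) > 1/2$. The claimed bound then says at most $A \geq s^* - 1/2$, with $s^* \approx 0.85$. Here I would rather not rely on a trivial bound and would instead adjust constants. Taking $c''(\eps) := c'(\eps)$ and using the freedom to shrink $\eta_0$, I would ensure the regime $\eta \le \eta_0$ with $\eps$ small always falls in the first case. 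Concretely, restrict to $\eta \le \eta_0/12$ and $c'(\eps) \le \eta_0/12$. For larger $\eta$ the right-hand side of the claim, $s^* - 3\eta/\eta_0$, is at most $s^* - 1/4$. That still needs $A \geq s^* - 1/4$, which holds because every $u \in U'_{\rm good}$ has ${\sf Cut}_f(u) \geq s^* - \eta_0$ and $\eta_0 \le 1/4$ can be assumed.
\end{itemize}

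That last observation in fact handles all regimes at once. By definition $A \geq s^* - \eta_0$, so whenever $\tfrac{3}{\eta_0}(\eta + c'') \geq \eta_0$ the claim is immediate. In the complementary case $\eta + c'' < \eta_0^2/3 \le \eta_0/6$, and the first bullet applies.

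The main (mild) obstacle is exactly this bookkeeping: making the factor $1/p$ absorb into the constant $3/\eta_0$ uniformly. Combining the trivial lower bound $A \geq s^* - \eta_0$ from the definition of $U'_{\rm good}$ with the averaging bound resolves it, with $c''(\eps) := c(\eps) + \nu(\eps)$.
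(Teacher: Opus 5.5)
Your argument is correct, but the bookkeeping differs from the paper's. The paper takes an additive route. From the size bound in Claim~\ref{cl:c-2} it passes directly to $\Ex_{u\sim U'_{\rm good}}[{\sf Cut}_f(u)]\ge \Ex_{u\sim U_{\rm good}}[{\sf Cut}_f(u)]-\frac{2}{\eta_0}(\eta+c'(\eps))$. This is valid because discarding a $(1-p)$-fraction of a $[0,1]$-valued quantity lowers the average by at most $1-p$. It then adds Claim~\ref{cl:c-1} and absorbs $\eta+\nu(\eps)$ into $\frac{1}{\eta_0}(\eta+c''(\eps))$ using $\eta_0<1$, with $c''=c'+\nu$. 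It never divides by $p$ and needs no cases.

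You instead bound the discarded part from above by $s^*+c(\eps)$ via Lemma~\ref{lem:good-cut} and then divide by $p$. That division is what forces your case split and the fallback bound $A\ge s^*-\eta_0$. The route works, with two harmless slips:
\begin{itemize}
\item The step $\eta_0^2/3\le\eta_0/6$ requires $\eta_0\le 1/2$. You can shrink $\eta_0$, which Lemma~\ref{lem:near-opt} permits. Alternatively, $\eta+c'<\eta_0^2/3$ already gives $p>1-2\eta_0/3>1/3$, hence $1/p<3\le 3/\eta_0$.
\item $s^*\approx 0.742$, not $0.85$; the latter is $c^*$.
\end{itemize}
Your middle bullet, with its extra assumptions $\eta_0\le 1/4$ and $\eps$ small, is superseded by your final paragraph and can be deleted.

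The observation you set aside actually gives the shortest proof. Write $B$ for the average of ${\sf Cut}_f$ over $U_{\rm good}\setminus U'_{\rm good}$. Every $u\in U'_{\rm good}$ has ${\sf Cut}_f(u)\ge s^*-\eta_0$, while every $u\in U_{\rm good}\setminus U'_{\rm good}$ has ${\sf Cut}_f(u)<s^*-\eta_0$, so $B\le A$. Hence $A\ge pA+(1-p)B=\Ex_{u\sim U_{\rm good}}[{\sf Cut}_f(u)]\ge s^*-\eta-\nu(\eps)$ by Claim~\ref{cl:c-1}. This implies the claim for any $c''\ge\nu$, without using Claim~\ref{cl:c-2} at all.
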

\begin{proof}
	Note that since $|U'_{\rm good}| \geq (1 - \frac{2}{\eta_0}(\eta + c'(\eps)))|U_{\rm good}|$ we have 
	\begin{align*}
		\Ex_{u \sim U'_{\rm good}}[{\sf Cut}_f(u)]
		&\geq \Ex_{u \sim U_{\rm good}}[{\sf Cut}_f(u)] - \frac{2}{\eta_0}(\eta + c'(\eps)) \\
		&\geq s^* - \eta - \nu(\eps) - \frac{2}{\eta_0}(\eta + c'(\eps))  \\
		&\geq s^* -\frac{3}{\eta_0}(\eta + c''(\eps)),
	\end{align*}
	where in the last step we have denoted $c''(\eps) = c'(\eps) + \nu(\eps)$.
\end{proof}

\begin{remark}
	Since $\eta_0$ is an absolute constant, we will simply treat the expressions $\frac{2}{\eta_0}(\eta + c'(\eps))$ and $\frac{3}{\eta_0}(\eta + c''(\eps))$ from Claims \ref{cl:c-2} and \ref{cl:c-3} as $O(\eta + c'(\eps))$ and $O(\eta + c''(\eps))$ respectively for the rest of the analysis.
\end{remark}	
	
\subsection{Proof of Lemma \ref{lem:near-opt}}

We will prove Lemma \ref{lem:near-opt} in this section. The key technical ingredient used in the proof is the following asymptotically optimal robust inverse to the Majority-Is-Stablest theorem.

\begin{theorem}				\label{thm:struct}
	There exists some constant $\eta_0 \in (0, 1)$ such that the following holds. Let $\eta \in (0,\eta_0)$ and $\tau \in (0,1)$. Let $g:\{0,1\}^R \to [0,1]$ be a function satisfying
	$\max_{i}\Inf{i}{g} \leq \tau$ and
	\[
	\Ex_{x \underset{\rho^*}{\sim} y}\Big[{\sf Cut}(g(x_1),g(x_2))\Big] \geq s^* - \eta.
	\]
	Then the function $g$ satisfies 
	\[
	{\sf dist}(g,\cL_R) \leq O(\sqrt{\eta}) + c_{1,1}(\tau),
	\]
	where $c_{1,1}$ is a continuous increasing function satisfying $c_{1,1}(0) = 0$.
\end{theorem}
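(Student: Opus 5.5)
We plan to prove Theorem~\ref{thm:struct} by transferring to Gaussian space via the invariance principle, proving a sharp robust inverse of Borell's isoperimetric inequality for the negative correlation $\rho^*$ there, and then pulling the structure back to the cube.

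\textbf{Step 1: reduce to a Gaussian statement.} Since ${\sf Cut}(a,b)=a+b-2ab$, the hypothesis says
\[
2\Ex[g]-2\Stab_{\rho^*}(g)\geq s^*-\eta .
\]
Let $\mu=\Ex[g]$. Theorem~\ref{thm:maj-stablest} gives $2\mu-2\Gamma_{\rho^*}(\mu)\le s^*+o_\tau(1)$. The function $\mu\mapsto 2\mu-2\Gamma_{\rho^*}(\mu)$ is maximized uniquely at $\mu=1/2$, where it equals $s^*$, with nonvanishing second derivative. Hence $|\mu-1/2|=O(\sqrt\eta)+o_\tau(1)$. The same inequality also shows that $g$ nearly saturates the MIS bound at its own mean:
\[
\Stab_{\rho^*}(g)\le \Gamma_{\rho^*}(\mu)+O(\eta)+o_\tau(1).
\]

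\textbf{Step 2: apply the invariance principle.} First smooth $g$ slightly, replacing it with $T_{1-\gamma}g$. The cut value changes by only $o_\gamma(1)$, and Corollary~\ref{cor:noise} (together with Lemma~\ref{lem:avg-1}) lets us recover closeness for $g$ itself from closeness of $T_{1-\gamma}g$. Then apply the MOO invariance principle to the multilinear extension, with all influences at most $\tau$. This produces a Gaussian function $\tilde G:\R^R\to[0,1]$ of degree $D(\tau)$, after truncation, with
\[
\Stab_{\rho^*}(\tilde G)\approx\Stab_{\rho^*}(g)\quad\text{and}\quad \Ex[\tilde G]\approx\mu .
\]
All the errors are absorbed into $c_{1,1}(\tau)$.

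\textbf{Step 3: robust Borell inequality at $\rho^*<0$ with the optimal exponent.} This is the main obstacle. We must show that if $G:\R^R\to[0,1]$ has mean near $1/2$ and negative-correlation stability within $\eta$ of that of a halfspace, then there is a halfspace $\{\langle a,x\rangle\ge\theta\}$ within $L_1$ distance $O(\sqrt\eta)$ of $G$. Here the dependence must be the optimal $\sqrt\eta$, not the weaker dependence in Mossel--Neeman.

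We would first write $\Stab_{\rho^*}(G)=\langle G,U_{\rho^*}G\rangle$ and handle the negative correlation by splitting $G$ into even and odd parts. The odd Hermite levels dominate, and for a near-optimizer the level-1 part must carry almost all of the weight that the linear function would. Concretely, let $\ell(x)=\langle a,x\rangle$ be the (normalized) level-1 projection of $G$. We aim for a quantitative statement of the form
\[
\Ex\bigl[(2G-1)\cdot\text{level}_1\bigr]\;\ge\;\sqrt{2/\pi}\,\|a\|-O(\eta),
\]
together with a Hermite tail bound showing that the weight above level 1 is $O(\eta)$. Two routes seem available: a semigroup/interpolation argument in the spirit of Eldan's stochastic-calculus proof, adapted to $\rho<0$ via the identity $U_{\rho}=U_{-1}U_{|\rho|}$; or a direct Fourier argument exploiting that $\rho^*$ is the strict minimizer of the GW ratio.

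Given these estimates, the inequality $\Ex[(2G-1)\,\sign(\ell)]\le\Ex|\ell|$-style comparison, i.e.\ $\Ex[(2G-1)\ell]\le\Ex|\ell|$ with equality only for $G=\one\{\ell\ge0\}$, has a quadratic margin. Because the Gaussian density of $\ell$ is bounded near $0$, this comparison upgrades an $O(\eta)$ deficit to $L_1$ distance $O(\sqrt\eta)$. This is exactly the mechanism that produces the square root.

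\textbf{Step 4: transfer back to the cube.} Pull the halfspace back to the LTF $L(x)=\one\{\sum_i a_i\sigma(x_i)\ge\theta\}$. Because $g$ has low influences, the weights $a_i=\wh g(\{i\})$ are all small, so the Berry--Esseen theorem gives that the $L_1$ distances $\|g-L\|_1$ on the cube and $\|\tilde G-\text{halfspace}\|_1$ in Gaussian space agree up to $o_\tau(1)$. Alternatively, one can run the level-1 comparison argument directly on the cube, using the same quadratic-margin inequality with Berry--Esseen anti-concentration of $\sum_i a_i\sigma(x_i)$. Either way we obtain ${\sf dist}(g,\cL_R)\le O(\sqrt\eta)+c_{1,1}(\tau)$.
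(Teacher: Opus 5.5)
\textbf{Verdict: there is a genuine gap.} Your Step 1 is fine; it is a valid alternative to the paper's even/odd bound on the mean. You also correctly see that the LTF's normal vector should be the first-level coefficient vector, so that regularity comes from the influence bound. The gap is Step 3, which carries the whole theorem: you do not prove a Gaussian robust inverse with exponent $1/2$, and the concrete estimates you aim for are false.
\begin{itemize}
\item The weight of $2G-1$ above level $1$ cannot be $O(\eta)$, since a halfspace itself has weight $1-2/\pi$ there.
\item The level-1 deficit $\sqrt{2/\pi}-\|a\|$ is not $O(\eta)$. Let $F(x)=\sign(x_1)$ for $|x_1|\le M$, $F(x)=-\sign(x_1)$ for $|x_1|>M$, and $\epsilon=\Pr(|x_1|>M)$. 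Since $U_{\rho^*}\sign(x_1)\approx-\sign(x_1)$ for large $|x_1|$ and the quadratic correction is $o(\epsilon)$, the cut deficit is $\Theta(\epsilon)$. Yet $\mathbb{E}[F(x)x_1]=\sqrt{2/\pi}-4\varphi(M)$, with $4\varphi(M)=\Theta(\epsilon\sqrt{\log(1/\epsilon)})$.
\item Replacing $x_1$ by $R^{-1/2}\sum_i\sigma_i$ gives a low-influence cube example, so your ``directly on the cube'' variant fails too.
\end{itemize}
The culprit is the unbounded weight $|\ell|$ in your quadratic-margin step. Even a sharp level-1 estimate of order $\eta\sqrt{\log(1/\eta)}$ would only give $\eta^{1/2}\log^{1/4}(1/\eta)$, exactly the logarithmic loss the theorem must avoid. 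Neither suggested route closes this. Eldan's bound, used in place of Heilman's, gives only $\sqrt{\eta\log(1/\eta)}$, and it is for positive correlation. The fact that $\rho^*$ minimizes the GW ratio is irrelevant, since the statement holds for every $\rho\in(-1,0)$ (Theorem~\ref{thm:robust-MIST}).

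\textbf{How the paper does it.} After bounding $|m|$, it uses $\Stab_{-\rho}(g)=2\Stab_{\rho}(g_{\mathrm{even}})-\Stab_\rho(g)$ and $\Gamma^\pm_t(m)-\Gamma^\pm_t(0)\le C_tm^2$. This turns the hypothesis into a deficit $O(\eta)+o_\tau(1)$ at the positive correlation $t=-\rho^*$, where sharp set results exist. In Gaussian space it then:
\begin{itemize}
\item passes to the epigraph $A_F\subseteq\mathbb{R}^{R+1}$;
\item bounds the fractional part by $\mathbb{E}\min\{F,1-F\}\le C_t\sqrt\Delta$ (Lemma~\ref{lem:gaussian-facts}\ref{it:pointwise-gap});
\item uses Theorem~\ref{thm:Heilman-set-robust} to get a lower bound on the centroid;
\item uses Heilman's Theorem~\ref{thm:Heilman-penalty}, which bounds a penalty with a \emph{bounded} increasing weight in the centroid direction by the deficit;
\item applies the quadratic margin (Lemma~\ref{lem:gaussian-facts}\ref{it:penalty-symdiff}) to get $O(\sqrt\Delta)$.
\end{itemize}

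\textbf{Step 4 is also incomplete.} Berry--Esseen controls only the linear form. It cannot equate $\|g-L\|_1$ on the cube with $\|\tilde G-H\|_1$ in Gaussian space: $g$ is not $\tilde G$, and $|P-\one\{\ell\ge\theta\}|$ is not a Lipschitz function of one smoothed polynomial. The paper instead applies smoothed $L_1$ invariance to $T_\lambda u-T_\lambda L_\lambda$ (both low-influence). It compares $P_{L_\lambda,\lambda}(G)$ with the Gaussian halfspace via Claim~\ref{clm:MN-ltf-smoothing}, and removes the smoothing with Peres's bound in Lemma~\ref{lem:smoothing-removal}. Your observation that Corollary~\ref{cor:noise} extends to $[0,1]$-valued functions is correct, since Lemma~\ref{lem:avg} only needs the LTF to be Boolean, and it would handle the $g$ side. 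You would still need a joint invariance argument to produce a cube LTF close to $T_{1-\gamma}g$.
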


We refer the reader to Section \ref{sec:robust-mist} for a discussion on Theorem \ref{thm:struct} and its proof. Equipped with the above theorem, we are now ready to prove Lemma \ref{lem:near-opt}. Recall that we use $\cL_R$ to denote the set of LTFs over $\{0,1\}^R$. Our proof of the lemma proceeds in two steps which are briefly outlined below:

\begin{itemize}
    \item {\bf Step 1}. In Lemma \ref{lem:bound-1}, we first bound the average distance of the right table functions to linear thresholds by the average $L_1$-distance of the left table functions to linear thresholds i.e.,
    \[
        \Ex_{v \sim V}\Big[{\sf dist}(f_v,\cL_R)\Big] \leq \Ex_{u \sim U}\Big[{\sf dist}(g_u,\cL_R)\big].
    \]
    \item {\bf Step 2}. In Lemma \ref{lem:bound-2}, we use the robust inverse of Majority-Is-Stablest (Theorem \ref{thm:struct}) to bound the RHS of the above equation with $O(\sqrt{\eta}) + o_\eps(1)$.
\end{itemize}

Putting the two bounds together then immediately yields the bound claimed in the lemma. We now formally implement the two steps described above. 

\paragraph{Average Closeness to LTFs.} We can bound the average $L_1$-distance of the right vertex tables to the set of LTFs $\cL_R$ as follows.

\begin{lemma}				\label{lem:bound-1}
	\[
	\Ex_{v \sim V}\Big[{\sf dist}(f_v,\cL_R)\Big] \leq \Ex_{u \sim U}\Big[{\sf dist}(g_u,\cL_R)\Big].
	\]
\end{lemma}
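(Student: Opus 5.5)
The plan is to push the distance from the left averaged tables down to the right tables with the averaging lemmas (Lemma~\ref{lem:avg-1}, Claim~\ref{cl:invar}, Corollary~\ref{cor:noise}), and then use bi-regularity to switch the order of averaging. For each $u \in U$, by definition
\[
g_u = \Ex_{v \sim N_\cG(u)}\left[\pi_{u,v}\circ T_{1-\mu} f_v\right],
\]
where $\pi_{u,v}\circ h$ denotes $x \mapsto h(\pi_{u,v}(x))$. Every $\pi_{u,v}\circ T_{1-\mu}f_v$ is $[0,1]$-valued, so $g_u$ is an average over a discrete distribution of $[0,1]$-valued functions. Lemma~\ref{lem:avg-1} therefore gives
\[
\Ex_{v \sim N_\cG(u)}\Big[{\sf dist}(\pi_{u,v}\circ T_{1-\mu}f_v,\cL_R)\Big] \le {\sf dist}(g_u,\cL_R).
\]

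Next I simplify each term on the left. The first step is to remove the permutation. Claim~\ref{cl:invar} is stated for Boolean $f$, but its proof only uses that $\pi\circ L$ is again an LTF and that $L_1$ norms are permutation invariant, so it holds verbatim for $[0,1]$-valued functions. Hence ${\sf dist}(\pi_{u,v}\circ T_{1-\mu}f_v,\cL_R) = {\sf dist}(T_{1-\mu}f_v,\cL_R)$. The second step is to remove the noise. Since $f_v$ is Boolean, Corollary~\ref{cor:noise} gives ${\sf dist}(f_v,\cL_R)\le {\sf dist}(T_{1-\mu}f_v,\cL_R)$. Together these yield, for every $u$,
\[
\Ex_{v\sim N_\cG(u)}\big[{\sf dist}(f_v,\cL_R)\big] \le {\sf dist}(g_u,\cL_R).
\]

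Finally, I average over $u \sim U$. Since $\cG$ is bi-regular, choosing $u\sim U$ uniformly and then $v\sim N_\cG(u)$ uniformly yields $v$ uniform on $V$, which gives the claimed inequality. The step to be careful about is the one LTF-closure fact hidden in Corollary~\ref{cor:noise}, namely that shifting the input by $z$ preserves being an LTF; this is already proven in Lemma~\ref{lem:ltf}. The extension of Claim~\ref{cl:invar} to fractional functions is the only small gap to note, and it is routine. The main point is that the argument needs no structure of $f$ beyond Booleanity of the right tables.
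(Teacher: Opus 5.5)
Your proof is correct and uses the same ingredients as the paper's (bi-regularity, Lemma~\ref{lem:avg-1}, Claim~\ref{cl:invar}, Corollary~\ref{cor:noise}), just in a different order. The paper applies Claim~\ref{cl:invar} to the Boolean $f_v$ before smoothing, implicitly using that $T_{1-\mu}$ commutes with coordinate permutations. You instead apply it to the smoothed $[0,1]$-valued function, which needs only the routine extension you already noted.
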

\begin{proof}
\begin{align*}
	\Ex_{v \sim V}\Big[{\sf dist}(f_v,\cL_R)\Big]
	&= \Ex_{u \sim U} \Ex_{v \sim N_{\cG}(u)}\left[{\sf dist}(f_v,\cL_R\Big)\right] &		\text{(Bi-regularity of $\cG$)} \\
	&= \Ex_{u \sim U} \Ex_{v \sim N_{\cG}(u)}\left[{\sf dist}\Big(\pi_{u,v} \circ f_v,\cL_R\Big)\right] &	\text{(Claim \ref{cl:invar})} \\
	&\leq \Ex_{u \sim U} \Ex_{v \sim N_{\cG}(u)}\left[{\sf dist}\Big(T_{1 - \mu}(\pi_{u,v} \circ f_v),\cL_R\Big)\right]		&\text{(Corollary \ref{cor:noise})} \\
	&\leq \Ex_{u \sim U} \left[{\sf dist}\Big(\Ex_{v \sim N_G(u)}T_{1 - \mu}(\pi_{u,v} \circ f_v), \cL_R\Big)\right]		&\text{(Lemma \ref{lem:avg-1})} \\
	&= \Ex_{u \sim U}\Big[{\sf dist}(g_u,\cL_R)\Big].		&	\text{(Definition of $g_u$)}
\end{align*}
\end{proof}

\paragraph{Bounding the RHS.} Now we proceed to upper bound the RHS of the above.

\begin{lemma}				\label{lem:bound-2}
	\[
	\Ex_{u \sim U}\Big[{\sf dist}(g_u,\cL_R)\Big] \leq O(\eta^{1/2}) + c_2(\epsilon),
	\]
	where $c_2(\epsilon)$ is a continuous function satisfying $c_2(\eps) \downarrow 0$ as $\eps \downarrow 0$.
\end{lemma}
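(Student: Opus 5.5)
We split the average over $U$ into the three pieces $U\setminus U_{\rm good}$, $U_{\rm good}\setminus U'_{\rm good}$, and $U'_{\rm good}$, and bound ${\sf dist}(g_u,\cL_R)$ on each. Two trivial facts are used throughout. First, ${\sf dist}(g_u,\cL_R)\le 1$ for every $u$, since $g_u$ is $[0,1]$-valued and the constant-zero function is an LTF. Second, every $u\in U'_{\rm good}$ satisfies the hypotheses of Theorem~\ref{thm:struct}: it has influences at most $\tau(\eps)$ by definition of $U_{\rm good}$, and cut value ${\sf Cut}_f(u)\ge s^*-\eta_0$.

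\textbf{The two small pieces.} By Lemma~\ref{lem:u-good}, $U\setminus U_{\rm good}$ has measure at most $\nu(\eps)$, so it contributes at most $\nu(\eps)$. By Claim~\ref{cl:c-2}, $U_{\rm good}\setminus U'_{\rm good}$ has measure at most $O(\eta+c'(\eps))$, so it contributes at most that. Both are of the allowed form $O(\eta)+o_\eps(1)$. Since $\eta\le\eta_0<1$, we also have $\eta\le\sqrt{\eta}$.

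\textbf{The main piece $U'_{\rm good}$.} For $u\in U'_{\rm good}$, write $\eta_u:=\max(0,\,s^*-{\sf Cut}_f(u))\le\eta_0$. We want to apply Theorem~\ref{thm:struct} with $\eta_u$ in place of $\eta$, giving ${\sf dist}(g_u,\cL_R)\le C\sqrt{\eta_u}+c_{1,1}(\tau(\eps))$. There is a boundary issue: the theorem needs $\eta\in(0,\eta_0)$. When $\eta_u=0$ we apply it with an arbitrarily small positive parameter and pass to the limit. When $\eta_u=\eta_0$ we either shrink $\eta_0$ slightly in the definition of $U'_{\rm good}$, or simply use the trivial bound $1\le\sqrt{\eta_u}/\sqrt{\eta_0}$.

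Averaging over $U'_{\rm good}$ and applying Jensen (concavity of $\sqrt{\cdot}$) gives
\[
\Ex_{u\sim U'_{\rm good}}\big[{\sf dist}(g_u,\cL_R)\big]\le C\sqrt{\Ex_{u\sim U'_{\rm good}}[\eta_u]}+c_{1,1}(\tau(\eps)).
\]

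\textbf{Controlling the average deficit.} This is the main obstacle, because $\eta_u$ is a one-sided deficit while Claim~\ref{cl:c-3} only bounds the mean of ${\sf Cut}_f(u)$. The fix uses Lemma~\ref{lem:good-cut}, which gives ${\sf Cut}_f(u)\le s^*+c(\eps)$ for every $u\in U_{\rm good}$. Hence $s^*-{\sf Cut}_f(u)\ge -c(\eps)$ pointwise, so
\[
\eta_u\le\big(s^*-{\sf Cut}_f(u)\big)+c(\eps).
\]
Averaging and invoking Claim~\ref{cl:c-3} yields
\[
\Ex_{u\sim U'_{\rm good}}[\eta_u]\le O\big(\eta+c''(\eps)\big)+c(\eps).
\]
The square root of this is at most $O(\sqrt{\eta})+O\big(\sqrt{c''(\eps)+c(\eps)}\big)$.

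\textbf{Combining.} Conditioning on $U'_{\rm good}$ rather than averaging over all of $U$ only changes the average by at most the measure of the excluded set, which is already accounted for above (it is at least as good to bound $\Pr[u\in U'_{\rm good}]\le 1$). Summing the three contributions gives the lemma with
\[
c_2(\eps)=\nu(\eps)+O\big(c'(\eps)\big)+O\big(\sqrt{c''(\eps)+c(\eps)}\big)+c_{1,1}(\tau(\eps)).
\]
This function is continuous and decreases to $0$ as $\eps\downarrow0$, because $\tau(\eps)\to0$ and $c_{1,1}$ is continuous with $c_{1,1}(0)=0$.
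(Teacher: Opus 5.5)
Your proof is correct and follows the paper's argument: reduce from $U$ to $U'_{\rm good}$ using Lemma~\ref{lem:u-good} and Claim~\ref{cl:c-2}, apply Theorem~\ref{thm:struct} pointwise with the deficit $\eta_u$, then use Jensen and Claim~\ref{cl:c-3}. Your extra step using Lemma~\ref{lem:good-cut} to get $\eta_u\le (s^*-{\sf Cut}_f(u))+c(\eps)$ is in fact a needed repair: the paper's step $\Ex_{u\sim U'_{\rm good}}[Z_u]\le \Ex_{u\sim U'_{\rm good}}[s^*-{\sf Cut}_f(u)]$ goes the wrong way, since $Z_u=\max\{s^*-{\sf Cut}_f(u),0\}\ge s^*-{\sf Cut}_f(u)$, and the extra $c(\eps)$ you incur is absorbed into $c_2(\eps)$.
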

\begin{proof}
	We proceed with the proof in steps. First observe that using Lemma \ref{lem:u-good} we have
	\begin{equation}			\label{eqn:step-1}
	\Ex_{u \sim U}\Big[{\sf dist}(g_u,\cL_R)\Big] \leq \Ex_{u \sim U_{\rm good}}\Big[{\sf dist}(g_u,\cL_R)\Big] + \nu(\epsilon).
	\end{equation}
	Next, using the bound from Claim \ref{cl:c-2} we have 
	\begin{equation}			\label{eqn:step-2}
	\Ex_{u \sim U_{\rm good}}\Big[{\sf dist}(g_u,\cL_R)\Big] \leq \Ex_{u \sim U'_{\rm good}}\Big[{\sf dist}(g_u,\cL_R)\Big] + O(\eta + c'(\epsilon)).	
	\end{equation}
	Furthermore, for every $u \in U'_{\rm good}$, let us define the variable 
	\[
	Z_u := \max\left\{s^* - {\sf Cut}_f(u),0\right\}.
	\] 
	Now, by definition we know that for every $u \in U'_{\rm good}$ we have that 
	\[
	\max_{i \in [R]} \Inf{i}{g_u} \leq \tau
	\qquad\quad\textnormal{and}\qquad\quad
	\Ex_{x_1 {\sim_\rho^*} x_2} \Big[{\sf Cut}(g_u(x_1),g_u(x_2))\Big]
	\geq s^* - \eta_0,
	\]
    where $\tau = \tau(\eps)$ is as defined under Figure \ref{fig:reduction}. In other words, the function $g_u$ satisfies the premise of Theorem \ref{thm:struct} for every $u \in U'_{\rm good}$. Hence, using Theorem \ref{thm:struct} we get that:
	\begin{align}
	\Ex_{u \sim U'_{\rm good}}\Big[{\sf dist}(g_u,\cL_R)\Big]	
	&\leq \Ex_{u \sim U'_{\rm good}}\Big[(O(Z_u))^{1/2} + c_{1,1}(\tau)\Big] 		\nonumber\\
	&\leq O\left(\Ex_{u \sim U'_{\rm good}}\big[Z_u\big]\right)^{1/2} + c_{1,1}(\tau) 		\nonumber\\
	&\leq O\Big(\eta + c''(\epsilon)\Big)^{1/2} + c_{1,1}(\tau)			\nonumber\\
	&\leq O(\eta^{1/2}) + O(c''(\epsilon)^{1/2}) + c_{1,1}(\tau), 
	\label{eqn:step-3}
	\end{align}
	where the second inequality is due to Jensen's inequality, the third inequality is the following application of Claim \ref{cl:c-3}:
	\[
	\Ex_{u \sim U'_{\rm good}}[Z_u] \leq \Ex_{u \sim U'_{\rm good}}[s^* - {\sf Cut}_f(u)] \leq O(\eta + c''(\epsilon)),
	\]
	and the last inequality uses the observation that $(a + b)^{1/2} \leq a^{1/2} + b^{1/2} $ for any $a,b \geq 0$. Putting together the bounds from \eqref{eqn:step-1}, \eqref{eqn:step-2}, and \eqref{eqn:step-3}, we get that 
	\begin{align*}
	\Ex_{u \sim U}\left[{\sf dist}(g_u,\cL_R)\right] 
	&\leq O(\eta^{1/2}) + O(c''(\epsilon)^{1/2}) + c_{1,1}(\tau(\epsilon)) + \nu(\epsilon) \\
	&= O(\eta^{1/2}) + c_2(\epsilon),
	\end{align*}
	where we have denoted $c_2(\epsilon) = O(c''(\epsilon)^{1/2}) + c_{1,1}(\tau(\epsilon)) + \nu(\epsilon)$.
\end{proof}
\paragraph{Finishing Up.} 

By combining the bounds from Lemma \ref{lem:bound-1} and Lemma \ref{lem:bound-2} we can bound:
\begin{equation}\label{eq:f_v_dist}
	\Ex_{v \sim V}\left[{\sf dist}(f_v,\cL_R)\right] 
	\leq \Ex_{u \sim U}\left[{\sf dist}(g_u,\cL_R)\right]
	\leq O(\sqrt{\eta}) + c_2(\epsilon)
\end{equation}
which concludes the proof of Lemma \ref{lem:near-opt}.

\section{Proof of Theorem~\ref{thm:ugc-hard}}		\label{sec:main-proof}

Let $\cG = (U,V,E,[R],\{\pi_e\}_{e \in E})$ be a Unique Games instance, and let $H = (V_H,E_H,w_H)$ be the Max-Cut instance output the reduction in Figure \ref{fig:reduction}. Let us denote $n = |V_{H}| = |V|\cdot2^R$. Following \cite{Trevisan01}, we now construct an unweighted simple graph $\tilde{H} = (\tilde{V}_H,\tilde{E}_H)$ with maximum degree at most $10d$ as follows:

\begin{itemize}
	\item[(R1)] We first construct an intermediate multi-graph $H' = (V_H,E'_H)$ by independently sampling $nd$ edges $e_1, \ldots, e_{nd}$ from $H$ using the probability distribution induced by the edge weights $w_H$. Note that this multi-graph is unweighted.
	\item[(R2)] Then, we construct $\tilde{H} = (\tilde{V}_H,\tilde{E}_H)$ from $H'$ by first replacing each set of parallel edges by a single edge, and then arbitrarily deleting  additional edges, if necessary, so that every vertex has degree at most $10d$.
\end{itemize} 

Theorem~\ref{thm:ugc-hard} follows directly from the following theorem.

\begin{theorem}\label{thm:h_tilde}
    Fix some integer $d > 0$. Let $\epsilon = \epsilon(d) > 0$ be sufficiently small, and $R = R(\epsilon(d), d)$ be sufficiently large. The graph $\tilde{H}$ satisfies the following when $n$ is sufficiently large:
    \begin{itemize}
        \item Completeness. If $\Opt(\cG) \geq 1 - \epsilon$, then with probability at least $2/3$ we have \[\MaxCut(\tilde{H}) \geq c^\ast - O(1/d).\]
        \item Soundness. If $\Opt(\cG) \leq \epsilon$, then with probability at least $2/3$ we have \[\MaxCut (\tilde{H}) \leq s^\ast + O(\log^2(d) / d).\]
    \end{itemize}
\end{theorem}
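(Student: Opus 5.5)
Completeness and soundness will be handled separately; in both, the multigraph $H'$ is first compared to $H$, and then the cleanup (R2) is shown to cost only $O(1/d)$ in normalized value.

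\emph{Cleanup (R2).} Each sampled edge is a pair of vertices that, by bi-regularity, are marginally uniform over $V_H$, so every vertex has degree $\mathrm{Bin}(nd,\Theta(1/n))$ in $H'$ with mean about $2d$. Deleting edges until all degrees are at most $10d$ removes, in expectation, at most $\E[(\deg-10d)^+]\cdot n = e^{-\Omega(d)}nd$ edges. Collapsing parallel edges removes an expected $O(\Pr[\text{repeat}])\cdot nd$ edges. This is $o(nd)$ once we also ensure that no atom of the edge distribution of $H$ has weight larger than $o(1/n)$, which holds after adding $R$-dependent noise and taking $n$ large, since the weights are products over $U$-neighborhoods and $\mu$-noise. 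By Markov, with probability $0.9$ the total number of deleted edges is $O(nd/d)$. The number of retained edges is $(1\pm O(1/d))\,nd$ with high probability. So every cut value changes by $O(1/d)$ in normalized terms.

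\emph{Completeness.} Fix the cut $f$ from Lemma~\ref{lem:comp}, whose value is $c^*-2(\epsilon+\mu)$. Its value in $H'$ is an average of $nd$ i.i.d.\ indicators, so by Chernoff it is at least $c^*-2(\epsilon+\mu)-O(1/\sqrt{nd})$ with probability $0.99$. Choosing $\epsilon(d)$ with $\epsilon+\mu\le 1/d$ and combining with the cleanup bound gives $c^*-O(1/d)$.

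\emph{Soundness.} This is the heart of the proof. It suffices to show that with high probability every $f:V_H\to\{0,1\}$ satisfies $\widehat w(f)\le s^*+t$ in $H'$, where $t=C\log^2 d/d$. I would bucket cuts by their true value $w(f)$ in $H$.
\begin{itemize}
\item \emph{Far cuts.} If $w(f)\le s^*-\eta^{(0)}$, take $\eta^{(0)}=C'/\sqrt d$ (or a constant, plus the next bucket scheme). Hoeffding gives $\Pr[\widehat w(f)>s^*]\le e^{-\Omega(\eta^2 nd)}$, and a union bound over $2^n$ cuts handles this case.
\item \emph{Near cuts.} For scales $\eta^{(i)}=2^i\eta^{(0)}$ with $\eta^{(0)}\approx 1/d$, let $f$ satisfy $w(f)\in[s^*-\eta^{(i)},s^*-\eta^{(i-1)}]$. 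By Lemma~\ref{lem:near-opt}, $\E_v\,\mathsf{dist}(f_v,\cL_R)\le C\sqrt{\eta^{(i)}}+c(\epsilon)$, where $\epsilon$ is chosen so that $c(\epsilon)\le 1/d$. So there is an LTF-table assignment $L=(L_v)_v$ differing from $f$ on a set $D$ of density $\delta=O(\sqrt{\eta})$.
\end{itemize}
I would then decompose $\widehat w(f)-w(f)=(\widehat w(L)-w(L))+(\widehat w(f)-\widehat w(L)-(w(f)-w(L)))$. The second term is a sum over sampled edges incident to $D$. Each summand has variance $O(\delta)$ and magnitude $1$, so Bernstein gives a tail $\exp(-\Omega(nd\,s^2/(\delta+s)))$ for deviation $s$. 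With $s=\eta^{(i-1)}+t$ the exponent is $nd(\eta+t)^{2}/\sqrt\eta\gtrsim nd(\eta+t)^{3/2}$. The number of such $(L,D)$ pairs is at most $|\cL_R|^{|V|}\cdot\binom{n}{\delta n}\le 2^{o(n)}\cdot 2^{O(\sqrt\eta\log(1/\eta))n}$, using Theorem~\ref{thm:ltf-count} and $2^{R}\gg R^2$. The union bound requires $d(\eta+t)^{3/2}\gg\sqrt\eta\log(1/\eta)$. For $\eta\ge t$ this reads $d\eta\gg\log d$, and for $\eta<t$ it reads $d t^{3/2}\gg \sqrt{t}\log d$. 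Both hold for $t\approx \log^2 d/d$, with the extra log absorbing the $O(\log d)$ scales. The first term, over at most $2^{o(n)}$ LTF assignments, is controlled by plain Hoeffding at deviation $t$, where the exponent $t^2nd$ beats the $2^{o(n)}$ count. Summing over the $O(\log d)$ buckets yields failure probability $e^{-\Omega(n)}$.

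The main obstacle is making this Bernstein/chaining step exact. The difference term involves edges touching $D$, and its variance depends on the edge measure of $D$ in $H$ rather than its vertex density; these agree by regularity of the marginals, which I would verify. One must also make sure $c(\epsilon)$ and $\mu$ are taken small relative to $1/d$, so that they do not spoil the $\sqrt\eta$ scale at the smallest bucket.
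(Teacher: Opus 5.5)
Your proposal is correct and is essentially the paper's proof. Completeness comes from concentration for the fixed dictator cut. Soundness uses a union bound over all $2^n$ cuts for the far cuts, together with doubling-scale buckets in which Lemma~\ref{lem:near-opt} attaches each near-optimal cut to an LTF-table $L$: Bernstein is applied to $Y(f)-Y(L)$ against the Hamming-ball count, and Chernoff to $Y(L)$ with a union bound over all LTF-tables. Finally, Markov shows step (R2) costs only $O(1/d)$.

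Two small points need tightening. First, in (R2) what you actually need is $\max_e w_H(e)\le 2c^R/n$ for a constant $c<1$ (given one endpoint, each coordinate of the other is independently noisy). You then take $R$ large in terms of $d$, not $n$ large, so that this is at most $1/(nd^2)$, giving $O(n)$ rather than merely $o(nd)$ expected collapsed edges. Second, your count $|\cL|=2^{O(R^2|V|)}=2^{O(R^2 2^{-R})\,n}$, again with $R$ large in terms of $d$, is the correct way to absorb the union over LTF-tables. The paper's remark that this factor is polynomial in $n$ does not hold as stated, though its conclusion survives for exactly your reason.
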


    

\subsection{Analysis of $H'$}			\label{sec:H'-analysis}

We start our analysis by analyzing the intermediate (multi-)graph $H'$. We prove the following lemma.

\begin{lemma}\label{lem:H'}
    Fix some integer $d > 0$ and let $\epsilon > 0$ be sufficiently small depending on $d$. The (multi-)graph $H'$ satisfies the following when $n$ is sufficiently large:
    \begin{itemize}
        \item Completeness. If $\Opt(\cG) \geq 1 - \epsilon$, then ${\sf MaxCut} (H') \geq c^\ast - 1/d$ with probability at least $.99$.
        \item Soundness. If $\Opt(\cG) \leq \epsilon$, then ${\sf MaxCut} (H') \leq s^\ast + \log^2(d)/d$ with probability at least $.99$.
    \end{itemize}
\end{lemma}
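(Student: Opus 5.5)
Completeness is a single-cut concentration argument. In the YES case, Lemma~\ref{lem:comp} gives a cut $f^\star$ of $H$ of weight at least $c^* - 2(\epsilon+\mu)$. In $H'$, the fraction of the $nd$ sampled edges cut by $f^\star$ is an average of $nd$ i.i.d.\ indicators with this mean. Chebyshev (or Hoeffding) shows it is at least $c^* - 2(\epsilon+\mu) - O(1/\sqrt{nd})$ with probability $0.99$. Since $\epsilon$ is small in terms of $d$ (so $\mu(\epsilon)\le 1/(4d)$) and $n$ is large, this is at least $c^*-1/d$.

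Soundness is where the work lies. Set $t:=\log^2(d)/d$ and let $\hat w(f)$ be the fraction of the $nd$ sampled edges cut by $f$. I would run a multi-scale union bound. For $i\ge 0$ let $\eta^{(i)}=2^i\eta^{(0)}$ with $\eta^{(0)}\approx t/2$. Let $\cA^{(i)}$ be the set of cuts $f$ with $w(f)\ge s^*-\eta^{(i)}$, stopping at $\eta^{(i)}=\eta_0$; all cuts below $s^*-\eta_0$ form the top bucket. Lemma~\ref{lem:sound} gives $w(f)\le s^*+c_0(\epsilon)\le s^*+t/4$ for every cut. The top bucket is handled by plain Hoeffding: the needed deviation is a constant, so the failure probability is $2^n e^{-\Omega(nd)}$. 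For $f\in\cA^{(i)}\setminus\cA^{(i-1)}$ I need $\hat w(f)-w(f)\le \eta^{(i-1)}+t/2$. Here I chain through LTFs. Lemma~\ref{lem:near-opt} gives, for each $v$, an LTF $L_v$ with $\E_v\|f_v-L_v\|_1\le C\sqrt{\eta^{(i)}}+c(\epsilon)=:\delta_i$. Let $L$ be the assembled cut $(v,x)\mapsto L_v(x)$. I write $\hat w(f)-w(f)=(\hat w(L)-w(L))+(\hat\Delta-\Delta)$, where $\Delta=w(f)-w(L)$ depends only on edges with an endpoint in the disagreement set $D=\{f\ne L\}$, which has measure $\delta_i$.

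I would bound each piece separately. For the first term, the number of LTF assemblies is $\big(2^{O(R^2)}\big)^{|V|}=2^{o(n)}$ by Theorem~\ref{thm:ltf-count}, since $R^2\ll 2^R$. A union bound with Hoeffding at deviation $t/4$ then costs $2^{o(n)}e^{-\Omega(t^2nd)}$, which is fine. For the second term, each sampled edge contributes a $\{-1,0,1\}$-valued quantity that is nonzero only if the edge touches $D$. Because every vertex of $H$ has weighted degree equal to its measure (the marginal of each endpoint is uniform), this happens with probability at most $2\delta_i$. Bernstein then gives
\[
\Pr\big[\hat\Delta-\Delta\ge s\big]\le \exp\!\big(-\Omega(nd\,\min\{s^2/\delta_i,\ s\})\big).
\]
The number of choices of the pair $(L,D)$ is at most $2^{o(n)}\binom{n}{\delta_i n}\le 2^{O(\delta_i\log(1/\delta_i))n}$. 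One subtlety is that Lemma~\ref{lem:near-opt} bounds only the average distance. Since $\E_v\|f_v-L_v\|_1$ is exactly $|D|/n$, this average is precisely what is needed for counting.

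Taking $s=\eta^{(i-1)}+t/4\asymp\eta^{(i)}$, the exponent of the bad event is $nd\,\eta^{(i)2}/\sqrt{\eta^{(i)}}=nd\,(\eta^{(i)})^{3/2}$, while the entropy is $\sqrt{\eta^{(i)}}\log(1/\eta^{(i)})\,n$. The bad event wins once $d\,\eta^{(i)}\gg\log(1/\eta^{(i)})$, that is, for $\eta^{(i)}\gtrsim \log d/d$. The slack $t=\log^2 d/d$ covers this with room to spare, including for $i=0$, where the chosen $\eta^{(0)}$ keeps $\delta_0$ large enough. Summing the $O(\log d)$ buckets gives total failure probability $e^{-\Omega(n)}<0.01$. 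I also need $c(\epsilon)\le\sqrt{\eta^{(0)}}$, which holds since $\epsilon$ is small in terms of $d$.

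The main obstacle I expect is this chaining step: getting Bernstein's variance to scale with $\delta_i$ rather than with a constant, and making the parameters balance at the smallest scale. A possible snag is that $L$ need not be near-optimal, but this is harmless: its deviation is controlled directly over the small family of LTF assemblies, and its value bound $w(L)\le s^*+c_0(\epsilon)$ comes from Lemma~\ref{lem:sound}. The $(\hat w(L)-w(L))$ and $(\hat\Delta-\Delta)$ terms decouple, so no separate bound on $w(L)$ relative to $w(f)$ is required.
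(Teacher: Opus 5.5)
Your proposal is correct and is essentially the paper's proof. Completeness is a single-cut concentration bound, and soundness uses the same ingredients: dyadic bucketing $\eta^{(i)}=2^i\eta^{(0)}$, chaining each near-optimal cut through an assembled LTF from Lemma~\ref{lem:near-opt}, Bernstein with variance $O(\sqrt{\eta^{(i)}})$ for the difference term against the Hamming-ball entropy $O(\sqrt{\eta^{(i)}}\log(1/\eta^{(i)}))\,n$, Hoeffding plus a union bound over the $2^{O(R^2|V|)}$ LTF assemblies, and a plain union bound over the $2^n$ cuts a constant below $s^*$. You count the assemblies as $2^{O(R^2/2^R)\,n}$, negligible once $R$ is large in terms of $d$, which is the right justification; the paper's remark that this factor is polynomial in $n$ does not hold as written, though its conclusion survives for exactly your reason.
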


Given a cut $f: V_H \to \{0, 1\}$, we use $E(f) \subseteq E_H$ to denote its cut edges in H. We use $W_H(f) = \sum_{e \in E(f)}w_H(e)$ to denote its cut value in $H$ and $W_{H'}(f) = \frac{1}{nd}\sum_{i \in [nd]} \one\{e_i \in E(f)\}$ to denote its cut value in $H'$. Let $Y(f) = W_{H'}(f) - W_{H}(f)$.

The following version of Bernstein's lemma is well-known.
\begin{lemma}[See e.g. Theorem 2.9.5 in~\cite{vershynin2025high}]\label{lem:Bernstein}
    Let $X_1, \ldots, X_k$ be independent zero-mean random variables. Assume that $\Pr[|X_i| \leq M] = 1$ for every $i \in [k]$, then for every $s > 0$ we have 
	\[
	    \Pr\left[\sum_{i = 1}^k X_i \geq s \right] \leq \exp\left(-\frac{s^2}{2(\sum_{i = 1}^k\E[X_i^2] + Ms / 3)}\right).
	\]
\end{lemma}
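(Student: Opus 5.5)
The plan is the standard Cramér--Chernoff (exponential moment) argument, where the only non-routine ingredient is a moment-generating-function bound that uses the variance rather than just the range. Write $\sigma_i^2 := \E[X_i^2]$, $V := \sum_{i=1}^k \sigma_i^2$, and fix a parameter $\theta > 0$ with $\theta M < 3$. By Markov's inequality applied to $e^{\theta \sum_i X_i}$ and independence of the $X_i$,
\[
\Pr\Big[\sum_{i=1}^k X_i \geq s\Big] \leq e^{-\theta s}\prod_{i=1}^k \E\big[e^{\theta X_i}\big],
\]
so it suffices to bound each factor $\E[e^{\theta X_i}]$.

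For the single-variable bound, expand $e^{\theta X_i}$ in its Taylor series. The constant term contributes $1$, and the linear term vanishes because $\E[X_i]=0$. For $k \geq 2$, the almost-sure bound $|X_i| \leq M$ gives $\E|X_i|^k \leq M^{k-2}\sigma_i^2$. Combined with $k! \geq 2\cdot 3^{k-2}$, this yields
\[
\E\big[e^{\theta X_i}\big] \leq 1 + \frac{\theta^2\sigma_i^2}{2}\sum_{k\geq 2}\Big(\frac{\theta M}{3}\Big)^{k-2} = 1 + \frac{\theta^2 \sigma_i^2}{2(1-\theta M/3)} \leq \exp\Big(\frac{\theta^2\sigma_i^2}{2(1-\theta M/3)}\Big),
\]
where the last step uses $1+x\leq e^x$. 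Taking the product over $i$, the whole tail is at most $\exp\big(-\theta s + \theta^2 V/(2(1-\theta M/3))\big)$.

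It remains to optimize over $\theta$, and I would take $\theta := s/(V + Ms/3)$. This choice automatically satisfies $\theta M < 3$. It also gives $1-\theta M/3 = V/(V+Ms/3)$, so the quadratic term equals $\theta^2 (V+Ms/3)/2 = \theta s/2$. The exponent therefore becomes $-\theta s/2 = -s^2/(2(V+Ms/3))$, which is exactly the claimed bound.

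The only step needing care is the moment bound $\E|X_i|^k\leq M^{k-2}\E[X_i^2]$ together with the factorial estimate $k!\geq 2\cdot 3^{k-2}$. The first follows pointwise from $|X_i|^k \leq M^{k-2}X_i^2$, and the second by induction on $k$. Everything else in the argument is routine algebra.
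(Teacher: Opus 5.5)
Your argument is correct and is the standard Cramér--Chernoff proof of this inequality; the paper gives no proof of its own, citing Vershynin's textbook for this standard result. One point to add: when $V=0$ your choice of $\theta$ gives $\theta M=3$ and the division by $V$ breaks down, but in that case every $X_i=0$ almost surely, so the bound holds trivially.
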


\paragraph{Completeness} Assume that $\Opt(\cG) \geq 1- \epsilon$, then by Lemma~\ref{lem:comp}, there exists a cut $f: V_H \to \{0, 1\}$ such that $W_H(f)\geq c^* - o_\epsilon(1)$. For every $i \in [nd]$, let $X_i = \one\{e_i \in E_f\}$, then $\E[X_i] = W_H(f)$ and $W_{H'}(f) = \frac{1}{nd}\sum_{i \in [nd]}X_i$. It follows that $W_H(f) - W_{H'}(f) = \frac{1}{nd}\sum_{i \in [nd]}(\E[X_i] - X_i)$ is a sum of bounded mean-zero independent random variables, and therefore by Lemma~\ref{lem:Bernstein} we have
\begin{align*}
    \Pr[W_{H'}(f) \geq c^\ast - o_\epsilon(1) - 1/d] & \geq \Pr[W_{H}(f) - W_{H'}(f) \leq 1/d] \\
    & \geq 1 - \exp(-\Omega(nd(1/d)^2)) \\
    & \geq .99
\end{align*}
when $n$ is sufficiently large.

\paragraph{Soundness} For the soundness analysis, we would like to compare how the randomized reduction affects two cuts that are similar to each other. For this we need the following mild regularity property on the edge weights $w_H$.
\begin{lemma}\label{lem:prob_in_S}
	For every set $S \subseteq V_H$, 
	\begin{equation*}
	    \Pr_{e \sim w_H}[e \text{ has at least one endpoint in } S] \leq \frac{2|S|}{n}.
	\end{equation*}
\end{lemma}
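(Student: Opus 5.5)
Since each edge has two endpoints, the plan is a union bound over them followed by a marginal computation: we show that each endpoint of an edge drawn from $w_H$ is distributed \emph{uniformly} on $V_H = V\times\{0,1\}^R$. Then
\[
\Pr_{e\sim w_H}[e\cap S\neq\emptyset] \le \Pr[\text{first endpoint}\in S]+\Pr[\text{second endpoint}\in S] = \frac{|S|}{n}+\frac{|S|}{n} = \frac{2|S|}{n}.
\]

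By the symmetry of the sampling process in Figure~\ref{fig:reduction}, it suffices to treat the first endpoint $(v_1,\pi_{u,v_1}(x'_1))$. We examine its two coordinates in turn.

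For the vertex coordinate, $u$ is uniform on $U$ and $v_1$ is uniform on $N_\cG(u)$. Since $\cG$ is bi-regular, with left degree $D_U$ and right degree $D_V$, we get
\[
\Pr[v_1=v] = \frac{1}{|U|}\cdot\frac{D_V}{D_U} = \frac{1}{|V|},
\]
using the identity $|U|D_U = |V|D_V$. So $v_1$ is uniform on $V$.

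For the string coordinate, $x_1$ is uniform on $\{0,1\}^R$. Applying $(1-\mu)$-noise independently to each coordinate preserves the uniform distribution, so $x'_1$ is uniform. This holds regardless of $x_2$ and $x'_2$, since we only look at the marginal. The vector $x'_1$ is also independent of $(u,v_1)$. Because $\pi_{u,v_1}$ is a bijection on $[R]$, permuting the coordinates of a uniform string gives a uniform string, even after conditioning on $(u,v_1)$. Therefore the first endpoint is uniform on $V\times\{0,1\}^R$, and each fixed vertex is hit with probability exactly $1/n$.

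The only point needing care is the bi-regularity step that makes $v_1$ exactly uniform; everything else is routine. One caveat is that self-loop edges (both endpoints equal) are double counted by the union bound, but that only helps the inequality.
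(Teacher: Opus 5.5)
Your proposal is correct and follows essentially the same route as the paper, which observes that each endpoint of an edge drawn from $w_H$ is marginally uniform on $V_H$ and then applies a union bound over the two endpoints. Your version additionally spells out why this marginal is uniform: bi-regularity makes $v_1$ uniform on $V$, and the uniform string $x_1'$, which is independent of $(u,v_1)$, stays uniform under the coordinate permutation $\pi_{u,v_1}$.
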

\begin{proof}
    The marginal distribution on either endpoint of $e$ is uniform, so the lemma follows by applying the union bound.
\end{proof}

\begin{lemma}\label{lem:A_B_diff}
    Let $f, g: V_H \to \{0, 1\}$ be two cuts in $H$ such that $\|f - g\|_1 \leq \delta$. For every $t > 0$, we have 
	\begin{equation*}
	    \Pr[Y(f) - Y(g) \geq t] \leq \exp\left(-\frac{ndt^2}{4(\delta + t)}\right).
	\end{equation*}
\end{lemma}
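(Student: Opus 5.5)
We write $Y(f)-Y(g)$ as a normalized sum of independent, bounded, mean-zero terms and apply Bernstein's inequality (Lemma~\ref{lem:Bernstein}). For each sampled edge $e_i$, set
\[
Z_i := \one\{e_i \in E(f)\} - \one\{e_i \in E(g)\}, \qquad X_i := Z_i - \E[Z_i].
\]
Since $\E[Z_i] = W_H(f) - W_H(g)$, we get $Y(f) - Y(g) = \frac{1}{nd}\sum_{i\in[nd]} X_i$. The $X_i$ are independent because the edges are sampled independently, they have mean zero, and $|X_i| \le 2$.

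The key observation bounds the variance. $Z_i \neq 0$ only when $e_i$ is cut by exactly one of $f,g$. That requires $f$ and $g$ to disagree on at least one endpoint of $e_i$: if they agree on both endpoints, then $e_i$ is cut by $f$ if and only if it is cut by $g$. Let $S = \{x \in V_H : f(x)\neq g(x)\}$. Since $f,g$ are Boolean, $|S|/n = \|f-g\|_1 \le \delta$. Lemma~\ref{lem:prob_in_S} then gives
\[
\Pr[Z_i \neq 0] \le 2|S|/n \le 2\delta .
\]
Because $|Z_i|\le 1$, this yields $\E[X_i^2] \le \E[Z_i^2] \le 2\delta$.

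Now apply Lemma~\ref{lem:Bernstein} with $k = nd$ and $s = ndt$. The variance sum is at most $2\delta nd$ and $M = 2$. The exponent is therefore
\[
\frac{(ndt)^2}{2(2\delta nd + 2ndt/3)} \ \ge\ \frac{ndt^2}{4\delta + 4t/3}\ \ge\ \frac{ndt^2}{4(\delta + t)},
\]
which is exactly the claimed bound.

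There is no real obstacle here. The only points needing care are that the event $\{Z_i\ne 0\}$ is contained in the event that $e_i$ touches the disagreement set $S$, and that the bound $M=2$ on $|X_i|$ keeps the constants within the stated factor $4$.
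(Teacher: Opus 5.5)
Your proposal is correct and matches the paper's proof essentially step for step. You use the same centered differences of cut indicators, the same variance bound $2\delta$ via Lemma~\ref{lem:prob_in_S}, the same bound $M=2$, and the same Bernstein computation. You also spell out explicitly the containment of $\{Z_i \neq 0\}$ in the event that $e_i$ touches the disagreement set, which the paper leaves implicit.
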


\begin{proof}
    For every $i \in [nd]$, let $X_i = \one\{e_i \in E(f)\} - \one\{e_i \in E(g)\}$, then $X_i$ is a $\{-1, 0, 1\}$-valued random variable, and we have
	\begin{equation*}
	    \E[X_i^2] = \Pr[X_i \neq 0] \leq 2\delta
	\end{equation*}
	where the inequality is from Lemma~\ref{lem:prob_in_S} and the assumption that $|\supp(f) \triangle \supp(g)| \leq \delta n$. Let $Y_i = X_i - \E[X_i]$, then $|Y_i| \leq 2$ and $\Var[Y_i] \leq \E[X_i^2] \leq 2\delta$. Note that $Y(f) - Y(g) = \frac{1}{nd}\sum_{i \in [nd]} Y_i$ and therefore by Lemma~\ref{lem:Bernstein} we have
	\begin{align*}
	    \Pr[Y(f) - Y(g) \geq t] = \Pr\left[\sum_{i \in [nd]}Y_i\geq ndt\right] &\leq \exp\left(-\frac{(ndt)^2}{2(2nd\delta + 2ndt/3)}\right)\\
		& = \exp\left(-\frac{ndt^2}{4\delta + 4t/3}\right)\\
		& \leq \exp\left(-\frac{ndt^2}{4(\delta + t)}\right).
	\end{align*}
\end{proof}

\begin{lemma}\label{lem:sound_H'_near_opt}
    Let $\epsilon = \epsilon(d) > 0$  be sufficiently small, and assume $\opt(\cG) \leq \epsilon$. Let $\eta_0$ be as in Lemma~\ref{lem:near-opt}. For any $\eta > 0$ let \[\cA_\eta = \{f: V_H \to \{0, 1\} \mid W_H(f) \geq s^\ast - \eta\}.\] We have
    \begin{equation*}
        \Pr\left[\exists f \in \cA_{\eta_0 / 2}: W_{H'}(f) \geq s^\ast + \frac{2\log^2 d}{d}\right] \leq 2^{O(R^2|V|)}\cdot C_d \cdot \exp(-c_d \cdot n)
    \end{equation*}
    where $C_d, c_d > 0$ are some constants depending only on $d$.
\end{lemma}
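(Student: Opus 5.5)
The plan is a multi-scale (dyadic) union bound combined with one-step chaining to the nearest LTF-structured cut. By the KKMO soundness analysis (Lemma~\ref{lem:sound}), every $f$ satisfies $W_H(f)\le s^*+c_0(\eps)$, and $c_0(\eps)\ll 1/d$ once $\eps$ is small. We split $\cA_{\eta_0/2}$ into shells
\[
\cB_k:=\cA_{\eta_k}\setminus\cA_{\eta_{k-1}},\qquad \eta_k=2^k\eta_{\min},
\]
with $\eta_{\min}=\log^2 d/d$ and $k$ ranging up to $O(\log d)$ so that $\eta_k$ reaches $\eta_0/2$. We also put all cuts with $W_H(f)\ge s^*-\eta_{\min}$ into one innermost shell. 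For $f\in\cB_k$ the required deviation is $Y(f)\ge t_k:=\eta_{k-1}+2\log^2 d/d-c_0(\eps)\gtrsim \eta_k+\log^2 d/d$.

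Within a shell we chain to a structured cut. For $f\in\cA_{\eta_k}$, Lemma~\ref{lem:near-opt} gives $\E_v[\dist(f_v,\cL_R)]\le C\sqrt{\eta_k}+c(\eps)=:\delta_k$. Define $L^f:V_H\to\{0,1\}$ blockwise by letting $L^f_v$ be the LTF closest to $f_v$; then $\|f-L^f\|_1\le\delta_k$. Write
\[
Y(f)=Y(L^f)+\bigl(Y(f)-Y(L^f)\bigr),
\]
and require that either $Y(L^f)\ge t_k/2$ or $Y(f)-Y(L^f)\ge t_k/2$.

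The first event is handled by a union bound over block-LTF cuts. There are at most $|\cL_R|^{|V|}\le 2^{O(R^2)|V|}$ such cuts, using the VC bound on LTFs (Theorem~\ref{thm:ltf-count}). For each fixed one, Lemma~\ref{lem:Bernstein} gives a failure probability of $\exp(-\Omega(nd\,t_k^2))\le\exp(-\Omega(n\log^4 d/d))$. This union bound is the source of the factor $2^{O(R^2|V|)}$. Note that $n=|V|2^R$, so $2^{O(R^2|V|)}$ is subexponential in $n$ and is absorbed as stated.

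The second event needs a count of the pairs $(f,L^f)$. Given $L^f$, the number of $f$ within distance $\delta_k$ is at most $\binom{n}{\le \delta_k n}\le 2^{H(\delta_k)n}=2^{O(\delta_k\log(1/\delta_k))n}$. Lemma~\ref{lem:A_B_diff} bounds each pair by $\exp(-ndt_k^2/(4(\delta_k+t_k)))$. With $t_k\gtrsim\eta_k$ and $\delta_k\approx\sqrt{\eta_k}$, the exponent is
\[
nd\,\eta_k^2/\sqrt{\eta_k}=nd\,\eta_k^{3/2},
\]
which must beat the entropy term $n\sqrt{\eta_k}\log(1/\eta_k)$. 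That requires $d\,\eta_k\gtrsim\log(1/\eta_k)$, i.e.\ $\eta_k\gtrsim\log d/d$, and this holds with room to spare since $\eta_k\ge\log^2 d/d$. Summing over the $O(\log d)$ shells, multiplied by the LTF count, gives $2^{O(R^2|V|)}C_d e^{-c_dn}$.

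The main obstacle is this shell-wise balance. The Bernstein tail must use the variance $\delta_k$ of the symmetric difference, not a naive Hoeffding bound, so that the $\eta^{3/2}$ exponent dominates the $\sqrt{\eta}\log(1/\eta)$ entropy. This has to hold at every scale, including the innermost shell, where $t$ is dominated by $\log^2 d/d$ rather than by $\eta$. The choice $\eps=\eps(d)$ must also make $c(\eps),c_0(\eps)\ll \log^2 d/d$, so that $\delta_k$ stays of order $\sqrt{\eta_k}$ there.
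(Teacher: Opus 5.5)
Your proposal is correct and follows essentially the same argument as the paper. It uses dyadic shells starting at $\log^2 d/d$ and chains each near-optimal cut to a block-LTF cut via Lemma~\ref{lem:near-opt}. The deviation of the LTF cut is controlled by a union bound over the $2^{O(R^2|V|)}$ block-LTF cuts, and the difference term by Lemma~\ref{lem:A_B_diff} plus the Hamming-ball count, so that the $nd\,\eta^{3/2}$ exponent beats the $n\sqrt{\eta}\log(1/\eta)$ entropy at every scale. The only cosmetic difference is that you split the required deviation evenly as $t_k/2+t_k/2$, whereas the paper assigns roughly $\log^2 d/d$ to $Y(L)$ and $\eta^{(j-1)}$ to $Y(f)-Y(L)$.
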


\begin{proof}
    Let $\eta^{(0)} = \log^2(d) / d$ and for every $i \geq 0$ let $\eta^{(i+1)} = 2\eta^{(i)}$. Let $k$ be the maximum integer such that $\eta^{(k)} \leq \eta_0$.  Define $\cA^{(0)} = \varnothing$ and $\cA^{(j)} = \cA_{\eta^{(j)}}$ for every $j \in [k]$. We have
	\[
	    \cA_{\eta_0 / 2} \subseteq \cA^{(k)} = \left(\bigcup_{j = 1}^k \cA^{(j)} \setminus \cA^{(j-1)}\right).
	\]

    Consider the set of functions $\cL = \{L: V_H \to \{0, 1\} \mid \forall v \in V, L_v \text{ is an LTF}\}$. By Lemma~\ref{lem:near-opt}, if $\eta < \eta_0$, then for any cut $f \in \cA_\eta$, we have\footnote{Here we naturally extend the definition of $\dist$ to the space of functions $f:V_H \to \{0,1\}$ by defining $\dist(f,\cL) = \Ex_{v \sim V}[\dist(f_v,\cL_R)]$.} $\dist(f, \cL) \leq C\sqrt{\eta} + c(\epsilon)$ for some constant $C > 0$ and function $c(\eps)\downarrow 0$ as $\eps \downarrow 0$. By setting $\epsilon$ to be sufficiently small, for every $\eta \geq \eta^{(0)}$ and $f \in \cA_\eta$, we may write $\dist(f, \cL) \leq C\sqrt{\eta}$ by adjusting the value of $C$.
    For any $L \in \cL$ and any $j \in \{1, \ldots, k\}$ we define 
    \[N^{(j)}(L) = \left\{f: V_H \to \{0, 1\} \mid \|f - L\|_1 \leq C\sqrt{\eta^{(j)}}\right\}.
    \]
    The above discussion implies that $\cA^{(j)} \subseteq \cup_{L \in \cL} N^{(j)}(L)$.


	Define $\cB^{(j)}(L) = (\cA^{(j)} \setminus \cA^{(j-1)}) \cap N^{(j)}(L)$. We have therefore obtained the following decomposition:
	\begin{equation*}
	    \cA_{\eta_0 / 2} \subseteq \bigcup_{L \in \cL}\bigcup_{j \in [k]}\cB^{(j)}(L).
	\end{equation*}

	By Theorem \ref{thm:ltf-count}, we have $|\cL| = 2^{O(R^2|V|)}$ and therefore by the union bound, to establish our theorem it is sufficient to show that for every $L \in \cL$,
	\begin{equation*}
	\Pr\left[\exists f \in \bigcup_{j \in [k]}\cB^{(j)}(L):W_{H'}(f) \geq s^\ast + \frac{2\log^2 d}{d}\right] \leq  C_d \cdot \exp(-c_d \cdot n)
	\end{equation*}
	for some constants $C_d, c_d > 0$ depending only on $d$.
	
	Now fix some $L \in \cL$. Recall that for a cut $f$ we defined $Y_f = W_{H'}(f) - W_H(f)$. By Lemma~\ref{lem:A_B_diff}, for every $j \in [k]$ and $f \in \cB^{(j)}(L)$ we have
	\begin{align*}
	    \Pr[Y_{f} - Y_{L} \geq \eta^{(j-1)}] \leq \exp\left(-\frac{nd(\eta^{(j-1)})^2}{4\left(C\sqrt{\eta^{(j-1)}} + \eta^{(j-1)}\right)}\right) = \exp\left(-\Omega\left(d \cdot (\eta^{(j-1)})^{3/2}\right)n\right).
	\end{align*}
	By the union bound, we have
    \begin{align*}
        \Pr[\exists f \in \cB^{(j)}(L): Y_{f} - Y_{L} \geq \eta^{(j-1)} ] & \leq | \cB^{(j)}(L)|\cdot\exp\left(-\Omega\left(d \cdot (\eta^{(j-1)})^{3/2}\right)n\right)\\
		& \leq |N^{(j)}(L)|\cdot\exp\left(-\Omega\left(d \cdot (\eta^{(j-1)})^{3/2}\right)n\right)\\
		& \leq \exp\left(O\left(\sqrt{\eta^{(j)}}\log(1/ \eta^{(j)})\right)n\right)\exp\left(-\Omega\left(d(\eta^{(j-1)})^{3/2}\right)n\right)
    \end{align*}
	where we used Fact~\ref{fact:bin} in the last step. Since
	\[
	\frac{d(\eta^{(j-1)})^{3/2}}{\sqrt{\eta^{(j)}}\log(1/ \eta^{(j)})} \geq \frac{d \eta^{(0)}}{\sqrt{2}\log(1 / \eta^{(0)})}= \Omega\left(\log d \right),
	\]
	we have
	\[
	\Pr[\exists f \in \cB^{(j)}(L): Y_{f} - Y_{L} \geq \eta^{(j-1)}] \leq \exp\left(-\Omega\left(d(\eta^{(j-1)})^{3/2}\right)n\right).
	\]
	By the Chernoff bound, we have
	\begin{equation*}
	    \Pr[Y_L \geq \log^2(d)/d] \leq \exp(-\Omega(nd(\log^2(d)/d)^2)) = \exp(-\Omega(n\log^4(d)/d)).
	\end{equation*}
	Note that for any $j \in [k] \setminus \{1\}$ and $f \in \cB^{(j)}(L)$, 
	\[
	 W_{H'}(f) \geq s^\ast + \frac{2\log^2 d}{d} \Longrightarrow Y_f \geq \frac{2\log^2(d)}{d} + \eta^{(j-1)},
	\]
	and for any $f \in \cB^{(1)}(L)$, 
	\[
	 W_{H'}(f) \geq s^\ast + \frac{2\log^2 d}{d} \Longrightarrow Y_f \geq \frac{3\log^2(d)}{2d} = \frac{\log^2(d)}{2d} + \eta^{(0)}.
	\]
	(Here we set $\epsilon$ to be sufficiently small so that by Lemma~\ref{lem:sound} we have $W_{H}(f) \leq s^* + \frac{\log^2(d)}{2d}$ for every $f$.) So we may conclude using the union bound that 
	\begin{align*}
	    &\Pr\left[\exists f \in \bigcup_{j \in [k]}\cB^{(j)}(L):W_{H'}(f) \geq s^\ast + \frac{2\log^2 d}{d}\right] \\
		\leq\, & \Pr[Y_L \geq \log^2(d)/d] + \sum_{j = 1}^k \Pr[\exists f \in \cB^{(j)}(L): Y_{f} - Y_{L} \geq \eta^{(j-1)}] \\
		\leq\, & C_d \cdot \exp(- c_d \cdot n)
	\end{align*}
	where we used the fact that $k = O(\log d)$.
\end{proof}

We are now ready to prove the main soundness lemma for $H'$.

\begin{lemma}			\label{lem:H'-sound}
    Let $d$ be sufficiently large and $\epsilon = \epsilon(d) > 0$  be sufficiently small, and assume $\opt(\cG) \leq \epsilon$. Then when $n$ is sufficiently large, with probability at least $.99$ we have
    \begin{equation*}
        W_{H'}(f) \leq s^* + \frac{2\log^2(d)}{d}
    \end{equation*}
    for every cut $f: V_H \to \{0,1\}$.
\end{lemma}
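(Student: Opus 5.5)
We split the set of all cuts $f:V_H\to\{0,1\}$ into the near-optimal ones, $\cA_{\eta_0/2}$, and the far ones, with $W_H(f) < s^*-\eta_0/2$. Each part is handled by its own union bound, and we then take $n$ large so that both failure probabilities together are at most $0.01$.

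\textbf{Near-optimal cuts.} For $f\in\cA_{\eta_0/2}$, Lemma~\ref{lem:sound_H'_near_opt} already bounds the probability that some such $f$ has $W_{H'}(f)\ge s^*+2\log^2(d)/d$ by $2^{O(R^2|V|)}\cdot C_d\cdot e^{-c_d n}$. Here $R=R(\eps(d),d)$ and $d$ are fixed, and $n=|V|2^R$. So $2^{O(R^2|V|)} = 2^{O(R^2 2^{-R}) n}$, and for $R$ chosen large enough in terms of $d$ we have $O(R^2 2^{-R})\le c_d/2$. The whole term is then at most $C_d e^{-c_d n/2}$, which is below $0.005$ once $n$ is large. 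This step of choosing $R$ large relative to $c_d$ is the main point to check. It is legitimate because $c_d$ depends only on $d$ and not on $R$: tracing through the proof, $c_d$ comes from the $\exp(-\Omega(d(\eta^{(j-1)})^{3/2}) n)$ bounds and the Chernoff bound on $Y_L$, and both are $R$-independent.

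\textbf{Far cuts.} Fix $f$ with $W_H(f)<s^*-\eta_0/2$. Then $W_{H'}(f)\ge s^*+2\log^2(d)/d$ forces $Y(f)\ge \eta_0/2$. Now $Y(f)=\frac1{nd}\sum_i(\one\{e_i\in E(f)\}-W_H(f))$ is an average of $nd$ independent bounded mean-zero terms. By Lemma~\ref{lem:Bernstein} (or Hoeffding), $\Pr[Y(f)\ge\eta_0/2]\le \exp(-\Omega(\eta_0^2 nd))$, where $\eta_0$ is an absolute constant. A union bound over all $2^n$ cuts gives a total of at most $2^n e^{-\Omega(\eta_0^2 d)n}$. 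This is $e^{-\Omega(n)}$ once $d$ is a sufficiently large constant, which is exactly where the hypothesis ``$d$ sufficiently large'' enters.

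\textbf{Combining.} The two events cover every cut, so with probability at least $1-0.01$ no cut exceeds $s^*+2\log^2(d)/d$ in $H'$. That is the claim of Lemma~\ref{lem:H'-sound}.
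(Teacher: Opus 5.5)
Your proof is correct and follows the paper's decomposition: Lemma~\ref{lem:sound_H'_near_opt} handles the cuts in $\mathcal{A}_{\eta_0/2}$, and a Chernoff/Bernstein bound with a union bound over all $2^n$ cuts handles the rest. The hypothesis that $d$ is large enters exactly where you say.

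The one real difference is how you absorb the factor $|\mathcal{L}| = 2^{O(R^2|V|)}$. The paper argues that $n = |V|2^R \le 2^{R|V|}$ makes this factor at most $n^{O(R)}$, i.e.\ polynomial in $n$. That inequality actually gives the reverse comparison, $n^{O(R)} \le 2^{O(R^2|V|)}$. For fixed $R$ and $|V| \to \infty$, the bound $2^{O(R^2|V|)} = 2^{O(R^2 2^{-R})n}$ is exponential in $n$, so it cannot be dismissed as polynomial.

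Your treatment closes this step correctly. You rewrite the factor as $2^{O(R^2 2^{-R})n}$ and beat it with $e^{-c_d n}$ by taking $R$ large in terms of $d$. This is legitimate for two reasons:
\begin{itemize}
\item $c_d$ does not depend on $R$. The proof gives $c_d$ of order $\log^4(d)/d$, dictated by the Chernoff bound on $Y_L$, so $R \ge C\log d$ suffices.
\item The construction allows any $R \ge R(\epsilon)$ (Conjecture~\ref{conj:ugc}). Theorem~\ref{thm:h_tilde} and Lemma~\ref{lem:multi-edge_deletion} already take $R$ large depending on $d$.
\end{itemize}
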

\begin{proof}
     Let $\eta_0$ be as in Lemma~\ref{lem:near-opt}. By Lemma~\ref{lem:sound_H'_near_opt} we have
    \begin{equation}\label{eq:close}
        \Pr\left[\exists f \in \cA_{\eta_0 / 2}: W_{H'}(f) \geq s^\ast + \frac{2\log^2 d}{d}\right] \leq 2^{O(R^2|V|)}\cdot C_d \cdot \exp(-c_d \cdot n)
    \end{equation}
    for some constants $C_d, c_d > 0$ depending only on $d$. Since $n = |V| \cdot 2^{R} \leq 2^{R|V|}$, the term $2^{O(R^2|V|)} \leq n^{O(R)}$ is at most polynomial in $n$, so the right hand side of Equation~\eqref{eq:close} tends to 0 as $n \to \infty$. Now for every cut $f \not\in \cA_{\eta_0 / 2}$, we have $W_{H}(f) \leq s^* - \eta_0 / 2$ and therefore by the Chernoff bound we have
    \[
    \Pr\left[W_{H'}(f) \geq s^\ast + \frac{2\log^2 d}{d}\right] \leq \Pr\left[W_{H'}(f) - W_H(f) \geq \eta_0 / 2\right] \leq \exp(-\Omega(nd(\eta_0/2)^2)) = \exp(-\Omega(nd)),
    \]
    since $\eta_0$ is an absolute constant independent from $d$. Using the union bound we have
    \begin{equation}\label{eq:far}
    \Pr\left[\exists f \not\in \cA_{\eta_0 / 2}: W_{H'}(f) \geq s^\ast + \frac{2\log^2 d}{d}\right] \leq 2^n\exp(-\Omega(nd)).
    \end{equation}
    The lemma now follows by taking a union bound over Equation~\eqref{eq:far} and Equation~\eqref{eq:close}.
\end{proof}
\subsection{Analysis of $\tilde{H}$}

We now analyze the number of edges that must be deleted to turn $H'$ into a simple, bounded-degree graph $\tilde{H}$.

\begin{lemma}\label{lem:max_edge_weight}
    There exists some constant $c \in (0, 1)$ independent of $d$ such that $w_H(e) \leq \frac{2}{n} \cdot c^R$ for every $e \in E_H$.
\end{lemma}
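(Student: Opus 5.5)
The plan is to compute the weight of a single edge directly from the sampling process in Figure~\ref{fig:reduction}. Since $V_H = V\times\{0,1\}^R$ and $n = |V|2^R$, we fix an unordered pair of vertices $e=\{(v_1,y_1),(v_2,y_2)\}$. We bound the probability that the process outputs this ordered pair and then double it to account for orientation (and for the self-loop case).

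First, condition on the choice of $u$ and $v_1$. The probability that the first endpoint equals $(v_1,y_1)$ is at most $\Pr[v_1\text{ chosen}]\cdot 2^{-R}$. This holds because $x_1$ is uniform, hence $x'_1$ is uniform, and $\pi_{u,v_1}$ is a bijection on coordinates, so $\pi_{u,v_1}(x'_1)$ is uniform on $\{0,1\}^R$. By bi-regularity, the marginal of $v_1$ is uniform on $V$, so this factor is exactly $1/n$ after averaging over $u$.

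The key step is to bound the conditional probability that the second endpoint is $(v_2,y_2)$, given everything about the first endpoint. Fix $u,v_1,v_2$ and $x'_1$. Then $x'_2$ arises from $x'_1$ through independent per-coordinate channels: $x'_1\to x_1$ (a reverse $(1-\mu)$-noise step, which is still a product channel by the symmetry of the uniform prior), then $x_1\to x_2$ ($\rho^*$-noise), then $x_2\to x'_2$ ($(1-\mu)$-noise). The composition is a product channel in which each coordinate is flipped independently with probability $p=\frac{1-\rho'}{2}$, where $\rho'=\rho^*(1-\mu)^2$. Because $|\rho^*|\approx 0.689<1$, we have $p\in[\frac{1-|\rho^*|}{2},\frac{1+|\rho^*|}{2}]$, so $\max(p,1-p)\le c:=\frac{1+|\rho^*|}{2}<1$, uniformly in $\mu$. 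Applying the fixed bijection $\pi_{u,v_2}$ preserves point probabilities, so any specific $y_2$ is hit with probability at most $c^R$. Averaging over $v_2$ and all the conditioning gives $\Pr[\text{ordered pair}]\le \frac1n c^R$. Adding the reverse orientation yields $w_H(e)\le \frac{2}{n}c^R$.

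The only mildly delicate point is verifying that the conditional law of $x'_2$ given $x'_1$ is a product of independent coordinate flips, and that the flip probability stays bounded away from both $0$ and $1$ independently of $d$ (and of $\mu\le 1$). Once this is checked, the rest is direct. Note that $c$ depends only on $\rho^*$, as required.
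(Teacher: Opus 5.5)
Your proposal is correct and follows the paper's argument: the first endpoint is hit with probability exactly $1/n$ by uniformity of $x'_1$ and bi-regularity, and conditioned on it the second string comes from an $R$-fold product channel, so any fixed string has probability at most $c^R$; the factor $2$ accounts for the two orientations. You make explicit what the paper's one-line proof leaves implicit, namely that the composed per-coordinate correlation is $\rho^*(1-\mu)^2$ and that one may take $c=(1+|\rho^*|)/2$ uniformly in $\mu$.
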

\begin{proof}
    Given any $e = \{(v_1, x_1), (v_2, x_2)\} \in E_H$, the probability of sampling $(v_1, x_1)$ as the first endpoint is $1/n$ by symmetry, and conditioned on this we sample $x_2$ (ignoring the requirement on $v_2$) with probability $c^R$ for some $c \in (0, 1)$ since there are $R$ independent coordinates. 
\end{proof}

\begin{lemma}\label{lem:multi-edge_deletion}
    For each $e \in E(H)$, define $N_e = \sum_{i = 1}^{nd} \one\{e_i = e\}$ to be the number of times we include $e$ in $H'$. Then for any fixed $d > 0$, if $R$ is sufficiently large then we have 
    \[
    \sum_{e \in E(H)} \E[[N_e - 1]^+] \leq n.
    \]
\end{lemma}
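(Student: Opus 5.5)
The plan is to bound $\E[[N_e-1]^+]$ edge by edge using the binomial distribution of $N_e$, and then sum over edges using Lemma~\ref{lem:max_edge_weight}.

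For a fixed edge $e$, write $p_e = w_H(e)$. Since the $nd$ samples are independent draws from $w_H$, we have $N_e \sim \mathrm{Bin}(nd, p_e)$. The pointwise inequality $[N-1]^+ \le N(N-1)/2$ holds for every nonnegative integer $N$: both sides vanish for $N\in\{0,1\}$, and for $N\ge 2$ we have $N-1 \le N(N-1)/2$. Taking expectations,
\[
\E[[N_e-1]^+] \le \tfrac12\E[N_e(N_e-1)] = \tfrac12 (nd)(nd-1)p_e^2 \le \tfrac12 n^2d^2 p_e^2 .
\]

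Summing over $e\in E(H)$ gives
\[
\sum_e \E[[N_e-1]^+] \le \tfrac12 n^2 d^2 \sum_e p_e^2 \le \tfrac12 n^2 d^2 \,\max_e p_e \sum_e p_e = \tfrac12 n^2d^2 \max_e p_e ,
\]
since $\sum_e p_e = 1$. By Lemma~\ref{lem:max_edge_weight}, $\max_e p_e \le \frac{2}{n}c^R$, so the sum is at most $n\,d^2 c^R$.

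The only point needing care is the quantifier order. The constant $c\in(0,1)$ from Lemma~\ref{lem:max_edge_weight} is independent of $d$ and $R$. Hence, for fixed $d$, it suffices to choose $R$ large enough that $d^2c^R\le 1$, that is, $R \ge 2\log d/\log(1/c)$. This is compatible with the requirement that $R=R(\epsilon(d),d)$ be sufficiently large, and it gives the claimed bound of $n$.

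I expect no real obstacle. The step that needs checking is that Lemma~\ref{lem:max_edge_weight} bounds the total weight of each edge as an unordered pair, including self-loops and both orientations. At worst this costs a factor of $2$, which is absorbed by taking $R$ slightly larger.
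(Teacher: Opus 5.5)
Your proof is correct and is essentially the paper's argument. The paper writes $[N_e-1]^+ = N_e - \one\{N_e\ge 1\}$ and applies the bound $(1-p)^k \le 1-kp+\binom{k}{2}p^2$, which is exactly the expectation of your pointwise inequality $[N-1]^+\le\binom{N}{2}$; it reaches the same bound $\sum_e \binom{nd}{2}w_H(e)^2$ and then finishes as you do, via $\sum_e w_H(e)^2\le\max_e w_H(e)$ and Lemma~\ref{lem:max_edge_weight} with $R$ chosen so that $c^R\le 1/d^2$.
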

\begin{proof}
    By Lemma~\ref{lem:max_edge_weight}, we may take $R$ to be sufficiently large so that $\max_e w_e \leq \frac{2}{nd^2}$. For any $e \in E_H$, we have $N_e \sim \mathrm{Bin}(nd, w_e)$ and $[N_e - 1]^+ = N_e - \one\{N_e \geq 1\}$, so
    \begin{align*}
    \sum_{e \in E(H)}\E[[N_e - 1]^+] & = \sum_{e \in E(H)}\E[N_e - \one\{N_e \geq 1\}] \\
    & = nd - \sum_{e \in E}(1 - (1 - w_H(e))^{nd}) \\
    & \leq nd - \sum_{e \in E}\left(1 - \left(1 - nd w_H(e) + \binom{nd}{2}w_H(e)^2\right)\right) \\
    & = \sum_{e \in E}\binom{nd}{2}w_H(e)^2,
    \end{align*}
    where we used the inequality $(1 - p)^k \leq 1 - pk + \binom{k}{2}p^2$ for $0 < p < 1$ and integer $k > 0$. The lemma now follows by observing that $\sum_{e \in E}\binom{nd}{2}w_H(e)^2 \leq \max_{e}w_H(e)\sum_{e \in E}\binom{nd}{2}w_H(e) \leq \frac{2}{nd^2} \leq \frac{n}{\binom{nd}{2}}$.
\end{proof}

\begin{lemma}\label{lem:degree_deletion}
    Let $d$ be sufficiently large. For any vertex $v \in V(H)$, we have $\E[[\deg_{H'}(v) - 10d]^+] \leq 1$. 
\end{lemma}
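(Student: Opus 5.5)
The degree $\deg_{H'}(v)$ is a sum of $nd$ independent indicators, so a Chernoff tail bound plus the tail-sum formula for $\E[[X-10d]^+]$ will give the claim. For each sample $i \in [nd]$, let $X_i = \one\{e_i \ni v\}$, counting a self-loop twice if needed; the factor of $2$ this introduces is absorbed below. By the uniform-marginal observation used in Lemma~\ref{lem:prob_in_S}, applied with $S = \{v\}$, we have $\Pr[X_i = 1] \leq 2/n$. Hence $\deg_{H'}(v) = \sum_{i} X_i$ (up to the loop factor) is stochastically dominated by $2\cdot\mathrm{Bin}(nd, 2/n)$, or more simply by $\mathrm{Bin}(nd,2/n)$ plus a harmless loop correction, and its mean is at most $2d$.

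We then write
\[
\E\big[[\deg_{H'}(v) - 10d]^+\big] = \sum_{s \geq 1} \Pr\big[\deg_{H'}(v) \geq 10d + s\big].
\]
Each term is bounded by the multiplicative Chernoff bound $\Pr[X \geq (1+\delta)\mu'] \leq \big(e^{\delta}/(1+\delta)^{1+\delta}\big)^{\mu'}$, with $\mu' = 2d$ the upper bound on the mean. Since $10d + s \geq 5\mu'$, the standard form $\Pr[X \geq a] \leq (e\mu'/a)^{a}$ for $a \geq e\mu'$ gives
\[
\Pr[X \geq 10d + s] \leq (e/5)^{10d + s}.
\]
Summing the geometric series in $s$ yields at most $(e/5)^{10d}/(1 - e/5)$, which is below $1$ once $d$ is sufficiently large. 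In fact this already holds for all $d \geq 1$, since $(e/5)^{10} \approx 2\cdot 10^{-3}$.

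The only point needing care is the self-loop issue: the edge $(v,x_1)$–$(v,x_2)$ could have $x'_1 = x'_2$ and $v_1 = v_2$. If we count such a loop as contributing $2$ to the degree, we write $\deg = \sum_i (\one\{\text{first endpoint} = v\} + \one\{\text{second endpoint} = v\})$. This expresses the degree as a sum of two sums, each $\mathrm{Bin}(nd, 1/n)$ exactly, by uniformity of each marginal. Applying $\Pr[A + B \geq a] \leq \Pr[A \geq a/2] + \Pr[B \geq a/2]$ with mean $d$ for each piece, the threshold $5d + s/2 \geq 5d$ still dominates $e\cdot d$. The same geometric-sum argument then goes through with base $e/5$ and summation step $s/2$, costing only a constant factor. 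Since $d$ is taken sufficiently large, the sum is at most $1$. No part of this is a real obstacle; the main thing to check is the uniform-marginal claim, which is already established in the proof of Lemma~\ref{lem:prob_in_S}.
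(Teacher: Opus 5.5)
Your proof is correct and follows the same route as the paper: you bound the degree, a sum of $nd$ independent indicators with mean at most $2d$, with a Chernoff tail bound, and then sum the tails to control $\E[[\deg_{H'}(v)-10d]^+]$. Your version is slightly more careful than the paper's, which asserts $\deg_{H'}(v)\sim\mathrm{Bin}(nd,2/n)$ exactly: you use only $\Pr[e_i\ni v]\le 2/n$ for simple edges, and for possible self-loops you split the degree into the two endpoint counts, each exactly $\mathrm{Bin}(nd,1/n)$.
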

\begin{proof}
    By symmetry of $H$ we have $\deg_{H'}(v) \sim \mathrm{Bin}(nd, 2/n)$. By the Chernoff bound, we have $\Pr[\deg_{H'}(v) \geq k] \leq \exp(-\Omega(k))$ for any $k \geq 10d$. So we have \[\E[[\deg_{H'}(v) - 10d]^+] = \sum_{k \geq 10d} \Pr[\deg_{H'}(v) \geq k] \leq 1\] when $d$ is sufficiently large. 
\end{proof}

\begin{proof}[Proof of Theorem~\ref{thm:h_tilde}]
    By Lemma~\ref{lem:multi-edge_deletion} and Lemma~\ref{lem:degree_deletion}, the number of edges we delete in the second stage of the reduction is at most $2n$ in expectation, so by Markov's inequality with probability at least 0.99 we remove an $O(1/d)$ fraction of the total number of edges, in which case the value of the maximum cut in the graph also changes at most by $O(1/d)$. By Lemma~\ref{lem:H'}, the theorem then follows.
\end{proof}

\section{Robust Majority Is Stablest Theorem}           \label{sec:robust-mist}
In this section, we prove the following theorem.
\label{sec:MiST}
\begin{theorem}\label{thm:robust-MIST}
Fix $\rho\in(-1,0)$ and let $s_\rho=\arccos(\rho)/\pi$.  There are a constant
$C_\rho<\infty$ and a function $h_\rho:(0,1]\to[0,\infty)$ with
$h_\rho(\tau)=o_{\tau\downarrow 0}(1)$ such that the following holds.  Let $\eta\ge0$ and
$f:\ZeroCube\to[0,1]$.  Assume that $\max_i\Influence_i(f)\le \tau$ and
\[
        \Cut_\rho(f)\ge s_\rho-\eta .
\]
Then
\[
        \dist(f,\cL_R)\le C_\rho\sqrt\eta+h_\rho(\tau).
\]
\end{theorem}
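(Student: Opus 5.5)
The plan is to move to Gaussian space with the invariance principle, prove a sharp robust Borell inequality there for the negative correlation $\rho$, and then pull the Gaussian halfspace back to an LTF on the cube. Write $\Cut_\rho(f)=\E[f]+\E[f]-2\Stab_\rho(f)$ for $x\sim_\rho y$, so the hypothesis says $\Stab_\rho(f)\le \E[f]-\tfrac{s_\rho}{2}\cdot 2+\eta$ in the appropriate normalization.

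\textbf{Step 1: smoothing and invariance.} Replace $f$ by $T_{1-\gamma}f$ for a small $\gamma=\gamma(\tau)$. This changes $\Cut_\rho$ by $o_\tau(1)$, and since the hypothesis and the conclusion are both $L_1$-robust, the averaging Lemma~\ref{lem:ltf} and Corollary~\ref{cor:noise} let us return to $f$ at the end. The smoothed function has low influences and decaying high-degree Fourier tail. Let $F$ be its multilinear extension evaluated at Gaussians, truncated to $[0,1]$. The invariance principle of~\cite{MOO10}, already used for Theorem~\ref{thm:maj-stablest}, then gives
\[
\Cut^{\mathrm{Gauss}}_\rho(F)\ge s_\rho-\eta-h(\tau)
\qquad\text{and}\qquad
\E F=\E f .
\]
Note that $\Cut^{\mathrm{Gauss}}_\rho(F)\le s_\rho$ by Borell's theorem for negative $\rho$, since that case reduces to the positive case via $F(x)$ versus $1-F(-x)$.

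\textbf{Step 2: robust Borell inequality for negative $\rho$.} This is the main obstacle. The target is an Eldan-type statement: if $\Cut^{\mathrm{Gauss}}_\rho(F)\ge s_\rho-\eta$, then $F$ is $O(\sqrt\eta)$-close in $L_1$ to an indicator of a halfspace $\{\langle a,x\rangle\ge b\}$.

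I would first reduce to the balanced case. Note $\Cut^{\mathrm{Gauss}}_\rho(F)\le 2\E F-2\Gamma_{\rho}(\E F)$, and the maximum over the mean, attained at $\tfrac12$, is strict and quadratic. Hence $|\E F-\tfrac12|\le O(\sqrt\eta)$, and shifting the threshold costs only $O(\sqrt\eta)$.

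For the balanced case, write $\rho=-r$ with $r>0$. Then $\Cut_{\rho}(F)=\tfrac12-\Stab_{-r}(2F-1)/2$, and
\[
\Stab_{-r}(G)=\sum_k (-r)^k W_k(G).
\]
Near-optimality forces near-equality with $\mathrm{sign}$, whose value is $\tfrac{2}{\pi}\arcsin(r)$. The plan is to exploit the odd/even structure. Decompose $G=G_{\mathrm{odd}}+G_{\mathrm{even}}$; the even part contributes positively to $\Stab_{-r}$, so it must have small weight. Then apply Eldan's robust inequality at positive correlation $r$ to the odd part, or more precisely to $G$ against $-G(-\cdot)$.

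This last step is where the argument is delicate. First, I would quote~\cite{Eldan15}, whose bound $\delta\lesssim\sqrt{\text{deficit}}\cdot\mathrm{polylog}$ with the optimal exponent for positive correlation. Then I would check that the $\log$ factor can be absorbed, or else use the Mossel--Neeman stochastic-calculus argument directly. The aim is an $L_1$ distance of order $\sqrt\eta$ from a halfspace $\mathbf 1\{\langle a,x\rangle\ge 0\}$.

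\textbf{Step 3: back to the cube.} Given the Gaussian halfspace $\{\langle a,x\rangle\ge b\}$, take the cube LTF $L(x)=\mathbf 1\{\sum_i a_i\sigma(x_i)\ge b\}$. We need
\[
\|T_{1-\gamma}f-L\|_1\approx \|F-\mathbf 1_{\mathrm{halfspace}}\|_{L_1(\gamma)} .
\]
The function $|F-\mathbf 1|$ is not multilinear, so I would pass through $L_2$. Because $F\in[0,1]$ and the target is Boolean, $L_1$ and the quantity $\E[F(1-H)+H(1-F)]$ coincide. This is bilinear, so it is transferable by invariance provided $a$ is regular. If $a$ has a large coordinate, we instead use that $F$ has small influences, which forces $a$ to be regular up to $o_\tau(1)$ error. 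Here a Berry--Esseen estimate handles $\E H$ and $\E[FH]$, the latter via a joint invariance for $(F,\langle a,x\rangle)$ with smooth approximations of the threshold.

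Finally, Corollary~\ref{cor:noise} transfers the bound from $T_{1-\gamma}f$ back to $f$ when $f$ is Boolean. For $[0,1]$-valued $f$, I would argue directly via Lemma~\ref{lem:avg}, since $T_{1-\gamma}f$ is an average of shifts. All error terms that are not $O(\sqrt\eta)$ are collected into $h_\rho(\tau)$.
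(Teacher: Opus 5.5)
\textbf{Overall.} Your architecture is essentially the paper's: smooth, pass to Gaussian space by invariance, reduce the negative-correlation near-equality to a positive-correlation one via the odd/even structure, apply a robust Borell inequality, and pull the halfspace back to the cube. Your way of removing the smoothing at the end is a valid shortcut. Suppose $\|T_{1-\gamma}f-L\|_1$ is small for an \emph{unsmoothed} LTF $L$. Write $T_{1-\gamma}f$ as an average of shifts $f_z$ and use Lemma~\ref{lem:avg}. This gives $\dist(f,\cL_R)\le\min_z\|f-L_z\|_1\le\|T_{1-\gamma}f-L\|_1$, since the proof of Lemma~\ref{lem:ltf} never uses that $f$ is Boolean. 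That avoids Peres's bound.

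\textbf{The gap in Step~2.} Step~2 is where the exponent $\tfrac12$ is won or lost, and neither tool you name delivers it. Eldan's estimate loses a logarithm; the paper itself remarks that substituting it yields only $C_\rho\sqrt{\eta\log(1/\eta)}+h_\rho(\tau)$. The Mossel--Neeman argument gives only $\eta^{c}$ with a suboptimal $c$. The logarithm cannot be absorbed: $h_\rho$ may depend only on $\tau$, and for $h_\rho(\tau)^2\ll\eta\ll1$ the factor $\sqrt{\log(1/\eta)}$ is genuinely extra.

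What is missing is a Gaussian robust inequality with exponent exactly $\tfrac12$. The paper uses Heilman's penalty form (Theorem~\ref{thm:Heilman-penalty}), $c\,\eta_{t/2}(\Omega)\le\delta_t(\Omega)$, where $H$ is the halfspace whose centroid is parallel to that of $\Omega$. It then shows by a one-dimensional computation that this penalty term dominates $\gamma_R(\Omega\triangle H)^2$ (part~\ref{it:penalty-symdiff} of Lemma~\ref{lem:gaussian-facts}). These are statements about sets, so you also need two further pieces. One is the epigraph lift $A_F\subseteq\mathbb R^{R+1}$. The other is a separate bound $\E\min\{F,1-F\}\le C_t\Delta_{01}^{1/2}$, where $\Delta_{01}=\E[\Lambda_t(F(G),F(G'))-F(G)F(G')]$ with $G\sim_t G'$. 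Your proposal does not say how a set-based theorem applies to a $[0,1]$-valued $F$.

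\textbf{The gap in Step~3.} The same defect undermines your regularity claim. Small influences only give $|\widehat F_i|=o_\tau(1)$ for the first-level Hermite coefficients. A ``rotate $H$ if necessary'' theorem provides only some halfspace $\{\langle a,x\rangle\ge b\}$, $\|a\|=1$, at $L_1$-distance $\delta\approx\sqrt\eta$ from $F$. For such a halfspace, part~\ref{it:small-density-centroid} of Lemma~\ref{lem:gaussian-facts} yields only $\|\widehat F^{(1)}-\varphi(b)a\|\lesssim\delta\sqrt{\log(1/\delta)}$. This is not just a weak estimate: tilting a regular normal by $\delta e_1$ keeps the halfspace $O(\delta)$-close.

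So $a$ is regular only up to an $\eta$-dependent error, not $o_\tau(1)$. The obvious repair, rotating $a$ to the direction of $\widehat F^{(1)}$, costs $O(\sqrt{\eta\log(1/\eta)})$ in $L_1$. The paper avoids this by proving closeness to the halfspace whose normal is proportional to $\widehat F^{(1)}$ itself (Lemma~\ref{lem:gaussian-main}). Then $\|w\|_\infty=o_\tau(1)$ comes for free, which is exactly why the centroid-aligned form of Heilman's theorem is needed.

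\textbf{The balancing step.} You should not perturb $F$ to mean $\tfrac12$. That moves the cut value by $O(\sqrt\eta)$ and would degrade the final bound to $\eta^{1/4}$. There are two safe alternatives:
\begin{itemize}
\item Work with $G_{\mathrm{odd}}$, which is exactly balanced and satisfies $|G_{\mathrm{even}}|\le1-|G_{\mathrm{odd}}|$ pointwise.
\item Argue as the paper does: use $\Stab_{-\rho}(g)=2\Stab_\rho(g_{\mathrm{even}})-\Stab_\rho(g)\ge-\Stab_\rho(g)$ and $\Gamma^\pm_t(m)-\Gamma^\pm_t(0)\le C_tm^2$ with $m^2\le2\eta+o_\tau(1)$, then apply the robust theorem at the actual mean.
\end{itemize}
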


The theorem strengthens Theorem 1.10 of Mossel and Neeman~\cite{MosselNeeman15}, which gives an upper bound of the form $C_\rho \eta^c + h_\rho(\tau)$ for some unspecified constant $c>0$. For our application, the key point is the optimal exponent $\nicefrac12$, which can in fact be attained.

\begin{remark}[Optimality of Theorem \ref{thm:robust-MIST}]        \label{rem:robust-opt}
    We illustrate the asymptotic optimality of Theorem~\ref{thm:robust-MIST}.
    Consider the function $f: \{0, 1\}^R \to \{0, 1\}$ defined by \[(x_1, \ldots, x_R) \mapsto \begin{cases}
        \one\{\sum_{i = 1}^R x_i \geq R / 2\} & \text{if } |\sum_{i = 1}^R x_i - R / 2| \geq \eta \sqrt{R},\\
        1 - \one\{\sum_{i = 1}^R x_i \geq R / 2\} & \text{if }|\sum_{i = 1}^R x_i - R / 2| < \eta \sqrt{R}.
    \end{cases}\]
    It can be verified that this function indicates a quasi-random partition of $\cH^R_{\rho^*}$ which achieves a cut value of $s^\ast - O(\eta^2)$, and its $\ell_1$-distance to the closest linear threshold function $\one\{\sum_{i = 1}^R x_i \geq R / 2 + \eta\sqrt{R}\} $ is $\Omega(\eta)$. 
\end{remark}

The high-level approach behind the proof of Theorem~\ref{thm:robust-MIST} is similar to that of Mossel and Neeman:
\begin{enumerate}
\item Obtain a counterpart of Theorem~\ref{thm:robust-MIST} for Gaussian space: informally, if $f$ has noise stability close to the optimal Gaussian value, then it is close to a linear threshold function.
\item Using the invariance principle, translate this result from Gaussian space to the Boolean cube. Namely, we consider the smoothing $T_\lambda f$ of $f$ as a multilinear function and reinterpret it as a function on Gaussian space. Using item~(1), we show that the truncated multilinear function is close to a linear threshold function in Gaussian space, and then control the truncation error to get the result for the original multilinear function. We then transfer this result back to the Boolean cube.
\item From item~(2), we obtain that $T_{\lambda}f$ is close to $T_{\lambda} L_\lambda$ for some LTF $L_{\lambda}$. Choosing $\lambda$ appropriately and upper bounding the $L_1$ distances between $f$ and $T_\lambda f$, and between $L_\lambda$ and $T_{\lambda}L_\lambda$, we obtain the desired result.
\end{enumerate}

In~\cite{MosselNeeman15}, item~1 follows from a robust version of Borell's stability theorem proved by Mossel and Neeman in the same paper. However, the constant $c$ obtained there is suboptimal. Instead, we rely on a result of Heilman~\cite{Heilman21}, which achieves the optimal exponent of $\nicefrac12$. However, Heilman's result applies only to sets, or equivalently to indicator functions. While the extension is not particularly difficult, adapting the result to arbitrary functions $f:\mathbb{R}^R \to [0,1]$ requires considerable technical work. The argument uses the standard ``epigraph'' reduction, which represents such a function as a subset of $\mathbb{R}^{R+1}$ (note the extra dimension). Part of the technical work is to ensure that this representation preserves the relevant stability and coefficient information needed later in the proof.

It is not enough simply to replace the Gaussian input in~\cite{MosselNeeman15} by Heilman's sharper theorem. Several later steps in the proof of~\cite{MosselNeeman15} are designed only to preserve some positive power of $\eta$, and would therefore degrade the dependence on $\eta$ if used as black boxes. In particular, to go from Gaussian space back to the Boolean cube, they use $L_2$ estimates, replace the approximating LTF by one with modified coefficients, and remove the smoothing requirement. These steps substantially degrade the exponent. For our application, these steps have to be replaced or sharpened so that the $\eta^{1/2}$ dependence on the stability deficit from Heilman's Gaussian theorem is preserved.

The coefficient issue is especially important. The argument in~\cite{MosselNeeman15} only needs closeness of $T_{\rho}f$ to some LTF. Lemma~7.2 of~\cite{MosselNeeman15} does obtain a bound on the coefficients of this LTF in terms of $\tau$, but it does so by modifying the coefficients of the approximating LTF, and this modification substantially degrades the dependence on the stability deficit. In contrast, we need to show that $T_{\rho}f$ is close to an LTF whose coefficients are given by the Fourier coefficients of $f$ in the Hermite basis. This ensures that the coefficient upper bounds depend only on $\tau$, as otherwise we would lose the optimal $\eta^{1/2}$ dependence.

Another important difference from~\cite{MosselNeeman15} is that, to obtain a sharp bound, we need to upper bound most distances between the functions of interest directly in the $L_1$ metric, rather than in the $L_2$ metric as in~\cite{MosselNeeman15}. A frequent technical issue throughout the proof is to keep the parameter losses sharp at each stage, including Gaussian stability, invariance, smoothing, and the final $L_1$ comparisons, so that the exponent $\nicefrac12$ is not degraded. We also use several new technical tools, including the anti-concentration bound for regular linear forms from~\cite{MORS10} and Peres's bound on the noise sensitivity of arbitrary threshold functions~\cite{Peres04}.

\begin{remark}
    Our proof of Theorem \ref{thm:robust-MIST} is based on Heilman's~\cite{Heilman21} asymptotically optimal robust inverse of Borell's isoperimetric inequality. We point out that to the best of our knowledge, at the time of writing of this manuscript  \cite{Heilman21} has not appeared in a peer-reviewed journal yet. This dependence can be avoided by using the slightly weaker, published estimate of Eldan~\cite{Eldan15}; with this substitution and an otherwise unchanged argument, we can obtain a version of Theorem \ref{thm:robust-MIST} with the following error estimate:
    \[
        \dist(f,\cL_R)\le C_\rho\sqrt{\eta \log(1/\eta)}+h_\rho(\tau).
    \]
    Propagating this estimate through the soundness analyses replaces the $O(\log^2(d)/d)$-term with $O(\log^3(d)/d)$; hence Theorems \ref{thm:ugc-hard} and \ref{thm:max-cut-gap-1} continue to hold with $O(\log^3(d)/d)$ in place of $O(\log^2(d)/d)$ using \cite{Eldan15}'s bound.
\end{remark}

\subsection{Notation}
As the proof heavily uses Fourier analysis on the Boolean cube and Gaussian space, we first recall some standard definitions.

We use Fourier analysis on the $\pm1$-cube $\Cube$.  We identify $x\in\ZeroCube$ with
$\sigma\in\Cube$ by setting $\sigma_i=2x_i-1$.  Thus, for
$f:\ZeroCube\to\bbR$,
\[
        f(x)=\sum_{S\subseteq[R]}\widehat f(S)\chi_S(x),
        \qquad
        \chi_S(x)=\prod_{i\in S}\sigma_i(x).
\]
We use the same notation for functions defined directly on $\Cube$.  The
influence of coordinate $i$ is
\[
        \Influence_i(f)=\sum_{S:i\in S}\widehat f(S)^2 .
\]
For $\lambda\in[-1,1]$, the noise operator is
\[
        T_\lambda f=\sum_{S\subseteq[R]}\lambda^{|S|}\widehat f(S)\chi_S .
\]
For $t\in[-1,1]$, we write $X\sim_t Y$ if the sign coordinates of
$X,Y\in\ZeroCube$ are coordinatewise $t$-correlated.  For Gaussian vectors,
$G\sim_t H$ means that $G$ and $H$ are standard Gaussian vectors with
coordinatewise covariance $t$.  The noise stability of $f$ is
$\Stab_t(f)=\E[f(X)f(Y)] = \sum_{S\subseteq [R]} t^{|S|} \widehat f(S)^2$, where $X\sim_t Y$.
For $a,b\in[0,1]$, let $\Cut(a,b)=a + b -2ab$ and
$\Cut_\rho(f)=\E[\Cut(f(X),f(Y))]$, where $X\sim_\rho Y$.

Below we use the following parametrization of linear threshold functions (LTF) on $\ZeroCube$:
\[
        L(x)=\one\left(\sum_{i=1}^R w_i(2x_i-1)+\theta\ge 0\right) .
\]
Constant $0$ and constant $1$ functions are also LTFs.
Note that this parametrization slightly differs from that we used before (this parametrization is more convenient when we consider threshold linear functions on $\Cube$). Recall that $\cL_R$ is the set of LTFs on $\ZeroCube$.  Slightly abusing the  notation, we will also denote the set of LTFs on $\Cube$ by $\cL_R$. For a function
$h$ on either cube, write
\[
        \dist(h,\cL_R)=\inf_{L\in\cL_R}\|h-L\|_1 .
\]
We also use LTFs on Gaussian space; these are $0$-$1$ indicator function of affine halfspaces.
We write $\gamma_R$ for the standard Gaussian measure on $\bbR^R$.
Denote the density of the standard normal distribution by $\varphi$ and its
distribution function by $\Phi$.  For $t\in(-1,1)$, define
\[
        \Lambda_t(a,b)=\Pr(G\le\Phi^{-1}(a),\ H\le\Phi^{-1}(b)) \text{ where }
        G\sim_t H.
\]
Let $\Gamma_t(a)=\Lambda_t(a,a)$.  We also define $\Gamma_t^\pm$ (informally, a variant of $\Gamma_t(a)$ for $\pm 1$-valued indicator functions)
\[
        \Gamma_t^\pm(m)=4\Gamma_t\left(\frac{1+m}{2}\right)-2(1+m)+1.
\]

\subsection{Proof overview}
As outlined in the beginning of this section, the proof consists of the three main steps. In this section, we formally present these steps, Lemmas~\ref{lem:gaussian-main}, \ref{lem:smoothed-boolean-transfer}, and \ref{lem:smoothing-removal}, and then show that Theorem~\ref{thm:robust-MIST} follows from them.  We present the proofs of these lemmas in Appendix~\ref{apx:MiST}. Below, all constants in the $O$ and $o$ notation may depend on $\rho$.

First, we prove a Gaussian analogue of Theorem~\ref{thm:robust-MIST}.  
For a function $F:\bbR^R\to[0,1]$ on Gaussian space, write
\[
        \widehat F_i=\E[F(G)G_i]
        \qquad\text{and}\qquad
        \widehat F^{(1)}=(\widehat F_1,\ldots,\widehat F_R),
\]
where $G$ is a standard Gaussian vector.  This is the vector of first-order Fourier coefficients of $F$ in the Hermite basis.

\begin{lemma}[Analogue of Theorem~\ref{thm:robust-MIST} in Gaussian space]\label{lem:gaussian-main}
Fix $0<t<1$ and a closed interval $I\subset(0,1)$.  There are positive constants
$C$, $c_0$, and $\delta_0$ such that the following holds.  Let
$F:\bbR^R\to[0,1]$ be measurable. Denote $m=\E F$ and assume $m\in I$. Let
\[
        \Delta=\Gamma_t(m)-\E_{G\sim_t H}[F(G)F(H)].
\]
If $\Delta\le\delta_0$, then $\|\widehat F^{(1)}\|\ge c_0$ and there is an LTF
$L$ whose coefficient vector is proportional to $\widehat F^{(1)}$ such that
\[
        \|F-L\|_1\le C\sqrt\Delta .
\]
\end{lemma}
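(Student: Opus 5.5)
The plan is to reduce the statement to Heilman's robust Borell inequality for sets via the epigraph construction, and then recover the Hermite-coefficient structure of the approximating halfspace.

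\textbf{Step 1 (epigraph).} Given $F:\bbR^R\to[0,1]$, define $A=\{(x,y)\in\bbR^{R+1}: y\le\Phi^{-1}(F(x))\}$, so that $\gamma_{R+1}(A)=\E F=m$. I will compare $\Pr[(G,g)\in A,(H,h)\in A]$ for $(G,g)\sim_t(H,h)$ with $\E[F(G)F(H)]$. With independent last coordinates, $\Pr[g\le a,h\le b]=F(G)F(H)$. The $t$-correlated last coordinates give $\Lambda_t(F(G),F(H))$, and $\Lambda_t(a,b)\ge ab$ for $t>0$. This is the wrong direction, so a direct comparison fails. Instead I will use the standard identity that Borell applied to $A$ yields Borell for $F$ at correlation $t$ through the functional form. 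Concretely, apply Heilman to $A$ with its stability deficit
\[
\Delta_A=\Gamma_t(m)-\E\,\Lambda_t(F(G),F(H)),
\]
and relate $\Delta_A$ to $\Delta$. Note that $\E\Lambda_t(F(G),F(H))\ge\E[F(G)F(H)]$, so $\Delta_A\le\Delta$, which is the useful direction. Heilman then gives a halfspace $B=\{\langle a,x\rangle+a_{R+1}y\le\theta\}$ with $\gamma_{R+1}(A\triangle B)\le C\sqrt{\Delta_A}\le C\sqrt\Delta$, for $\Delta\le\delta_0$ and $m\in I$.

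\textbf{Step 2 (from the halfspace to an LTF in $x$).} If $a_{R+1}$ is not negligible, then $B$ is an epigraph of $x\mapsto\Phi((\theta-\langle a,x\rangle)/a_{R+1})$, a smooth function, and it is not itself an LTF. I will handle this by showing that near-equality forces $F$ to be close to $\{0,1\}$-valued. The deficit $\E[\Lambda_t(F,F')-FF']$ is at least $c\,\E[F(1-F)]$-type mass (because $\Lambda_t(a,b)-ab$ is bounded below by $c\min$ of the products of $a,b,1-a,1-b$ when $t>0$), so this is $O(\Delta)$. Combined with $\gamma(A\triangle B)=\E|F-\tilde F|$, where $\tilde F$ is the epigraph function of $B$, this forces the slope $a_{R+1}/\|a\|$ to be $O(\sqrt\Delta)$. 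Rounding $\tilde F$ to the indicator $L=\one(\langle a,x\rangle\le\theta')$ then costs $O(\sqrt\Delta)$ in $L_1$. I expect this to be the main obstacle: making the $F(1-F)$ lower bound and the slope bound quantitative while keeping the $\sqrt\Delta$ exponent.

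\textbf{Step 3 (coefficients).} Since $m\in I$, $L$ is a nondegenerate halfspace, so $\|\widehat L^{(1)}\|\ge c$ and $\widehat L^{(1)}$ is proportional to its normal vector $a$. The bound $\|F-L\|_1\le C\sqrt\Delta$ gives $\|\widehat F^{(1)}-\widehat L^{(1)}\|\le\|F-L\|_2\le (C\sqrt\Delta)^{1/2}$, which is too weak to conclude directly. Instead, I will replace $L$ by the halfspace $L'$ with normal $\widehat F^{(1)}$ and the same mean. The set of halfspaces is parametrized smoothly, and a first-order stability computation near the optimum, where Borell's optimizers are halfspaces and the deficit is quadratic in angular perturbations, shows two things. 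First, the angle between $a$ and $\widehat F^{(1)}$ is $O(\sqrt\Delta)$: the projection $\langle\widehat F^{(1)},u\rangle$ onto directions $u$ orthogonal to $a$ is controlled by $\E[(F-L)\langle G,u\rangle]$, which is at most $\|F-L\|_1\sqrt{\log(1/\|F-L\|_1)}$ via Gaussian tails. A sharper argument uses that $F-L$ is concentrated near the boundary. Second, $\|L-L'\|_1=O(\text{angle})$. Together with $\|\widehat F^{(1)}\|\ge\|\widehat L^{(1)}\|-o(1)\ge c_0$, this completes the proof. To avoid the log loss, I would bound the projection by bounding $\E[(F-L)\langle G,u\rangle]$ with $O(\Delta^{1/2})$ using a symmetric-difference localization: at distance $s$ from the boundary, the excess mass is at most $\Delta$ divided by the deficit cost, which is proportional to $s$.
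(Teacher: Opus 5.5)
Your Steps 1--2 are close to the paper's argument: you split $\Delta=\Delta_{\mathrm{Borell}}+\Delta_{01}$, pass to the epigraph, and round the section function. Step 3, however, has a genuine gap, and it sits where the lemma's real content lies.

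\textbf{Why Step 3 fails.} You start from an arbitrarily oriented halfspace $B$ and realign afterwards. The only bound you actually establish on the part of $\widehat F^{(1)}$ orthogonal to $a$ is $\|F-L\|_1\sqrt{\log(1/\|F-L\|_1)}$. Without further input from the deficit this is sharp: placing mass right next to $\partial L$ but at large $|\langle G,u\rangle|$ achieves it. So your argument only yields $\|F-L'\|_1=O(\sqrt{\Delta\log(1/\Delta)})$. That is exactly the loss the lemma must avoid; downstream it turns $\log^2 d/d$ into $\log^3 d/d$.

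The proposed repair does not close this, for two reasons.
\begin{itemize}
\item Localizing the symmetric difference near $\partial L$ in the $a$-direction gives no control of $\E[(F-L)\langle G,u\rangle]$ for $u\perp a$, which is what the angle depends on.
\item The localization inequality itself (symmetric-difference mass at distance $s$ from the boundary contributes $\gtrsim s$ times that mass to the deficit) cannot hold relative to an arbitrary nearby halfspace. Take $\Omega$ a halfspace and $L$ its rotation by a small angle $\phi$: the deficit is $0$, but that penalty is $\asymp\phi^2$. So using the inequality for your $L$ presupposes the alignment you are trying to prove. It holds only relative to the halfspace whose Gaussian centroid is aligned with the centroid of the set.
\end{itemize}

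\textbf{The missing idea.} Centroid alignment is the missing idea, and it makes Step 3 unnecessary.
\begin{itemize}
\item The $\sqrt{\cdot}$ estimate you invoke in Step 1 is, in the form available here, Theorem~\ref{thm:Heilman-penalty} combined with Lemma~\ref{lem:gaussian-facts}\ref{it:penalty-symdiff}. It compares $A$ with the halfspace $\widetilde B\subseteq\bbR^{R+1}$ of measure $m$ whose centroid is a positive multiple of the centroid $z_A$ of $A$.
\item The rotated statement, Theorem~\ref{thm:Heilman-set-robust}, only has exponent $c(t,a)<1/4$. The paper uses it merely to get $\|z_A\|\ge m_I/2$, which the penalty theorem requires.
\item By Fubini, the first $R$ coordinates of $z_A$ are exactly $\widehat F^{(1)}$, and the last is $-\E\varphi(\Phi^{-1}(F(G)))\le0$.
\item A halfspace's centroid is a positive multiple of its normal. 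Hence the $x$-part of the normal of $\widetilde B$ is a positive multiple of $\widehat F^{(1)}$, and its vertical sections are lower half-lines (or $\widetilde B$ ignores the last coordinate).
\item So the section function is $g=\Phi(\alpha+\langle a,x\rangle)$ with $a\propto\widehat F^{(1)}$, or an indicator with that normal. Then $\{g\ge1/2\}$ is already the required LTF.
\item The bound $\|\widehat F^{(1)}\|\ge c_0$ follows from $\|z_A\|\ge m_I/2$ together with $|(z_A)_{R+1}|=\E\varphi(\Phi^{-1}(F))=O(\sqrt{\Delta\log(1/\Delta)})$.
\end{itemize}

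\textbf{A correction to Step 2.} The claim $\E[F(1-F)]=O(\Delta)$ is false. For $F(x)=\Phi(x_1/\sigma)$ the epigraph is a halfspace and $\Delta\asymp\sigma^2$, but $\E[F(1-F)]\asymp\sigma$. The pointwise bound $\Lambda_t(p,q)-pq\ge c_t\,d(p)d(q)$ only gives $\E[d(F(G))d(F(H))]\ge(\E\,d(F))^2$, hence $\E\min\{F,1-F\}\le C\sqrt\Delta$. That is all you need. The rounding is then one line, $\E|g-\one_{\{g\ge1/2\}}|=\E\min\{g,1-g\}\le\E\min\{F,1-F\}+\E|F-g|$, with no slope estimate required.
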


The second step transfers the Gaussian result to the Boolean setting, assuming that the function is first smoothed by the noise operator $T_\lambda$. 

\begin{lemma}\label{lem:smoothed-boolean-transfer}
Fix $0<t<1$ and $\lambda\in[19/20,1)$.  Let
$u:\Cube\to[0,1]$.  Suppose that $\E u\in[1/4,3/4]$ and
$\max_i\Influence_i(u)\le\tau$.  Let
$D=\pos{\Lambda_t(\E u,\E u)-\Stab_t(u)}$.  Then there is an LTF $L_\lambda$, either
a  $0/1$-constant function or of the form
\[
        L_\lambda(\sigma)=\one\left(\langle w_\lambda,\sigma\rangle\ge\theta_\lambda\right)
        \qquad \text{where }\|w_\lambda\|_2=1 \quad \text{and}\quad
        \|w_\lambda\|_{\infty}=o_{\tau\downarrow 0}(1),
\]
such that
\[
        \|T_\lambda u-T_\lambda L_\lambda\|_1
        \le C_t\sqrt D+C_t\sqrt{1-\lambda}+o_{\tau\downarrow 0}(1).
\]
Here the $o_{\tau\downarrow 0}(1)$ terms may depend on $t$ and $\lambda$ and go to $0$ when $\tau \to 0$.
\end{lemma}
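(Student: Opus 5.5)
The plan is to reduce Lemma~\ref{lem:smoothed-boolean-transfer} to the Gaussian Lemma~\ref{lem:gaussian-main} through the invariance principle, applied to the smoothed multilinear extension. Write $u=\sum_S\widehat u(S)\chi_S$ and let $U(g)=\sum_S\lambda^{|S|}\widehat u(S)\prod_{i\in S}g_i$ be the multilinear polynomial of $T_\lambda u$ evaluated at a Gaussian $g\in\bbR^R$. Since $\lambda<1$, the tail of $T_\lambda u$ above degree $k$ has weight at most $\lambda^{2k}$. Also, $\max_i\Influence_i(T_\lambda u)\le\tau$. So the invariance principle of~\cite{MOO10} applies to the Lipschitz maps $\zeta(x)=\operatorname{dist}(x,[0,1])$ and to the pair functional $(a,b)\mapsto ab$ on $t$-correlated inputs. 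It gives three facts: $\E[\zeta(U(G))]=o_{\tau\downarrow0}(1)$, $\E U=\E u$, and $\E_{G\sim_t H}[U(G)U(H)]=\Stab_t(T_\lambda u)+o_{\tau\downarrow 0}(1)$.

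Let $F=\min(1,\max(0,U))$ be the truncation. Then $\|F-U\|_1=o_\tau(1)$, and $F$ takes values in $[0,1]$ with $\E F\in[1/4-o(1),3/4+o(1)]$. The next step is to bound the Gaussian deficit $\Delta=\Gamma_t(\E F)-\E[F(G)F(H)]$. We have $\Stab_t(T_\lambda u)=\Stab_{t\lambda^2}(u)$. By Hermite monotonicity, $\Stab_t(u)-\Stab_{t\lambda^2}(u)\le\sum_S t^{|S|}(1-\lambda^{2|S|})\widehat u(S)^2$. For $t<1$ fixed, this is at most $\max_k t^k(1-\lambda^{2k})=O_t(1-\lambda)$. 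Hence
\[
\Delta\le D+O_t(1-\lambda)+o_\tau(1),
\]
using also that $\Gamma_t$ is Lipschitz in the mean. If $\Delta>\delta_0$, then the right-hand side of the claimed bound is already at least $C_t\sqrt{\delta_0}\gtrsim 1$ once $C_t$ is large. In that case we take $L_\lambda$ to be a constant function, and the bound is trivial because all functions involved are $[0,1]$-valued. Otherwise, Lemma~\ref{lem:gaussian-main} gives a Gaussian LTF $L=\one(\langle w,g\rangle\ge\theta)$ with $w\propto\widehat F^{(1)}$, $\|\widehat F^{(1)}\|\ge c_0$, and $\|F-L\|_1\le C\sqrt\Delta$.

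The coefficient control comes from $\widehat F^{(1)}$ being close to the degree-one part of $U$, namely $(\lambda\widehat u(\{i\}))_i$. Indeed, $\|\widehat F^{(1)}-\widehat U^{(1)}\|_2\le\|F-U\|_2$, and this is $o_\tau(1)$ because truncation error is small in $L_2$ by hypercontractive tail bounds on the low-influence polynomial. Since $|\widehat u(\{i\})|\le\sqrt\tau$ and $\|\widehat F^{(1)}\|\ge c_0$, we get $\|w\|_\infty=o_\tau(1)$ after normalizing. To keep the bound in $L_1$ rather than losing a square root through $L_2$, I would replace $w$ by the normalized $\widehat U^{(1)}$ at a cost of $O(\|w-w'\|/c_0)$ in $L_1$ distance between the two halfspaces, by Gaussian anticoncentration of $\langle w,g\rangle$. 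This gives an LTF $L_\lambda$ with a regular coefficient vector and $\|U-L_\lambda\|_{L_1(\gamma)}\le C\sqrt\Delta+o_\tau(1)$.

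The main obstacle is transferring this $L_1$ closeness back to the cube, i.e.\ bounding $\|T_\lambda u-T_\lambda L_\lambda\|_{L_1(\text{cube})}$ without degrading to $\sqrt{\cdot}$ of the $L_1$ error. My plan is to apply invariance to the function $\Psi(a,b)=|a-b|$, which is Lipschitz, on the pair $(U,\,T_\lambda L_\lambda)$. For this I need $T_\lambda L_\lambda$ to be a low-influence multilinear object whose Gaussian counterpart is close to $L$. Since $\|w_\lambda\|_\infty=o_\tau(1)$, the LTF $L_\lambda$ is regular. By the anticoncentration bound for regular linear forms of~\cite{MORS10}, the Boolean and Gaussian distributions of $\langle w_\lambda,\cdot\rangle$ agree up to $o_\tau(1)$ in Kolmogorov distance. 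This lets me compare $T_\lambda L_\lambda$ on the cube with $U_\lambda$-smoothing of $L$ in Gaussian space. The remaining discrepancy between $U_\lambda L$ and $L$, and likewise between $T_\lambda L_\lambda$ and $L_\lambda$, is controlled by Peres's noise-sensitivity bound~\cite{Peres04}, which gives $O(\sqrt{1-\lambda})$. Then $\|T_\lambda u-T_\lambda L_\lambda\|_1$ is bounded by the Gaussian distance, plus $O(\sqrt{1-\lambda})$, plus $o_\tau(1)$. Combining with $\sqrt\Delta\le\sqrt D+O(\sqrt{1-\lambda})+o_\tau(1)$ yields the claimed bound.
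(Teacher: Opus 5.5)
\paragraph{Where the proposal stands.} Your argument coincides with the paper's up to the point where you have a Gaussian halfspace $H_\lambda=\one(\langle w_\lambda,x\rangle\ge\theta_\lambda)$ with $\|w_\lambda\|_\infty=o_{\tau\downarrow0}(1)$ and $\E_G|U-H_\lambda|\le C\sqrt\Delta+o_{\tau\downarrow0}(1)$. That part consists of: truncation with $o_{\tau\downarrow0}(1)$ error in $L_2$; the deficit bound via $\Stab_t(u)-\Stab_{t\lambda^2}(u)=O_t(1-\lambda)$; Lemma~\ref{lem:gaussian-main}; and coefficient control through the first Hermite vector. The detour through the normalized $\widehat U^{(1)}$ is unnecessary, since $w$ already has coordinates $O(\sqrt\tau+\|F-U\|_2)$. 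The gap is in the transfer back to the cube, which is the step you flag as the main obstacle.

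\paragraph{The gap.} Applying Lipschitz invariance to $|a-b|$ on the pair $(T_\lambda u,T_\lambda L_\lambda)$ is the paper's route; equivalently, apply it to $T_\lambda(u-L_\lambda)$ with $\psi=|\cdot|$. This turns the cube distance into $\E_G|U(G)-P_{L_\lambda,\lambda}(G)|$, where $P_{L_\lambda,\lambda}$ is the multilinear polynomial with coefficients $\lambda^{|S|}\widehat{L_\lambda}(S)$ evaluated at $G$. So you must show
$\E_G|P_{L_\lambda,\lambda}(G)-H_\lambda(G)|\le C\sqrt{1-\lambda}+o_{\tau\downarrow0}(1)$,
i.e., that the Gaussian extension of the smoothed Boolean LTF is $L_1(\gamma)$-close to the Gaussian LTF.

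Kolmogorov closeness of the laws of $\langle w_\lambda,X\rangle$ and $\langle w_\lambda,G\rangle$ does not give this. The polynomial $P_{L_\lambda,\lambda}(G)$ is not a function of $\langle w_\lambda,G\rangle$. Agreement of a one-dimensional marginal says nothing about the $L_1(\gamma)$ distance between two different functions on the same Gaussian space, let alone jointly with $U(G)$. ``Comparing $T_\lambda L_\lambda$ on the cube with the smoothing of $H_\lambda$ in Gaussian space'' is only a statement about distributions, which your triangle inequality cannot use. Peres's theorem does not bridge this either: it controls $\|L_\lambda-T_\lambda L_\lambda\|_1$ on the cube, and its Gaussian analogue is elementary, but neither links $P_{L_\lambda,\lambda}$ to $H_\lambda$.

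\paragraph{How to close it.} The paper fills this hole with Mossel--Neeman's Lemma~7.3 (Claim~\ref{clm:MN-ltf-smoothing}). Its content is an invariance statement for $T_\lambda L_\lambda$ taken jointly with the linear form.

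Alternatively, you can avoid $P_{L_\lambda,\lambda}$ altogether:
\begin{itemize}
\item Bound $\|T_\lambda u-T_\lambda L_\lambda\|_1\le\E_X|T_\lambda u(X)-L_\lambda(X)|+\|L_\lambda-T_\lambda L_\lambda\|_1$.
\item Handle the second term by Peres together with $\Pr(\langle w_\lambda,X\rangle=\theta_\lambda)=O(\|w_\lambda\|_\infty)$.
\item Compare the first term with $\E_G|U(G)-H_\lambda(G)|$ by a two-dimensional invariance principle for the pair $(T_\lambda u,\langle w_\lambda,\cdot\rangle)$. First replace $\one(s\ge\theta_\lambda)$ by a $1/\epsilon$-Lipschitz ramp, paying $O(\epsilon)$ on each side by anticoncentration; on the cube this is the MORS bound, valid for $\epsilon\ge\|w_\lambda\|_\infty$.
\end{itemize}

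Either way, what is missing is a joint invariance statement tying the smoothed function to the linear form, not just the marginal law of the linear form. In the paper's route you also need $\max_i\Influence_i(L_\lambda)=O(\|w_\lambda\|_\infty)$ for the invariance step on $u-L_\lambda$. That, rather than Kolmogorov closeness, is what the MORS anticoncentration bound is actually used for.
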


The third step removes the smoothing.
\begin{lemma}\label{lem:smoothing-removal}
There is a universal constant $C<\infty$ such that the following holds.  Let
$u:\Cube\to[0,1]$ and let $L$ be either a $0/1$-constant function or of the form
$L(\sigma)=\one\left(\langle w,\sigma\rangle\ge\theta\right)$, where $\|w\|_2=1$. Denote $\mu =\|w\|_\infty$; in the constant case, $\mu=0$.  Then, for every
$\lambda\in[0,1)$,
\[
        \|u-L\|_1
        \le \|T_\lambda u-T_\lambda L\|_1+C\sqrt{1-\lambda}+o_{\mu\downarrow0}(1).
\]
\end{lemma}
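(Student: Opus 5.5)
We prove Lemma~\ref{lem:smoothing-removal} by the triangle inequality,
\[
\|u-L\|_1\le \|u-T_\lambda u\|_1+\|T_\lambda u-T_\lambda L\|_1+\|T_\lambda L-L\|_1 .
\]
The middle term is what we want. The last term is the noise sensitivity of an LTF, which is small. The first term is the real difficulty: $u$ is an arbitrary $[0,1]$-valued function, so $\|u-T_\lambda u\|_1$ need not be small. The plan is to avoid bounding it directly. Instead we use the fact that $L$ is Boolean, together with an averaging argument in the spirit of Lemma~\ref{lem:avg}, to compare $\|u-L\|_1$ with $\|T_\lambda u-L\|_1$.

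\textbf{Step 1 (reducing to the noise sensitivity of $L$).} Write $T_\lambda u=\E_{z}[u_z]$, where $u_z(\sigma)=u(\sigma\odot z)$ and $z$ has i.i.d.\ coordinates that are $-1$ with probability $(1-\lambda)/2$. Since $L$ is $\{0,1\}$-valued, Lemma~\ref{lem:avg} gives
\[
\|T_\lambda u-L\|_1=\E_z\|u_z-L\|_1 .
\]
Changing variables $\sigma\mapsto\sigma\odot z$ gives $\|u_z-L\|_1=\|u-L_z\|_1$, where $L_z(\sigma)=L(\sigma\odot z)$. Hence
\[
\|u-L\|_1\le \|u-L_z\|_1+\|L_z-L\|_1 .
\]
Averaging over $z$ yields
\[
\|u-L\|_1\le \|T_\lambda u-L\|_1+\E_z\|L_z-L\|_1 .
\]
The last term equals $2\,\mathrm{NS}_{(1-\lambda)/2}(L)$, up to the convention that $\mathrm{NS}_\delta(L)$ denotes the probability that $L$ changes when each bit is flipped independently with probability $\delta$. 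Peres's bound~\cite{Peres04} gives $\mathrm{NS}_\delta(L)\le C\sqrt\delta$ for every LTF and every bias, so this term is $O(\sqrt{1-\lambda})$. In the constant case it vanishes.

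\textbf{Step 2 (from $T_\lambda u-L$ to $T_\lambda u-T_\lambda L$).} We have
\[
\|T_\lambda u-L\|_1\le\|T_\lambda u-T_\lambda L\|_1+\|T_\lambda L-L\|_1 .
\]
Again $T_\lambda L=\E_z L_z$, so $\|T_\lambda L-L\|_1\le \E_z\|L_z-L\|_1=O(\sqrt{1-\lambda})$ by the same Peres bound. Combining the two steps gives
\[
\|u-L\|_1\le\|T_\lambda u-T_\lambda L\|_1+C\sqrt{1-\lambda},
\]
with a universal constant $C$.

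\textbf{Where the term $o_{\mu\downarrow0}(1)$ enters.} The derivation above does not appear to need $\mu$ at all, since Peres's bound is uniform over all LTFs. I would therefore treat the $o_{\mu\downarrow 0}(1)$ term as slack that can be used if one prefers a CLT comparison to a Gaussian halfspace, where $\Pr[L\ne L_z]\le O(\sqrt{1-\lambda})+O(\mu)$ by Berry--Esseen, instead of citing~\cite{Peres04}.

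The step I expect to need the most care is the averaging identity in Step~1. It uses crucially that $L$ is exactly Boolean and that $u$ takes values in $[0,1]$, so the signs of $u_z-L$ agree for all $z$ at each point. I would check that the change of variables correctly keeps $L_z$ Boolean, so that Lemma~\ref{lem:avg} applies pointwise. I would also confirm that Peres's constant is uniform in the threshold $\theta$.
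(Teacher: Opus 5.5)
Your proof is correct and is essentially the paper's argument. The paper likewise compares $\|u-L\|_1$ with $\|T_\lambda u-L\|_1$ using only that $L$ is Boolean, via the identity $|u-L|=u+L-2uL$ and self-adjointness of $T_\lambda$; this is the same affine fact as Lemma~\ref{lem:avg} combined with your change of variables. It then concludes, as you do, with the triangle inequality and Peres's bound on $\Pr[L(X)\neq L(Y)]$.

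Your remark that the $o_{\mu\downarrow0}(1)$ term is slack is right: the paper spends it only on points of the separating hyperplane, which Peres's bound with the convention $\sgn(0)=0$ already covers. One harmless slip: with your convention, $\E_z\|L_z-L\|_1$ equals $\mathrm{NS}_{(1-\lambda)/2}(L)$ itself, not twice it.
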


\begin{corollary}\label{cor:positive-transfer}
Fix $0<t<1$.  Let $u:\Cube\to[0,1]$.  Assume that
$\max_i\Influence_i(u)\le\tau$ and $\E u\in[1/4,3/4]$.  Then
\[
        \dist(u,\cL_R)\le C_t\sqrt{\pos{\Gamma_t(\E u)-\Stab_t(u)}}+o_{\tau\downarrow 0}(1).
\]
\end{corollary}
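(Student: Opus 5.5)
The plan is to chain Lemma~\ref{lem:smoothed-boolean-transfer} and Lemma~\ref{lem:smoothing-removal}, and then choose the smoothing parameter $\lambda$ to balance the errors. Let $m=\E u\in[1/4,3/4]$. Since $\Lambda_t(m,m)=\Gamma_t(m)$ by definition, the quantity $D$ in Lemma~\ref{lem:smoothed-boolean-transfer} is exactly
\[
D=\pos{\Gamma_t(m)-\Stab_t(u)}.
\]
This is the deficit appearing in the corollary.

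Fix $\lambda\in[19/20,1)$, to be chosen later. Lemma~\ref{lem:smoothed-boolean-transfer} applies to $u$ and gives an LTF $L_\lambda$ with the following properties: it is either constant, or it has the form $\one(\langle w_\lambda,\sigma\rangle\ge\theta_\lambda)$ with $\|w_\lambda\|_2=1$ and $\|w_\lambda\|_\infty=o_{\tau\downarrow0}(1)$. It also satisfies
\[
\|T_\lambda u-T_\lambda L_\lambda\|_1\le C_t\sqrt D+C_t\sqrt{1-\lambda}+o_{\tau\downarrow0}(1).
\]
Next apply Lemma~\ref{lem:smoothing-removal} with $L=L_\lambda$ and $\mu=\|w_\lambda\|_\infty$ (or $\mu=0$ in the constant case). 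This yields
\[
\|u-L_\lambda\|_1\le C_t\sqrt D+(C_t+C)\sqrt{1-\lambda}+o_{\tau\downarrow0}(1)+o_{\mu\downarrow0}(1).
\]
Because $\mu=o_{\tau\downarrow0}(1)$ for fixed $t$ and $\lambda$, the last term is also $o_{\tau\downarrow0}(1)$. Here we compose the two vanishing functions: $o_{\mu\downarrow 0}(1)$ is a universal modulus evaluated at $\mu(\tau)\to0$. Since $L_\lambda\in\cL_R$, we get $\dist(u,\cL_R)\le\|u-L_\lambda\|_1$.

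The main subtlety is removing the $\sqrt{1-\lambda}$ term. A fixed $\lambda$ leaves a constant error, and the $o_{\tau\downarrow0}(1)$ terms depend on $\lambda$. The fix is to let $\lambda$ depend on $\tau$. For each $\lambda$, let $e_\lambda(\tau)$ denote the total vanishing error, which tends to $0$ as $\tau\to0$; taking the supremum over $u$ and increasing it in $\tau$ if needed, we may assume $e_\lambda$ is monotone. Define
\[
h(\tau)=\inf_{\lambda\in[19/20,1)}\bigl[(C_t+C)\sqrt{1-\lambda}+e_\lambda(\tau)\bigr].
\]
Then $\dist(u,\cL_R)\le C_t\sqrt D+h(\tau)$.

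It remains to show $h(\tau)\to0$. Given $\epsilon>0$, first pick $\lambda$ close enough to $1$ that $(C_t+C)\sqrt{1-\lambda}<\epsilon/2$. Then pick $\tau$ small enough that $e_\lambda(\tau)<\epsilon/2$. Hence $h(\tau)=o_{\tau\downarrow0}(1)$, with dependence on $t$ only, which gives the corollary. The diagonal choice of $\lambda$ is the only step needing care; everything else is direct substitution.
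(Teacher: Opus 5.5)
Your proposal is correct and follows the paper's proof essentially step for step. You chain Lemma~\ref{lem:smoothed-boolean-transfer} with Lemma~\ref{lem:smoothing-removal} for each fixed $\lambda\in[19/20,1)$, then take the infimum over $\lambda$ of $C\sqrt{1-\lambda}+\varepsilon_\lambda(\tau)$ and show it is $o_{\tau\downarrow 0}(1)$ by first fixing $\lambda$ near $1$ and then letting $\tau\to 0$.
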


\begin{proof}
Let $D=\pos{\Lambda_t(\E u,\E u)-\Stab_t(u)}$.  Fix
$\lambda\in[19/20,1)$.  By Lemma~\ref{lem:smoothed-boolean-transfer} and
Lemma~\ref{lem:smoothing-removal}, there is an LTF $L_\lambda$ such that
\[
        \|u-L_\lambda\|_1
        \le C_t\sqrt D+C_t\sqrt{1-\lambda}+\varepsilon_\lambda(\tau),
\]
where $\varepsilon_\lambda(\tau)\to0$ as $\tau\to0$, for every fixed
$\lambda<1$.  Define
\[
        h_t(\tau)=\inf_{19/20\le\lambda<1}
        \left(C_t\sqrt{1-\lambda}+\varepsilon_\lambda(\tau)\right).
\]
Then $\dist(u,\cL_R)\le C_t\sqrt D+h_t(\tau)$.  Also $h_t(\tau)=o_{\tau\downarrow 0}(1)$:
given $\varepsilon>0$, choose a fixed $\lambda<1$ with
$C_t\sqrt{1-\lambda}<\varepsilon/2$, and then take $\tau$ small enough so that
$\varepsilon_\lambda(\tau)<\varepsilon/2$.
\end{proof}

\begin{proof}[Proof of Theorem~\ref{thm:robust-MIST}]
Let $u:\Cube\to[0,1]$ be the version of $f$ on $\Cube$:
\[
        u(\sigma)=f\left(\frac{1+\sigma_1}{2},\ldots,\frac{1+\sigma_R}{2}\right).
\]
The influences and the cut values of $f$ and $u$ are the same.  Let 
$g=2u-1$ and $m=\E g$. 

\paragraph{Upper bound on mean $m$.} 
We first show that \(\E u\in[1/4,3/4]\), or equivalently \(|m|\le1/2\).
By the definition of $\Cut_\rho$:
$\Cut_\rho(f)=(1-\Stab_\rho(g))/2$.  Since
$s_\rho=(1-\Gamma_\rho^\pm(0))/2$, the assumption on the cut value of $f$ gives
\begin{equation}\label{eq:stab-condition}
        \Stab_\rho(g)\le \Gamma_\rho^\pm(0)+2\eta .
\end{equation}
Let
$g_{\mathrm{even}}(x)=(g(x)+g(-x))/2$ and
$g_{\mathrm{odd}}(x)=(g(x)-g(-x))/2$ be the even and odd parts of $g$, respectively; $g_{\mathrm{even}}$ contains the even-degree terms of the Fourier expansion of $g$ and $g_{\mathrm{odd}}$ contains the odd-degree terms. 
Clearly, $\E g_{\mathrm{odd}}=0$ and $\E g_{\mathrm{even}} = m$. Also, we have 
$$\Stab_\rho(g_{\mathrm{even}}) = \sum_{\substack{S\subseteq [R]\\|S| \text{ is even}}} \rho^{|S|} \widehat f(S)\geq \widehat f(\varnothing) = m^2.$$
Further, $\max_i\Influence_i(g_{\mathrm{odd}})\le \max_i\Influence_i(g) \leq 4\tau$.  Applying
Theorem~\ref{thm:maj-stablest} to $\tilde u=(g_{\mathrm{odd}}+1)/2$ gives (since $\rho < 0$):
$\Stab_\rho(g_{\mathrm{odd}})\ge\Gamma_\rho^\pm(0) - o_{\tau\downarrow 0}(1)$.  
Note that $\Stab_\rho(g)= \Stab_\rho(g_{\mathrm{even}})+\Stab_\rho(g_{\mathrm{odd}})$ and therefore
\[m^2 \leq \Stab_t(g_{\mathrm{even}}) = \Stab_\rho(g) - \Stab_\rho(g_{\mathrm{odd}}) \stackrel{\tiny\text{by \eqref{eq:stab-condition}}}{\le}
\bigl(\Gamma_\rho^\pm(0)+2\eta\bigr) - \bigl(\Gamma_\rho^\pm(0) -o_{\tau\downarrow 0}(1)\bigr)\le   2\eta+o_{\tau\downarrow 0}(1).
\]
If the right-hand side is larger than $1/4$, then the desired conclusion follows after increasing $C_\rho$ and $h_\rho$, since $L_1$ distances are at most $1$.  Thus we may assume below that the right-hand side is at most $1/4$.  Then $|m|\le1/2$ and thus $\E u\in[1/4,3/4]$.

Now, 
\[\Stab_{-\rho}(g) = \Stab_{\rho}(g_\mathrm{even}) - \Stab_{\rho}(g_\mathrm{odd}) = 2\Stab_{\rho}(g_\mathrm{even}) - \Stab_{\rho}(g)
\geq -\Stab_{\rho}(g) \geq -\Gamma_\rho^\pm(0) - 2\eta,\]
here we used that $\Stab_{\rho}(g_\mathrm{even}) \geq 0$.
As we will prove in Lemma~\ref{lem:gaussian-facts}, item \ref{it:lambda-quadratic},
$\Gamma_t^\pm(m)-\Gamma_t^\pm(0)\le C_t m^2$ for $t\geq 0$. Applying this claim with $t= -\rho$ and using that $\Gamma_\rho^\pm(0) = - \Gamma_t^\pm(0)$, we get
\[
        \pos{\Gamma_t^\pm(m)-\Stab_t(g)}\le C_\rho\eta+o_{\tau\downarrow 0}(1).
\]
Since
$\Gamma_t^\pm(m)-\Stab_t(g)=4(\Gamma_t(\E u)-\Stab_t(u))$, Corollary~\ref{cor:positive-transfer} gives
\[
        \dist(u,\cL_R)\le C_\rho\sqrt\eta+o_{\tau\downarrow 0}(1).
\]
Going back to $\ZeroCube$, we get
$\dist(f,\cL_R)\le C_\rho\sqrt\eta+o_{\tau\downarrow 0}(1)$.
\end{proof}

\part{Integrality Gap}
\section{Integrality Gap for Bounded-Degree Instances}			\label{sec:sdp-gap}

In this section, we show that Goemans-Williamson relaxation for Max-Cut with triangle inequalities, referred to as $\tSDP$ henceforth, also admits a $(\alpha_{GW} + \tilde{O}(1/d))$-integrality gap. 

\begin{theorem}[Restatement of Theorem \ref{thm:max-cut-gap-1}]				\label{thm:max-cut-gap}
	For every sufficiently large integer $d$, there exists a graph $G = (V,E)$ with maximum degree at most $d$ such that the following holds.
	\begin{itemize}
		\item {\bf Completeness}. The optimal value of $\tSDP$ for $G$ is at least $c^* - O(1/d)$.
		\item {\bf Soundness}. The optimal Max-Cut value of $G$ is at most $s^* + O\left(\frac{\log^2(d)}{d}\right)$.
	\end{itemize}
\end{theorem}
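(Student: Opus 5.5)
Our plan is to replace the Unique Games instance by the noisy hypercube itself, in the style of the Feige--Schechtman / KKMO integrality gaps, and then sparsify exactly as in Section~\ref{sec:main-proof}. Concretely, let $H$ be the weighted graph $\cH^R_{\rho^*}$ smoothed by $T_{1-\mu}$. Its vertex set is $\{0,1\}^R$, and edges are pairs $(x_1',x_2')$ with $x_1\sim_{\rho^*}x_2$ and $x_i'\sim_{1-\mu}x_i$. This is the reduction of Figure~\ref{fig:reduction} applied to a trivial ``game'' with a single label-cover edge and identity permutation. Alternatively, to keep soundness clean, we can use the standard UG integrality-gap instance of Khot--Vishnoi plugged into Figure~\ref{fig:reduction}. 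We then sample $nd$ edges to get $H'$ and prune to $\tilde H$ via (R1)--(R2), with degree bound $10d$ (rescale $d$ at the end).

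For soundness we would use the Khot--Vishnoi UG instance $\cG$ with $\opt(\cG)\le\eps$, since the hypercube alone has dictator cuts of value $c^*$. The key observation is that the entire soundness argument of Section~\ref{sec:main-proof} (Lemmas~\ref{lem:sound}, \ref{lem:near-opt}, \ref{lem:sound_H'_near_opt}, \ref{lem:H'-sound}) uses only $\opt(\cG)\le\eps$, bi-regularity, and the LTF count. None of it uses NP-hardness. So for $n$ large, with probability $0.99$, we get $\MaxCut(H')\le s^*+2\log^2 d/d$. The pruning lemmas (\ref{lem:multi-edge_deletion}, \ref{lem:degree_deletion}) then give $\MaxCut(\tilde H)\le s^*+O(\log^2 d/d)$. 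One point to check is that $2^{O(R^2|V|)}$ is still dominated by $e^{-c_d n}$. Since $n=|V|2^R$ with $R$ large, this holds as in Lemma~\ref{lem:H'-sound}.

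For completeness we need an SDP solution with triangle inequalities, not a cut. We take the Khot--Vishnoi vector solution for the noisy-hypercube tensored gadget. Each vertex $(v,x)$ receives a unit vector $w_{v,x}$ such that, for edges, $\langle w_{v_1,x_1'},w_{v_2,x_2'}\rangle$ has expectation at most $1-2(c^*-o_\eps(1))$; that is, the SDP value on $H$ is at least $c^*-o_\eps(1)$. Khot--Vishnoi also show these vectors satisfy the $\ell_2^2$ triangle inequalities, even in the stronger $\pm$ form required by $\tSDP$, after the standard tensoring trick (vectors $u^{\otimes 2k}$ with a small correction). This is the step we expect to be the main obstacle. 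Both $\mu$-smoothing and triangle inequalities must hold simultaneously, and we would rely on the KV construction, in which triangle inequalities hold for all triples with an $o_\eps(1)$ slack, fixed by mixing with a small multiple of an orthogonal identity component.

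Finally, we transfer the SDP value to $\tilde H$. The same vectors are feasible for $\tilde H$, since the constraints do not depend on the edges. The SDP objective on $H'$ is an average of $nd$ i.i.d.\ bounded terms with mean at least $c^*-o_\eps(1)$, so Bernstein (Lemma~\ref{lem:Bernstein}) gives value at least $c^*-1/d$ with high probability. Pruning removes an $O(1/d)$ fraction of edges, costing $O(1/d)$. Choosing $\eps=\eps(d)$ small and $R$, $n$ large, and taking a union bound, a single graph satisfies both items, giving the ratio $\alpha_{GW}+O(\log^2 d/d)$.
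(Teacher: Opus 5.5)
Your proposal is correct and follows essentially the same route as the paper: plug the Khot--Vishnoi Unique Games gap instance into the reduction of Figure~\ref{fig:reduction}, reuse the soundness analysis of Section~\ref{sec:main-proof} verbatim, and for completeness note that the KV vector solution stays feasible on $H'$ and $\tilde H$, with its objective preserved up to $O(1/d)$ by Bernstein and the pruning bound (the paper simply cites KV via Theorem~\ref{thm:kv-gap} for the triangle-inequality step you flag as the obstacle). Your justification that $2^{O(R^2|V|)}e^{-c_d n}=e^{-|V|(c_d 2^R-O(R^2))}$ is small because $R=R_\eps$ is large in terms of $d$ is the right way to argue that step, since in the gap setting $n$ is fixed by the KV instance and cannot be sent to infinity independently.
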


The proof of the above theorem largely follows the proof of Theorem \ref{thm:ugc-hard}; the only notable difference is that we use the integrality gap instances of Unique Games constructed by Khot and Vishnoi~\cite{KV15} as the starting point (instead of its UGC-based hardness), and we then apply the same reduction on this instance to construct instances that are bounded-degree integrality gap instances for Max-Cut. As in the proof of Theorem \ref{thm:ugc-hard}, our bounds for the max-cut value of the sub-sampled noisy cube will be the key ingredient driving the analysis of the gap instance. 

\begin{remark}
    It should be possible to derive quantitatively similar lower bounds for stronger relaxations such as Lov\'asz-Schrijver and Sherali-Adams by combining our techniques with the guarantees of the Unique Games gap instances given by \cite{RS09,KS09} instead.
\end{remark}

\subsection{Integrality Gap Instance}

To begin with, we recall the following integrality gap construction for the Max-Cut problem from Khot-Vishnoi \cite{KV15}.

{\bf Khot-Vishnoi Max-Cut Integrality Gap}. Our starting point is the following classical gap construction for Max-Cut, which in turn begins with the following SDP integrality gap for the Unique Games problem.

\begin{theorem}[Theorem 3.7~\cite{KV15}]			\label{thm:ugc-gap}
	For every sufficiently small constant $\eps \in (0,1)$, there exists a Unique Games instance $\cG_{\eps}$ on $R = R_\eps$ labels such that the following holds:
	\begin{itemize}
		\item The optimal value of the SDP relaxation with triangle inequalities is at least $1 - \eps$.
		\item The optimal integral value of $\cG_{\eps}$ is at most $\eps$.
	\end{itemize} 
\end{theorem}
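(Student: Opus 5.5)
Since this is Theorem 3.7 of~\cite{KV15}, in the paper we would simply cite it. If we had to reprove it, the plan is to rebuild the Khot--Vishnoi construction and check its two properties separately.

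\textbf{The instance.} Let $N=2^k$ and consider the group $\mathbb{F}_2^N$, viewed as functions $\{-1,1\}^k\to\{-1,1\}$. Let $\mathcal{H}\le\mathbb{F}_2^N$ be the subgroup spanned by the $N$ Hadamard (character) codewords.
\begin{itemize}
\item The vertices of $\cG_\eps$ are the cosets $[x]=x+\mathcal{H}$.
\item The label set is $[R]$ with $R=N$, indexed by the characters.
\item An edge is sampled by taking $x$ uniformly and $y$ a $(1-\eps)$-correlated copy of $x$.
\item The bijection between the labels of $[x]$ and $[y]$ is the one induced by the coset representatives, since translating by a Hadamard codeword permutes the characters.
\end{itemize}
This is a standard unique game with $R_\eps=N$ labels, where $N$ is taken large as a function of $\eps$.

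\textbf{Soundness.} Fix any labeling $\sigma$. For each label $i$ it defines a set $A_i=\{x : \sigma([x])\text{ corresponds to } i \text{ in } x\text{'s coordinates}\}\subseteq\mathbb{F}_2^N$, and these sets partition $\mathbb{F}_2^N$ into pieces of measure about $1/N$.
\begin{itemize}
\item An edge is satisfied exactly when $x$ and $y$ land in the same piece.
\item So the value of $\sigma$ is $\sum_i \Pr[x\in A_i,\,y\in A_i]$, which is the sum of the $(1-\eps)$-noise stabilities of the indicators $\mathbf 1_{A_i}$.
\item Each $A_i$ has measure $1/N$. By hypercontractivity (small-set expansion of the noisy cube, i.e.\ the Bonami--Beckner / KKL-type bound), each stability is at most $(1/N)^{1+\Omega(\eps)}$.
\item Summing over the $N$ sets gives value at most $N^{-\Omega(\eps)}\le\eps$ once $N\ge 2^{\Omega(\log(1/\eps)/\eps)}$.
\end{itemize}

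\textbf{Completeness.} Assign to the vertex $[x]$ and label $i$ the vector
\[
u_{[x],i}=\tfrac{1}{\sqrt N}\,\bigl(\chi_i\cdot x\bigr)^{\otimes s},
\]
suitably normalized so that it does not depend on the choice of coset representative. Along an edge, the matched vectors have inner product $(1-2\eps)^s\approx 1-2s\eps$. Rescaling $\eps$ then gives SDP value at least $1-\eps$.

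\textbf{Main obstacle: the $\ell_2^2$ triangle inequalities.} Khot--Vishnoi handle these by taking an even tensor power of normalized vectors. This keeps all pairwise inner products nonnegative and makes the squared-distance metric satisfy the triangle inequality up to a loss of $O(1/\log N)$. They then correct this small violation by mixing in a tiny multiple of orthogonal vectors, at negligible cost to the objective.

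I expect this triangle-inequality verification, together with the orthogonality constraints across labels at a single vertex, to be the delicate part; soundness is routine once hypercontractivity is available.
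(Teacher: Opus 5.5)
The paper gives no argument for this statement beyond the citation to \cite{KV15}, so your first sentence is exactly what the paper does. Your reconstruction of the instance and of soundness via small-set expansion is faithful. In fact each $A_i$ has measure exactly $1/N$, since every coset meets every $A_i$ in exactly one point. The completeness computation is also right up to bookkeeping. The matched inner product on an edge is the empirical correlation $\langle u_x,u_y\rangle$, with $u_x=N^{-1/2}x$. Under your own noise convention, Jensen's inequality (for even $s$) bounds the expectation of its $s$-th power below by $(1-\eps)^s$, not $(1-2\eps)^s$.

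As a reproof, however, your sketch leaves unproved precisely the step that the words ``with triangle inequalities'' refer to. You assert, without proof, that the even tensor power satisfies the $\ell_2^2$ triangle inequality only up to $O(1/\log N)$ and must be repaired by mixing in orthogonal vectors. That repair is unnecessary: for these vectors the inequality holds exactly, by a short argument.

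It suffices to treat $s=2$. For $s=2m$ we have $u_f^{\otimes s}=(u_f^{\otimes m})^{\otimes 2}$, and $u_f^{\otimes m}$ is again a unit vector all of whose entries have the same absolute value. All SDP vectors have the form $N^{-1/2}u_f^{\otimes 2}$ with $f\in\{\pm1\}^N$. For three of them, the inequality $\|a-c\|^2\le\|a-b\|^2+\|b-c\|^2$ is equivalent to $x^2+y^2\le 1+z^2$, where $x=\langle u_f,u_g\rangle$, $y=\langle u_g,u_h\rangle$, $z=\langle u_f,u_h\rangle$.

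Replacing $g$ or $h$ by its negative does not change the squares, so assume $x,y\ge 0$. If $x+y\le 1$, then $x^2+y^2\le (x+y)^2\le 1$. Otherwise, the Hamming triangle inequality $\frac{1-z}{2}\le\frac{1-x}{2}+\frac{1-y}{2}$ gives $z\ge x+y-1>0$. Hence $1+z^2-x^2-y^2\ge 1+(x+y-1)^2-x^2-y^2=2(1-x)(1-y)\ge 0$.

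Nonnegativity of all inner products is automatic for even $s$. The orthogonality of distinct labels at a vertex, which you also flag as delicate, is immediate. Indeed, $\langle u_{f\chi_S},u_{f\chi_T}\rangle=\E_z[\chi_S(z)\chi_T(z)]=0$ for $S\neq T$, and tensoring multiplies inner products.
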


Now, as shown in \cite{KV15}, by using the standard reduction from Unique Games to Max-Cut from Figure \ref{fig:reduction}, one can lift the above gap for Unique Games to a corresponding integrality gap for Max-Cut. 

\begin{theorem}[Implicit in \cite{KV15}, also see~\cite{Rag08}]			\label{thm:kv-gap}
	The following holds for all small enough $\eps \in (0,1)$. Let $\cG_\eps$ denote the integrality gap instance for Unique Games from Theorem \ref{thm:ugc-gap}. Let us apply the reduction from Figure \ref{fig:reduction} on $\cG_{\eps}$, and let $H_{\eps} = (V_{\eps},E_{\eps},w_{\eps})$ be the Max-Cut instance output by the reduction. Then, 
	\begin{itemize}
		\item $\tSDP(H_{\eps}) \geq c^* - \delta(\eps)$,
		\item $\MaxCut(H_\eps) \leq s^* + \delta(\eps)$,
	\end{itemize}
	where $\delta(\eps)$ is a continuous function satisfying $\delta(\eps) \downarrow 0$ as $\eps \downarrow 0$. Here, $\tSDP(H_{\eps})$ denotes the optimal value for the $\tSDP$ relaxation for the instance $H_{\eps}$.
\end{theorem}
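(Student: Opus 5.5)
The plan is to prove the two items of Theorem~\ref{thm:kv-gap} separately, following the KKMO analysis of the reduction in Figure~\ref{fig:reduction} but with the Unique Games gap instance $\cG_\eps$ of Theorem~\ref{thm:ugc-gap} in place of a hard instance.

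\textbf{Soundness.} This is the easy half. Since $\opt(\cG_\eps)\le\eps$, Lemma~\ref{lem:sound} applies verbatim. It gives $\MaxCut(H_\eps)\le s^*+c_0(\eps)$ with $c_0(\eps)\downarrow 0$. The KKMO soundness argument (Lemmas~\ref{lem:u-good} and~\ref{lem:good-cut}) uses only the integral value of the Unique Games instance and never any hardness assumption, so it holds for $\cG_\eps$ as is. The one point to check is that the bi-regularity assumed in the reduction holds for the Khot--Vishnoi instance, or can be arranged by standard regularization at negligible cost.

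\textbf{Completeness.} This is the main step and the expected obstacle: we must construct a $\tSDP$ solution for $H_\eps$ from the Unique Games SDP solution of value $\ge 1-\eps$.
\begin{itemize}
\item Let $\{u_{v,i}\}_{v,i\in[R]}$ be the Khot--Vishnoi vectors, with $\sum_i u_{v,i}$ a fixed unit vector $u_0$.
\item Follow \cite{KV15}: assign to the vertex $(v,x)\in V_H$ a vector built from $\sum_i \sigma(x(i))\,u_{v,i}$, tensored and normalized so that it is a unit vector. For example, take the normalized vector $\sum_i \sigma(x(i))\,u_{v,i}^{\otimes k}$, or rather the long-code vectors of \cite{KV15}, whose inner products are computed from $\langle u_{v,i},u_{w,j}\rangle$.
\item Compute $\E\langle V_{(v_1,\pi(x_1'))},V_{(v_2,\pi(x_2'))}\rangle$ over the edge distribution.
\end{itemize}
SDP matching of labels along edges, $\sum_i\langle u_{v_1,\pi(i)},u_{v_2,\pi'(i)}\rangle\ge 1-O(\eps)$ on average, together with the $\rho^*$-correlation of $x_1,x_2$ and the $(1-\mu)$ noise, makes the expected inner product at most $\rho^*+O(\eps+\mu)+o(1)$. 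Hence the SDP value is at least $c^*-\delta(\eps)$.

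The delicate part is the $\ell_2^2$ triangle inequalities. For these I would cite \cite{KV15} directly, since they verify triangle inequalities for exactly these tensored long-code vectors. Their construction raises inner products to a suitable tensor power and then applies a normalization, which changes edge correlations only by an amount tending to $0$ with $\eps$. Alternatively, one can appeal to Raghavendra's \cite{Rag08} general lift of SDP solutions through dictatorship tests, which preserves local distributions on triples and hence triangle inequalities, at error $\delta(\eps)$.

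\textbf{Collecting the errors.} Collect all error terms: $\eps$ and $\mu(\eps)$ from noise and label mismatch, $c_0(\eps)$ from soundness, and the tensoring loss. Each is continuous and tends to $0$, so their sum defines $\delta(\eps)$, which completes the proof.
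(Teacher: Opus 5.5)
Your proposal is correct and follows the route the paper itself relies on. The paper gives no separate proof and attributes the statement to \cite{KV15} (see also \cite{Rag08}): soundness is exactly the KKMO analysis of Lemma~\ref{lem:sound}, which uses only that the integral value of $\cG_\eps$ is at most $\eps$, and completeness with triangle inequalities is imported from the Khot--Vishnoi long-code vector solution, whose triangle-inequality verification both you and the paper defer to \cite{KV15}; your vector construction is consistent with this because it tensors the label vectors rather than the long-code vectors, so edge correlations stay near $\rho^*<0$.
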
	

For the next part, we will fix $H := H_\eps$ by choosing $\eps$ small enough to ensure the following conditions hold:
\begin{itemize}
    \item Firstly, we choose $\eps \leq \eps(d)$ to be small enough as a function of $d$ to ensure that $\delta(\eps) \leq 1/d$.
    \item Furthermore, for the fixed choice of $d$, we will ensure $\eps$ is small enough to ensure the alphabet set $[R_\eps]$ is large enough as specified in Theorem \ref{thm:h_tilde}.
\end{itemize}

{\bf Sub-sampling Step}. Next, let us apply the degree-reduction step (Steps (R1) and (R2)) from Section \ref{sec:main-proof} on $H$ to construct an instance $\tilde{H}$ with maximum degree at most $d$. We restate the steps for ease of reference below.

\begin{itemize}
	\item[(R1)] We first construct an intermediate multi-graph $H' = (V,E')$ by independently sampling $nd$ edges $e_1, \ldots, e_{nd}$ (where $n = |V(H)|$) from $H$ using the probability distribution induced by the edge weights $w_H$. Note that this multi-graph is unweighted.
	\item[(R2)] Then, we construct $\tilde{H} = (\tilde{V},\tilde{E})$ from $H'$ by first replacing each set of parallel edges by a single edge, and then arbitrarily deleting  additional edges, if necessary, so that every vertex has degree at most $10d$.
\end{itemize} 

The final graph $\tilde{H}$ will be the instance that will be used to certify the integrality gap of the SDP relaxation $\tSDP$.

\subsection{Proof of Theorem \ref{thm:max-cut-gap}}

We now establish the completeness and soundness guarantees of the instance $\tilde{H}$ constructed above. To begin with, observe that since the construction of $\tilde{H}$ from the Unique Games instance $\cG_{\eps}$ is identical to the reduction used in the proof of Theorem \ref{thm:ugc-hard}, as a consequence we can immediately obtain the following soundness guarantee of this construction.

\begin{lemma}[Soundness]			\label{lem:gap-sound}
	Let $\tilde{H} := \tilde{H}_{\eps}$ be the (random) Max-Cut instance described above. Then with probability at least $0.99$, we have that
	\[
	\MaxCut(\tilde{H}) \leq \frac{\arccos \rho^*}{\pi} + O\left(\frac{\log^2(d)}{d}\right).
	\]
\end{lemma}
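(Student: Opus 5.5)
The plan is to reuse the soundness analysis of Section~\ref{sec:main-proof}. That analysis needed only three inputs: the soundness bound of Lemma~\ref{lem:sound}, the robust inverse Lemma~\ref{lem:near-opt}, and the combinatorial estimates on $H$ (Lemmas~\ref{lem:prob_in_S}, \ref{lem:max_edge_weight}, and the degree and multi-edge deletion bounds). None of these used UGC-hardness; they only required that $\opt(\cG)\le\eps$ and that $R$ be large.

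The first step is to check that the Khot--Vishnoi instance $\cG_\eps$ from Theorem~\ref{thm:ugc-gap} meets these hypotheses. It has integral value at most $\eps$ and $R_\eps$ large by our choice of $\eps$. The reduction in Figure~\ref{fig:reduction} is applied verbatim. The only point needing care is bi-regularity, which Lemma~\ref{lem:bound-1} uses (in the step $\E_{v\sim V}=\E_u\E_{v\sim N(u)}$) and which Lemma~\ref{lem:prob_in_S} uses to make the endpoint marginals uniform. The KV instance is built on a Cayley-type structure and is regular, and I would note this explicitly. Failing that, I would pass to the standard bipartite ``label-extended'' version, which is bi-regular and preserves both the integral and SDP values up to constant factors in $\eps$.

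With these hypotheses in place, Lemma~\ref{lem:H'-sound} applies directly to $H'$ built from $H_\eps$. Fix $d$ large and $\eps=\eps(d)$ small enough that $c(\eps),c_2(\eps)\le \log^2(d)/(2d)$ and that the error from Lemma~\ref{lem:sound} is below $\log^2(d)/(2d)$. Then with probability at least $0.99$, every cut $f$ satisfies $W_{H'}(f)\le s^*+2\log^2(d)/d$, once $n$ is large. The union bound over $\cL$ costs $2^{O(R^2|V|)}$, and we need this to be dominated by $e^{-c_dn}$. Since $n=|V|2^R$ with $R$ fixed and the construction free to take $|V|$ large, this holds. The one obstacle is making sure $|V|$ can be taken large for fixed $\eps$. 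I would handle it by taking disjoint copies of $\cG_\eps$, which preserves both the SDP and integral values and regularity, and so makes $n\to\infty$ with $R$ fixed.

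Finally, pass from $H'$ to $\tilde H$ exactly as in the proof of Theorem~\ref{thm:h_tilde}. By Lemmas~\ref{lem:multi-edge_deletion} and~\ref{lem:degree_deletion}, the expected number of deleted edges is at most $2n$. Since $H'$ has $nd$ edges, Markov's inequality shows that with high probability an $O(1/d)$ fraction is removed. This changes the normalized max-cut by at most $O(1/d)$, and $O(1/d)$ is absorbed into $O(\log^2(d)/d)$. A union bound over the two failure events gives probability at least $0.99$, after adjusting the constants in Lemma~\ref{lem:H'-sound} to probability $0.995$. Since $\arccos\rho^*/\pi=s^*$, this yields the claimed bound.
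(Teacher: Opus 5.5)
Your route is the paper's: the paper proves this lemma simply by saying the argument is identical to the soundness half of Theorem~\ref{thm:h_tilde}. Your check of bi-regularity is a fair point the paper glosses over; it is used in Lemma~\ref{lem:bound-1}, in Lemma~\ref{lem:prob_in_S}, and in the symmetry claims of Lemmas~\ref{lem:max_edge_weight} and~\ref{lem:degree_deletion}. The Khot--Vishnoi instance is a weighted Cayley graph, so its bipartite double cover is bi-regular. It has integral value at most $2\eps$: label each vertex by the label of a uniformly random one of its two copies. Your fallback for a non-regular instance would not give bi-regularity, but it is not needed.

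One step, however, is misjustified. You argue that the union bound over $\cL$ is dominated because $|V|$ can be made large with $R$ fixed, and you add disjoint copies for this purpose. But both exponents are linear in $|V|$. Since $|\cL| = |\cL_R|^{|V|} = 2^{O(R^2)|V|} = 2^{O(R^2 2^{-R})\,n}$,
\[
|\cL|\cdot C_d\, e^{-c_d n} = C_d \exp\bigl(-(c_d - O(R^2 2^{-R}))\,n\bigr).
\]
So enlarging $|V|$ only helps once $R^2 2^{-R} \ll c_d$ already holds, and it makes things worse otherwise. The condition actually needed is that $R$ be large in terms of $d$. The proof of Lemma~\ref{lem:sound_H'_near_opt} gives $c_d = \Omega(\log^4(d)/d)$, so $R \ge C\log d$ suffices. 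A condition of the same kind is also needed for $\max_e w_H(e)\le 2/(nd^2)$ in Lemma~\ref{lem:multi-edge_deletion}.

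This condition is available here. Any Unique Games instance with integral value at most $\eps$ has $R \ge 1/\eps$, since a uniformly random labeling satisfies a $1/R$ fraction of constraints in expectation, and $\eps=\eps(d)$ is chosen after $d$. The paper's own sentence in Lemma~\ref{lem:H'-sound}, that $2^{O(R^2|V|)}\le n^{O(R)}$, also does not follow as written. What makes that step work is the hypothesis in Theorem~\ref{thm:h_tilde} that $R=R(\eps(d),d)$ is large. Once you state this condition explicitly, taking $n$ large (your disjoint copies are one way to do this) completes the argument, and the rest of your proposal goes through as written.
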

The proof of the above lemma is identical to that of the soundness item in Theorem \ref{thm:h_tilde} and hence we omit it here. On the other hand, note that the nature of the completeness guarantee here is qualitatively different -- i.e.,  it needs to certify that the sub-sampled instance approximately preserves the optimal SDP value of $H$. However, this difference doesn't introduce any additional complications in the proof, and we can establish it using the same techniques as in the completeness direction of Theorem \ref{thm:h_tilde}. 

\begin{lemma}[Completeness]				\label{lem:gap-comp}
	Suppose $\eps \leq \eps(d)$ is small enough. Then with probability at least $0.98$, we have 
	\[
	\tSDP(\tilde{H}) \geq \frac{1 - \rho^*}{2} - O(1/d).
	\]
\end{lemma}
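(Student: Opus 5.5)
The plan is to transfer an optimal SDP solution of $H$ to $\tilde{H}$ unchanged, and then show its objective value on $\tilde{H}$ stays close to its value on $H$. Let $\{v_a\}_{a \in V}$ be an optimal $\tSDP$ solution for $H$. By Theorem~\ref{thm:kv-gap} and our choice $\delta(\eps) \le 1/d$, its value is
\[
\Ex_{\{a,b\} \sim w_H}\Big[\tfrac{1 - \langle v_a, v_b\rangle}{2}\Big] \geq c^* - 1/d.
\]
Since $\tilde{V} = V$ and the constraints of $\tSDP$ (unit norms and $\ell_2^2$-triangle inequalities) depend only on the vertex set, the same vectors are feasible for $\tSDP(\tilde{H})$. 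It therefore suffices to lower bound the normalized objective
\[
\frac{1}{|\tilde{E}|}\sum_{\{a,b\} \in \tilde{E}} \frac{1 - \langle v_a, v_b\rangle}{2}.
\]

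The first step handles the multigraph $H'$. For each sampled edge set $X_i = \frac{1 - \langle v_a, v_b\rangle}{2}$ where $e_i = \{a,b\}$. These are i.i.d., $[0,1]$-valued, with mean equal to the SDP value on $H$. The normalized SDP objective on $H'$ is $\frac{1}{nd}\sum_i X_i$. Applying Lemma~\ref{lem:Bernstein} (or Hoeffding) to $\E X_i - X_i$ with deviation $1/d$ gives failure probability $\exp(-\Omega(n/d))$, which is at most $0.01$ once $n$ is large enough. So with probability $0.99$ the objective on $H'$ is at least $c^* - 2/d$. No union bound is needed because we fix a single solution, which makes this easier than the soundness direction.

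The second step passes from $H'$ to $\tilde{H}$. The deletions in (R2) come from two sources. Merging parallel edges is controlled by Lemma~\ref{lem:multi-edge_deletion}, and degree truncation by Lemma~\ref{lem:degree_deletion}. Together the expected number of deleted edges (counted with multiplicity) is at most $2n$. By Markov's inequality, with probability $0.99$ at most $200n$ edges are removed, an $O(1/d)$ fraction of the $nd$ edges.

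It remains to bound the effect of deleting a set of $D \le O(n)$ edges out of $nd$. Each removed term lies in $[0,1]$, so the unnormalized objective drops by at most $D$. Then
\[
\frac{\text{objective on } H' - D}{nd - D} \;\geq\; \frac{\text{objective on } H'}{nd} - \frac{D}{nd},
\]
which follows from the identity $\frac{A-D}{N-D} - \frac{A}{N} = -\frac{D(N-A)}{N(N-D)} \ge -\frac{D}{N}$, valid since $A \le N$. The total loss is therefore $O(1/d)$. A union bound over the two failure events gives probability at least $0.98$ that $\tSDP(\tilde{H}) \geq c^* - O(1/d) = \frac{1-\rho^*}{2} - O(1/d)$.

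The only mild obstacle is confirming that Lemmas~\ref{lem:multi-edge_deletion} and~\ref{lem:degree_deletion} apply to the Khot--Vishnoi instance $H_\eps$. This needs $R_\eps$ large enough, which is ensured by our choice of $\eps$, and uniform marginals on the endpoints, which the reduction in Figure~\ref{fig:reduction} still provides.
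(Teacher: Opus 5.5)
Your proof is correct and follows the paper's argument: you keep the same optimal vectors (still feasible), bound the value on $H'$ by a Bernstein/Hoeffding estimate on the $nd$ i.i.d.\ samples, and pass to $\tilde{H}$ via Markov's inequality on the expected $\le 2n$ deletions from Lemmas~\ref{lem:multi-edge_deletion} and~\ref{lem:degree_deletion}, finishing with a union bound. One small correction: $\frac{A-D}{N-D}\ge\frac{A}{N}-\frac{D}{N}$ requires $D\le A$, not $A\le N$ (it fails for $A=0$). That condition does hold here because $A\approx c^*N$ and $D=O(N/d)$ for large $d$; alternatively, since the actual removed mass $S$ satisfies $S\le\min\{A,D\}$, you get $\frac{A-S}{N-D}\ge\frac{A-S}{N}\ge\frac{A-D}{N}$ unconditionally.
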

\begin{proof}
	To begin with, note that from the guarantee of Theorem \ref{thm:kv-gap} we know that $\tSDP(H) \geq \frac{1 - \rho^*}{2} - \delta(\eps)$, and let $\{v_i\}_{i \in V}$ be an optimal SDP solution achieving the optimal value $\tSDP(H)$. Now, recall that $\tilde{H}$ is constructed from $H$ as follows\footnote{We recall the steps here just to make it easier to verify that the SDP solution for $H$ continues to be a feasible SDP solution for $H'$ and $\tilde{H}$.}:
	
	\begin{itemize}
		\item[(R1)] We first construct an intermediate multi-graph $H' = (V,E')$ by independently sampling $nd$ edges $e_1, \ldots, e_{nd}$ from $H$ (where $n = |V(H)|)$ using the probability distribution induced by the edge weights $w_H$. Note that this multi-graph is unweighted.
		\item[(R2)] Then, we construct $\tilde{H} = (\tilde{V},\tilde{E})$ from $H'$ by first replacing each set of parallel edges by a single edge, and then arbitrarily deleting  additional edges, if necessary, so that every vertex has degree at most $10d$.
	\end{itemize} 
	
	Observe that the edges in $H'$ are a multiset of a subset of the edges in $H$, the vector solution $\{v_i\}_{i \in V}$ is also a feasible solution for the SDP relaxation for $H'$. Furthermore, since $\tilde{H}$ is a sub-graph of $H'$, the same conclusion holds for $\tilde{H}$ as well. To conclude the proof, we will need to bound the objective value achieved by the vector solution $\{v_i\}_{i \in V}$ for the SDP relaxation for $H'$. As in the proof of Theorem \ref{thm:ugc-hard}, this can be done in a couple of steps. 
	
	Firstly, analogous to the completeness analysis from Section \ref{sec:H'-analysis}, we have the following lemma.
	
	\begin{claim}		\label{cl:H'-1}
		With probability at least $0.99$, we have 
		\[
		\Ex_{\{i,j\} \sim E'}\left[ \frac{\|v_i - v_j\|^2}{4} \right] \geq c^* - O(1/d).
		\]
		Here $\{i,j\} \sim E'$ corresponds to sampling an edge from $E'$ with probability proportional to the number of copies of $\{i,j\} \in E'$.
	\end{claim}
	\begin{proof}
		We proceed identically as in the completeness analysis in Section \ref{sec:H'-analysis}. For every edge $e = \{i,j\} \in E$, let us denote 
		\[
		Z_e := \frac{\|v_i - v_j\|^2_2}{4}.
		\]
		Then, using the completeness guarantee of Theorem \ref{thm:kv-gap}, we have that
		\[
		\Ex_{e \sim w_H}[Z_e] = \tSDP(H) \geq c^*  - \delta(\eps),
		\] 
        where $w_H$ corresponds to sampling according to the edge-weight distribution of $H$. Next, for the $a^{th}$ random edge $e_a$, let us denote $X_a := Z_{e_a}$. Then, using Lemma \ref{lem:Bernstein} we again have
		\[
		\Pr_{H'}\left[\frac{1}{nd}\sum_{i = 1}^{nd} X_{i} \geq \tSDP(H) - O(1/d)\right] \geq 1 - \exp(-\Omega(n/d)) \geq 0.99,
		\]
		Hence, with probability at least $0.99$, we have
		\[
		\Ex_{\{i,j\} \sim E'}\left[\frac{\|v_i - v_j\|^2}{4}\right] = \Ex_{a \sim [nd]}[X_a]
		\geq \tSDP(H) - O(1/d) \geq c^* - \delta(\eps) - O(1/d) \geq c^* - O(1/d),
		\]
		where the last step follows by choosing $\eps$ small enough as a function of $d$.
	\end{proof}
	
	Next, we establish the following lower bound on the objective value for the SDP relaxation of $\tilde{H}$.
	
	\begin{claim}		\label{cl:H'-2}
		With probability at least $0.99$, we have
		\[
		\Ex_{\{i,j\} \sim \tilde{E}}\left[\frac{\|v_i - v_j\|^2}{4}\right] \geq \Ex_{\{i,j\} \sim E'}\left[\frac{\|v_i - v_j\|^2}{4}\right] - O(1/d).
		\]
	\end{claim}
	\begin{proof}
		Recall that the proof of Theorem \ref{thm:ugc-hard} shows that with probability at least $0.99$, the graph $\tilde{H}$ is constructed by deleting at most $O(1/d)$-fraction of edges in $H'$, and hence the claim follows.
	\end{proof}
	
	Combining the two bounds above, we get that with probability at least $0.98$,
	\[
	\Ex_{\{i,j\} \sim \tilde{E}}\left[\frac{\|v_i - v_j\|^2}{4}\right] \geq c^* - O(1/d),
	\] 
	which directly implies that $\tSDP(\tilde{H}) \geq c^* - O(1/d)$.
\end{proof}

Finally, we are ready to prove Theorem \ref{thm:max-cut-gap}.

\begin{proof}[Proof of Theorem \ref{thm:max-cut-gap}]
	Let $\eps$ be fixed to be small enough such that the error parameter from Theorem \ref{thm:kv-gap} satisfies $\delta(\eps) \leq 1/d$. Let $H := H_\eps$ be the corresponding integrality gap instance from Theorem \ref{thm:kv-gap}, and let $\tilde{H}$ be the  instance obtained by applying steps (R1) and (R2) from above. Then, by construction, note that the final graph $\tilde{H}$ has maximum degree $10d$. Furthermore, by combining the guarantees of Lemma \ref{lem:gap-comp} and Lemma \ref{lem:gap-sound}, with probability at least $0.97$, we have that 
	\[
	\tSDP(\tilde{H}) \geq c^* - O(1/d)
	\qquad\qquad\text{and}\qquad\qquad
	{\sf MaxCut}(\tilde{H}) \leq s^* + O\left(\frac{\log^2(d)}{d}\right),
	\]
	which concludes the proof of Theorem \ref{thm:max-cut-gap}.
\end{proof}

\bibliographystyle{alpha}
\bibliography{main}

\part{Appendix}
\appendix
\section{Bounding the number of LTFs}

\begin{theorem}\label{thm:ltf-count}
The number of linear threshold functions $L:\{0,1\}^R \to \{0,1\}$ is at most $2^{O(R^2)}$.
\end{theorem}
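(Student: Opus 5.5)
Since an LTF is determined by its restriction to the $2^R$ points of $\{0,1\}^R$, we bound how many distinct Boolean functions on this finite set arise as LTFs. The plan is the classical counting argument of Cover/Muroga: every non-constant LTF can be represented, without changing its values on $\{0,1\}^R$, by a weight vector $(w,\theta)\in\mathbb{R}^{R+1}$ such that the hyperplane $\{y:\langle w,y\rangle+\theta=0\}$ avoids all cube points, i.e.\ $\langle w,x\rangle+\theta\neq 0$ for every $x\in\{0,1\}^R$. This holds because the value is $\one\{\langle w,x\rangle+\theta\ge 0\}$, so we may replace $\theta$ by $\theta+\epsilon'$ for a tiny $\epsilon'>0$ smaller than the minimum positive value of $|\langle w,x\rangle+\theta|$; this does not change any value.

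We then view each point $x\in\{0,1\}^R$ as defining a hyperplane $H_x=\{(w,\theta):\langle w,x\rangle+\theta=0\}$ through the origin in the parameter space $\mathbb{R}^{R+1}$. Two parameter vectors in the same open cell of the arrangement $\{H_x\}_{x\in\{0,1\}^R}$ induce the same function, so the number of LTFs is at most the number of cells plus $2$ for the constants. By the standard bound on the number of regions of an arrangement of $N$ hyperplanes in $\mathbb{R}^{D}$, namely at most $\sum_{i=0}^{D}\binom{N}{i}\le 2N^{D}$ (Cover's function counting theorem, or equivalently Sauer--Shelah applied to halfspaces, whose VC dimension is $R+1$), with $N=2^R$ and $D=R+1$ we get at most $2\cdot(2^R)^{R+1}+2=2^{O(R^2)}$ functions.

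Alternatively, and as a cross-check, one may apply the Sauer--Shelah lemma directly: the class of affine halfspaces in $\mathbb{R}^R$ has VC dimension $R+1$, so on a ground set of $2^R$ points it realizes at most $\sum_{i\le R+1}\binom{2^R}{i}\le (e2^R/(R+1))^{R+1}=2^{O(R^2)}$ labelings. This route avoids the perturbation step entirely and is probably what I would write. The only step requiring care is justifying that the closed threshold ``$\ge 0$'' convention is still captured by this class. It is, because the VC bound applies to closed halfspaces as well (their VC dimension is also $R+1$), and the constant functions are already included in the class. I expect no real obstacle; the main thing to get right is citing the VC dimension of halfspaces in $\mathbb{R}^R$ as $R+1$.
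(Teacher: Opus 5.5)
Your proof is correct, and your preferred second route is exactly the paper's argument: affine halfspaces in $\mathbb{R}^R$ have VC dimension $R+1$, so Sauer--Shelah on the $2^R$ cube points gives at most $(e2^R/(R+1))^{R+1}=2^{O(R^2)}$ LTFs. Your hyperplane-arrangement count (Cover's bound) is also a valid route to the same $2^{O(R^2)}$ bound, so either version suffices.
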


\begin{proof}
To upper bound the number of distinct linear threshold functions on $\{-1,1\}^R$, we employ the following standard facts about VC dimension.

\begin{theorem}[VC Dimension of Affine Halfspaces; see, e.g., Example 8.3.5 in~\cite{vershynin2025high}]\label{thm:vc-dim-halfspaces}
The class of affine halfspaces in $\mathbb R^R$ (i.e., indicator functions of sets $\{x : \langle w, x\rangle \leq \theta\}$) has VC dimension $R+1$.
\end{theorem}

\begin{theorem}[Sauer--Shelah Lemma; see, e.g., Lemma 8.3.9 in~\cite{vershynin2025high}]\label{thm:Sauer-Shelah}
Let $\mathcal{F}$ be a class of Boolean functions on an $M$-point set $X$. Suppose the VC dimension of $\mathcal{F}$ is $d$. Then,
\[
    |\mathcal{F}|
    \le
    \sum_{i=0}^d \binom{M}{i}\leq
    \left(\frac{eM}{d}\right)^d.
\]
\end{theorem}

By Theorems~\ref{thm:vc-dim-halfspaces} and~\ref{thm:Sauer-Shelah}, the number of distinct linear threshold functions on $\{-1,1\}^R$ is upper bounded by
\begin{equation}\label{eq:tlf-count}
\left(\frac{e2^R}{R+1}\right)^{R+1} = 2^{O(R^2)}.
\end{equation}
\end{proof}

\section{Hamming Ball Estimates}

The following estimate is well-known.

\begin{fact}		\label{fact:bin}
	Let $\delta \in (0,1)$ be small enough. Then for all $n \geq n(\delta)$, we have
	\[
		{n \choose {\leq \delta\cdot n}} \leq 2^{O(\delta \log(1/\delta))\cdot n}.
	\]
\end{fact}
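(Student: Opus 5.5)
The estimate ${n \choose {\leq \delta n}} \leq 2^{H(\delta) n}$, valid for $\delta \leq 1/2$, is standard and I would use it directly. Here $H(\delta) = \delta\log_2(1/\delta) + (1-\delta)\log_2(1/(1-\delta))$ is the binary entropy. I would then bound $H(\delta)$ by $O(\delta\log(1/\delta))$ for small $\delta$. With this route the requirement $n \geq n(\delta)$ is not actually needed; the bound holds for all $n$.

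The first step is the entropy bound, via the usual generating-function argument. Expanding the binomial theorem gives
\[
1 = (\delta + (1-\delta))^n \geq \sum_{i \leq \delta n} {n \choose i} \delta^i (1-\delta)^{n-i}.
\]
Since $\delta \leq 1/2$, the ratio $\delta/(1-\delta)$ is at most $1$. Hence for every $i \leq \delta n$ we have $\delta^i(1-\delta)^{n-i} = (1-\delta)^n (\delta/(1-\delta))^i \geq (1-\delta)^n(\delta/(1-\delta))^{\delta n} = 2^{-H(\delta) n}$. Substituting this lower bound into the sum gives $\sum_{i\leq \delta n}{n\choose i} \leq 2^{H(\delta)n}$.

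The second step bounds the second entropy term. Using $\log(1/(1-\delta)) \leq \delta/(1-\delta) \leq 2\delta$, we get $(1-\delta)\log_2(1/(1-\delta)) \leq 2\delta/\ln 2$. For $\delta$ small, say $\delta \leq 1/4$, we have $\log_2(1/\delta) \geq 2$, so this term is at most $O(\delta\log(1/\delta))$. Therefore $H(\delta) \leq O(\delta \log(1/\delta))$, which is the claimed exponent.

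Neither step is a real obstacle. The only care needed is to state the restriction $\delta \leq 1/2$, or smaller, that is used in the first step, which is exactly what "small enough" covers.
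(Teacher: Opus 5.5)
Your proof is correct. The generating-function bound $\sum_{i\le\delta n}\binom{n}{i}\le 2^{H(\delta)n}$ for $\delta\le 1/2$, combined with $(1-\delta)\log_2\frac{1}{1-\delta}=O(\delta)=O(\delta\log(1/\delta))$, gives the claim for every $n$, so you are right that the hypothesis $n\ge n(\delta)$ is superfluous. The paper gives no proof of this fact (it is quoted as well known), and your entropy argument is the standard proof it relies on.
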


The following is well known.

\begin{lemma}		\label{lem:ball}
	Let $\delta \in (0,1)$ be small enough and $S$ a finite set. Then for any given function $f:S \to \{0,1\}$, the number of functions $g:S \to \{0,1\}$ such that $\|f- g\|_1 \leq \delta$ is at most 
	\[
		2^{O(\delta\log(1/\delta))\cdot |S|}.
	\]
\end{lemma}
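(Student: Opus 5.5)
The count reduces to Fact~\ref{fact:bin}, applied to the symmetric difference between $g$ and $f$.

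First, identify each function $g:S\to\{0,1\}$ with its support. Since both $f$ and $g$ are Boolean-valued, the normalized $L_1$ distance is exactly the fraction of points where they differ:
\[
\|f-g\|_1=\frac{1}{|S|}\bigl|\{s\in S: f(s)\neq g(s)\}\bigr|.
\]
So the condition $\|f-g\|_1\le\delta$ says that $g$ differs from $f$ on at most $\delta|S|$ points.

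Next, observe that the map $g\mapsto D_g:=\{s: f(s)\neq g(s)\}$ is injective for fixed $f$, since $g=f\oplus\one_{D_g}$. Hence the number of admissible $g$ is at most the number of subsets of $S$ of size at most $\delta|S|$, which is $\binom{|S|}{\le \delta|S|}$.

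Finally, apply Fact~\ref{fact:bin} with $n=|S|$ to bound this by $2^{O(\delta\log(1/\delta))|S|}$.

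The one point needing care is that Fact~\ref{fact:bin} is stated only for $n\ge n(\delta)$. For small $|S|$ I would use the standard entropy bound $\binom{n}{\le\delta n}\le 2^{H(\delta)n}$, valid for every $n$ and every $\delta\le 1/2$. Since $H(\delta)=O(\delta\log(1/\delta))$ for small $\delta$, this gives the claim uniformly in $|S|$. I expect no real obstacle beyond noting this uniformity.
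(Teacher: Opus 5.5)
Your proof is correct and matches the paper's argument: the admissible $g$ form the Hamming ball of radius $\delta|S|$ around $f$, which has size $\binom{|S|}{\le \delta|S|}$, and Fact~\ref{fact:bin} bounds this. Your use of the bound $\binom{n}{\le\delta n}\le 2^{H(\delta)n}$ to cover small $|S|$ is valid, and it closes a uniformity gap that the paper's one-line proof leaves implicit.
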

\begin{proof}
	Note that the set of functions $g:S \to \{0,1\}$ satisfying $\|f - g\|_1 \leq \delta$ is exactly the set of functions which are in a Hamming ball of radius $\delta \cdot |S|$ around $f$. Hence, the number of such functions is at most
	\[
	{|S| \choose {\leq \delta \cdot |S|}} \leq 2^{O(\delta\log(1/\delta))\cdot |S|},
	\]
 	where the inequality follows from Fact \ref{fact:bin}.
\end{proof}
 	
\section{Max-Cut Reduction Analysis~\cite{KKMO07}}			\label{sec:KKMO}			

In this section, we analyze the reduction described in Figure \ref{fig:reduction}. We point out that the analysis presented here is identical to that of \cite{KKMO07}, we include it here for the sake of completeness and easy verification of the various parameters used in the proof of Lemma \ref{lem:near-opt}. Let us begin by recalling the reduction from Figure \ref{fig:reduction} here.

\begin{figure}[ht!]
	\begin{mdframed}
		Given a {\bf right-regular} UG instance $\cG = (U,V,E,[R],\{\pi_e\}_{e \in E})$, we construct a Max-Cut instance $H = (V_H,E_H,w_H)$ as follows: \\
		
		{\bf Vertex Set}. The vertex set of $H$ is $V_H = V \times \{0,1\}^R$ (where $R$ is chosen large enough). \\
		
		{\bf Edges}. The distribution over the edges is given by the following process: \\
		
		\begin{itemize}
			\item Sample $u \sim U$, and sample two neighbors $v_1,v_2 \sim N_\cG(u)$. \\[-6pt]
			\item Sample a $\rho^*$-correlated pair $x_1 \underset{\rho^*}{\sim} x_2$ from $\{0,1\}^R$.
			\item Sample $x'_1 \underset{1 - \mu}{\sim} x_1$ and $x'_2 \underset{1 - \mu}{\sim} x_2$ independently. 
			\item Output the edge $\big\{(v_1,\pi_{u,v_1}(x'_1)), (v_2,\pi_{u,v_2}(x'_2))\big\}$.
		\end{itemize}	
	\end{mdframed}
	\caption{Unique Games to Max-Cut Reduction}
	\label{fig:reduction-copy}
\end{figure}

We analyze the completeness and soundness guarantees of the above reduction below.

\subsection{Proof of Lemma \ref{lem:comp}}

Suppose $\opt(\cG) \geq 1 - \eps$. Let $\sigma: U \cup V \to [R]$ be the labeling of $\cG$ that satisfies $\opt(\cG)$-fraction of edges. Then consider the dictator cut $f:V_{H} \to \{0,1\}$ which is defined as $f(v,x) = x_{\sigma(v)}$. Now, observe that 
\begin{align*}
\Pr_{e \sim H}[\text{$f$ cuts $e$}]
&= \Ex_{u,v_1,v_2}\Pr_{(x'_1,x'_2)}[f_{v_1}(\pi_{u,v_1}(x'_1)) \neq f_{v_2}(\pi_{u,v_2}(x'_2))] \\
&= \Ex_{u,v_1,v_2}\Pr_{(x'_1,x'_2)}[x'_1(\pi_{u,v_1}(\sigma(v_1))) \neq x'_2(\pi_{u,v_2}(\sigma(v_2)))] 
\end{align*}
Now, fix choice of $u \in U$ and $v_1,v_2 \in N_{\cG}(u)$ such that $\sigma$ satisfies the edges $(u,v_1)$ and $(u,v_2)$. Then, conditioned on such a choice of $u,v_1,v_2$, the inner probability term can be bounded as follows:
\[
\Pr_{(x'_1,x'_2)}[(x'_1(\pi_{u,v_1}(\sigma(v_1)))) \neq x'_2(\pi_{u,v_2}(\sigma(v_2)))]
= \Pr_{(x'_1,x'_2)}[(x'_1(\sigma(u))) \neq x'_2(\sigma(u))]
\geq \frac{1 - \rho^*}{2} - 2\mu,
\]
where the last step uses the observations that (i) a pair of $\rho^*$-correlated bits are not equal with probability $(1 - \rho^*)/2$ and (ii) $x'_i = x_i$ with probability at least $1 - \eta$ for $i = 1,2$. Together with the observation that the probability of the event $\sigma$ satisfies the edges $(u,v_1)$ and $(u,v_2)$ holds with probability at least $1 - 2\eps$, this implies that we have 
\[
\Pr_{e \sim H}[\text{$f$ cuts $e$}] \geq \frac{1 - \rho^*}{2} -2 \eps - 2\mu = c^* - 2(\mu + \eps).
\]

\subsection{Proof of Lemma \ref{lem:sound}}

Suppose $\cG$ is a Unique Games instance satisfying $\opt(\cG) \leq \eps$. Fix a cut $f:V_{H} \to \{0,1\}$ in $H$. As previously defined in Section \ref{sec:analysis}, for every $v \in V$, let $f_v:\{0,1\}^R \to \{0,1\}$ be the right-vertex function defined as $f_v(x) = f(v,x)$, and for every left vertex $u \in U$, the corresponding averaged long code table $g_u:\{0,1\}^R \to [0,1]$ defined as 
\[
g_{u}(x) := \Ex_{v \sim N_{\cG}(u)}\Ex_{x' \underset{1 - \eta}{\sim} x}\Big[f_v\Big(\pi_{u,v}(x')\Big)\Big].
\]
Now, using the description of the graph $H$ and the bilinearity of the ${\sf Cut}(\cdot,\cdot)$ function, we can express the fraction of edges cut by $f$ as 
\begin{align}
	\Pr_{e \sim H}\Big[\text{$f$ cuts $e$}\Big]
	&= \Ex_{u \sim U}\Ex_{v_1,v_2 \sim N_{\cG}(u)}\Ex_{x_1 \underset{\rho^*}{\sim} x_2}\Ex_{x'_i \underset{1 - \mu}{\sim}x_i}\Big[{\sf Cut}\Big(f_{v_1}(\pi_{u,v_1}(x'_1)),f_{v_2}(\pi_{u,v_2}(x'_2))\Big)\Big] 		\nonumber\\
	&= \Ex_{u \sim U}\Ex_{v_1,v_2 \sim N_{\cG}(u)}\Ex_{x_1 \underset{\rho^*}{\sim} x_2}\Big[{\sf Cut}\Big(T_{1 - \mu}f_{v_1}(\pi_{u,v_1}(x'_1)),T_{1 - \mu}f_{v_2}(\pi_{u,v_2}(x'_2))\Big)\Big] 		\nonumber\\
	&= \Ex_{u \sim U}\Ex_{x_1 \underset{\rho^*}{\sim} x_2}\Big[{\sf Cut}\Big(\Ex_{v_1 \sim N_{\cG}(u)}T_{1 - \mu}f_{v_1}(\pi_{u,v_1}(x_1)),\Ex_{v_2 \sim N_{\cG}(u)}T_{1 - \mu}f_{v_2}(\pi_{u,v_2}(x_2))\Big)\Big] 		\nonumber\\
	&= \Ex_{u \sim U}\Ex_{x_1 \underset{\rho^*}{\sim} x_2}\Big[{\sf Cut}\Big(g_{u}(x_1),g_u(x_2)\Big)\Big].		\label{eqn:cut} 
\end{align}

Now, recall that the set of good vertices $U_{\rm good}$ was defined as
\[
U_{\rm good} = \left\{u \in U : \max_{i \in [R]} \Inf{i}{g_u} \leq \tau \right\}.
\]
Next we recall the first key claim in the soundness analysis which says that most of the left vertices are good.

\begin{lemma}[Restatement of Lemma \ref{lem:u-good}]			\label{lem:u-good-1}
	Assuming $\opt(\cG) \leq \eps$ we have that
	\[
		|U_{\rm good}| \geq (1 - \nu(\eps))|U|,
	\]
	where $\nu(\eps) = \sqrt{\eps}$.
\end{lemma}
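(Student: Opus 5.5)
We prove the contrapositive, using the standard KKMO decoding argument. Suppose $|U\setminus U_{\rm good}| > \nu(\eps)|U|$, so more than a $\sqrt{\eps}$ fraction of left vertices $u$ have some coordinate $i$ with $\Inf{i}{g_u} > \tau$. From this we build a randomized labeling of $\cG$ that satisfies, in expectation, more than an $\eps$ fraction of the edges. This contradicts $\opt(\cG)\le\eps$.

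\textbf{Candidate label sets.} For each right vertex $v$, let
\[
\mathrm{Lab}(v) := \{i : \Inf{i}{T_{1-\mu} f_v} \ge \tau/2\}.
\]
For each left vertex $u$, let $\mathrm{Lab}(u) := \{i : \Inf{i}{g_u} > \tau\}$. These sets have bounded size: $\sum_i \Inf{i}{T_{1-\mu}h} = \sum_S |S|(1-\mu)^{2|S|}\wh h(S)^2 \le O(1/\mu)$ for any $h$ with values in $[0,1]$. Hence $|\mathrm{Lab}(v)| \le O(1/(\mu\tau))$, and $|\mathrm{Lab}(u)|\le 1/\tau$ because influences of $g_u$ sum to at most $\Var$-type quantity bounded by $O(1/\mu)$ as well (since $g_u$ is itself $T_{1-\mu}$-smoothed); in either case the bound is $O(1/(\mu\tau))$.

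\textbf{Key transfer step.} Fix $u$ with $i\in\mathrm{Lab}(u)$. Since $g_u=\Ex_{v}[\pi_{u,v}\circ T_{1-\mu}f_v]$, the Fourier coefficients average linearly, and influence is convex (it is a sum of squares of Fourier coefficients). By Jensen/Cauchy--Schwarz,
\[
\tau < \Inf{i}{g_u} \le \Ex_{v\sim N_\cG(u)}\Inf{\pi_{u,v}^{-1}(i)}{T_{1-\mu}f_v},
\]
with the index convention matching how $\pi_{u,v}$ acts on coordinates. Every influence is at most $1$, so at least a $\tau/2$ fraction of the neighbors $v$ satisfy $\Inf{\pi_{u,v}^{-1}(i)}{T_{1-\mu}f_v}\ge \tau/2$, i.e. $\pi_{u,v}^{-1}(i)\in\mathrm{Lab}(v)$.

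\textbf{Labeling and counting.} Assign every vertex a uniformly random label from its candidate set, or an arbitrary label if the set is empty. Pick an edge $(u,v)$ at random. The probability that $u$ is bad is more than $\sqrt\eps$. Given that, the probability that $v$ is one of the consistent neighbors is at least $\tau/2$. Given both, the labels agree under $\pi_{u,v}$ with probability at least $\Omega((\mu\tau)^2)$. Bi-regularity lets us sample the edge as $u\sim U$ followed by $v\sim N_\cG(u)$. The expected satisfied fraction is therefore at least
\[
\sqrt\eps\cdot\frac{\tau}{2}\cdot\Omega(\mu^2\tau^2)=\Omega(\eps^{1/2}\tau^3\mu^2).
\]
With $\tau=2\eps^{1/8}$ and $\mu=2\eps^{1/16}$, the exponent of $\eps$ is $1/2+3/8+1/8=1$, giving $\Omega(c\,\eps)$ with an explicit constant. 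After the powers of $2$ are accounted for, this should exceed $\eps$; the parameters were evidently chosen for exactly this.

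\textbf{Main obstacle.} The hardest part is the bookkeeping of absolute constants in this last step, i.e. showing the final product exceeds $\eps$ rather than merely being $\Theta(\eps)$. This needs a sharp bound $\sum_i\Inf{i}{T_{1-\mu}h}\le 1/(2e\mu)$ or similar. If the constants do not close, I would shrink $\eps_0$ or adjust the candidate thresholds, for example using $\tau/4$. Less serious is getting the direction of $\pi_{u,v}$ consistent with the convention $\pi(x)_j=x(\pi(j))$ in the transfer step.
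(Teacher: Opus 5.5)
Your argument is the paper's proof, and it is correct: contradiction via convexity of influence under averaging, then averaging to get a $\tau/2$ fraction of neighbours $v$ with $\Inf{\pi_{u,v}^{-1}(i)}{T_{1-\mu}f_v}\ge\tau/2$, candidate lists of size $O(1/(\mu\tau))$ from the noisy-influence bound, and uniformly random labels. The constant check you left open is not a real obstacle: already with $|\mathrm{Lab}(u)|,|\mathrm{Lab}(v)|\le 2/(\mu\tau)$ the expected satisfied fraction is at least $\frac{\sqrt{\varepsilon}\,\tau}{2}\cdot\frac{\mu^2\tau^2}{4}=\frac{\sqrt{\varepsilon}\cdot 32\sqrt{\varepsilon}}{8}=4\varepsilon>\varepsilon$ (a sharper total-influence bound gives even more), whereas your fallback of shrinking $\varepsilon_0$ could never have helped, since this ratio to $\varepsilon$ does not depend on $\varepsilon$.
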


Let us defer the proof of the above lemma to Appendix \ref{sec:u-good} for now, and continue with the proof of the soundness analysis. We then state the key technical step in the soundness analysis.

\begin{lemma}[Restatement of Lemma \ref{lem:good-cut}]			\label{lem:good-cut-1}
	For every $u \in U_{\rm good}$ we have that 
	\[
		\Ex_{x_1 \underset{\rho^*}{\sim} x_2}\Big[{\sf Cut}(g_u(x_1),g_u(x_2))\Big] \leq s^* + c(\eps).
	\]
\end{lemma}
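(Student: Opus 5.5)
This is a direct application of the cut form of Majority-Is-Stablest, Corollary~\ref{cor:maj-stablest}, to the averaged table $g_u$. The plan has three steps: verify that $g_u$ meets the corollary's hypotheses, apply it with $\rho=\rho^*$, and account for the error term.

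\textbf{Hypotheses.} By definition, $g_u(x)=\Ex_{v\sim N_\cG(u)}T_{1-\mu}f_v(\pi_{u,v}(x))$ is an average of $[0,1]$-valued functions. Hence $g_u:\{0,1\}^R\to[0,1]$. Since $u\in U_{\rm good}$, we have $\max_i \Inf{i}{g_u}\le\tau=\tau(\eps)=2\eps^{1/8}$. Choosing $\eps$ small enough gives $\tau\le\tau_0$, where $\tau_0$ is the constant of Theorem~\ref{thm:maj-stablest}. We also take $R\ge R(\eps)$ large enough as a function of $\tau$, as the corollary requires; this is permitted because $R=R(\eps)$ can be chosen large in Conjecture~\ref{conj:ugc}.

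\textbf{Applying the corollary.} With $\rho=\rho^*\in(-1,0)$, Corollary~\ref{cor:maj-stablest} gives
\[
\Ex_{x_1\sim_{\rho^*}x_2}\big[{\sf Cut}(g_u(x_1),g_u(x_2))\big]\le \frac{\arccos\rho^*}{\pi}+O\!\left(\frac{\log\log(1/\tau)}{\log(1/\tau)}\right)\frac{1}{1-\rho^*}.
\]
The first term on the right is exactly $s^*$. Since $1-\rho^*\approx1.689$ is an absolute constant, the factor $\frac{1}{1-\rho^*}$ can be absorbed into the $O(\cdot)$.

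\textbf{The error term.} By the parameter choices, the remaining error is precisely $c(\eps)=O\big(\log\log(1/\tau(\eps))/\log(1/\tau(\eps))\big)$. Because $\log(1/\tau(\eps))=\Theta(\log(1/\eps))$, this is equivalent to the $O(\log\log(1/\eps)/\log(1/\eps))$ form stated in Lemma~\ref{lem:good-cut}. It is also continuous and increasing in $\eps$, and tends to $0$ as $\eps\downarrow0$.

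The only point that needs care is the invocation of the corollary itself: the influence bound must hold for the averaged and noised function $g_u$, not for the individual $f_v$, and the order of quantifiers must be respected ($\eps$ fixes $\tau$, which then fixes a lower bound on $R$). Both are guaranteed by the definition of $U_{\rm good}$. There is no substantive obstacle beyond this, since all the heavy lifting is done by the Majority-Is-Stablest theorem.
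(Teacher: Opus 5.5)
Your proposal is correct and is essentially the paper's proof. The paper likewise observes that $u\in U_{\rm good}$ gives $\max_i \Inf{i}{g_u}\le\tau$, applies Corollary~\ref{cor:maj-stablest} with $\rho=\rho^*$, and identifies the error term with $c(\varepsilon)$; your extra checks (that $g_u$ is $[0,1]$-valued, $\tau(\varepsilon)\le\tau_0$, $R$ large, and the absorbed factor $1/(1-\rho^*)$) spell out what the paper leaves implicit.
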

\begin{proof}
	Fix a vertex $u \in U_{\rm good}$. Then by definition of $U_{\rm good}$ we have 
	\[
		\max_{i \in [R]} \Inf{i}{g_u} \leq \tau,
	\]
	i.e., the function $g_u$ satisfies the premise of Corollary \ref{cor:maj-stablest}. Hence, by applying Corollary \ref{cor:maj-stablest} we get that
	\begin{align*}
	\Ex_{x_1 \underset{\rho^*}{\sim} x_2}\Big[{\sf Cut}(g_u(x_1),g_u(x_2))\Big] 
	&\leq \frac{\arccos(\rho^*)}{\pi} + O\left(\frac{\log \log(1/\tau)}{\log(1/\tau)}\right) \\
	&= s^* + c(\eps).
	\end{align*}
\end{proof}

Therefore, combining Equation \eqref{eqn:cut}, Lemma \ref{lem:u-good-1}, and Lemma \ref{lem:good-cut-1}, we get that
\begin{align*}
	\Pr_{e \sim H}[\text{$f$ cuts $e$}]
	&= \Ex_{u \sim U}\Ex_{x_1 \underset{\rho^*}{\sim} x_2}\Big[{\sf Cut}(g_u(x_1),g_u(x_2))\Big] \\
	&\leq \Ex_{u \sim U_{\rm good}}\Ex_{x_1 \underset{\rho^*}{\sim} x_2}\Big[{\sf Cut}(g_u(x_1),g_u(x_2))\Big] + \nu(\eps) \\
	&\leq s^* + c(\eps) + \nu(\eps),
\end{align*}
and the proof of Lemma \ref{lem:sound} is concluded by setting $c_0(\eps) = c(\eps) + \nu(\eps)$.

\subsection{Proof of Lemma \ref{lem:u-good-1}}			\label{sec:u-good}

For contradiction, let us assume that $U_{\rm bad} := U \setminus U_{\rm good}$ satisfies $|U_{\rm bad}| \geq \nu(\eps) = \sqrt{\eps}$. For every $u \in U_{\rm bad}$, let us fix a label $i(u)$ such that 
$\Inf{i(u)}{g_u} \geq \tau$. Then, for every $u \in U_{\rm bad}$, using the convexity of influences we have 
\begin{align*}
\Ex_{v \sim N_{\cG}(u)}\Big[\Inf{\pi^{-1}_{u,v}(i(u))}{T_{1 - \mu}f_v}\Big]
&= 	\Ex_{v \sim N_{\cG}(u)}\Big[\Inf{i(u)}{\pi_{u,v}\circ T_{1 - \mu}f_v}\Big] 		\\
&\geq  \Inf{i(u)}{\Ex_{v \sim N_{\cG}(u)}\pi_{u,v}\circ T_{1 - \mu}f_v}		\\
&\geq \tau,
\end{align*}
and hence by averaging, for every such $u \in U$, we have
\begin{equation}			\label{eqn:infl-bound}
	\Pr_{v \sim N_{\cG}(u)}\left[\Inf{\pi^{-1}_{u,v}(i)}{T_{1 - \mu}f_v} \geq \frac{\tau}{2}\right] \geq \frac{\tau}{2}.
\end{equation}
For every left vertex $u \in U$ and every right vertex $v \in V$, we define lists $L_u$ and $L_v$ as follows:
\[
L_u := \left\{i \in [R]~\Big|~ \Inf{i}{g_u} \geq \frac{\tau}{2}\right\}
\qquad\quad\text{and}\quad\qquad
L_v := \left\{i \in [R]~\Big|~ \Inf{i}{T_{1 - \mu}f_v} \geq \frac{\tau}{2}\right\}.
\]
Note that by the standard bound on the sum of noisy influences (Proposition 2.54~\cite{ODonnell14}), we have $|L_u|,|L_v| \leq 2/(\mu\tau)$. Now consider the following randomized labeling scheme for $\cG$. 

\begin{itemize}
	\item For every $u \in U$ do the following independently. If $L_u \neq \emptyset$, then sample $\sigma(u) \sim L_u$, otherwise sample $\sigma(u) \sim [R]$ uniformly.
	\item For every $v \in V$ do the following independently. If $L_v \neq \emptyset$, then sample $\sigma(v) \sim L_v$, otherwise sample $\sigma(v) \sim [R]$ uniformly.
\end{itemize}

Let us bound the expected fraction of edges satisfied by the above labeling. To that end, let us fix a $u \in U_{\rm bad}$, and $v \in N_{\cG}(u)$ for which \eqref{eqn:infl-bound} holds. Then, randomizing over $\sigma$, the edge $(u,v)$ will be satisfied with probability at least
\[
	\frac{1}{|L_u|\cdot|L_v|} \geq \frac{\mu^2\tau^2}{4}.
\] 
Furthermore, note that by our assumption, the above event, i.e.,
\[
\{\text{$u \in U$ and $v \in N_{\cG}(u)$ satisfies the event in \eqref{eqn:infl-bound}}\},
\]
occurs with probability at least $\nu(\eps) \cdot \tau/2$. Therefore, the expected fraction of edges in $\cG$ that is satisfied by the above labeling is at least 
\[
\frac{\nu(\eps)\tau}{2} \cdot \frac{\mu^2\tau^2}{4} = \frac{\nu(\eps) \tau^3\mu^2}{8} \geq \frac{\sqrt{\eps} \cdot(2\eps^{1/8})^4}{8} =  2\epsilon,
\]
which contradicts the condition $\opt(\cG) \leq \eps$ stated in the setting of the lemma.

\section{Proof of Lemmas~~\ref{lem:gaussian-main}, \ref{lem:smoothed-boolean-transfer}, and~\ref{lem:smoothing-removal}}
\label{apx:MiST}
\subsection{Proof of Lemma~\ref{lem:gaussian-main}}

In this section, we prove Lemma~\ref{lem:gaussian-main}.  
We will use the following results.

\begin{theorem}[see Theorem 1.2~\cite{MosselNeeman15}]\label{thm:MN-functional-borell}
Let $f,g:\bbR^R\to[0,1]$ be measurable and let $X\sim_t Y$ be standard Gaussian vectors.
For $0<t<1$, we have
\[
        \E \Lambda_t(f(X),g(Y))\le \Lambda_t(\E f,\E g).
\]
For $-1<t<0$, the inequality is reversed.
\end{theorem}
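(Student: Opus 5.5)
We would reduce the functional inequality to the classical two-set form of Borell's noise-stability theorem via the ``epigraph'' (more precisely, subgraph) lifting to $\bbR^{R+1}$, and then handle $t<0$ by a reflection identity for $\Lambda_t$.

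\emph{Step 1 (lifting, $0<t<1$).} Let $(Z,Z')$ be a $t$-correlated pair of standard Gaussians, independent of $(X,Y)$. Then $(X,Z)\sim_t(Y,Z')$ as standard Gaussian vectors in $\bbR^{R+1}$. Define the sets
\[
A=\{(x,z): z\le\Phi^{-1}(f(x))\},\qquad B=\{(y,z'): z'\le\Phi^{-1}(g(y))\},
\]
with the conventions $\Phi^{-1}(0)=-\infty$ and $\Phi^{-1}(1)=+\infty$. Measurability follows from that of $f$ and $g$. Conditioning on $(X,Y)$ and using the definition of $\Lambda_t$, we get
\[
\Pr\bigl((X,Z)\in A,\ (Y,Z')\in B\bigr)=\E\,\Lambda_t(f(X),g(Y)),
\]
and also $\gamma_{R+1}(A)=\E f$ and $\gamma_{R+1}(B)=\E g$. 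Borell's two-set isoperimetric inequality states that, for positive correlation, the joint probability is maximized by parallel halfspaces of the same Gaussian measures. For such halfspaces the joint probability equals $\Lambda_t(\gamma(A),\gamma(B))$. Together these give $\E\Lambda_t(f(X),g(Y))\le\Lambda_t(\E f,\E g)$.

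\emph{Step 2 (negative correlation).} For $-1<t<0$ we first check the identity $\Lambda_t(a,b)=a-\Lambda_{-t}(a,1-b)$. To see it, write $H=-H'$ with $G\sim_{-t}H'$; then
\[
\Pr(G\le\Phi^{-1}(a),\,H\le\Phi^{-1}(b))=\Pr(G\le\Phi^{-1}(a))-\Pr(G\le\Phi^{-1}(a),\,H'<\Phi^{-1}(1-b)).
\]
Next set $\tilde g(y)=1-g(-y)$, so that $\E\tilde g=1-\E g$, and note that $X\sim_{-t}(-Y)$. Applying the identity pointwise gives
\[
\E\Lambda_t(f(X),g(Y))=\E f-\E\Lambda_{-t}(f(X),\tilde g(-Y)).
\]
Step 1, applied with correlation $-t>0$, bounds the last expectation by $\Lambda_{-t}(\E f,1-\E g)$. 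Applying the identity once more yields $\E\Lambda_t(f(X),g(Y))\ge\Lambda_t(\E f,\E g)$.

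\emph{Main obstacle.} The only substantive input is Borell's two-set theorem. We would cite it, or prove it by the Mossel--Neeman semigroup argument. In that argument one tracks $s\mapsto\E\,\Lambda_t(P_sf(X),P_sg(Y))$ along the Ornstein--Uhlenbeck flow $P_s$. One shows that its derivative is an expectation of a quadratic form in $\nabla\Phi^{-1}(P_sf)$ and $\nabla\Phi^{-1}(P_sg)$, and that this form is positive semidefinite exactly when $t\ge0$. The bookkeeping there is the delicate part. The remaining details are minor: the boundary cases where $f$ or $g$ takes the values $0$ or $1$ are covered by the infinite-threshold conventions, and can alternatively be handled by approximating $f,g$ by functions with values in $[\delta,1-\delta]$ and using continuity of $\Lambda_t$.
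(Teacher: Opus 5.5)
Your proof is correct. The conditioning identity $\Pr((X,Z)\in A,(Y,Z')\in B)=\E\Lambda_t(f(X),g(Y))$ holds. So do the reflection identity $\Lambda_t(a,b)=a-\Lambda_{-t}(a,1-b)$, including the boundary values $b\in\{0,1\}$, and the fact that $(X,-Y)$ is a $(-t)$-correlated pair, which is what lets you apply Step~1 to $f$ and $\tilde g$.

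The paper gives no proof of its own and simply cites Mossel--Neeman, whose argument is organized differently. They work with $[0,1]$-valued functions from the outset and follow $s\mapsto\E\Lambda_t(P_sf(X),P_sg(Y))$ along the Ornstein--Uhlenbeck semigroup. They show its derivative is $t$ times a nonnegative weighted $\E\|\nabla v_s(X)-\nabla w_s(Y)\|^2$, where $v_s=\Phi^{-1}\circ P_sf$ and $w_s=\Phi^{-1}\circ P_sg$. So both signs of $t$ come out of one computation, with no extra dimension and no reflection.

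Your route instead treats Borell's classical two-set theorem as a black box. You combine it with the subgraph lifting and an elementary reflection for $t<0$. The lifting is the same construction the paper uses in part~\ref{it:epigraph} of Lemma~\ref{lem:gaussian-facts}, where it is called the standard passage from set to functional Borell.

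Your approach buys modularity: any proof of the set inequality (symmetrization, or Borell's original argument) suffices, and the rest is a few lines. The semigroup approach buys self-containedness and an explicit formula for the deficit, which is what Mossel--Neeman exploit for their robust versions. Note that if you proved the set theorem by the semigroup argument, as your last paragraph suggests, the lifting in Step~1 would become redundant, since that argument already applies directly to $f$ and $g$.
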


\begin{theorem}[see~Theorem~1.4 in~\cite{MosselNeeman15}; presented as stated in Theorem~1.4 in~\cite{Heilman21}]
\label{thm:Heilman-set-robust}
Let $0<t<1$ and $0<a<1$.  There are constants $C(t,a)<\infty$ and
$c(t,a)=\frac14\frac{(1-t)(1-t^2)}{1+3t}$ such that the following holds.  If
$\Omega\subseteq\bbR^R$ is measurable with $\gamma_R(\Omega)=a$ and
\[
        \delta=\Stab_t(H)-\Stab_t(\Omega),
\]
where $H$ is a halfspace of Gaussian measure $a$, then, after rotating $H$ if necessary,
\[
        \gamma_R(\Omega\triangle H)\le C(t,a)\delta^{c(t,a)}.
\]
\end{theorem}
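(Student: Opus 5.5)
We will follow the semigroup-interpolation strategy of Mossel and Neeman, which is the route by which Theorem~\ref{thm:MN-functional-borell} itself is proved, and make it quantitative. Let $f=g=\one_\Omega$ and let $P_s$ denote the Ornstein--Uhlenbeck semigroup. For $X\sim_t Y$, set
\[
J(s)=\E\,\Lambda_t\bigl(P_sf(X),P_sg(Y)\bigr).
\]
At $s=0$ we have $J(0)=\Stab_t(\Omega)$, because $\Lambda_t(a,b)=ab$ when $a,b\in\{0,1\}$. As $s\to\infty$, $J(s)\to\Lambda_t(a,a)=\Stab_t(H)$. Hence $\delta=\int_0^\infty J'(s)\,ds$.

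The first step is the Mossel--Neeman derivative identity. It writes $J'(s)$ as an expectation of a nonnegative quantity: a Gaussian-type weight, of the form $\exp(-\text{quadratic in }(v_s(X),v_s(Y)))$, multiplied by
\[
\bigl|\nabla v_s(X)-\nabla v_s(Y)\bigr|^2,
\qquad v_s=\Phi^{-1}\circ P_sf .
\]
Consequently, a small $\delta$ forces the transverse oscillation of $\nabla v_s$ to be small for most $s$. We will restrict to a window $s\in[s_0,2s_0]$, chosen as a function of $\delta$, and use averaging to pick one $s$ with $J'(s)\lesssim\delta/s_0$.

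The second step converts the smallness of $J'(s)$ into rigidity of $v_s$. On $X\sim_t Y$, the quantity $\E|\nabla v_s(X)-\nabla v_s(Y)|^2$ controls, via the spectral gap along Hermite levels, the non-constant part of $\nabla v_s$. This yields a linear function $\ell(x)=\langle w,x\rangle-\theta$ with $\|v_s-\ell\|$ small in a weighted $L^2$ sense. The weight in $J'$ degrades where $|v_s|$ is large. We will handle this by truncating to the region $|v_s|\le\sqrt{\log(1/\delta)}$, using Gaussian tails and the smoothness of $P_sf$ for $s>0$ (note $|\nabla v_s|\lesssim s^{-1/2}$), and accept a polynomial loss from the truncation. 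Composing with $\Phi$ then shows that $P_sf$ is close in $L^1$ to $\Phi(\ell)$, which is $P_s$ applied to a halfspace $H'$. The measure constraint $\gamma_R(\Omega)=a$ fixes the threshold of $H'$ up to an error that is again polynomial in the closeness bound.

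The last step removes the smoothing, and we expect it to be the main obstacle. We need to pass from $\|P_s\one_\Omega-P_s\one_{H'}\|_1$ small to $\gamma_R(\Omega\triangle H')$ small. The plan is to combine two facts. First, $\|\one_\Omega-P_s\one_\Omega\|_1$ is controlled by noise sensitivity, and this is itself bounded in terms of the stability deficit, since by Borell $\Omega$ is nearly as noise-stable as $H'$, whose sensitivity at scale $s$ is $O(\sqrt s)$. Second, the halfspace $H'$ is nearly determined by the level set $\{P_s\one_{H'}=1/2\}$. Balancing the smoothing loss $\sqrt{s_0}$ against the amplified error $\delta^{\Theta(1)}/\mathrm{poly}(s_0)$ from the first two steps yields $\gamma_R(\Omega\triangle H)\le C\delta^{c}$ for some positive exponent. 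We expect the specific value $c(t,a)=\tfrac14\frac{(1-t)(1-t^2)}{1+3t}$ to come from this optimization, together with the degeneracy of the weight when $t$ is close to $1$. Tracking these exponents sharply is where we anticipate most of the difficulty to lie.
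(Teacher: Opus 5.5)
The paper does not prove this statement; it imports it from Mossel--Neeman~\cite{MosselNeeman15}. Your outline is their strategy in broad strokes: Ornstein--Uhlenbeck interpolation of $\E\,\Lambda_t(P_s\one_\Omega(X),P_s\one_\Omega(Y))$, the weighted derivative identity, near-linearity of $v_s=\Phi^{-1}\circ P_s\one_\Omega$, then removal of the smoothing. There is, however, a genuine gap in the last step, and the regime $s_0\to0$ you choose for it breaks the step before it.

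\emph{The de-smoothing step is circular.} You need an a priori bound $\|\one_\Omega-P_s\one_\Omega\|_1=2\bigl(a-\Stab_{e^{-s}}(\Omega)\bigr)\le C\sqrt s+g(\delta)$ for small $s$, and you justify it by ``by Borell $\Omega$ is nearly as noise-stable as $H'$''. Borell's inequality goes the other way: $\Stab_r(\Omega)\le\Stab_r(H)$ is a \emph{lower} bound on the noise sensitivity of $\Omega$. The hypothesis controls stability only at the single correlation $t$, not at $e^{-s}\to1$. Such a bound holds a posteriori, since it follows from $\gamma_R(\Omega\triangle H)$ being small, but nothing supplies it a priori.

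\emph{Sending $s_0\to0$ also ruins your second step.} The function $v_s$ is only $L_s=e^{-s}/\sqrt{1-e^{-2s}}\approx(2s)^{-1/2}$-Lipschitz. With $M^2=\kappa\log(1/\delta)$, the concentration bound $\Pr(|v_s-\mathrm{med}|>M)\le2e^{-M^2/(2L_s^2)}\approx2\delta^{\kappa s}$ is not a fixed power of $\delta$. For a halfspace, $\{|v_s|\le M\}$ is a slab of width $O(M\sqrt s)$, so the weighted derivative carries no information on most of space. (Even at fixed $s$ you need $\kappa<1-t$: on $|v|,|w|\le\sqrt{\log(1/\delta)}$ the weight $\exp\bigl(-\frac{v^2-2tvw+w^2}{2(1-t^2)}\bigr)$ is only $\ge\delta^{1/(1-t)}\ll\delta$.) So you cannot make the $\sqrt{s_0}$ loss polynomially small while keeping the step-two losses polynomial.

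The missing idea is to work at a fixed time $s>0$ and de-smooth by an equal-measure rearrangement. This uses only that $\Omega$ is a set with $\gamma_R(\Omega)=\gamma_R(H')$, where $H'=\{\langle\nu,x\rangle\ge\alpha\}$. Let $b_s$ be the value of $P_s\one_{H'}$ on $\partial H'$. Since $P_s\one_{H'}$ is increasing in $\langle\nu,x\rangle$, the product $(\one_{H'}-\one_\Omega)(P_s\one_{H'}-b_s)$ is pointwise nonnegative. Combining this with Lemma~\ref{lem:gaussian-facts}\ref{it:penalty-symdiff} (whose weight is exactly $P_s\one_{H'}$ with $\beta=e^{-s}$) and self-adjointness of $P_s$ gives
\[
c_s\,\gamma_R(\Omega\triangle H')^2\le\E\bigl[(\one_{H'}-\one_\Omega)P_s\one_{H'}\bigr]=\E\bigl[\one_{H'}(P_s\one_{H'}-P_s\one_\Omega)\bigr]\le\|P_s\one_{H'}-P_s\one_\Omega\|_1 .
\]
No fine-scale sensitivity of $\Omega$ is needed. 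This is the same quantity as Heilman's penalty in Theorem~\ref{thm:Heilman-penalty}.

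Two further gaps remain:
\begin{enumerate}
\item $\Phi(\ell)$ equals $P_s\one_{H'}$ only when $\|w\|=L_s$. A smaller slope makes $\Phi(\ell)$ equal to $P_s$ of a fractional function $\Phi(\ell')$. You must use again that $\Omega$ is a set with small deficit to pin the slope, for instance via the $\Delta_{01}$-type bound of Lemma~\ref{lem:gaussian-facts}\ref{it:pointwise-gap}.
\item Even after these repairs, your scheme yields only an unspecified exponent. The statement asserts $c(t,a)=\frac14\frac{(1-t)(1-t^2)}{1+3t}$, and reaching it requires carrying out Mossel--Neeman's explicit exponent bookkeeping, which you leave undone.
\end{enumerate}
The second point is harmless for the paper. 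It uses the theorem only qualitatively, to produce $\delta_I$ in the proof of Lemma~\ref{lem:gaussian-main}, so any positive exponent with constants uniform for $a$ in compact subintervals of $(0,1)$ would suffice there.
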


\begin{theorem}[Corollary of Theorem 1.6 in~\cite{Heilman21}]
\label{thm:Heilman-penalty}
Fix $0<t<1$, a closed interval $I\Subset(0,1)$, and $z_0>0$.  There is
$c=c(t,I,z_0)>0$ such that the following holds.  Let $a\in I$, and let
$\Omega\subseteq\bbR^R$ be measurable with $\gamma_R(\Omega)=a$.  Let
\[
        z_\Omega=\E[\one(G\in \Omega) \cdot G],
        \qquad
        \alpha=-\Phi^{-1}(a).
\]
Assume that $z_\Omega\ne0$ and $\|z_\Omega\|\ge z_0$.  Let
$\nu=\|z_\Omega\|^{-1}z_\Omega$, and let $H$ be the halfspace with
$\gamma_R(H)=a$ whose Gaussian centroid is a positive multiple of $z_\Omega$.
Define
\[
        \delta_t(\Omega)=\Stab_t(H)-\Stab_t(\Omega)
\]
and
\[
        \eta_{t/2}(\Omega)=\E\left[
        \Phi\left(\frac{(t/2)\langle\nu,G\rangle-\alpha}{\sqrt{1-t^2/4}}\right)
        (\one(G\in H)-\one(G \in \Omega))\right].
\]
Then $\eta_{t/2}(\Omega)\ge0$ and
\[
        c\,\eta_{t/2}(\Omega)\le \delta_t(\Omega).
\]
\end{theorem}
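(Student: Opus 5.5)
The plan is to treat the statement as a bookkeeping reduction to Heilman's Theorem~1.6 in~\cite{Heilman21}, plus one elementary rearrangement argument for the sign claim. There are three steps: prove $\eta_{t/2}(\Omega)\ge 0$ directly, match the definitions with Heilman's, and make the constant uniform over $a\in I$ and $\|z_\Omega\|\ge z_0$.

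\emph{Step 1: nonnegativity.} First identify $H$ explicitly. A halfspace of Gaussian measure $a$ whose centroid is a positive multiple of $z_\Omega$ must be $H=\{x:\langle \nu,x\rangle\ge \alpha\}$, since $\gamma_R(H)=1-\Phi(\alpha)=\Phi(-\alpha)=a$. Next consider the weight
\[
w(x)=\Phi\Bigl(\frac{(t/2)\langle \nu,x\rangle-\alpha}{\sqrt{1-t^2/4}}\Bigr).
\]
It is a strictly increasing function of $\langle\nu,x\rangle$, so $H$ is a superlevel set $\{w\ge w_0\}$ of $w$ of measure exactly $a$. Since $\gamma_R(\Omega)=\gamma_R(H)$, we have
\[
\E[w(G)(\one_H-\one_\Omega)(G)]=\E[(w(G)-w_0)(\one_H-\one_\Omega)(G)].
\]
The integrand here is pointwise nonnegative: on $H\setminus\Omega$ we have $w\ge w_0$, and on $\Omega\setminus H$ we have $w<w_0$. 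This bathtub argument gives $\eta_{t/2}(\Omega)\ge 0$.

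\emph{Step 2: matching with Heilman's Theorem~1.6.} Heilman's theorem, as I recall it (I have not checked its exact normalization), lower bounds the stability deficit $\Stab_t(H)-\Stab_t(\Omega)$ by a constant times a penalty of exactly this form. The penalty is the integral of $\one_H-\one_\Omega$ against a Gaussian-smoothed indicator of the halfspace in the direction of the centroid, at half the noise parameter. The constant depends on $t$, on the measure $a$, and on a lower bound for the norm of the Gaussian centroid. I would verify that his smoothed halfspace indicator coincides with $w$. This is just the computation
\[
\E_{G'}\,\one\bigl((t/2)\langle\nu,x\rangle+\sqrt{1-t^2/4}\,G'_1\ge \alpha\bigr)=w(x),
\]
and it requires checking that his normalization of $\alpha$, of the centroid direction and of the noise parameter agree with ours (his conventions may use $\sqrt{t}$ or a different sign for $\alpha$, in which case I would reparametrize). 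That yields $c(t,a,\|z_\Omega\|)\,\eta_{t/2}(\Omega)\le\delta_t(\Omega)$.

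\emph{Step 3: uniformity, the main obstacle.} Heilman's constant is stated pointwise in $a$ and in the centroid norm. I expect it to be an explicit expression that is continuous and positive in $a\in(0,1)$ and nondecreasing in $\|z_\Omega\|$, or at least bounded below by a function of a lower bound $z_0$ on that norm. If so, I would take $c(t,I,z_0)=\min_{a\in I}c(t,a,z_0)$, which is positive by compactness of $I$ and continuity. Note that $\|z_\Omega\|\le\varphi(\Phi^{-1}(a))$ is automatically bounded above, so only the lower bound $z_0$ matters.

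If the dependence on $a$ is not explicitly continuous, I would instead track it through Heilman's proof, where it enters only through $\varphi(\alpha)$ and $\alpha$; both are bounded above and away from zero uniformly for $a\in I$. Getting this uniformity is the step I expect to require the most care, because it relies on reading off how the constants in~\cite{Heilman21} depend on the parameters rather than on any new argument.
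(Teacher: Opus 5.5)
Your proposal is correct and follows the paper's route: the paper simply applies Theorem~1.6 of~\cite{Heilman21} with $\beta=t/2$ and takes the minimum of the explicit constant over the compact interval $I$. The only differences are in small details, both in your favour. First, you prove $\eta_{t/2}(\Omega)\ge 0$ directly by the bathtub argument, rather than citing it as part of Heilman's theorem; this is valid because $w$ is increasing in $\langle\nu,x\rangle$ and $H$ is a superlevel set of $w$ with the same measure as $\Omega$. Second, you correctly flag that the compactness step needs continuity of Heilman's constant in $a$, which the paper uses only implicitly.
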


\begin{proof}
Apply Theorem 1.6 in~\cite{Heilman21} with $\beta=t/2$.  The explicit
constant in that theorem is positive for every fixed $t\in(0,1)$, every
$a\in I$, and every $z_0>0$.  Since $I$ is compact, its minimum over $a\in I$ is
positive.  This gives the stated constant $c$.  The nonnegativity of
$\eta_{t/2}$ is part of the same theorem.
\end{proof}

The next lemma presents some elementary facts about Gaussian random variables used in the proof of Lemma~\ref{lem:gaussian-main}. 

\begin{lemma}\label{lem:gaussian-facts}
The following statements hold.
\begin{enumerate}[label=(\alph*),ref=(\alph*)]
\item\label{it:small-density-centroid}
There is a universal constant $C$ such that for every measurable function $r:\bbR^R\to[0,1]$ with $p=\E[r(G)]\le1/2$, we have
\[
        \left\|\E[r(G)G]\right\|
        \le C p\sqrt{\ln\frac{e}{p}}.
\]

\item\label{it:penalty-symdiff}
Fix $0<t<1$, $0<\beta\le t$, and a compact interval $I\subset(0,1)$.  There is a constant $c=c(t,\beta,I)>0$ such that the following holds.  Let $a\in I$, let
\[
        H=\{x:\langle\nu,x\rangle\ge\alpha\},
        \qquad \|\nu\|=1,
        \qquad \gamma_R(H)=a,
\]
and let $A\subseteq\bbR^R$ be a subset of Gaussian measure $a$.  Define
\[
        w(x)=\Phi\left(\frac{\beta\langle\nu,x\rangle-\alpha}{\sqrt{1-\beta^2}}\right).
\]
Then
\[
        \E[w(G)(\one(G\in H)-\one(G\in A))]
        \ge c\,\gamma_R(A\triangle H)^2.
\]

\item\label{it:Lambda-facts}
The function $\Lambda_r$ satisfies the following identities
\[
        \Lambda_r(p,q)=\Lambda_r(q,p),
        \qquad
        \Lambda_r(1-p,1-q)-(1-p)(1-q)=\Lambda_r(p,q)-pq.
\]
For $-1<r<1$ and $0\le p,q\le1$,
\[        
        \begin{cases}
        \Lambda_r(p,q)-pq \ge 0, & \text{if } r\ge0,\\
        \Lambda_r(p,q)-pq \le 0, & \text{if } r\le0.
        \end{cases}
\]
Moreover,
\[
        \Gamma_r(1/2)=\frac14+\frac{1}{2\pi}\arcsin(r),
        \qquad
        \Gamma_r^\pm(0)=\frac{2}{\pi}\arcsin(r),
\]
and $\Gamma_r$ is Lipschitz on $[0,1]$.

\item\label{it:lambda-quadratic}
For every fixed $-1<r<1$ and every $m\in[-1,1]$,
\[
        |\Gamma_r^\pm(m)-\Gamma_r^\pm(0)|\le C_r m^2.
\]

\item\label{it:pointwise-gap}
Fix $0<t<1$ and let $d(a)=\min\{a,1-a\}$ for $a\in[0,1]$.  There is a constant $c_t>0$ such that, for all $p,q\in[0,1]$,
\[
        \Lambda_t(p,q)-pq\ge c_t d(p)d(q).
\]
Consequently, for every measurable function $u:\bbR^R\to[0,1]$,
\[
        \E\,d(u(G))\le C_t
        \left(\E\bigl[\Lambda_t(u(G),u(H))-u(G)u(H)\bigr]\right)^{1/2},
\]
where $G\sim_t H$.

\item\label{it:epigraph}
In this item, we show how to view a function \(F:\mathbb R^R\to[0,1]\)
as a subset of \(\mathbb R^{R+1}\), so that the mean and noise stability
of \(F\) become the corresponding quantities for this subset.
Let $u:\bbR^R\to[0,1]$ be measurable and define
\[
        A_u=\{(x,z)\in\bbR^R\times\bbR: z\le \Phi^{-1}(u(x))\}.
\]
Then $\gamma_{R+1}(A_u)=\E u$.  If $(G,Z)\sim_t (H,W)$ in $\bbR^{R+1}$, then
\[
        \Pr((G,Z)\in A_u,(H,W)\in A_u)
        =\E \Lambda_t(u(G),u(H)).
\]
Moreover, if $B\subseteq\bbR^{R+1}$ is a halfspace and
\[
        g(x)=\gamma_1\{z:(x,z)\in B\},
\]
then $g$ is either an indicator of a Gaussian halfspace in $\bbR^R$, or has the form
\[
        g(x)=\Phi(\alpha+\langle a,x\rangle)
\]
for some $a\in\bbR^R$ and $\alpha\in\bbR$.  In all cases,
\[
        \E |u-g|\le\gamma_{R+1}(A_u\triangle B).
\]
If, in addition, the vertical sections of $B$ are lower half-lines, or if $B$ is independent of the last coordinate, then equality holds:
\[
        \E |u-g|=\gamma_{R+1}(A_u\triangle B).
\]
Finally, the first $R$ coordinates of the centroid of $A_u$ are
\[
        \E[u(G)G],
\]
and the last coordinate is
\[
        -\E[\varphi(\Phi^{-1}(u(G)))]\le0.
\]
\end{enumerate}
\end{lemma}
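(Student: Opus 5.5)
I will prove the six items separately, each by a short direct computation; none requires the earlier machinery beyond standard Gaussian facts.

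For \ref{it:small-density-centroid}, I bound $\|\E[r(G)G]\|=\sup_{\|e\|=1}\E[r(G)\langle e,G\rangle]$. For a fixed unit $e$, the quantity $\E[r\langle e,G\rangle]$ with $0\le r\le1$ and $\E r=p$ is maximized by the indicator of the tail $\{\langle e,G\rangle\ge\Phi^{-1}(1-p)\}$ (bathtub principle). Its value is $\varphi(\Phi^{-1}(1-p))$, and the Mills-ratio estimate gives $\varphi(\Phi^{-1}(1-p))\le Cp\sqrt{\ln(e/p)}$ for $p\le1/2$.

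For \ref{it:penalty-symdiff}, the weight $w$ is increasing in $\langle\nu,x\rangle$, and $A$ and $H$ have equal measure. Write $\gamma=\gamma_R(A\triangle H)/2$ for the mass of each of $H\setminus A$ and $A\setminus H$. The expectation equals $\E[w\,\one_{H\setminus A}]-\E[w\,\one_{A\setminus H}]$. The first term is at least the integral of $w$ over a slab $\{\alpha\le\langle\nu,x\rangle\le\alpha+s_1\}$ of mass $\gamma$, since mass placed as low as possible inside $H$ minimizes it. The second term is at most the integral over $\{\alpha-s_2\le\langle\nu,x\rangle<\alpha\}$ of mass $\gamma$. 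Because $w'$ is bounded below by a positive constant on a neighborhood of the threshold (here $\alpha$ is bounded since $a\in I$), the difference is at least $c\,\gamma\cdot\gamma$. This gives $c\,\gamma_R(A\triangle H)^2$. For large $\gamma$ the slabs are not small, but compactness still yields a positive constant.

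For \ref{it:Lambda-facts}, symmetry is immediate. For the complement identity I apply inclusion--exclusion and use the symmetry $G\mapsto-G$. The sign of $\Lambda_r(p,q)-pq$ follows from Slepian's inequality, or equivalently from the positivity of Hermite expansions in $r$: $\Lambda_r(p,q)-pq=\sum_{k\ge1}r^k c_k(p)c_k(q)$ with $c_k(p)=\E[\one(G\le\Phi^{-1}(p))H_k(G)]$. For $r\ge0$ I first get $\ge0$ via Slepian/monotonicity in $r$, and for $r\le0$ I reduce to $r\ge0$ by $q\mapsto1-q$ together with the complement identity. The value of $\Gamma_r(1/2)$ is Sheppard's formula, and $\Gamma^\pm_r(0)$ follows by substitution. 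Lipschitzness follows from $\partial_a\Lambda_r(a,b)\le1$.

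For \ref{it:lambda-quadratic}, $\Gamma^\pm_r(m)$ is even in $m$ by the complement identity. It is also $C^2$ on $(-1,1)$, and it is bounded on $[-1,1]$. Hence the claim holds for $|m|\le1/2$ by Taylor expansion, and for $|m|\ge1/2$ it holds trivially after enlarging $C_r$.

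Item \ref{it:pointwise-gap} is the main obstacle. I will show that $\Lambda_t(p,q)-pq$ divided by $d(p)d(q)$ is bounded below on $(0,1)^2$. By continuity and positivity in the interior, it suffices to handle the boundary regimes. Using the expansion, $\Lambda_t(p,q)-pq\ge t\,c_1(p)c_1(q)$ whenever all terms are nonnegative, with $c_1(p)=-\varphi(\Phi^{-1}(p))$. However, the higher terms are not sign-definite. I will therefore instead use the integral formula
\[
\partial_t\Lambda_t(p,q)=\varphi_2\bigl(\Phi^{-1}(p),\Phi^{-1}(q);t\bigr)\ge0,
\]
which gives
\[
\Lambda_t(p,q)-pq=\int_0^t\varphi_2\,ds\ge c\,\varphi(\Phi^{-1}p)\,\varphi(\Phi^{-1}q)\cdot(\text{correction}).
\]
Near the corners the Gaussian tails make this delicate. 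I will compare using $\varphi(\Phi^{-1}(p))\gtrsim d(p)$ and an explicit lower bound on the bivariate density over $s\in[0,t/2]$. The consequence then follows by applying the pointwise bound at $(u(G),u(H))$, using $d(u(G))d(u(H))$, and Cauchy--Schwarz plus stability to turn $\E[d(u(G))d(u(H))]$ into a lower bound by $(\E d(u))^2$. Writing $d(u)$ in Hermite expansion, $\E[d(u(G))d(u(H))]\ge(\E d)^2$ since $t>0$.

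For \ref{it:epigraph}, everything follows by Fubini in the last coordinate. The measure of the section $\{z\le\Phi^{-1}(u(x))\}$ is $u(x)$. For correlated pairs, conditioning on $(G,H)$, the pair $(Z,W)$ is $t$-correlated and independent, which gives $\Lambda_t(u(G),u(H))$. The sections of a halfspace are half-lines, whole lines, or empty, which produces either $\Phi(\alpha+\langle a,x\rangle)$ or an indicator. Pointwise, $|u(x)-g(x)|\le\gamma_1(\text{section symmetric difference})$, with equality when both sections are lower half-lines. The centroid coordinates follow since $\E[Z\one(Z\le s)]=-\varphi(s)$.
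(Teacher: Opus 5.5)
Your treatment of (a), (b) and (f) matches the paper. Your self-contained arguments for (c) and (d) are fine: Slepian for (c), where the reduction of $r\le0$ to $r\ge0$ via $q\mapsto1-q$ really comes from the flip $H\mapsto-H$ rather than the two-sided complement identity, and evenness of $\Gamma_r^\pm$ plus a Taylor bound for (d). The Hermite argument giving $\E[d(u(G))d(u(H))]\ge(\E\,d(u(G)))^2$ for the consequence in (e) is also fine.

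The gap is the pointwise inequality $\Lambda_t(p,q)-pq\ge c_t\,d(p)d(q)$ itself. You flag it as the main obstacle but stop at an unspecified ``correction'' factor, and the tools you name do not suffice at the mixed corner $p\to0$, $q\to1$. Write $\varphi_2(a,b;s)$ for the bivariate normal density with correlation $s$, and put $a=\Phi^{-1}(p)\to-\infty$, $b=\Phi^{-1}(q)\to+\infty$. Then $ab<0$ and
\[
\frac{\varphi_2(a,b;s)}{\varphi(a)\varphi(b)}=\frac{1}{\sqrt{1-s^2}}\exp\Bigl(-\frac{s^2(a^2+b^2)+2s|ab|}{2(1-s^2)}\Bigr)\le\frac{e^{-s^2(a^2+b^2)/2}}{\sqrt{1-s^2}} .
\]
For each fixed $s>0$ this decays like a Gaussian in $(a,b)$, so a uniform lower bound on the density over $s\in[0,t/2]$ gives nothing useful. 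Moreover, in this corner $\Lambda_t(p,q)-pq$ is of order $p(1-q)\asymp\varphi(a)\varphi(b)/(|a||b|)$. So no bound of the form $c\,\varphi(a)\varphi(b)$ can hold, and the crude comparison $\varphi(\Phi^{-1}(p))\gtrsim d(p)$ loses exactly the factors $|a|,|b|$ you cannot afford.

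To repair your route, restrict the $s$-integral to the window $0\le s\le 1/|ab|$. There the ratio above is bounded below by an absolute constant once $|a|,|b|$ are large, which gives $\int_0^t\varphi_2(a,b;s)\,ds\gtrsim\varphi(a)\varphi(b)/|ab|$. The sharp Mills bounds $\Phi(a)\le\varphi(a)/|a|$ and $1-\Phi(b)\le\varphi(b)/b$ then yield $\gtrsim d(p)d(q)$. The same-sign corner and the edges ($p\to0$ with $q$ in a compact subinterval) need their own, easier, windows.

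The paper avoids the $s$-integral altogether. It uses the conditional representation $\Lambda_t(p,q)=\int_{-\infty}^a\varphi(x)\Phi\bigl((b-tx)/\sigma_t\bigr)\,dx$, with $\sigma_t=\sqrt{1-t^2}$, for the edge and $(0,0)$ cases. In the mixed corner it writes $\Lambda_t(p,q)-pq=p(1-q)-\Pr(G\le a,H>b)\ge p\bigl((1-q)-\Phi(-b/\sigma_t)\bigr)$ and uses $\Phi(-b/\sigma_t)/\Phi(-b)\to0$, which gives $\Lambda_t(p,q)-pq\ge\frac12p(1-q)$ directly. Until one of these boundary analyses is carried out for every corner type, (e) is not proved.
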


\begin{proof}
We prove the six parts in order.

\emph{Proof of part \ref{it:small-density-centroid}.}
Let \(U(a)=\varphi(\Phi^{-1}(a))\) be the Gaussian isoperimetric function.  By Definition 5.26 and Proposition 5.27 in~\cite{ODonnell14},
\(U(p)=O(p\sqrt{\ln(e/p)})\) for \(0<p\le1/2\).  For every unit vector \(v\), the
quantity \(\E[r(G)\langle v,G\rangle]\), subject to \(0\le r\le1\) and
\(\E r=p\), is maximized by taking \(r\) to be the indicator of a Gaussian tail
in the direction \(v\).  Thus
\[
        \E[r(G)\langle v,G\rangle]\le U(p).
\]
Taking the supremum over unit vectors \(v\) gives the desired bound.

\emph{Proof of part \ref{it:penalty-symdiff}.}
Write $y=\langle\nu,x\rangle$.  Since $a$ ranges in a compact subinterval of $(0,1)$, the boundary point $\alpha$ ranges in a compact interval.  Choose $\ell_0>0$ so that, for all such $\alpha$, the derivative of
\[
        y\mapsto \Phi\left(\frac{\beta y-\alpha}{\sqrt{1-\beta^2}}\right)
\]
is bounded below by a positive constant $m_0$ on $[\alpha-\ell_0,\alpha+\ell_0]$.  Let $M=(2\pi)^{-1/2}$ be the maximum of the one-dimensional Gaussian density.

Let $B=H\setminus A$, $C=A\setminus H$, and $r_0=\gamma_R(B)=\gamma_R(C)=\gamma_R(A\triangle H)/2$.  Since $w$ depends only on $y$ and is increasing in $y$,
\[
\begin{aligned}
        \E[w(G)(\one(G\in H)-\one(G\in A))]
        &=\E[\one(G\in B)(w(G)-w(\alpha))]
          +\E[\one(G\in C)(w(\alpha)-w(G))].
\end{aligned}
\]
Let
\[
        \ell=\min\left\{\ell_0,\frac{r_0}{4M}\right\}.
\]
If $\ell=r_0/(4M)$, then the strip $\{|y-\alpha|\le\ell\}$ has Gaussian measure at most $2M\ell\le r_0/2$.  Therefore at least $r_0/2$ of the mass of $B$ lies in $\{y\ge\alpha+\ell\}$ and at least $r_0/2$ of the mass of $C$ lies in $\{y\le\alpha-\ell\}$.  Hence the displayed expectation is at least $m_0\ell r_0\ge c r_0^2$.

If instead $\ell=\ell_0$, then $r_0\ge4M\ell_0$.  Applying the preceding argument with $r_0$ replaced by $4M\ell_0$ gives a fixed positive lower bound for the expectation, and this fixed lower bound is at least a constant times $r_0^2$, since $r_0\le1$.  This proves the second part.

\emph{Proof of part \ref{it:Lambda-facts}.}
These formulas are from
Exercise 11.19 in~\cite{ODonnell14}.  The identities at $1/2$ are Sheppard's
formula~\cite{Sheppard99}; see also Section 5.2 and Exercise
11.19 in~\cite{ODonnell14}.

\emph{Proof of part \ref{it:lambda-quadratic}.}
Let $a_0=(1+m)/2$ and $q_0=\Phi^{-1}(a_0)$.  By the derivative formula in
Exercise 11.19(d) in~\cite{ODonnell14},
\[
        \Gamma_r'(a_0)=2\Phi\left(q_0\sqrt{\frac{1-r}{1+r}}\right).
\]
Hence
\[
        (\Gamma_r^\pm)'(m)=4\Phi\left(\Phi^{-1}\left(\frac{1+m}{2}\right)
        \sqrt{\frac{1-r}{1+r}}\right)-2.
\]
The right side is negative for $m<0$, zero at $m=0$, and positive for $m>0$.
Differentiating at $m=0$ gives
\[
        (\Gamma_r^\pm)''(0)=2\sqrt{\frac{1-r}{1+r}}>0.
\]
Thus $(\Gamma_r^\pm(m)-\Gamma_r^\pm(0))/m^2$ extends continuously to $m=0$.  Since $[-1,1]$ is compact, the quotient is bounded.  

\emph{Proof of part \ref{it:pointwise-gap}.}
For $p,q\in(0,1)$, part \ref{it:Lambda-facts} gives $\Lambda_t(p,q)-pq>0$. Consider
\[
        R(p,q)=\frac{\Lambda_t(p,q)-pq}{d(p)d(q)}
\]
on $(0,1)^2$.  The function $R$ is continuous and positive there.  It remains to check that its lower limit at the boundary of the square is positive.  The identities
$\Lambda_t(p,q)=\Lambda_t(q,p)$
and
$ \Lambda_t(1-p,1-q)-(1-p)(1-q)=\Lambda_t(p,q)-pq
$ 
reduce the boundary analysis to the following three cases.

First let $p\to0$ while $q$ stays in a compact subinterval of $(0,1)$.  Write $a=\Phi^{-1}(p)$ and $b=\Phi^{-1}(q)$, and let $\sigma_t=(1-t^2)^{1/2}$.  From the conditional representation
\[
        \Lambda_t(p,q)=\int_{-\infty}^a
        \varphi(x)\Phi\left(\frac{b-tx}{\sigma_t}\right)\,dx,
\]
the conditional factor tends to $1$ uniformly for $x\le a$ and for $b$ in any fixed bounded interval.  Hence, for all sufficiently small $p$,
\[
        \Lambda_t(p,q)\ge \frac{1+q}{2}\,p.
\]
Therefore
\[
        \Lambda_t(p,q)-pq\ge \frac{1-q}{2}\,p.
\]
Since $q$ stays away from $0$ and $1$, this is at least a positive constant times $d(p)d(q)$.

Second let $(p,q)\to(0,0)$.  By symmetry assume $p\le q$.  Then $a=\Phi^{-1}(p)\le b=\Phi^{-1}(q)<0$.  For $x\le a$,
\[
        \Phi\left(\frac{b-tx}{\sigma_t}\right)
        \ge \Phi\left(\frac{b-ta}{\sigma_t}\right)
        \ge \Phi\left(b\sqrt{\frac{1-t}{1+t}}\right).
\]
The last inequality uses $a\le b<0$.  Since $0<\sqrt{(1-t)/(1+t)}<1$, l'Hospital's rule gives
\[
        \frac{\Phi\left(b\sqrt{(1-t)/(1+t)}\right)}{\Phi(b)}\longrightarrow\infty
        \qquad\text{as } b\to-\infty .
\]
Thus, for $q$ small enough,
\[
        \Lambda_t(p,q)\ge 2pq,
\]
and hence $\Lambda_t(p,q)-pq\ge pq=d(p)d(q)$.

Third let $(p,q)\to(0,1)$.  Again let $a=\Phi^{-1}(p)$ and $b=\Phi^{-1}(q)$, so $a\to-\infty$ and $b\to\infty$.  Since
\[
        \Lambda_t(p,q)=p-\Pr(G\le a, H>b),
\]
we have
\[
        \Lambda_t(p,q)-pq
        =p(1-q)-\Pr(G\le a,H>b)  
        \ge p\left((1-q)-\Phi\left(\frac{ta-b}{\sigma_t}\right)\right).
\]
Here the inequality follows from the conditional estimate
\[
        \Pr(H>b\mid G=x)
        =\Phi\left(\frac{tx-b}{\sigma_t}\right)
        \le \Phi\left(\frac{ta-b}{\sigma_t}\right)
        \qquad (x\le a).
\]
Because $ta\le0$ and $\sigma_t<1$,
\[
        \Phi\left(\frac{ta-b}{\sigma_t}\right)
        \le \Phi\left(-\frac{b}{\sigma_t}\right).
\]
By l'Hospital's rule,
\[
        \frac{\Phi(-b/\sigma_t)}{1-q}
        =\frac{\Phi(-b/\sigma_t)}{\Phi(-b)}\longrightarrow0
        \qquad\text{as }b\to\infty .
\]
Consequently $\Lambda_t(p,q)-pq\ge \frac12 p(1-q)=\frac12d(p)d(q)$ near this mixed corner.  The corner $(1,0)$ follows by exchange symmetry, and the corner $(1,1)$ follows from the second case by complementing both variables.

We have shown that $R$ has a positive lower limit at every boundary point.  Since $R$ is positive and continuous in the interior, compactness gives $c_t>0$ such that
\[
        \Lambda_t(p,q)-pq\ge c_t d(p)d(q)
\]
for all $p,q\in[0,1]$.
Now set $r_1(x)=d(u(x))$.  The pointwise inequality gives
\[
        \E\bigl[\Lambda_t(u(G),u(H))-u(G)u(H)\bigr]
        \ge c_t\E[r_1(G)r_1(H)].
\]
Write $G=\sqrt t\,W+\sqrt{1-t}\,Z_1$ and $H=\sqrt t\,W+\sqrt{1-t}\,Z_2$, where $W,Z_1,Z_2$ are independent standard Gaussian vectors.  Conditioning on $W$ gives
\[
        \E[r_1(G)r_1(H)]
        =\E\left[\E[r_1(G)\mid W]^2\right]
        \ge \E\left[\E[r_1(G)\mid W]\right]^2
        = (\E r_1(G))^2.
\]
Combining the last two inequalities proves the desired estimate.

\emph{Proof of part \ref{it:epigraph}.}
This is the standard epigraph construction used to pass from set Borell to functional Borell theorem~\cite{Borell85,MosselNeeman15}.  The identity $\gamma_{R+1}(A_u)=\E u$ follows by conditioning on $x$.  The noise-stability identity follows by conditioning on $(G,H)$; the conditional probability that $Z\le\Phi^{-1}(u(G))$ and $W\le\Phi^{-1}(u(H))$ is exactly $\Lambda_t(u(G),u(H))$.

For the halfspace assertion, write $B$ as
\[
        B=\{(x,z): b+\langle a,x\rangle+cz\ge0\}.
\]
If $c=0$, then each vertical section has measure either $0$ or $1$, so $g$ is an indicator of a halfspace in $\bbR^R$.  If $c\ne0$, each vertical section is a half-line.  Its one-dimensional Gaussian measure is $\Phi(\alpha+\langle a',x\rangle)$ for suitable $\alpha,a'$.

For each fixed $x$, the vertical section of $A_u$ has Gaussian measure $u(x)$, while the vertical section of $B$ has Gaussian measure $g(x)$.  For any two measurable subsets $E,F\subseteq\bbR$,
\[
        |\gamma_1(E)-\gamma_1(F)|\le \gamma_1(E\triangle F).
\]
Applying this to the two vertical sections and integrating in $x$ gives
\[
        \E |u-g|\le\gamma_{R+1}(A_u\triangle B).
\]
If the vertical section of $B$ is also a lower half-line, then the two vertical sections are nested half-lines, so the preceding inequality is an equality for each $x$.  If $c=0$, the section of $B$ is either all of $\bbR$ or empty for each $x$, and equality again holds in each vertical fiber.

The centroid identities are Fubini's theorem.  For the first $R$ coordinates,
\[
        \E[\one((G,Z)\in A_u)\cdot G]
        =\E[u(G)G].
\]
For the last coordinate,
\[
        \int_{-\infty}^{\Phi^{-1}(u(x))} z\varphi(z)\,dz
        =-\varphi(\Phi^{-1}(u(x))),
\]
with the convention that the right side is $0$ when $u(x)\in\{0,1\}$.
\end{proof}

\begin{proof}[Proof of Lemma~\ref{lem:gaussian-main}]

By Theorem~\ref{thm:MN-functional-borell} and part \ref{it:Lambda-facts} of Lemma~\ref{lem:gaussian-facts}, the two quantities
\[
        \Delta_{\mathrm{Borell}}
        =\Gamma_t(m)-\E \Lambda_t(F(G),F(H))
\]
and
\[
        \Delta_{01}
        =\E\bigl[\Lambda_t(F(G),F(H))-F(G)F(H)\bigr]
\]
are nonnegative, and
\[
        \Delta=\Delta_{\mathrm{Borell}}+\Delta_{01}.
\]
By part \ref{it:pointwise-gap} of Lemma~\ref{lem:gaussian-facts},
\begin{equation}\label{eq:centroid-inverse-fractional}
        \E\min\{F,1-F\}\le C_t\Delta_{01}^{1/2}\le C_t\Delta^{1/2}.
\end{equation}

Apply part \ref{it:epigraph} of Lemma~\ref{lem:gaussian-facts} and set $A=A_F\subseteq\bbR^{R+1}$.  Then $\gamma_{R+1}(A)=m$, the set stability gap of $A$ is $\Delta_{\mathrm{Borell}}$, and the first $R$ coordinates of its centroid are $\widehat F^{(1)}$.

Let
\[
        m_I=\min_{a\in I}\varphi(\Phi^{-1}(a))>0.
\]
Every halfspace of Gaussian measure $m\in I$ has centroid norm at least $m_I$.  Since $I\Subset(0,1)$, we use the two robust Borell estimates below with constants chosen uniformly for $m\in I$.  Choose $\varepsilon_I>0$ so small that
\[
        C'\varepsilon_I\sqrt{\ln(e/\varepsilon_I)}\le m_I/2,
\]
where $C'$ is the universal constant obtained in the next paragraph from part \ref{it:small-density-centroid} of Lemma~\ref{lem:gaussian-facts}.  By Theorem~\ref{thm:Heilman-set-robust}, there is $\delta_I>0$ such that $\Delta_{\mathrm{Borell}}\le\delta_I$ implies that $A$ is within Gaussian symmetric difference $\varepsilon_I$ of some halfspace $H_0$ of measure $m$.  Therefore, if $\Delta_{\mathrm{Borell}}\le\delta_I$, the centroid $z_A$ of $A$ satisfies
\begin{equation}\label{eq:epigraph-centroid-lower}
        \|z_A\|\ge m_I/2.
\end{equation}
Indeed, let $z_{H_0}=\E[\one(G\in H_0)\cdot G]$ and write
\[
        E_0=A\setminus H_0,
        \qquad
        F_0=H_0\setminus A.
\]
Since $A$ and $H_0$ have the same measure, $\gamma(E_0)=\gamma(F_0)=\gamma(A\triangle H_0)/2\le\varepsilon_I/2$.  Hence
\[
        z_A-z_{H_0}=\E[\one(G\in E_0)\cdot G]-\E[\one(G\in F_0)G].
\]
Applying part \ref{it:small-density-centroid} of Lemma~\ref{lem:gaussian-facts} separately to indicator functions $\one_{E_0}$ and $\one_{F_0}$ gives
\[
        \|z_A-z_{H_0}\|
        \le C'\varepsilon_I\sqrt{\ln(e/\varepsilon_I)}
        \le m_I/2.
\]
Since $\|z_{H_0}\|\ge m_I$, the triangle inequality gives \eqref{eq:epigraph-centroid-lower}.

Assume first that $\Delta\le\delta_I$.  Then \eqref{eq:epigraph-centroid-lower} holds.  We first get the lower bound on the first Hermite vector.  Let
\[
        V(p)=\varphi(\Phi^{-1}(p)),
        \qquad d(p)=\min\{p,1-p\}.
\]
By Proposition 5.27 in~\cite{ODonnell14},
$V(p)\le C d(p)\sqrt{\ln(e/d(p))}$, with the right-hand side interpreted as $0$ when $d(p)=0$.  The function
$x\mapsto x\sqrt{\ln(e/x)}$ is increasing and concave on $[0,1]$.  Hence, using
\eqref{eq:centroid-inverse-fractional} and Jensen's inequality,
\[
        |(z_A)_{R+1}|
        =\E V(F(G))
        \le C\Delta^{1/2}\sqrt{\ln(e/\Delta)}
\]
for $0<\Delta\le\delta_I$; for $\Delta=0$ the left-hand side is $0$.  After decreasing
$\delta_I$, we may assume that $|(z_A)_{R+1}|\le m_I/4$.  Since the first $R$
coordinates of $z_A$ are $\widehat F^{(1)}$, \eqref{eq:epigraph-centroid-lower}
gives
\[
        \|\widehat F^{(1)}\|
        \ge \sqrt{(m_I/2)^2-(m_I/4)^2}
        \ge m_I/4.
\]
Set $c_0=m_I/4$.

Apply Theorem~\ref{thm:Heilman-penalty} to $A$ with $z_0=m_I/2$.  Let
$\widetilde B\subseteq\bbR^{R+1}$ be the halfspace whose centroid is a positive
multiple of $z_A$ and whose Gaussian measure is $m$.  By the last-coordinate
formula in part \ref{it:epigraph} of Lemma~\ref{lem:gaussian-facts}, the last
coordinate of $z_A$ is nonpositive; hence, if $\widetilde B$ has a nonzero
coefficient in the last coordinate, its vertical sections are lower half-lines.
Thus equality in the oriented part of part \ref{it:epigraph} of
Lemma~\ref{lem:gaussian-facts} holds in this case.  The proof below only uses the
weaker inequality, which holds for every orientation.  By
Theorem~\ref{thm:Heilman-penalty} and part \ref{it:penalty-symdiff} of
Lemma~\ref{lem:gaussian-facts},
\begin{equation}\label{eq:epigraph-close-centroid-halfspace}
        \gamma_{R+1}(A\triangle\widetilde B)
        \le C\Delta_{\mathrm{Borell}}^{1/2}
        \le C\Delta^{1/2}.
\end{equation}
Let
\[
        g(x)=\gamma_1\{z:(x,z)\in\widetilde B\}.
\]
By part \ref{it:epigraph} of Lemma~\ref{lem:gaussian-facts},
\begin{equation}\label{eq:u-close-section-g}
        \E|F-g|\le C\Delta^{1/2}.
\end{equation}
We now identify the direction of the threshold obtained from $g$.  Write the
normal vector of $\widetilde B$ as $(v,c)\in\bbR^R\times\bbR$.  Since the centroid of
$\widetilde B$ is a positive multiple of $z_A$, its first $R$ coordinates $v$ are
a positive multiple of $\widehat F^{(1)}$.  If $c<0$, the vertical sections of
$\widetilde B$ are lower half-lines and $g$ has the form
$\Phi(\alpha+\langle a,x\rangle)$ with $a$ a positive multiple of $v$.  If
$c>0$, the last coordinate of the centroid of $\widetilde B$ would be positive,
whereas the last coordinate of the centroid of $A_F$ is
\[
        -\E[\varphi(\Phi^{-1}(F(G)))]\le0.
\]
Thus $c>0$ cannot occur unless this last coordinate is zero; in that exceptional
case the first-level direction has $c=0$.  If $c=0$, then $g$ is already an
indicator of a Gaussian LTF with normal vector $v$.  Hence
\[
        B=\{x:g(x)\ge1/2\}
\]
is a Gaussian LTF whose normal vector is $\widehat F^{(1)}$.  Since
$x\mapsto\min\{x,1-x\}$ is $1$-Lipschitz on $[0,1]$,
\[
        \E |g-\one_B| =\E\min\{g,1-g\}  \le \E\min\{F,1-F\}+\E|F-g| 
        \le C\Delta^{1/2}
\]
by \eqref{eq:centroid-inverse-fractional} and \eqref{eq:u-close-section-g}.  Therefore
\[
        \E |F-\one_B|\le C\Delta^{1/2}.
\]
This proves the lemma, with $\delta_0$ chosen small enough.

\end{proof}

\subsection{Proof of Lemma~\ref{lem:smoothed-boolean-transfer}}

This section uses the sign cube $\Cube$.  We use an affine transformation $x_i=(1+\sigma_i)/2$ to go from $\Cube$ to $\ZeroCube$.  Linear threshold function $\theta+\sum_i w_i\sigma_i\ge0$ on $\Cube$ corresponds to linear threshold function
$\theta-\sum_i w_i+2\sum_i w_i x_i\ge0$ on $\ZeroCube$.

For $u:\Cube\to[0,1]$ and $\lambda\in(0,1)$, define
\[
        P_{u,\lambda}(x)=\sum_{S\subseteq[R]}\lambda^{|S|}\widehat u(S)\prod_{i\in S}x_i,
        \qquad x\in\bbR^R.
\]
That is, $P_{u,\lambda}$ has the same coefficients as $T_\lambda u$, but is
evaluated on Gaussian space.  For a function $P$, let $\overline P$ be the
truncation of $P$ to $[0,1]$.

\begin{theorem}[Smoothed invariance; \cite{MOO10} and Theorem 7.1 in~\cite{MosselNeeman15}]
\label{thm:MN-invariance}
For some constants $c > 0$ and $C>0$, the following holds.
Assume that for $u:\Cube\to[0,1]$. we have
$\max_i\Influence_i(u)\le\tau$.  Then
\[
        \E\left[\bigl(P_{u,\lambda}(G)-\overline{P_{u,\lambda}}(G)\bigr)^2\right]
        \le C \tau^{c \lambda}.
\]
\end{theorem}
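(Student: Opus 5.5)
The plan is to follow the standard Mossel--O'Donnell--Oleszkiewicz argument, in the smoothed form of \cite{MosselNeeman15}, Theorem 7.1. The starting observation is that on the Boolean cube there is nothing to truncate. Evaluating $P_{u,\lambda}$ at $\sigma\in\Cube$ gives exactly $T_\lambda u(\sigma)$, which lies in $[0,1]$ since $u$ does. So if $\psi(y)=\mathrm{dist}(y,[0,1])^2$, the quantity to bound is
\[
\E[\psi(P_{u,\lambda}(G))]=\E[\psi(P_{u,\lambda}(G))]-\E[\psi(P_{u,\lambda}(\sigma))],
\]
where $\sigma$ is uniform on $\Cube$. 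The task is therefore an invariance statement comparing the test function $\psi$ under Gaussian and Boolean inputs.

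\textbf{Step 1: truncate the degree.} Fix $k$ and split $P_{u,\lambda}=P^{\le k}+P^{>k}$ by degree. In both the Gaussian and the Boolean space the multilinear monomials are orthonormal, so
\[
\|P^{>k}\|_2^2=\sum_{|S|>k}\lambda^{2|S|}\widehat u(S)^2\le \lambda^{2k}.
\]
The map $y\mapsto \mathrm{dist}(y,[0,1])$ is $1$-Lipschitz, and $\psi\le 1$ on $[-1,2]$ with the usual extension. Hence replacing $P_{u,\lambda}$ by $P^{\le k}$ changes $\E\psi$ by $O(\lambda^{k})$ in each space, using Cauchy--Schwarz for the cross term.

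\textbf{Step 2: invariance for $P^{\le k}$.} Here the low-degree polynomial has all influences at most $\tau$, and its total influence is at most $\sum_S |S|\lambda^{2|S|}\widehat u(S)^2\le O(1/(1-\lambda))$. First replace $\psi$ by a smooth $\psi_\delta$ with $\|\psi_\delta-\psi\|_\infty\le O(\delta)$ on the relevant range (and quadratic growth preserved outside it) and $\|\psi_\delta'''\|_\infty\le O(1/\delta)$. Then run the Lindeberg hybrid argument, swapping one coordinate at a time and using $(2,3)$-hypercontractivity for degree-$k$ polynomials. This gives
\[
\bigl|\E\psi_\delta(P^{\le k}(G))-\E\psi_\delta(P^{\le k}(\sigma))\bigr|\le O\!\Bigl(\tfrac1\delta\, 9^{k}\sum_i \Influence_i(P^{\le k})^{3/2}\Bigr)\le O\!\Bigl(\tfrac{9^k\sqrt\tau}{\delta(1-\lambda)}\Bigr).
\]

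\textbf{Step 3: optimize.} The total error is $O(\lambda^k+\delta+9^k\sqrt\tau/(\delta(1-\lambda)))$. Take $k=c'\log(1/\tau)$ with $c'$ small enough that $9^k\le\tau^{-1/8}$, and $\delta=\tau^{1/8}$. This yields a bound of the form $\lambda^{c'\log(1/\tau)}+\tau^{1/8}/(1-\lambda)$, which is $\tau^{c(1-\lambda)}$ up to constants. I would then reconcile this with the exponent as printed in the statement, following the normalization of \cite{MosselNeeman15}, Theorem 7.1: the decay is governed by $\lambda^{\Theta(\log 1/\tau)}$, and the dependence on $\lambda$ only enters through that term.

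\textbf{Main obstacle.} The delicate step is the non-smoothness of $\psi$ together with the unbounded range of $P$ on Gaussian space. Lindeberg needs a bounded third derivative, while $\psi$ grows quadratically. I would handle this as in \cite{MOO10}, via their Lemma on $\zeta$ with bounded second derivative (their Theorem 3.20 route). This uses only $\|\psi''\|_\infty\le 2$ and the degree-$k$ hypercontractive third-moment bound, with a mollification at scale $\delta$ whose cost is absorbed as in Step 3.
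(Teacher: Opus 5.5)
\textbf{Verdict.} Your core argument is correct, but it does not prove the statement as printed, and your final ``reconcile'' step cannot bridge the difference.

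\textbf{What your argument proves.} The paper gives no proof here; it cites \cite{MOO10} and \cite{MosselNeeman15}. Your Steps 1--3 are the standard MOO argument: degree truncation, mollification of $\psi=\mathrm{dist}(\cdot,[0,1])^2$, and Lindeberg swapping with $(2,3)$-hypercontractivity. Carried out, they give
\[
\E\bigl[\mathrm{dist}(P_{u,\lambda}(G),[0,1])^2\bigr]\le C\,\tau^{c\log(1/\lambda)},
\]
which is $\tau^{\Theta(1-\lambda)}$ for $\lambda\in[1/2,1)$. Your term $\tau^{1/8}/(1-\lambda)$ blows up when $1-\lambda$ is tiny. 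Absorb it using the trivial bound $\E[\mathrm{dist}(P_{u,\lambda}(G),[0,1])^2]\le\E[P_{u,\lambda}(G)^2]\le1$. Alternatively, bound the total influence of $P^{\le k}$ by $k$ instead of $O(1/(1-\lambda))$, which removes that factor altogether.

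\textbf{Why the printed bound is out of reach.} With universal $c,C$, the bound $C\tau^{c\lambda}$ is false. It gives $C\tau^{c\lambda}\le C\tau^{c/2}$ uniformly over $\lambda\in[1/2,1)$. For fixed $u$, $P_{u,\lambda}\to P_{u,1}$ in $L_2(\gamma_R)$ as $\lambda\to1$, and $\mathrm{dist}(\cdot,[0,1])$ is $1$-Lipschitz. So the statement would bound the truncation error of the \emph{unsmoothed} multilinear extension by $C\tau^{c/2}$. That fails:
\begin{itemize}
\item Take $u=(1+f)/2$ with $f=\prod_{j=1}^k\mathrm{Maj}_m(x^{(j)})$ on disjoint blocks. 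Then $\max_i\Influence_i(u)=O(1/\sqrt m)$, independently of $k$.
\item $P_{u,1}(G)=\frac12+\frac12 Z$ with $Z=\prod_j Y_j$, where the $Y_j$ are i.i.d.\ copies of the multilinear extension of $\mathrm{Maj}_m$ at a Gaussian vector.
\item Since $\E Y_j^2=1$ and $|Y_j|$ is not a.s.\ constant, Jensen gives $\E\log|Y_j|<0$. Hence $Z\to0$ in probability as $k\to\infty$, while $\E Z^2=1$.
\item Therefore $\E[Z^2\one_{|Z|>1}]\to1$ and $\E[\mathrm{dist}(Z,[-1,1])^2]\to1$. At $\lambda=1$ the truncation error tends to $1/4$ for every fixed $m$.
\item By continuity it stays near $1/4$ for $\lambda$ close to $1$, contradicting $C\tau^{c/2}\to0$.
\end{itemize}

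\textbf{What to state instead.} The exponent $c\lambda$ is presumably a transcription slip. The cited bounds decay like $\tau^{\Omega(1-\lambda)}$, and Mossel--Neeman parametrize the noise as $e^{-\eta}$; the paper's own Claim~\ref{clm:MN-ltf-smoothing} uses the conversion $\eta=\log(1/\lambda)$. If instead $c,C$ are allowed to depend on $\lambda$, your bound gives the statement at once with $c_\lambda=c\log(1/\lambda)/\lambda$. Rather than deferring to ``the normalization'', state and prove the $C\tau^{c\log(1/\lambda)}$ version explicitly. That is all the paper uses: the proof of Lemma~\ref{lem:smoothed-boolean-transfer} only needs $r_\lambda(\tau)=o_{\tau\downarrow0}(1)$ for each fixed $\lambda<1$.

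\textbf{Minor point.} Your ``main obstacle'' is misdiagnosed. Quadratic growth of $\psi$ is harmless in a third-order Lindeberg step. Only $\|\psi_\delta'''\|_\infty$ enters the remainder, and the first two orders cancel by moment matching. The only real issue is the jump of $\psi''$ at $0$ and $1$. Mollifying at scale $\delta$ gives $\|\psi_\delta-\psi\|_\infty=O(\delta^2)$ and $\|\psi_\delta'''\|_\infty=O(1/\delta)$ directly. So no separate bounded-second-derivative lemma from \cite{MOO10} is needed.
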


\begin{claim}[Gaussian approximation for regular Boolean LTFs]\label{clm:MN-ltf-smoothing}
There is a universal constant $C<\infty$ such that the following holds.  Let
$L(\sigma)=\one\left(\langle w,\sigma\rangle\ge\theta\right)$ with
$\|w\|=1$. Let $\mu = |w|_\infty$ and
$H(x)=\one\left(\langle w,x\rangle\ge\theta\right)$.  For every fixed
$\lambda\in[1/2,1)$,
\[
        \E|P_{L,\lambda}(G)-H(G)|
        \le C\sqrt{1-\lambda}+o_{\mu\downarrow 0}(1),
\]
where the $o_{\mu\downarrow 0}(1)$ term may depend on $\lambda$.
\end{claim}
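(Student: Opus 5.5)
We split $\E|P_{L,\lambda}(G)-H(G)|$ with the triangle inequality through the Gaussian-space smoothed halfspace $U_\lambda H(x)=\E_{G'}H(\lambda x+\sqrt{1-\lambda^2}G')$:
\[
\E|P_{L,\lambda}(G)-H(G)|\le \E|P_{L,\lambda}(G)-U_\lambda H(G)|+\E|U_\lambda H(G)-H(G)|.
\]
The second term is a pure Gaussian noise-sensitivity estimate for a halfspace. Since $H$ is $\{0,1\}$-valued, $\E|U_\lambda H-H|=2\Pr[H(G)\ne H(G')]\cdot\tfrac12$-type quantity, where $G'$ is $\lambda$-correlated with $G$. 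For any halfspace this disagreement probability is at most $\arccos(\lambda)/\pi\le C\sqrt{1-\lambda}$; the centred halfspace is the worst case by Borell's theorem, or directly by the one-dimensional computation. This yields the $C\sqrt{1-\lambda}$ term with a universal constant.

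For the first term, we will bound the $L_2$ distance, which dominates $L_1$. Both $P_{L,\lambda}$ and $U_\lambda H$ are multilinear/Hermite expansions, and the plan is to compare them coefficientwise. By Fact~\ref{fact:hermite-smooting}, the Hermite coefficients of $U_\lambda H$ are $\lambda^{|\alpha|}\widehat H(\alpha)$, while $P_{L,\lambda}$ has only multilinear terms, with coefficients $\lambda^{|S|}\widehat L(S)$. Fix a degree cutoff $k$. Terms of degree $>k$ contribute at most $2\lambda^{2k}$ in squared norm, since both functions are bounded by $1$ and Parseval holds on each side. This is made $\le$ any desired amount by choosing $k=k(\lambda)$ large.

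For degrees $\le k$, we use that $L$ is a $\mu$-regular LTF. By the multidimensional central limit theorem (Berry--Esseen), as $\mu\to0$ each low-degree Fourier coefficient $\widehat L(S)$ approaches the corresponding Hermite coefficient $\widehat H(S)$ uniformly over $|S|\le k$. The aggregated level-$j$ weights match as well, and the Hermite weight of $H$ on non-multilinear indices $\alpha$ of degree $\le k$ tends to $0$, because such weight involves $w_i^2$-type factors bounded by $\mu$. Concretely, we will express the level-$j$ weight sums $\sum_{|S|=j}\widehat L(S)\widehat H(S)$ through $\E[L(\sigma)\,p_j(\langle w,\sigma\rangle)]$-type quantities and apply the invariance principle (Theorem~\ref{thm:MN-invariance}, or \cite{MOO10} directly) to low-degree polynomials of regular linear forms. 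The conclusion is $\|P_{L,\lambda}^{\le k}-(U_\lambda H)^{\le k}\|_2^2=o_{\mu\downarrow0}(1)$ for fixed $k$.

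The main obstacle is this low-degree matching. The cleanest route is to avoid coefficients entirely: write $\E[(P_{L,\lambda}-U_\lambda H)^2]=\E P_{L,\lambda}^2-2\E[P_{L,\lambda}U_\lambda H]+\E(U_\lambda H)^2$. The first term equals $\Stab_{\lambda^2}(L)$, and by the CLT (the Majority-is-Stablest style computation for regular LTFs) it converges to $\Stab_{\lambda^2}(H)=\E(U_\lambda H)^2$ as $\mu\to0$. For the cross term, $\E[P_{L,\lambda}(G)H(G)]$ with $P_{L,\lambda}$ of bounded effective degree after truncation, invariance gives convergence to $\E[T_\lambda L\cdot L]$ on the cube, which again equals the Gaussian stability in the limit. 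All three terms therefore tend to the same value $\Stab_{\lambda^2}(H)$, and the first term is $o_{\mu\downarrow0}(1)$ for each fixed $\lambda$, as required.
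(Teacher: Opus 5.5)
\textbf{Verdict: there is a genuine gap, in the cross term.} Your outer structure is sound, and it differs from the paper's proof, which quotes Lemma~7.3 of \cite{MosselNeeman15} to get $\E|P_{L,\lambda}(G)-H(G)|\le C\sqrt{\log(1/\lambda)+\mu^{c\log(1/\lambda)}}$ directly. The following parts of your argument are fine:
\begin{itemize}
\item the bound $\E|U_\lambda H-H|=\Pr[H(G)\ne H(G')]\le\arccos(\lambda)/\pi$ for $G'\sim_\lambda G$;
\item the identity $\E P_{L,\lambda}^2=\Stab_{\lambda^2}(L)$;
\item the convergence of $\Stab_{\lambda^2}(L)$ to $\Stab_{\lambda^2}(H)=\E(U_\lambda H)^2$ by the two-dimensional CLT.
\end{itemize}

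The cross term carries the entire content of the claim, and it has two problems.

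\emph{It is misidentified.} Since $U_\lambda$ is self-adjoint and multiplies each monomial $\prod_{i\in S}x_i$ by $\lambda^{|S|}$, one has $\E[P_{L,\lambda}\,U_\lambda H]=\E[P_{L,\lambda^2}H]$. Its cube analogue is $\Stab_{\lambda^2}(L)$, not $\E[T_\lambda L\cdot L]=\Stab_\lambda(L)$. With your version the three terms tend to $\Stab_{\lambda^2}(H)$, $\Stab_\lambda(H)$ and $\Stab_{\lambda^2}(H)$. The resulting total $2(\Stab_{\lambda^2}(H)-\Stab_\lambda(H))$ is negative, which is impossible for a squared norm.

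\emph{The invariance step is unsupported, even after this correction.} You need
\[
\E_G\bigl[P_{L,\lambda^2}(G)\one(\langle w,G\rangle\ge\theta)\bigr]-\E_X\bigl[P_{L,\lambda^2}(X)\one(\langle w,X\rangle\ge\theta)\bigr]\to0 .
\]
This is an invariance statement for the test function $(a,b)\mapsto a\,\one(b\ge\theta)$ of the pair $(P_{L,\lambda^2},\langle w,\cdot\rangle)$. That function is discontinuous in $b$ and unbounded in $a$. Theorem~\ref{thm:MN-invariance} only bounds the truncation error $\E(P-\overline P)^2$. Theorem~\ref{thm:ODonnell-smoothed} needs a Lipschitz function of a single polynomial. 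To make the step work you would have to smooth the indicator, use anticoncentration of $\langle w,X\rangle$ on the cube (Theorem~\ref{thm:MORS-anti-concentration}), truncate the unbounded factor, and invoke a two-dimensional invariance principle. None of this is supplied. Moreover, this cross term is exactly $\sum_S\lambda^{2|S|}\widehat L(S)\widehat H(S)$. So you have not avoided the coefficient matching you called the main obstacle; you have moved it into one unjustified sentence.

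\textbf{A direct repair} is to make your earlier coefficient sketch precise.
\begin{itemize}
\item Let $\hat h_j$ be the normalized Hermite coefficients of $t\mapsto\one(t\ge\theta)$, and $\mathrm{He}_j$ the unnormalized probabilists' Hermite polynomials. Then $\widehat H(S)=\hat h_{|S|}\sqrt{|S|!}\prod_{i\in S}w_i$, hence
\[
\sum_{|S|=j}\widehat L(S)\widehat H(S)=\hat h_j\sqrt{j!}\,\E_X\bigl[L(X)\,e_j(w_1X_1,\ldots,w_RX_R)\bigr],
\]
where $e_j$ is the $j$-th elementary symmetric polynomial.
\item Apply Newton's identities, using $\sum_iw_i^2X_i^2=1$ and $\|\sum_iw_i^kX_i^k\|_2\le\mu^{k-2}$ for $k\ge3$ (with hypercontractivity to control the products). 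This gives $j!\,e_j(w_1X_1,\ldots,w_RX_R)=\mathrm{He}_j(\langle w,X\rangle)+O_j(\mu)$ in $L_2$.
\item The one-dimensional Berry--Esseen bound with Hoeffding tails gives $\E[\one(\langle w,X\rangle\ge\theta)\,\mathrm{He}_j(\langle w,X\rangle)]\to\sqrt{j!}\,\hat h_j$.
\item Hence each level sum tends to $\hat h_j^2$, and levels above $k$ contribute at most $\lambda^{2k}/(1-\lambda^2)$.
\item So the cross term tends to $\sum_j\lambda^{2j}\hat h_j^2=\Stab_{\lambda^2}(H)$.
\end{itemize}
Combined with your treatment of the other terms, this completes a self-contained proof. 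In it, the $\sqrt{1-\lambda}$ loss comes purely from Gaussian noise sensitivity, replacing the paper's citation.
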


\begin{proof}
By Lemma~7.3 of Mossel--Neeman~\cite{MosselNeeman15}, applied with
\(\eta=\log\nicefrac{1}{\lambda}\), and by Cauchy--Schwarz,
\[
        \E\left|P_{L,\lambda}(G)-\one\left(\langle w,G\rangle\ge\theta\right)\right|
  \le C\sqrt{\log\nicefrac{1}{\lambda}+\mu^{c\log\nicefrac{1}{\lambda}}} .
\]
Since \(1-\lambda\le \log\nicefrac{1}{\lambda}\le 2(1-\lambda)\) for
\(\lambda\in[1/2,1)\), the upper bound is
\[
  O(\sqrt{1-\lambda})+o_{\mu\downarrow 0}(1)
\]
for every fixed \(\lambda<1\).
\end{proof}

\begin{theorem}[Anti-concentration for regular linear forms; see Theorem 30 in~\cite{MORS10}]
\label{thm:MORS-anti-concentration}
There is a universal constant \(C<\infty\) such that the following holds.  Let
\(\ell(X)=\sum_i a_iX_i\), where \(X\) is uniform on \(\Cube\).  Let
\(\sigma=\|a\|\) and \(\mu=\max_i |a_i|\).  Then, for every
\(\theta\in\bbR\) and every \(\alpha\ge\mu\),
\[
        \Pr(|\ell(X)-\theta|\le \alpha)\le C\alpha/\sigma .
\]
\end{theorem}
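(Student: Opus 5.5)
The plan is to compare $\ell(X)$ with a Gaussian of the same variance via the Berry--Esseen theorem. The point is that the Berry--Esseen error for a Rademacher sum is controlled by the regularity parameter $\mu/\sigma$. The hypothesis $\alpha\ge\mu$ is exactly what lets this error be absorbed into the main term $\alpha/\sigma$.

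\emph{Setup.} Write $\ell(X)=\sum_i a_iX_i$ with independent Rademacher $X_i$. This sum has mean $0$, variance $\sum_i a_i^2=\sigma^2$, and third absolute moments $\E|a_iX_i|^3=|a_i|^3$.
\begin{itemize}
\item If $\alpha\ge\sigma$, the claim is trivial once $C\ge1$, since the left-hand side is at most $1$.
\item If $\sigma=0$, then $\mu=0$ and $\ell\equiv0$. This degenerate case is excluded, or is trivial by convention, since the right-hand side is infinite.
\end{itemize}
So assume $0<\mu\le\alpha<\sigma$.

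\emph{Berry--Esseen step.} By the Berry--Esseen theorem for independent, non-identically distributed summands, the CDF of $\ell/\sigma$ satisfies
\[
\sup_{s\in\bbR}\Bigl|\Pr(\ell\le s\sigma)-\Phi(s)\Bigr|\le C_{BE}\frac{\sum_i|a_i|^3}{\sigma^3}\le C_{BE}\frac{\mu\sum_i a_i^2}{\sigma^3}=C_{BE}\frac{\mu}{\sigma}.
\]
Write the event as $\{\theta-\alpha\le\ell\le\theta+\alpha\}$, i.e.\ a difference of two CDF values (a closed interval, obtained by letting the left endpoint tend to $\theta-\alpha$ from below; the uniform bound is unaffected). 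Applying the estimate at both endpoints gives
\[
\Pr(|\ell(X)-\theta|\le\alpha)\le \Pr\bigl(|\sigma G-\theta|\le\alpha\bigr)+2C_{BE}\frac{\mu}{\sigma}\le \frac{2\alpha}{\sigma\sqrt{2\pi}}+2C_{BE}\frac{\mu}{\sigma},
\]
where $G$ is a standard Gaussian. The last inequality uses that the density of $\sigma G$ is at most $1/(\sigma\sqrt{2\pi})$.

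\emph{Conclusion.} Since $\mu\le\alpha$, the right-hand side is at most $\bigl(\sqrt{2/\pi}+2C_{BE}\bigr)\alpha/\sigma$, which is the claim with a universal $C$. The constant does not depend on $\theta$, $R$, or the coefficients.

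The only step requiring care is invoking Berry--Esseen in its non-i.i.d.\ form, with a universal constant and the ratio $\sum_i|a_i|^3/\sigma^3$. After that, the argument is the one-line bound $\sum_i|a_i|^3\le\mu\sigma^2$ together with absorbing $\mu$ into $\alpha$. The restriction $\alpha\ge\mu$ is genuinely needed: a single large coefficient can put constant mass on a window narrower than $\mu$.
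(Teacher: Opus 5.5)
Your argument is correct. It applies non-i.i.d.\ Berry--Esseen to the Rademacher sum, bounds $\sum_i|a_i|^3\le\mu\sigma^2$, and uses $\alpha\ge\mu$ to absorb the error $2C_{BE}\mu/\sigma$ into the Gaussian term $\sqrt{2/\pi}\,\alpha/\sigma$. The paper gives no proof of this statement and simply cites~\cite{MORS10}; your derivation is the standard Berry--Esseen argument behind that citation.
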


\begin{theorem}[Smoothed Lipschitz invariance;
see~\cite{MOO10} and also Corollary 11.69 in~\cite{ODonnell14}]
\label{thm:ODonnell-smoothed}
There is a universal constant $C<\infty$ such that the following holds.  Consider
$h:\Cube\to\bbR$ with $\Var(h)\le1$.  Let $0<\varepsilon\le1$ and
$0<\delta\le1/20$, and assume that, for every coordinate $i$,
\[
        \sum_{S: i\in S}(1-\delta)^{|S|-1}\widehat h(S)^2\le\varepsilon.
\]
Then, for every $c$-Lipschitz function $\psi:\bbR\to\bbR$,
\[
        \left|\E\psi(T_{1-\delta}h(X))-\E\psi(T_{1-\delta}h(G))\right|
        \le Cc\,\varepsilon^{\delta/3}.
\]
Here $T_{1-\delta}h(G)$ means that the multilinear expansion of $T_{1-\delta}h$
is evaluated at a standard Gaussian vector $G$.
\end{theorem}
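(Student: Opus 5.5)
The plan is the Lindeberg replacement method of \cite{MOO10}, run directly on the noisy polynomial $f:=T_{1-\delta}h$, with a smoothing of $\psi$ and a degree truncation to make hypercontractivity usable. The constant term $\widehat h(\emptyset)$ is the same on both sides, so we may assume $\widehat h(\emptyset)=0$, $\E h^2\le1$, and by scaling $c=1$.

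\emph{Step 1: smoothing $\psi$.} Replace $\psi$ by $\psi_\beta=\psi*\varphi_\beta$, its convolution with a centered Gaussian of variance $\beta^2$. Then $\|\psi-\psi_\beta\|_\infty\le\beta$ and $\|\psi_\beta'''\|_\infty=O(\beta^{-2})$. Comparing $\E\psi$ with $\E\psi_\beta$ on both sides costs $2\beta$, so it suffices to bound $|\E\psi_\beta(f(X))-\E\psi_\beta(f(G))|$.

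\emph{Step 2: degree split.} Write $f=f^{\le k}+f^{>k}$. Since $\|f^{>k}\|_2^2\le(1-\delta)^{2k}$ and $\psi_\beta$ is $1$-Lipschitz, replacing $f$ by $f^{\le k}$ costs $O((1-\delta)^{k})\le O(e^{-\delta k})$ on each side. The Gaussian side uses $\|f^{>k}(G)\|_2=\|f^{>k}(X)\|_2$ for multilinear polynomials.

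\emph{Step 3: hybrid argument.} For the hybrids $Z^{(i)}=(G_1,\dots,G_i,X_{i+1},\dots,X_R)$, Taylor expand $\psi_\beta$ in the $i$-th variable around the part of $f^{\le k}$ not involving coordinate $i$. The first two moments of $X_i$ and $G_i$ agree, so each swap costs
\[
O(\beta^{-2})\,\E|D_if^{\le k}|^3,
\]
evaluated at mixed Boolean/Gaussian inputs. By $(2,3)$-hypercontractivity for degree-$\le k$ polynomials in such mixed inputs, $\E|D_if^{\le k}|^3\le 2^{O(k)}\bigl(\E(D_if^{\le k})^2\bigr)^{3/2}$. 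Let
\[
I_i:=\sum_{S\ni i}(1-\delta)^{2|S|}\widehat h(S)^2\le\sum_{S\ni i}(1-\delta)^{|S|-1}\widehat h(S)^2 .
\]
Then $\E(D_if^{\le k})^2\le I_i\le\varepsilon$ by hypothesis. The total is
\[
\sum_iI_i\le\sum_S|S|(1-\delta)^{2|S|}\widehat h(S)^2\le\max_s s(1-\delta)^{2s}=O(1/\delta).
\]
Hence
\[
\sum_i\E|D_if^{\le k}|^3\le2^{O(k)}\sqrt{\varepsilon}\sum_iI_i=2^{O(k)}\sqrt\varepsilon/\delta .
\]
A cleaner alternative avoids truncation: write $T_{1-\delta}$ as a composition of a $T_{\rho}$ with $\rho$ close to $1$ and a hypercontractive $T_{1/\sqrt2}$-type piece. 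That bounds the third moments with the $(1-\delta)^{|S|-1}$ weights directly, which is how the hypothesis is phrased.

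\emph{Step 4: optimization.} The total error is
\[
O\bigl(\beta+e^{-\delta k}+\beta^{-2}2^{O(k)}\sqrt\varepsilon/\delta\bigr).
\]
Choose $k\approx\frac{1}{\delta}\log(1/\varepsilon)\cdot\frac{\delta}{3}\cdot\Theta(1)$ and $\beta=\varepsilon^{\delta/3}$, and check that every term is at most $C\varepsilon^{\delta/3}$ given $\delta\le1/20$.

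\emph{Main obstacle.} The delicate point is this final bookkeeping. The $2^{O(k)}$ hypercontractivity loss must be beaten by $\sqrt\varepsilon$, and the factor $1/\delta$ must be absorbed, while landing exactly on the exponent $\delta/3$. The restriction $\delta\le1/20$ is what makes these constants fit. If the truncation route loses too much, I would switch to the noise-splitting version from O'Donnell's Corollary 11.69, which applies hypercontractivity to $T_{1-\delta}$ itself with explicit $\rho$.
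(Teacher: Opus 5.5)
Your main line (smooth $\psi$, truncate $T_{1-\delta}h$ at degree $k$, then a Lindeberg swap controlled by $(2,3)$-hypercontractivity) is correct. The paper gives no proof of this statement; it imports it from \cite{MOO10} and Corollary 11.69 of \cite{ODonnell14}, and your route is the standard one for that corollary.

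The bookkeeping you left open does close, but only with sharp constants, so it needs to be written out. Let $P=(T_{1-\delta}h)^{\le k}$. Then $(2,3)$-hypercontractivity for the mixed bit/Gaussian ensemble gives exactly $\E|D_iP|^3\le 2^{3(k-1)/2}(\E(D_iP)^2)^{3/2}$.

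Use $\sum_i\E(D_iP)^2\le k\Var(h)\le k$ rather than $O(1/\delta)$. The quantity $\varepsilon^{0.1}/\delta$ is not uniformly bounded, so the $1/\delta$ version only works if you separately note that the claim is trivial when $\ln(1/\varepsilon)\lesssim 1/\delta$.

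Now take $k=\lceil\frac13\ln(1/\varepsilon)\rceil$ and $\beta=\varepsilon^{\delta/3}$. The truncation cost is $2(1-\delta)^{k+1}\le 2\varepsilon^{\delta/3}$. The Lindeberg term is $O(\beta^{-2}2^{3k/2}k\sqrt{\varepsilon})=O\bigl(k\,\varepsilon^{\frac12-\frac{\ln 2}{2}-\frac{2\delta}{3}}\bigr)$. This is $O(\varepsilon^{\delta/3})$ uniformly, because $\frac12-\frac{\ln 2}{2}\approx 0.153>\delta$ and $k\,\varepsilon^{0.1}=O(1)$. This is the only place $\delta\le 1/20$ enters.

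The margin is thin. Roughly, $k$ must satisfy $\frac13\ln(1/\varepsilon)\le k<\frac{1-2\delta}{3\ln 2}\ln(1/\varepsilon)$, so ``$\Theta(1)$'' is too loose. A cruder hypercontractive constant such as $3^{3k/2}$ or $9^k$ in place of $2^{3k/2}$ would make the argument fail.

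Finally, discard the ``cleaner alternative''. Since $T_aT_b=T_{ab}$, writing $T_{1-\delta}=T_{1/\sqrt2}T_b$ would require $b=\sqrt2(1-\delta)>1$. So no such factorization exists, and the third moments of the untruncated $D_iT_{1-\delta}h$ cannot be bounded that way; your third-order Lindeberg step needs the degree truncation.
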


The next lemma lists the standard consequences of these theorems that we will need
below.

\begin{lemma}[Boolean smoothing facts]\label{lem:boolean-smoothing-facts}
The following statements hold.
\begin{enumerate}[label=(\alph*),ref=(\alph*)]
\item\label{it:smoothed-l1-invariance}
Fix $\lambda\in[19/20,1)$ and $M<\infty$.  Let $q$ be a real-valued function on $\Cube$.  Assume that $\E q(X)^2\le M$ and $\max_i\Influence_i(q)\le\kappa$.  Then
\[
        \left|\E|T_\lambda q(X)|-\E|P_{q,\lambda}(G)|\right|=o_{\kappa\downarrow 0}(1),
\]
where the $o_{\kappa\downarrow 0}(1)$ term may depend on $\lambda$ and $M$.

\item\label{it:regular-ltf}
If $L(\sigma)=\one\left(\langle w,\sigma\rangle\ge\theta\right)$, $\|w\|^2=1$ and
$\|w\|_\infty\le\mu$, then
\[
        \max_i\Influence_i(L)=O(\mu),
        \qquad
        \Pr(\langle w,X\rangle=\theta)=O(\mu).
\]

\item\label{it:stability-smoothing}
Fix $0<t<1$.  For every $u:\Cube\to[0,1]$ and every $\lambda\in[1/2,1)$,
\[
        0\le \Stab_t(u)-\Stab_{t\lambda^2}(u)\le C_t(1-\lambda).
\]
\end{enumerate}
\end{lemma}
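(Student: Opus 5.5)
I will prove the three items of Lemma~\ref{lem:boolean-smoothing-facts} separately, each by reduction to one of the cited black boxes.

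\emph{Item (a).} I will apply Theorem~\ref{thm:ODonnell-smoothed} with $\psi(y)=|y|$, which is $1$-Lipschitz, and $\delta=1-\lambda\le 1/20$. Since that theorem asks for $\Var(h)\le1$, I set $h=(q-\widehat q(\varnothing))/\sqrt M$. The constant term contributes the same shift $\widehat q(\varnothing)$ to both $T_\lambda q(X)$ and $P_{q,\lambda}(G)$, and $\psi$ is not linear, so I will instead apply the theorem to $\psi_c(y)=|\sqrt M y+\widehat q(\varnothing)|$. This function is $\sqrt M$-Lipschitz, and $|\widehat q(\varnothing)|\le\sqrt M$. The remaining hypothesis is the bound on $\sum_{S\ni i}(1-\delta)^{|S|-1}\widehat h(S)^2$. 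This sum is at most $\Influence_i(h)\le\kappa/M$, because $(1-\delta)^{|S|-1}\le1$. The theorem then gives an error of $C\sqrt M(\kappa/M)^{(1-\lambda)/3}$, which is $o_{\kappa\downarrow0}(1)$ for fixed $\lambda$ and $M$. I also have to check that $P_{q,\lambda}(G)$ is exactly the multilinear expansion of $T_\lambda q$ evaluated at $G$. This holds by the definition of $P_{q,\lambda}$.

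\emph{Item (b).} For the second claim, I apply Theorem~\ref{thm:MORS-anti-concentration} with $a=w$, $\sigma=1$ and $\alpha=\mu$. This gives $\Pr(|\langle w,X\rangle-\theta|\le\mu)\le C\mu$, which contains the event $\langle w,X\rangle=\theta$. For influences, $L$ is Boolean, so $\Influence_i(L)=\Pr[L(X)\ne L(X^{\oplus i})]$ up to the usual normalization factor of $1$ or $1/4$. Flipping $\sigma_i$ changes $\langle w,X\rangle$ by $2|w_i|\le2\mu$. So the value of $L$ can only change if $\langle w,X\rangle$ lies within $2\mu$ of $\theta$, and applying the same anti-concentration bound with $\alpha=2\mu$ gives $O(\mu)$. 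If the influence is normalized via the $\pm1$ version of $L$, this changes only constant factors.

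\emph{Item (c).} I write $\Stab_t(u)-\Stab_{t\lambda^2}(u)=\sum_{S\ne\varnothing}\bigl(t^{|S|}-(t\lambda^2)^{|S|}\bigr)\widehat u(S)^2$. Every term is nonnegative since $t>0$ and $\lambda^2\le1$, which gives the lower bound $0$. For the upper bound, $t^k(1-\lambda^{2k})\le t^k\cdot 2k(1-\lambda)$. Since $0<t<1$, $\sup_k 2k t^k=:C_t'<\infty$. Using $\sum_S\widehat u(S)^2=\E u^2\le1$ then gives the bound $C_t(1-\lambda)$.

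The main obstacle is item (a). The care needed there is in passing the Lipschitz constant and the variance normalization through the mean term, and in making sure the dependence on $\lambda$ and $M$ lands only in the $o(1)$ term. Items (b) and (c) are direct.
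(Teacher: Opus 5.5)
Your proposal is correct and follows the paper's proof essentially item by item: (a) via Theorem~\ref{thm:ODonnell-smoothed} with $\psi(x)=|x|$ and $\delta=1-\lambda$, (b) via the anti-concentration bound of Theorem~\ref{thm:MORS-anti-concentration}, and (c) via the Fourier formula for $\Stab$. The deviations are cosmetic. In (a), centering is unnecessary since $\Var(q)\le\E q^2\le M$; the paper simply takes $h=M^{-1/2}q$ with $\varepsilon=\min\{1,\kappa/M\}$, where the cap at $1$ is required by the theorem. In (b), you apply anti-concentration to the full form $\langle w,X\rangle$ with window $2\mu$ rather than to the leave-one-out sum $\sum_{j\ne i}w_jX_j$ with window $\mu$, which works equally well and avoids the paper's case split $\mu\le 1/2$.
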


\begin{proof}
For part \ref{it:smoothed-l1-invariance}, assume \(M>0\); otherwise there is
nothing to prove.  Apply Theorem~\ref{thm:ODonnell-smoothed} to
\(h=M^{-1/2}q\) with \(\delta=1-\lambda\) and
\(\varepsilon=\min\{1,\kappa/M\}\).  Since \(\lambda\ge19/20\), the theorem
applies.  Taking \(\psi(x)=|x|\) and multiplying by \(M^{1/2}\) gives the claim.

For part \ref{it:regular-ltf}, assume first that \(\mu\le1/2\); otherwise the
bounds are trivial.  Fix \(i\) and write \(W_i=\sum_{j\ne i}w_jX_j\).  Coordinate
\(i\) can affect \(L\) only when \(|W_i-\theta|\le |w_i|\le\mu\).  The coefficients
of \(W_i\) are bounded by \(\mu\), and \(\operatorname{Var}(W_i)\ge3/4\).  Applying
Theorem~\ref{thm:MORS-anti-concentration} to \(W_i\) with interval length \(\mu\),
after normalizing \(W_i\), gives
\[
        \Pr(|W_i-\theta|\le |w_i|)
        \le \Pr(|W_i-\theta|\le \mu)=O(\mu).
\]
Thus \(\Influence_i(L)=O(\mu)\).  The same theorem applied to \(\langle w,X\rangle\)
and the interval \([\theta-\mu,\theta+\mu]\) gives
\(\Pr(\langle w,X\rangle=\theta)=O(\mu)\).

Finally, using the Fourier formula for $\Stab$, we get
\[
        \Stab_t(u)-\Stab_{t\lambda^2}(u)
        =\sum_{S\subseteq[R]} t^{|S|}(1-\lambda^{2|S|})\widehat u(S)^2.
\]
The summand for \(S=\varnothing\) is zero and the other summands are nonnegative.
For \(k\ge1\), \(t^k(1-\lambda^{2k})\le C_t(1-\lambda)\).  Since
\(\sum_S\widehat u(S)^2\le1\), part \ref{it:stability-smoothing} follows.
\end{proof}

\begin{proof}[Proof of Lemma~\ref{lem:smoothed-boolean-transfer}]
Let $a=\E u$ and $D=\pos{\Gamma_t(a)-\Stab_t(u)}$.  Let
$P_\lambda=P_{u,\lambda}$ and $\overline P_\lambda=\overline{P_{u,\lambda}}$.
By Theorem~\ref{thm:MN-invariance},
\[
        r_\lambda(\tau)=\|P_\lambda-\overline P_\lambda\|_{2}=o_{\tau\downarrow 0}(1).
\]
The mean of $P_\lambda$ is $a$, so $|\E\overline P_\lambda-a|\le r_\lambda(\tau)$.
Also the Gaussian stability of $P_\lambda$ at correlation $t$ is
$\Stab_{t\lambda^2}(u)$.  Using Lemma~\ref{lem:boolean-smoothing-facts}, the
Lipschitz bound for $\Gamma_t$, and Cauchy--Schwarz,
\[
        \Gamma_t(\E\overline P_\lambda)-
        \E[\overline P_\lambda(G)\overline P_\lambda(H)]
        \le D+C_t(1-\lambda)+o_{\tau\downarrow 0}(1),
        \qquad G\sim_t H.
\]
For fixed $\lambda$, since $r_\lambda(\tau)=o_{\tau\downarrow 0}(1)$ and $a\in[1/4,3/4]$, we
may assume that $\E\overline P_\lambda\in[1/5,4/5]$.  If the last displayed
quantity is larger than the constant $\delta_0$ from Lemma~\ref{lem:gaussian-main},
then the desired estimate is trivial after increasing $C_t$, since the left-hand
side is at most $1$; in this case we take $L_\lambda$ to be a constant LTF.  Thus we may assume that this quantity is smaller than
$\delta_0$.  Applying Lemma~\ref{lem:gaussian-main}, we get a
Gaussian LTF $H_\lambda=\one\left(\langle w_\lambda,x\rangle\ge\theta_\lambda\right)$
with $\|w_\lambda\|_2=1$ such that
\[
        \E|\overline P_\lambda-H_\lambda|
        \le C_t\sqrt D+C_t\sqrt{1-\lambda}+o_{\tau\downarrow 0}(1).
\]
Moreover, its normal vector is the normalized first Hermite vector of
$\overline P_\lambda$.  For each $i$,
$\E[P_\lambda(G)G_i]=\lambda\widehat u(\{i\})$ and
$|\widehat u(\{i\})|^2\le\Influence_i(u)\le\tau$.  Also,
\[
        |\E[\overline P_\lambda(G)G_i]-\E[P_\lambda(G)G_i]|
        \le \|\overline P_\lambda-P_\lambda\|_{2}=r_\lambda(\tau).
\]
Hence each coordinate of the first Hermite vector of $\overline P_\lambda$ is at most
$\sqrt\tau+r_\lambda(\tau)$ in absolute value.  Since this vector has norm bounded
below by Lemma~\ref{lem:gaussian-main},
$\|w_\lambda\|_\infty=o_{\tau\downarrow 0}(1)$.
Let $L_\lambda(\sigma)=\one\left(\langle w_\lambda,\sigma\rangle\ge\theta_\lambda\right)$.
By Claim~\ref{clm:MN-ltf-smoothing} and the preceding regularity bound,
\[
        \E|P_{L_\lambda,\lambda}(G)-H_\lambda(G)|
        \le C\sqrt{1-\lambda}+o_{\tau\downarrow 0}(1).
\]
Using the triangle inequality,
\[
\begin{aligned}
        \E|P_{u,\lambda}(G)-P_{L_\lambda,\lambda}(G)|
        &\le \E|P_{u,\lambda}(G)-\overline P_\lambda(G)|
        +\E|\overline P_\lambda(G)-H_\lambda(G)| \\
        &\qquad +\E|H_\lambda(G)-P_{L_\lambda,\lambda}(G)|.
\end{aligned}
\]
The first term is at most $r_\lambda(\tau)$, the second term is bounded by
Lemma~\ref{lem:gaussian-main}, and the third term is bounded by the last
displayed estimate.  Hence
\[
        \E|P_{u,\lambda}(G)-P_{L_\lambda,\lambda}(G)|
        \le C_t\sqrt D+C_t\sqrt{1-\lambda}+o_{\tau\downarrow 0}(1).
\]
It remains to move this estimate from Gaussian space back to the cube.  Let
$q=u-L_\lambda$.  Then
$T_\lambda q=T_\lambda u-T_\lambda L_\lambda$, and
$P_{u,\lambda}-P_{L_\lambda,\lambda}$ is the same multilinear polynomial
evaluated on Gaussian space.  Also $\E q^2\le1$.  By part \ref{it:regular-ltf} of
Lemma~\ref{lem:boolean-smoothing-facts},
\[
        \Influence_i(q)\le 2\Influence_i(u)+2\Influence_i(L_\lambda)=o_{\tau\downarrow 0}(1)
\]
for every $i$.  Part \ref{it:smoothed-l1-invariance} therefore gives
\[
        \E|T_\lambda u-T_\lambda L_\lambda|
        \le C_t\sqrt D+C_t\sqrt{1-\lambda}+o_{\tau\downarrow 0}(1).
\]
This concludes the proof.
\end{proof}

\subsection{Proof of Lemma~\ref{lem:smoothing-removal}}

We use the following estimate to remove the noise operator from the approximating LTF.

\begin{theorem}[Noise sensitivity of weighted majorities; see Theorem 1.1 in~\cite{Peres04}]
\label{thm:Peres}
Let $X$ be uniform on $\{-1,1\}^R$.  Let $N_\varepsilon(X)$ be obtained from $X$
by flipping each coordinate independently with probability $\varepsilon\le1/2$.
For nonzero weights $w_1,\ldots,w_R$ and a threshold $\theta\in\bbR$, define
$f(x)=\sgn(\sum_i w_i x_i-\theta)$, with $\sgn(0)=0$.  Then
\[
        \Pr\bigl(f(X)\ne f(N_\varepsilon(X))\bigr)=O(\sqrt\varepsilon).
\]
\end{theorem}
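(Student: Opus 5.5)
The plan is the classical block-flipping argument of Peres, organised in four steps: remove the ties created by $\sgn(0)=0$, reduce to noise rates of the form $\varepsilon=1/m$, realise $1/m$-noise as negating one random block of coordinates, and then apply the fact that a monotone function on $m$ bits has total influence at most $\sqrt m$.

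\emph{Step 1: ties.} The cube is finite, so the values $\sum_i w_i x_i$ form a finite set. Hence there is $\xi>0$ such that neither $\theta+\xi$ nor $\theta-\xi$ is attained, and $f_\pm(x)=\sgn(\sum_i w_ix_i-(\theta\pm\xi))$ are $\pm1$-valued. For every $x$, the value $f(x)$ is determined by the pair $(f_+(x),f_-(x))$. So if $f$ changes, then $f_+$ or $f_-$ changes, and it suffices to prove the bound for $\pm1$-valued LTFs $f$.

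\emph{Step 2: reduction to $\varepsilon=1/m$.} Write $\mathrm{NS}_\varepsilon(f)=\Pr(f(X)\neq f(N_\varepsilon(X)))=\tfrac12(1-\Stab_{1-2\varepsilon}(f))$. Since $\Stab_\rho(f)=\sum_S\rho^{|S|}\widehat f(S)^2$ is nondecreasing in $\rho\in[0,1]$, the map $\varepsilon\mapsto\mathrm{NS}_\varepsilon(f)$ is nondecreasing on $[0,1/2]$. For $\varepsilon\le1/2$, let $m=\lfloor1/\varepsilon\rfloor\ge2$. Then $\varepsilon\le 1/m$ and $1/m\le 2\varepsilon$, so it is enough to show
\[
\mathrm{NS}_{1/m}(f)\le \frac{1}{\sqrt m}.
\]

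\emph{Step 3: block representation.} Assign each coordinate $i\in[R]$ independently to a uniformly random block $B_1,\dots,B_m$, and set $\ell_j=\sum_{i\in B_j}w_iX_i$. Pick $J\in[m]$ uniformly at random and negate all coordinates in $B_J$. Each coordinate lies in $B_J$ independently with probability $1/m$, so the resulting pair has exactly the law of $(X,N_{1/m}(X))$. Therefore
\[
\mathrm{NS}_{1/m}(f)=\Pr\Bigl(\sgn\bigl(\textstyle\sum_j\ell_j-\theta\bigr)\neq \sgn\bigl(\textstyle\sum_j\ell_j-2\ell_J-\theta\bigr)\Bigr).
\]
Conditioned on the partition, the $\ell_j$ are independent and symmetric. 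Write $\ell_j=s_j|\ell_j|$, where the signs $s_j$ are i.i.d.\ uniform and independent of the magnitudes $|\ell_j|$. Now condition additionally on the magnitudes. Define $g(s)=\sgn(\sum_j s_j|\ell_j|-\theta)$ on $\{-1,1\}^m$; it is a monotone $\pm1$ function with no ties, by Step 1. Negating block $J$ is flipping coordinate $J$ of $s$. So the conditional probability of a change equals $\frac1m\sum_j\Inf{j}{g}$.

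\emph{Step 4: total influence bound, and expected obstacle.} For a monotone $\pm1$ function, $\Inf{j}{g}=\E[g(s)s_j]$. Hence, by Cauchy--Schwarz,
\[
\sum_j\Inf{j}{g}=\E\Bigl[g(s)\sum_js_j\Bigr]\le\E\Bigl|\sum_js_j\Bigr|\le\sqrt m .
\]
Averaging over the magnitudes and the partition gives $\mathrm{NS}_{1/m}(f)\le 1/\sqrt m\le\sqrt{2\varepsilon}$, which is the claim. The step needing the most care is Step 3: one must check that the block-negation coupling reproduces independent $\varepsilon$-noise exactly, and that conditioning on the magnitudes leaves the signs $s_j$ i.i.d.\ uniform even when some $\ell_j=0$ (resolve such a $j$ by an independent coin for $s_j$; negating that block then leaves $\ell_j=0$, consistent with $s_j|\ell_j|$).
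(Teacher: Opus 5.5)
Your proposal is correct and is precisely Peres's random-block negation argument (reduction to $\varepsilon=1/m$ via monotonicity of noise sensitivity, then the bound $\frac1m\E\bigl|\sum_{j=1}^m s_j\bigr|\le 1/\sqrt m$ for the induced monotone function on the blocks), which the paper does not reprove but simply cites from \cite{Peres04}. The one detail to tighten is in Step 1: $\xi$ must be smaller than the distance from $\theta$ to the nearest attained value of $\sum_i w_ix_i$ other than $\theta$ itself, because merely requiring that $\theta\pm\xi$ not be attained does not make $f(x)$ a function of $(f_+(x),f_-(x))$.
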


\begin{proof}[Proof of Lemma~\ref{lem:smoothing-removal}]
If $L$ is constant, then $T_\lambda L=L$ and the claim is immediate.  Assume now
that $L$ is nonconstant and has normalized coefficients with $\|w\|_\infty=\mu$.
Assume that $X$ and $Y$ are $\lambda$-correlated, then
corresponding bits of $X$ and $Y$ are equal with probability $\frac{1+\lambda}{2}$. Accordingly, $Y$ is distributed like $N_{\varepsilon}(X)$ with $\varepsilon = (1-\lambda)/2$.  Since $L$ is
$\{0,1\}$-valued,
$\|L(X)-T_\lambda L(X)\|_1=\Pr(L(X)\ne L(Y))$.

Discarding coordinates with $w_i=0$, which do not affect $L$, we  apply Theorem~\ref{thm:Peres} to
$F(x)=\sgn(\langle w,x\rangle-\theta)$, with $\sgn(0)=0$.  If
$L(X)\ne L(Y)$, then either $F(X)\ne F(Y)$ or one of $X,Y$ lies on the separating hyperplane.  By Theorem~\ref{thm:Peres} and part \ref{it:regular-ltf} of
Lemma~\ref{lem:boolean-smoothing-facts},
\[
        \|L-T_\lambda L\|_1\le C\sqrt{1-\lambda}+o_{\mu\downarrow 0}(1).
\]

It remains to compare $u$ and $T_\lambda u$.  Since $0\le u,T_\lambda u\le1$ and
$L$ is $\{0,1\}$-valued,
$|u-L|=u+L-2uL$ and $|T_\lambda u-L|=T_\lambda u+L-2(T_\lambda u)L$. 
Write,
\begin{align*}
\|u-L\|_1 & =\E[u]+\E[L]-2\E[uL]\\
\|T_\lambda u-L\|_1 &=\E[T_\lambda u] + \E[L]-2\E[(T_\lambda u)L]
\end{align*}
Observe that the first two terms are equal in both formulas, since the noise-operator preserves the expectation.
We get
\[
\left|\|u-L\|_1-\|T_\lambda u-L\|_1\right| \le 
2|\E[uL - (T_\lambda u)L]|=
2|\E[uL - u T_\lambda L]| =
2|\E[u(L - T_\lambda L)]| \leq
2\|L-T_\lambda L\|_1.
\]
Applying the triangle inequality
$
        \|T_\lambda u-L\|_1
        \le \|T_\lambda u-T_\lambda L\|_1+\|T_\lambda L-L\|_1
$, we get
\[
        \|u-L\|_1
        \le \|T_\lambda u-T_\lambda L\|_1+C\sqrt{1-\lambda}+o_{\mu\downarrow 0}(1).
\]
\end{proof}

\end{document}